\newcommand{\one}{\mathbbm{1}}
\newcommand{\E}{\operatorname{\mathsf{E}}}
\newcommand{\Q}{\operatorname{\mathsf{P}}}
\newcommand{\R}{\mathbb{R}}
\newcommand{\Aa}{\mathcal{A}}
\newcommand{\Bb}{\mathcal{B}}
\newcommand{\Cc}{\mathcal{C}}
\newcommand{\Ii}{\mathcal{I}}
\newcommand{\Jj}{\mathcal{J}}
\newcommand{\Pp}{\mathcal{P}}
\newcommand{\Ss}{\mathcal{S}}
\newcommand{\GPD}{\operatorname{GPD}} 
\newcommand{\Exp}{\operatorname{Exp}} 
\newcommand{\expec}{\operatorname{\mathsf{E}}} 
\newcommand{\cov}{\operatorname{\mathsf{cov}}}
\newcommand{\var}{\operatorname{\mathsf{var}}}
\newcommand{\point}{\,\cdot\,} 
\newcommand{\1}{\mathbf{1}} 
\newcommand{\eps}{\varepsilon}
\newcommand{\Unif}{\operatorname{Unif}} 
\newcommand{\Fbar}{\overline{F}}
\newcommand{\GB}{G_{\Bb}}
\newcommand{\GBbar}{\overline{G}_{\Bb}}
\newcommand{\zit}{z_{i}^{(t)}}
\newcommand{\Zit}{Z_{F_{i}}^{(t)}}
\renewcommand{\subset}{\subseteq}
\newcommand{\ZFt}{Z_{F}^{(t)}} 
\newcommand{\ZFtn}{Z_F^{(t_n)}}
\def\jon#1{{\textcolor{red}{\bf $[$JK: #1$]$}}}
\def\sam#1{{\textcolor{blue}{\bf $[$SA: #1$]$}}}
\definecolor{darkteal}{rgb}{0,0.35,0.35}
\newcommand{\js}[1]{\textcolor{darkteal}{\sffamily\small\upshape [JS: #1]}}
\newtheorem{propo}{Proposition}
\newtheorem{lemma}[propo]{Lemma}
\newtheorem{theorem}[propo]{Theorem}
\newtheorem{corollary}[propo]{Corollary}
\theoremstyle{definition}
\newtheorem{defin}[propo]{Definition}
\newtheorem{remark}[propo]{Remark}
\newtheorem{example}[propo]{Example}
\newcommand{\blind}{1}
\begin{document}

\def\spacingset#1{\renewcommand{\baselinestretch}%
{#1}\small\normalsize} \spacingset{1}

\newcommand*{\addFileDependency}[1]{
\typeout{(#1)}
%
%
\@addtofilelist{#1}
%
\IfFileExists{#1}{}{\typeout{No file #1.}}
}\makeatother

\newcommand*{\myexternaldocument}[1]{%
\externaldocument{#1}%
\addFileDependency{#1.tex}%
\addFileDependency{#1.aux}%
}

\myexternaldocument{supplement2}


\if1\blind
{
  \title{\bf Tail calibration of probabilistic forecasts}
  \author{Sam Allen\thanks{Corresponding author: \url{sam.allen@stat.math.ethz.ch}} \\
    Seminar for Statistics, ETH Zurich\\
    and \\
    Jonathan Koh\\
    IMSV, University of Bern\\
    and \\
    Johan Segers\\
    Dept. of Mathematics, KU Leuven, and LIDAM/ISBA, UCLouvain\\
    and \\
    Johanna Ziegel\\
    Seminar for Statistics, ETH Zurich}
  \maketitle
} \fi

\if0\blind
{
  \bigskip
  \bigskip
  \bigskip
  \begin{center}
    {\LARGE\bf Tail calibration of probabilistic forecasts}
\end{center}
  \medskip
} \fi

\bigskip
\begin{abstract}
Probabilistic forecasts comprehensively describe the uncertainty in the unknown future outcome, making them essential for decision making and risk management. While several methods have been introduced to evaluate probabilistic forecasts, existing evaluation techniques are ill-suited to the evaluation of tail properties of such forecasts. However, these tail properties are often of particular interest to forecast users due to the severe impacts caused by extreme outcomes. In this work, we introduce a general notion of tail calibration for probabilistic forecasts, which allows forecasters to assess the reliability of their predictions for extreme outcomes. We study the relationships between tail calibration and standard notions of forecast calibration, and discuss connections to peaks-over-threshold models in extreme value theory. Diagnostic tools are introduced and applied in a case study on European precipitation forecasts.
\end{abstract}

\noindent%
{\it Keywords:}  extreme event, proper scoring rule, forecast evaluation, tail calibration diagnostic plot, precipitation forecast
\vfill

\newpage

\spacingset{1.9} 

\section{Introduction}\label{sec:intro}

Forecasts for future events should be reliable, or calibrated. For example, if an event is predicted to occur with a certain probability, this forecast will only be useful for decision making and risk assessment if the event does indeed transpire with the predicted probability. Since extreme events often have the largest impacts on forecast users, calibrated forecasts for these outcomes are particularly valuable. It is therefore essential that methods are available to assess the performance of probabilistic forecasts for extreme events. However, the evaluation of forecasts with regards to extreme events is difficult and prone to methodological pitfalls \citep{LerchEtAl2017,BellierEtAl2017}.

Proper scoring rules quantify the accuracy of probabilistic forecasts, thereby allowing forecasters to be ranked and compared objectively \citep[see e.g.][]{GneitingRaftery2007}. However, despite their widespread use, no proper scoring rule can distinguish the true distribution of the observations from a forecast distribution with incorrect tail behavior \citep{BrehmerStrokorb2019}. Weighted scoring rules, which are commonly employed to emphasize particular outcomes during forecast evaluation \citep{DiksEtAl2011,GneitingRanjan2011,HolzmannKlar2017,LerchEtAl2017,AllenEtAl2023b,olafsdottir.2024.wscrps}, also suffer from these limitations. We extend the negative result of \citet{BrehmerStrokorb2019} by demonstrating that it cannot be circumvented by employing a class of scoring rules to assess forecast performance; see Appendix \ref{sec:scores} in the supplement. Alternative tools are therefore required to evaluate forecasts for extreme events. 

In this paper, we focus on probabilistic forecasts for real-valued outcomes. A probabilistic forecast takes the form of a probability distribution $F$ over all possible values of the outcome $Y \in \R$, and is typically stated as a cumulative distribution function (cdf). While scoring rules assess relative forecast performance, it is also important to determine whether probabilistic forecasts are calibrated.  Forecast calibration is often assessed in practice using histograms or \emph{pp}-plots of probability integral transform (PIT) values \citep{DieboldEtAl1998,GneitingEtAl2007}. Such diagnostic tools additionally allow practitioners to ascertain what systematic biases manifest in the forecasts, thereby helping to improve future predictions.

While several definitions of calibration have been proposed \cite[see][for a recent review]{GneitingResin2022}, all focus on overall forecast performance, and do not allow practitioners to evaluate forecasts made for extreme outcomes. In this paper, we introduce a general notion of forecast \emph{tail calibration} that quantifies the reliability of predictions when forecasting extreme events. We derive theoretical implications of tail calibration, study the connection with peaks-over-threshold models in extreme value theory, and provide examples where standard calibration does and does not imply tail calibration. In general, tail calibration is not implied by classical definitions of calibration, highlighting that it provides additional information that is not available from existing forecast evaluation methods.

In some fields, \emph{forecast calibration} refers to re-calibration methods that adapt forecasts so that they are calibrated. This paper does not concern re-calibration, though it facilitates the introduction of \emph{tail re-calibration} methods. Tail re-calibration methods currently do not exist, largely due to a lack of tools to evaluate the resulting forecasts. Our definitions of tail calibration remedy this by allowing forecasters to assess whether their forecasts for extreme outcomes are reliable, and if not, to understand how the forecasts are miscalibrated, for example, if the tails of the predictive distribution are too light or too heavy.

In Section~\ref{sec:calib}, we revisit notions of calibration for probabilistic forecasts, and introduce concepts that are suitable when evaluating forecasts for extreme events. Connections between tail calibration and well-known results in extreme value theory are explored in Section~\ref{sec:mda}. Section~\ref{sec:diagnostics} presents several examples and introduces diagnostic tools to assess tail calibration in practice, and these diagnostic tools are then applied to a canonical weather forecasting data set in Section~\ref{sec:casestudy}. Section~\ref{sec:conclusion} summarizes our findings. The appendices in the supplement contain sections about the following topics: additional proofs and calculations; a result about the evaluation of tail behavior using multiple scoring rules; a definition of, and results on, marginal tail calibration; additional simulation results evaluating well-known forecasts in the literature with respect to their probabilistic tail calibration; and explanations on the construction of confidence intervals in our diagnostic tools. An R package and R code to reproduce the results herein is provided as supplementary material.

\section{Tail calibration}\label{sec:calib}

Let $( \Omega, \Aa, \Q )$ be a probability space supporting a random pair
$
(F, Y)
$, where $F$ is a random probabilistic forecast for the outcome variable $Y$. We restrict attention to real-valued outcomes and treat $F$ as a random cumulative distribution function (cdf). We assume that $F$ is almost surely continuous, as this case is most relevant with regards to extremes and leads to a simpler presentation. However, generalizations are possible; see Remark \ref{rem:F_jumps}.

\subsection{Classical notions of calibration}

\cite{Tsyplakov2011} refers to a probabilistic forecast $F$ as \emph{auto-calibrated} if, for all $y \in \R$, $F(y) = \Q ( Y \leq y \mid F)$ almost surely. That is, given that the distribution $F$ has been issued as the forecast, the observation will arise according to this distribution. This provides an intuitive requirement that a forecast must satisfy in order to be considered reliable. More generally, for some $\sigma$-algebra $\Bb\subseteq \Aa$, a forecast is called \emph{(probabilistically) $\Bb$-calibrated} if 
\begin{equation}\label{eq:auto-cal:3}
    \Q(Z_F \le u \mid \Bb ) = u \quad \text{almost surely, for all $u \in [0,1]$},
\end{equation}
where $Z_F = F(Y)$ is the probability integral transform (PIT) of $Y$ with respect to $F$. In other words, $Z_F$ is required to have a standard uniform distribution, $Z_F\sim \Unif(0,1)$, and to be independent of $\Bb$. \citet[Proposition~3.4]{Modeste2023} shows that if $\sigma(F)\subseteq \Bb$, then $\Bb$-calibration of $F$ is equivalent to  $F(y) = \Q ( Y \leq y \mid \Bb)$ almost surely, for all $y \in \R$. The case $\Bb = \sigma(F)$ yields auto-calibration. \citet[Definition~3]{Tsyplakov2014} refers to \eqref{eq:auto-cal:3} as conditional probabilistic calibration, and for $\sigma(F)\subseteq \Bb$, $\Bb$-calibration is equivalent to the forecast $F$ being ideal with respect to $\Bb$ \citep{GneitingRanjan2013}.

\begin{remark}[Discontinuous forecasts]
\label{rem:F_jumps}
We assume that the forecast cdf $F$ is continuous almost surely, but the notions of calibration that rely on $Z_F$ can also be defined for possibly discontinuous $F$. In this case, let $V$ be a standard uniform random variable that is independent of $(F,Y)$ and define $Z_F = F(Y-) + V[ F(Y) - F(Y-) ] $, where $F(y-)$ is the left-hand limit of $F$ at $y$ \citep[Definition 2.5]{GneitingRanjan2013}. The result of \citet[Proposition 3.4]{Modeste2023} continues to hold with this definition of $Z_F$.
\end{remark}

Auto-calibration is a strong and intuitive notion of calibration, and several statistical tests for auto-calibration have been proposed; see for example \citet{StrahlZiegel2017}, \citet{Modeste2023}, and related literature on testing forecast encompassing \citep{HarveyLeybourneETAL1998}. However, it is typically difficult to assess auto-calibration in practice using diagnostic tools. Instead, it is common to identify relevant deficiencies of forecasts by employing powerful diagnostics that assess necessary criteria for auto-calibration.

Following \cite{GneitingEtAl2007}, the forecast distribution $F$ is said to be \emph{marginally calibrated} if $\E [ F(y) ] = \Q( Y \leq y )$ for all $y \in \R$, and \emph{probabilistically calibrated} if $\Q ( Z_{F} \leq u ) = u$ for all $u \in [0,1]$, i.e. if $Z_F \sim \Unif(0,1)$. Marginal calibration implies that the average probability assigned to $Y$ exceeding a threshold is equal to the true unconditional probability, for any threshold. Probabilistic calibration corresponds to \eqref{eq:auto-cal:3} when $\Bb$ is the trivial $\sigma$-algebra, $\{\varnothing, \Omega\}$, and implies that all central prediction intervals of $F$ have the correct coverage. 

Marginal and probabilistic calibration are unconditional notions of calibration, whereas auto-calibration conditions on the forecast distribution \citep{GneitingResin2022}. Intermediate notions of forecast calibration exist, including quantile and threshold calibration, which condition on functionals of the forecast distribution; see Appendix~\ref{app:marginal} in the supplement. \citet{GneitingEtAl2007} show that marginal and probabilistic calibration are neither necessary nor sufficient conditions for each other. However, if a forecast is auto-calibrated, then it is both marginally and probabilistically calibrated, whereas the converse is not true. More generally, for $\Bb \subset \sigma(F)$, $\Bb$-calibration is generally weaker than auto-calibration, whereas it is generally stronger if $\sigma(F) \subset \Bb$.  

\begin{example}[Unfocused forecaster]\label{ex:1}
   To see that probabilistic calibration is strictly weaker than auto-calibration when $F$ is not deterministic, consider the \emph{unfocused} forecaster of \citet[Example~3]{GneitingEtAl2007}. Suppose $Y\sim N(\mu, 1)$ follows a normal distribution with mean $\mu$ and variance 1 conditionally on $\mu$, where $\mu \sim N(0, 1)$, and let $F = (1/2) N(\mu, 1) + (1/2)N(\mu + \tau, 1)$ with $\tau$ equal to $1$ and $-1$ with probability $0.5$, independently of $\mu$. Figure~\ref{fig:unfocused} shows an example of this forecast for a fixed $\mu$. This forecaster is either positively or negatively biased, with equal probability. As a result, this forecaster is not auto-calibrated (and also not marginally calibrated). However, perhaps unintuitively, the biases cancel out such that the forecast is probabilistically calibrated.
\end{example}

\begin{figure}
    \centering
    \includegraphics[width=3.3in]{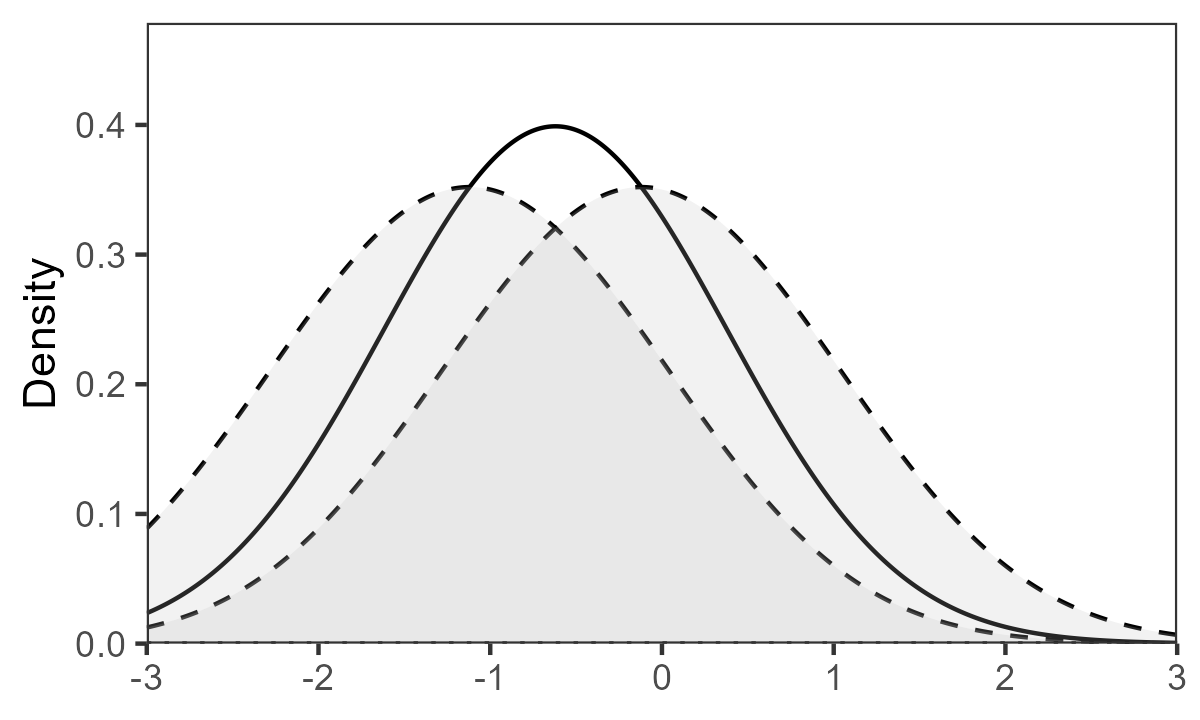}
    \caption{The unfocused forecaster in Example~\ref{ex:1} issues each of the dashed densities with probability one half, while the true outcome distribution is the solid density.}
    \label{fig:unfocused}
\end{figure}

\subsection{Notions of tail calibration}
\label{sec:tailcalib}

To assess the calibration of forecasts for extreme events, we adopt the standard approach of extreme value theory and decompose $\Q(Y> x+t ) = \Q(Y > t) \Q(Y > x+t \mid Y > t)$, 
for $x \ge 0$ and large $t$, i.e., $\Q(Y > t)$ positive but close to zero. Then, we separately consider calibration with respect to both factors on the right-hand side. Under auto-calibration, 
\[
\Q(Y > x+t \mid F) = \Q(Y>t \mid F)\; \frac{\Q(Y>x+t \mid F)}{\Q(Y > t \mid F)} = (1-F(t)) (1-F_t(x))
\]
with the \emph{conditional forecast excess distribution} $F_t$ defined as
\begin{equation}
\label{eq:Ftx}
F_{t}(x) = \frac{F(x + t) - F(t)}{1-F(t)}, \quad x \ge 0,
\end{equation}
if $F(t) < 1$, and $F_t(x) = 1$, otherwise.
To evaluate the calibration of the forecast's tails, we propose checking two things for large $t$: first, whether the forecast survival function is close to the conditional survival function of the outcome variable, that is, whether $1-F(t)$ is close to $\Q(Y>t \mid F)$; and second, whether the conditional forecast excess distribution is close to the conditional excess distribution of $Y$, that is, whether $1-F_t(x)$ is close to $\Q(Y>x+t \mid F)/\Q(Y > t \mid F)$ for all $x \ge 0$. We will call these two aspects the \emph{occurrence} and the \emph{severity} of excesses, respectively.

In order to come up with non-trivial notions of calibration for large $t$, it is important to keep the following remark in mind.

\begin{remark}[Scaling issue]
\label{rem:limit_F_t}
A direct comparison of the true and forecast conditional excess distributions $\Q(Y - t \le x \mid Y > t)$ and $F_t(x)$, for $x \ge 0$ and large $t$, would potentially yield uninformative results because of a scaling issue. To see this, suppose $Y$ and $F$ both have unbounded support. For fixed $x > 0$, if both $Y$ and $F$ are heavy-tailed, then both probabilities tend to $0$ as $t \to \infty$, whereas if both have a tail that is lighter than exponential, then both probabilities tend to $1$ almost surely. In the two cases, the difference between $\Q(Y - t \le x \mid Y > t)$ and $F_t(x)$ tends to zero as $t \to \infty$, even though the tails of $Y$ and $F$ could be very different.
\end{remark}

To avoid this scaling issue , we define the \emph{excess probability integral transform (PIT)} as
\begin{equation}\label{eq:exPIT}
\ZFt = F_{t}(Y - t),
\end{equation}
and propose the following definitions of tail calibration.

\begin{defin}\label{def:TAC_new}
Let $x_Y = \sup \{ x \in \R : \Q(Y \le x) < 1 \}$ be the upper endpoint of $Y$ and let $\Bb \subseteq \Aa$ be a $\sigma$-algebra. Assume that the upper endpoint of the conditional distribution of $Y$ given $\Bb$ is equal to $x_Y$, i.e., for all $t < x_Y$, we have, almost surely, both $\Q(Y > t \mid \Bb) > 0$ and $\Q(Y > x_Y \mid \Bb) = 0$. 
\begin{enumerate}
    \item[(a)] The forecast $F$ is \emph{tail $\Bb$-calibrated} for $Y$ if $\E[1 - F(t) \mid \Bb] > 0$ almost surely for all $t < x_Y$ and if, for all $u \in [0,1]$, we have
    \begin{equation}
    	\label{eq:tailBCt}
    	\frac{\Q\bigl(\ZFt \le u, Y > t \mid \Bb\bigr)}{\E[1 - F(t) \mid \Bb]} \to u, 
    	\qquad \text{almost surely as $t \to x_Y$.}
    \end{equation}

    \item[(b)] The forecast $F$ is \emph{tail auto-calibrated} for $Y$ if it is tail $\sigma(F)$-calibrated, that is, $\1 - F(t) > 0$ almost surely for all $t < x_Y$, and
    \begin{equation*}
    	\frac{\Q\bigl(\ZFt \le u, Y > t \mid F\bigr)}{1 - F(t)} \to u, 
    	\qquad \text{almost surely as $t \to x_Y$.}
    \end{equation*}

    \item[(c)] The forecast $F$ is \emph{probabilistically tail calibrated} for $Y$ if it is tail $\{ \varnothing, \Omega \}$-calibrated, that is, $\E[1 - F(t)] > 0$ for all $t < x_Y$, and
    \begin{equation*}
    	\frac{\Q\bigl(\ZFt \le u, Y > t\bigr)}{\E[1 - F(t)]} \to u, 
    	\qquad \text{as $t \to x_Y$.}
    \end{equation*}
\end{enumerate}

\end{defin}

\begin{remark}[Marginal tail calibration]
	\label{rem:marginal}
    Marginal and probabilistic calibration are two unconditional notions of forecast calibration. Due to its greater relevance in practice, we opt for probabilistic calibration as a starting point for defining notions of tail calibration. In Appendix~\ref{app:marginal} in the supplement, we explore an option based on marginal calibration. 
\end{remark}

The definition of $\Bb$-calibration at \eqref{eq:auto-cal:3} motivates the definition of tail calibration with respect to an arbitrary information set $\Bb$. The $\sigma$-algebra $\Bb$ encodes the information with respect to which we would like to assess the forecaster's calibration; tail auto-calibration and probabilistic tail calibration emerge as special cases. Probabilistic tail calibration is similar in essence to the standard notion of probabilistic calibration, except that it relies on the excess PIT associated with the conditional forecast excess distribution and outcomes above a large threshold. \cite{AllenEtAl2023b} and \cite{MitchellWeale2023} consider a similar framework to assess forecast calibration when interest is in exceedances of a fixed threshold. By taking the limit as~$t$ tends to the upper endpoint, we provide a theoretically attractive framework with which to evaluate forecasts made for extreme events, addressing the lack of existing methods to achieve this.

The left-hand side of Eq.~\eqref{eq:tailBCt} can be written as a \emph{combined ratio} via
\[
	\frac{\Q\bigl(\ZFt \le u, Y > t \mid \Bb\bigr)}{\E[1 - F(t) \mid \Bb]}
	=
	\frac{\Q\bigl(Y > t \mid \Bb\bigr)}{\E[1 - F(t) \mid \Bb]}
	\cdot
	\frac{\Q\bigl(\ZFt \le u, Y > t \mid \Bb\bigr)}{\Q\bigl(Y > t \mid \Bb\bigr)}.
\]
The single condition at \eqref{eq:tailBCt} is equivalent to two conditions together: For all $u \in [0,1]$,
\begin{align}
    \frac{\Q(Y > t \mid \Bb)}{\E[1 - F(t) \mid \Bb]}
    &\to 1, 
    \quad \text{almost surely as $t \to x_Y$;} 
    \label{eq:tailBC:occ}\\
    \frac{\Q\bigl(\ZFt \le u, Y > t \mid \Bb\bigr)}{\Q(Y > t \mid \Bb)} 
    &\to u, 
    \quad \text{almost surely as $t \to x_Y$.} 
    \label{eq:tailBC:sev}
\end{align}

That \eqref{eq:tailBC:occ} and \eqref{eq:tailBC:sev} imply \eqref{eq:tailBCt} is easy to see. For the converse, set $u = 1$ in \eqref{eq:tailBCt} to conclude \eqref{eq:tailBC:occ}, and then combine \eqref{eq:tailBCt} with \eqref{eq:tailBC:occ} to obtain \eqref{eq:tailBC:sev}. Condition~\eqref{eq:tailBC:occ} concerns the \emph{occurrence} of excesses over a high threshold: Conditionally on $\Bb$, we would like the exceedance probabilities over~$t$ to be forecast correctly, at least asymptotically. In other words, we would like the tails of $Y$ and $F$ to be equivalent, given $\Bb$. Condition~\eqref{eq:tailBC:sev} concerns the \emph{severity} of excesses over~$t$ once they occur, and arguably needs more justification. 
Condition~\eqref{eq:tailBC:sev} is related to the characterization of $\Bb$-calibration in \eqref{eq:auto-cal:3}. Asymptotically, under tail $\Bb$-calibration, the conditional probability $\Q\bigl(\ZFt \le u, Y > t \mid \Bb\bigr)$ factorizes as $\Q(Y>t\mid \Bb) \cdot u$, and hence the excess PIT is (approximately) uniformly distributed and independent of $\Bb$; when $\Bb = \{\varnothing, \Omega\}$, condition~\eqref{eq:tailBC:sev} can be simplified to $\Q\bigl(\ZFt \le u \mid Y > t \bigr)\to u$ almost surely as $t \to x_Y$. The single condition at \eqref{eq:tailBCt} in Definition~\ref{def:TAC_new} has the following persuasive interpretation: Given the information $\Bb$, the conditional probability of $\{Y > t\} \cap \{\ZFt \le u\}$ factorizes asymptotically as $(1-F(t)) \cdot u$.

When assessing the quality of forecasts in the tail, it is not sufficient to focus only on condition~\eqref{eq:tailBC:sev}, the \emph{severity ratio}, and forget about condition~\eqref{eq:tailBC:occ}, the \emph{occurrence ratio}. Suppose $F$ is a deterministic forecast such that there exists $t_0 < x_Y$ and $c > 0$ such that $1 - F(t) = c \Q(Y > t)$ for all $t \ge t_0$. Then $F$ and $Y$ have the same conditional excess distributions for thresholds above $t_0$ and thus $F$ satisfies condition \eqref{eq:tailBC:sev}, even though the tail probability ratio $\Q(Y > t) / (1-F(t))$ can be arbitrarily small or large. Conversely, only considering the occurrence ratio \eqref{eq:tailBC:occ} without the severity ratio \eqref{eq:tailBC:sev} is also insufficient, as the following example shows.

\begin{example}[Misinformed forecaster]
\label{ex:mis}
For $\gamma > 0$, let $\Delta_1$ and $\Delta_2$ be two independent $\Gamma(1/\gamma,1/\gamma)$ random variables, with $\Gamma(a, b)$ the gamma distribution with shape $a > 0$ and scale $b > 0$. Conditionally on $(\Delta_1,\Delta_2)$, let the outcome $Y$ be an $\Exp(\Delta_1)$ random variable, $\Q(Y \le x \mid \Delta_1, \Delta_2) = 1-e^{-\Delta_1 x}$ for $x \ge 0$, and consider the \emph{misinformed forecaster} $F = \Exp(\Delta_2)$, i.e., $F(x) = 1 - e^{-\Delta_2 x}$ for $x \ge 0$. For $t \ge 0$, we have
\[
	\Q(Y \le t) = \E[F(t)] = \E[1 - e^{-\Delta_j t}]
	= 1 - (1+\gamma t)^{-1/\gamma} = \GPD_{1,\gamma}(t),
\]
the generalized Pareto distribution with scale parameter $1$ and shape parameter $\gamma$ (see also Section~\ref{sec:mda} below). For $\Bb = \{\varnothing, \Omega\}$, condition~\eqref{eq:tailBC:occ} is trivially satisfied: by construction, the marginal occurrence ratio $\Q(Y > t)$ is forecast correctly on average by $\E[1-F(t)]$. However, condition~\eqref{eq:tailBC:sev} is violated, since the excess PIT is calculated with respect to the conditional excess distribution $F_t = \Exp(\Delta_2)$, while the conditional distribution of $Y-t$ is $\Exp(\Delta_1)$. Intuitively, $Y$ is large when the rate parameter $\Delta_1$ is close to zero, but then the scale $1/\Delta_1$ of the excess $Y-t$ is (under-)estimated by $1/\Delta_2$ instead. As a consequence, the limit in \eqref{eq:tailBC:sev} is $0$ instead of $u \in (0, 1)$. Details are provided in Appendix~\ref{app:proofs} in the supplement.
\end{example}

For non-random probabilistic forecasts $F$, the situation simplifies considerably, as it is sufficient to evaluate the occurrence ratio in order to assess tail $\Bb$-calibration.

\begin{propo}
\label{propo:tailequiv}
If the continuous probabilistic forecast $F$ is non-random, then $F$ is tail $\Bb$-calibrated if and only if 
\begin{equation}
    \label{eq:tailBC3}
    \frac{\Q(Y > t \mid \Bb)}{1-F(t)} \to 1, \qquad \text{almost surely as $t \to x_Y$.}
\end{equation}
\end{propo}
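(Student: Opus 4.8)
As noted in the discussion preceding the proposition, condition~\eqref{eq:tailBCt} is equivalent to the pair \eqref{eq:tailBC:occ} and \eqref{eq:tailBC:sev}, and when $F$ is non-random we have $\E[1-F(t)\mid\Bb]=1-F(t)$, so that \eqref{eq:tailBC:occ} is \emph{literally} condition~\eqref{eq:tailBC3}. The ``only if'' direction is therefore immediate: tail $\Bb$-calibration gives \eqref{eq:tailBC:occ}, i.e.\ \eqref{eq:tailBC3}. For the ``if'' direction it remains to show that, for non-random $F$, \eqref{eq:tailBC3} implies the severity condition \eqref{eq:tailBC:sev}. Note that both sides of the claimed equivalence presuppose $1-F(t)>0$ for all $t<x_Y$, so that $\ZFt=(F(Y)-F(t))/(1-F(t))$ on $\{Y>t\}$.

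The plan is to reduce \eqref{eq:tailBC:sev} to a single application of \eqref{eq:tailBC3} at a shifted threshold. Fix $u\in[0,1)$; for $t<x_Y$ put $\theta_t(u):=F(t)+u\,(1-F(t))\in[F(t),1)$ and $b_t(u):=\sup\{y\in\R:F(y)\le\theta_t(u)\}$. Since $F$ is continuous and non-decreasing, $\{y:F(y)\le\theta_t(u)\}=(-\infty,b_t(u)]$ with $F(b_t(u))=\theta_t(u)$, and $b_t(u)\ge t$ because $F(t)\le\theta_t(u)$. On $\{Y>t\}$ one has $\ZFt\le u\iff F(Y)\le\theta_t(u)\iff Y\le b_t(u)$, whence
\begin{equation*}
\Q\bigl(\ZFt\le u,\,Y>t\mid\Bb\bigr)=\Q(Y>t\mid\Bb)-\Q\bigl(Y>b_t(u)\mid\Bb\bigr).
\end{equation*}
Dividing by $\Q(Y>t\mid\Bb)$ and inserting $1-F(b_t(u))=(1-u)(1-F(t))$,
\begin{equation*}
\frac{\Q(\ZFt\le u,\,Y>t\mid\Bb)}{\Q(Y>t\mid\Bb)}
=1-\frac{\Q(Y>b_t(u)\mid\Bb)}{1-F(b_t(u))}\;(1-u)\;\frac{1-F(t)}{\Q(Y>t\mid\Bb)}.
\end{equation*}
Since $t\le b_t(u)$ and $b_t(u)$ is non-decreasing in $t$ with $b_t(u)\to x_Y$ as $t\to x_Y$, condition~\eqref{eq:tailBC3}, applied along the threshold $b_t(u)$ and along $t$, forces the first and third factors on the right to converge to $1$ almost surely, so the whole expression tends to $1-(1-u)=u$. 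This is~\eqref{eq:tailBC:sev}; the excluded value $u=1$ is trivial since then $\ZFt\le u$ always and $b_t(1)=+\infty$ gives $\Q(Y>b_t(1)\mid\Bb)=0$.

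I expect the only real friction to be bookkeeping. First, the identity $\{F(Y)\le\theta_t(u)\}=\{Y\le b_t(u)\}$ uses continuity of $F$ to absorb any flat stretches, so one should record $F(b_t(u))=\theta_t(u)$ carefully (and hence $b_t(u)<x_F$ for $u<1$, keeping the ratio $\Q(Y>b_t(u)\mid\Bb)/(1-F(b_t(u)))$ well defined). Second --- and this is where the hypothesis genuinely bites --- one needs $b_t(u)\to x_Y$, i.e.\ the upper endpoint $x_F$ of $F$ to equal $x_Y$; for non-random $F$ this is forced by~\eqref{eq:tailBC3} except in a degenerate configuration ($Y$ carrying an atom at $x_Y$ of mass exactly $1-F(x_Y)$ while $x_F>x_Y$), which should be noted or excluded under the standing assumptions. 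Third, the ``almost surely as $t\to x_Y$'' convergences should be arranged to hold on a common null set; this is harmless because $t\mapsto\Q(Y>t\mid\Bb)$ admits a non-increasing version and $t\mapsto b_t(u)$ is monotone, so it suffices to pass to the limit along a countable dense set of thresholds.
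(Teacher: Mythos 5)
Your argument is correct and follows essentially the same route as the paper's proof: reduce the ``if'' direction to showing that \eqref{eq:tailBC3} implies \eqref{eq:tailBC:sev}, rewrite $\{\ZFt\le u,\,Y>t\}$ as a threshold event $\{t<Y\le b_t(u)\}$ for $Y$, factor the severity ratio into pieces, and control each piece by applying \eqref{eq:tailBC3} at $t$ and at a shifted threshold tending to $x_Y$. The only difference is cosmetic: you use the right generalized inverse $b_t(u)=\sup\{y:F(y)\le\theta_t(u)\}$ (so that $F(b_t(u))=\theta_t(u)$ gives an exact identity, valid because $F$ is continuous), whereas the paper works with $F^{-1}=\inf\{y:F(y)\ge\point\}$ and absorbs the strict/non-strict discrepancy in a $u\pm\eps$ sandwich before letting $\eps\downarrow 0$. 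Your caveat about the degenerate configuration in which $Y$ carries an atom at $x_Y$ (so that $x_F>x_Y$ can coexist with \eqref{eq:tailBC3}) is a genuine subtlety: the paper's proof addresses it by citing Lemma~\ref{lem:endpoint} to obtain $\Q(Y=x_Y)=0$, but that lemma's hypothesis is precisely \eqref{eq:tailBC:sev}, which is the very conclusion being established in this direction of the proof, so the atom-free property is really an implicit standing assumption, exactly as you flag.
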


The proof of Proposition~\ref{propo:tailequiv} is given in Appendix~\ref{app:proofs} in the supplement. It makes use of the following lemma on the upper endpoint of the outcome $Y$, which is also useful for further results. Its proof is also given in Appendix~\ref{app:proofs}.

\begin{lemma}
    \label{lem:endpoint}
    On the probability space $(\Omega, \Aa, \Q)$, let $F$ be a random probabilistic forecast, let $Y$ be a real-valued outcome variable with upper endpoint $x_Y \in \R \cup \{+\infty\}$, and let $\Bb \subseteq \Aa$ be a $\sigma$-algebra. Assume $F$ is continuous a.s.\ and assume the upper endpoint of the conditional distribution of $Y$ given $\Bb$ is $x_Y$. If \eqref{eq:tailBC:sev} holds, and in particular if $F$ is tail $\Bb$-calibrated for $Y$, then $\Q(Y = x_Y) = 0$.
\end{lemma}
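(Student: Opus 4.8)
First I would dispose of the trivial case $x_Y = +\infty$, for which $\Q(Y = x_Y) = 0$ holds automatically since $Y$ is real-valued, and then, for $x_Y \in \R$, analyse the excess PIT $\ZFt$ on the event $A := \{Y = x_Y\}$ as $t \uparrow x_Y$. The key is to decompose $A = A_0 \sqcup A_1$ with $A_0 := A \cap \{F(x_Y) < 1\}$ and $A_1 := A \cap \{F(x_Y) = 1\}$. On $A$ one has $Y = x_Y$, so $\ZFt = F_t(x_Y - t)$ for $t < x_Y$; using that $F$ is continuous at the finite point $x_Y$, so that $F(t) \uparrow F(x_Y)$ as $t \uparrow x_Y$, on $A_0$ (where $F(t) < 1$ for every $t < x_Y$) the excess PIT equals $(F(x_Y) - F(t))/(1 - F(t)) \to 0$, whereas on $A_1$ one has $\ZFt = 1$ for every $t < x_Y$, regardless of whether $F(t) < 1$, because $F_t \equiv 1$ by definition when $F(t) = 1$.

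Next I would fix $u \in (0,1)$ and a sequence $t_n \uparrow x_Y$ with $t_n < x_Y$, and identify the almost sure limit of $X_n := \one\{\ZFtn \le u,\, Y > t_n\}$. Since $Y \le x_Y$ almost surely, the cases $\{Y < x_Y\}$, $A_0$ and $A_1$ are exhaustive: on $\{Y < x_Y\}$ the event $\{Y > t_n\}$ holds for only finitely many $n$, so $X_n \to 0$; on $A_0$ one has $Y = x_Y > t_n$ for all $n$ and $\ZFtn \to 0 \le u$, so $X_n \to 1$; on $A_1$ one has $\ZFtn = 1 > u$, so $X_n = 0$. Hence $X_n \to \one_{A_0}$ almost surely, and since $|X_n| \le 1$ the conditional dominated convergence theorem gives $\Q(\ZFtn \le u,\, Y > t_n \mid \Bb) = \E[X_n \mid \Bb] \to \Q(A_0 \mid \Bb)$ almost surely. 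On the other hand $\{Y > t_n\} \downarrow \{Y \ge x_Y\} = A$ almost surely, so $\Q(Y > t_n \mid \Bb) \to \Q(A \mid \Bb)$ almost surely, and since these conditional probabilities are almost surely positive for $t_n < x_Y$ by the endpoint hypothesis, the severity assumption \eqref{eq:tailBC:sev} yields $\Q(\ZFtn \le u,\, Y > t_n \mid \Bb) \to u\,\Q(A \mid \Bb)$ almost surely. Comparing the two limits, $\Q(A_0 \mid \Bb) = u\,\Q(A \mid \Bb)$ almost surely, for every fixed $u \in (0,1)$.

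Finally, I would apply this identity with, say, $u = 1/4$ and $u = 3/4$ and intersect the two null sets to conclude $\Q(A \mid \Bb) = 0$ almost surely, hence $\Q(Y = x_Y) = \Q(A) = \E[\Q(A \mid \Bb)] = 0$; the last clause of the lemma follows because tail $\Bb$-calibration implies \eqref{eq:tailBC:sev}, as noted right after Definition~\ref{def:TAC_new}. The main obstacle is the case analysis on $A$: the naive expectation that $\ZFt \to 0$ on $\{Y = x_Y\}$ fails exactly when $F(x_Y) = 1$, so one must separate $A_1$ and use that there the indicator $X_n$ never enters $\{\ZFtn \le u\}$ for $u < 1$; a secondary complication is that the rate of $\ZFt \to 0$ on $A_0$ depends on the sample point, which forces the use of conditional dominated convergence rather than any uniform bound.
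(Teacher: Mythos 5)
Your proof is correct. It shares the key ingredients with the paper's proof (the trivial case $x_Y=+\infty$, the decomposition into $A_0 = \{Y = x_Y,\, F(x_Y) < 1\}$ and $A_1 = \{Y = x_Y,\, F(x_Y)=1\}$, and conditional dominated convergence), but the overall structure is genuinely different. The paper argues by contraposition: it assumes $\Q(Y=x_Y)>0$, builds the event $B=\{\Q(Y=x_Y\mid\Bb)>0\}$ of positive probability, and splits into two cases according to whether $\Q(\{F(x_Y)<1,Y=x_Y\}\cap B)$ is positive or zero, showing in each case a one-sided bound (a $\liminf$ lower bound in the first case, a vanishing limit in the second) that contradicts \eqref{eq:tailBC:sev} for an appropriate range of $u$. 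Your proof is a direct computation: you observe that $\one\{\ZFtn\le u,\,Y>t_n\}\to\one_{A_0}$ a.s.\ for every fixed $u\in(0,1)$, apply conditional DCT to get the exact limit $\Q(A_0\mid\Bb)$, and compare it against what \eqref{eq:tailBC:sev} forces, namely $u\,\Q(Y=x_Y\mid\Bb)$, to obtain the identity $\Q(A_0\mid\Bb)=u\,\Q(Y=x_Y\mid\Bb)$ a.s.\ for all $u\in(0,1)$; evaluating at two values of $u$ then kills $\Q(Y=x_Y\mid\Bb)$. Your route buys a cleaner, case-free argument (the case distinction in the paper is effectively folded into the single identity that holds for all $u$), and it pins down the exact pointwise limit rather than a one-sided bound; the paper's contrapositive version, on the other hand, makes it explicit which side of \eqref{eq:tailBC:sev} fails in each regime of $F(x_Y)$, which some readers may find more diagnostic.
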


\begin{remark}[Endpoint]
    \label{rem:endpoint}
    In Definition~\ref{def:TAC_new}, it is assumed that the upper endpoint of the conditional distribution of $Y$ given $\Bb$ is not random and is equal to $x_Y$. In some scenarios, this assumption could be restrictive. For example, if the information in $\Bb$ determines whether an insurance-policy holder has a positive probability of filing a large claim $Y$ or not, then the upper endpoint of the conditional distribution is random. One possible extension of the definition of tail $\Bb$-calibration would be to assume that there exists a $\Bb$-measurable random variable $x_{\Bb}$ such that the upper endpoint of $Y$ given $\Bb$ is $x_{\Bb}$. Details for this added generality would have to be worked out.
\end{remark}

\subsection{Implications of tail calibration}

The unfocused forecaster in Example~\ref{ex:1} shows that probabilistic calibration is a strictly weaker requirement than auto-calibration if the forecast distribution $F$ is not deterministic. Similarly, the following example introduces a \emph{tail unfocused} forecaster, which is probabilistically tail calibration but not tail auto-calibrated; see also Example~\ref{ex:optimistic} in Section~\ref{sec:diagnostics}. 

\begin{example}[Tail unfocused forecaster]\label{ex:12}
Let $\xi, \tau$ be independent random variables, $\xi \sim \Exp(1)$ unit-exponential and $\Q(\tau = +1) = \Q(\tau = - 1) = 1/2$. Define $a_{\tau} = \log \left( (2 + \tau) / 2 \right)$. Let $Y = \xi$ and, given $\tau$, let $F$ be the cdf of $\xi + a_{\tau}$. The random variable $Y$ is unit-exponential, while $F$ is the conditional cdf of $Y$ with a random location shift of $a_+$ or $a_-$. The \emph{tail unfocused} forecaster $F$ is probabilistically tail calibrated but not tail auto-calibrated for $Y$. 
\end{example}

In concert with the following proposition, it follows that tail auto-calibration is a strictly stronger requirement than probabilistic tail calibration. 

\begin{propo}\label{prop:impl1}
    Let $\Cc\subseteq \Bb \subseteq \Aa$ be $\sigma$-algebras. Suppose that $F$ is tail $\Bb$-calibrated for $Y$, and that the convergence in Definition \ref{def:TAC_new} is in $L^\infty$. Then $F$ is also tail $\Cc$-calibrated for $Y$.
\end{propo}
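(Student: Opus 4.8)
The plan is to exploit the tower property of conditional expectations, which is available precisely because $\Cc \subseteq \Bb$. First I would check that the two standing assumptions in Definition~\ref{def:TAC_new} pass automatically from $\Bb$ to $\Cc$. Since $1 - F(t) \ge 0$ and, by hypothesis, $\E[1 - F(t) \mid \Bb] > 0$ almost surely for every $t < x_Y$, the tower rule gives $\E[1 - F(t) \mid \Cc] = \E\bigl[\E[1 - F(t) \mid \Bb] \mid \Cc\bigr] > 0$ almost surely. Likewise $\Q(Y > t \mid \Cc) = \E[\Q(Y > t \mid \Bb) \mid \Cc] > 0$ almost surely for $t < x_Y$, while $\Q(Y > x_Y \mid \Cc) = \E[\Q(Y > x_Y \mid \Bb) \mid \Cc] = 0$ almost surely, so the conditional upper endpoint of $Y$ given $\Cc$ is again $x_Y$, as required for the notion of tail $\Cc$-calibration to be well defined.

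For the convergence itself, fix $u \in [0,1]$ and abbreviate $N_t = \Q(\ZFt \le u, Y > t \mid \Bb)$ and $D_t = \E[1 - F(t) \mid \Bb]$, so that tail $\Bb$-calibration with $L^\infty$-convergence says precisely that $\eps_t := \lVert N_t / D_t - u \rVert_{L^\infty(\Q)} \to 0$ as $t \to x_Y$; equivalently, $\lvert N_t - u D_t \rvert \le \eps_t D_t$ almost surely. Applying $\E[\point \mid \Cc]$ and the tower rule yields $\E[N_t \mid \Cc] = \Q(\ZFt \le u, Y > t \mid \Cc)$, $\E[D_t \mid \Cc] = \E[1 - F(t) \mid \Cc]$, and, by conditional Jensen's inequality, $\lvert \E[N_t \mid \Cc] - u\,\E[D_t \mid \Cc] \rvert \le \E\bigl[\lvert N_t - u D_t\rvert \mid \Cc\bigr] \le \eps_t\, \E[D_t \mid \Cc]$ almost surely. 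Dividing by $\E[D_t \mid \Cc] = \E[1 - F(t) \mid \Cc] > 0$ gives
\[
    \left\lvert \frac{\Q(\ZFt \le u, Y > t \mid \Cc)}{\E[1 - F(t) \mid \Cc]} - u \right\rvert \le \eps_t \qquad \text{almost surely,}
\]
and since $\eps_t \to 0$ this is exactly the defining convergence of tail $\Cc$-calibration (in fact again an $L^\infty$-convergence).

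The computation is short, and the only real subtlety — and the reason the hypothesis is stated as $L^\infty$-convergence rather than the almost-sure convergence of Definition~\ref{def:TAC_new} — is that conditional expectation does not commute with almost-sure limits. Here both $N_t$ and $D_t$ typically vanish as $t \to x_Y$, so the quotient $\E[N_t \mid \Cc] / \E[D_t \mid \Cc] = \E\bigl[(N_t/D_t)\,D_t \mid \Cc\bigr] / \E[D_t \mid \Cc]$ is of indeterminate $0/0$ type, and from $N_t/D_t \to u$ alone one cannot push the limit through the $D_t$-weighted conditional average (conditional dominated convergence would control $\E[(N_t/D_t - u)D_t \mid \Cc]$ but not its ratio to $\E[D_t \mid \Cc]$). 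The uniform bound supplied by $L^\infty$-convergence sidesteps this entirely, so the whole proof collapses to one application of the tower property together with the triangle inequality.
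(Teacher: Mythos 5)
Your proof is correct and takes essentially the same route as the paper's: both hinge on the tower property $\E[\,\cdot\mid\Cc] = \E[\E[\,\cdot\mid\Bb]\mid\Cc]$ together with the $L^\infty$ bound to control the reweighted conditional average, the only difference being that you rewrite the paper's chain of equalities as the pointwise inequality $\lvert N_t - uD_t\rvert \le \eps_t D_t$ before conditioning. You also add a small but worthwhile check, omitted in the paper, that the standing hypotheses of Definition~\ref{def:TAC_new} (positivity of $\E[1-F(t)\mid\Cc]$ and the conditional upper endpoint being $x_Y$) pass from $\Bb$ to $\Cc$ via the tower rule.
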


\begin{proof}
We have
	\begin{align*}
		&\frac{\Q(\ZFt \le u, Y > t \mid \Cc)}{\E[1 - F(t) \mid \Cc]} - u
		= \E \left[
		\frac{\Q(\ZFt \le u, Y > t \mid \Bb)}{\E[1 - F(t) \mid \Cc]}
		\;\Big|\; \Cc
		\right] - u \\
		&= \E \left[
		\frac{\Q(\ZFt \le u, Y > t \mid \Bb)}{\E[1 - F(t) \mid \Bb]} \cdot
		\frac{\E[1 - F(t) \mid \Bb]}{\E[1 - F(t) \mid \Cc]}
		\;\Big|\; \Cc
		\right]
		- u \E \left[
		\frac{\E[1 - F(t) \mid \Bb]}{\E[1 - F(t) \mid \Cc]}
		\;\Big|\; \Cc
		\right] \\
		&= \E \left[
		\left\{
		\frac{\Q(\ZFt \le u, Y > t \mid \Bb)}{\E[1 - F(t) \mid \Bb]}
		- u
		\right\} \cdot
		\frac{\E[1 - F(t) \mid \Bb]}{\E[1 - F(t) \mid \Cc]}
		\;\Big|\; \Cc
		\right].
	\end{align*}
	Since the term in curly braces tends to zero in $L^\infty$, $F$ is tail $\Cc$-calibrated for $Y$. 
\end{proof}

\begin{corollary}\label{cor}
If the convergence in Definition \ref{def:TAC_new} is in $L^\infty$ for $\Bb = \sigma(F)$, tail auto-calibration implies probabilistic tail calibration. The converse is false.
\end{corollary}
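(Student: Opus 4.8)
The plan is to handle the corollary's two assertions separately, each by appeal to material already in place.

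\textbf{The forward implication} is immediate. Tail auto-calibration is, by definition, tail $\sigma(F)$-calibration, and $\{\varnothing,\Omega\} \subseteq \sigma(F) \subseteq \Aa$. So I would invoke Proposition~\ref{prop:impl1} with $\Cc = \{\varnothing,\Omega\}$ and $\Bb = \sigma(F)$: its hypothesis — that the convergence in Definition~\ref{def:TAC_new} holds in $L^\infty$ for $\Bb = \sigma(F)$ — is precisely the standing assumption of the corollary, and its conclusion is tail $\{\varnothing,\Omega\}$-calibration, that is, probabilistic tail calibration. (The endpoint hypothesis in Definition~\ref{def:TAC_new} is vacuous for the trivial $\sigma$-algebra, reading $x_Y = x_Y$.) Nothing further is needed for this half.

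\textbf{To show the converse fails} I would exhibit a forecast that is probabilistically tail calibrated but not tail auto-calibrated; the tail unfocused forecaster of Example~\ref{ex:12} is exactly such a forecast, so citing it closes the proof. If one wants the argument self-contained, one verifies the two claims of Example~\ref{ex:12} by an elementary shifted-exponential computation. The key point is that, since $F$ is the unit-exponential law translated by the random amount $a_\tau$, the conditional forecast excess distribution $F_t$ equals the standard exponential cdf whatever the value of the shift; hence $\ZFt = 1 - e^{-(Y-t)}$ on $\{Y > t\}$ carries no information about $\tau$. Combined with $Y \sim \Exp(1)$, $\E[1-F(t)] = e^{-t}$ for large $t$, and memorylessness, this gives $\Q(\ZFt \le u, Y > t)/\E[1-F(t)] = u$ exactly for all large $t$, so $F$ is probabilistically tail calibrated. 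On the other hand, conditioning on $F$ amounts to conditioning on $\tau$; since $Y \indep \tau$ the numerator $\Q(\ZFt \le u, Y > t \mid F)$ is still $e^{-t}u$, whereas $1 - F(t) = e^{a_\tau}e^{-t}$ depends on $\tau$, so the ratio in the definition of tail auto-calibration converges to $e^{-a_\tau}u$ (indeed equals it for all large $t$), which differs from $u$ for $u \in (0,1)$. Equivalently, the occurrence ratio~\eqref{eq:tailBC:occ} already fails conditionally on $F$, while the severity ratio~\eqref{eq:tailBC:sev} holds.

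\textbf{Expected difficulty.} There is essentially no obstacle: the forward implication is a one-line specialization of Proposition~\ref{prop:impl1}, and the converse reduces to the routine computation with shifted exponentials sketched above (which the paper defers to Example~\ref{ex:12} and its appendix verification). The only subtlety worth flagging is that the $L^\infty$ assumption is genuinely used — it is what allows Proposition~\ref{prop:impl1} to pass the limit through the conditional expectation involving the ratio $\E[1-F(t)\mid\Bb]/\E[1-F(t)\mid\Cc]$ — so one should not expect the forward implication to hold under almost-sure convergence alone without additional control on that ratio.
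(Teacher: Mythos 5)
Your proposal is correct and follows the paper's own proof exactly: the forward implication is obtained by specializing Proposition~\ref{prop:impl1} to $\Bb = \sigma(F)$ and $\Cc = \{\varnothing,\Omega\}$, and the converse is refuted by the tail unfocused forecaster of Example~\ref{ex:12} (the paper also points to Example~\ref{ex:optimistic} as a second counterexample, but one suffices). Your self-contained verification of Example~\ref{ex:12} matches the computation the paper gives in Appendix~\ref{app:proofs}, including the observation that the occurrence ratio~\eqref{eq:tailBC:occ} fails conditionally on $F$ while the severity ratio~\eqref{eq:tailBC:sev} holds.
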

\begin{proof}
    This is a consequence of Proposition \ref{prop:impl1} with $\Bb = \sigma(F)$ and $\Cc = \{ \varnothing, \Omega \}$. For the converse, see Examples~\ref{ex:12} and~\ref{ex:optimistic}.
\end{proof}

Proposition \ref{prop:impl1} shows that the larger the conditioning set $\Bb$, the stronger the notion of tail calibration. Since proper scoring rules are not available as an evaluation tool for tails of forecast distributions, one could alternatively rank forecasts for extreme events by the size of the $\sigma$-algebra $\Bb$ with respect to which they are tail calibrated. \cite{StrahlZiegel2017} formalize a similar concept for classical notions of calibration: a forecast is \emph{cross-calibrated} if it is calibrated with respect to the information of all other relevant forecasters. This approach could equivalently be employed using notions of tail calibration, see Section \ref{sec:btail}.

We can also identify relationships between standard notions of calibration and their tail counterparts. These relationships are summarized in Figure \ref{fig:implications}.

\begin{propo}\label{prop:probtotail}
    \begin{enumerate}[label=(\alph*)]
        \item If $\sigma(F)\subset \Bb$, then $\Bb$-calibration implies tail $\Bb$-calibration.
        In particular, auto-calibration implies tail auto-calibration but the converse is false.
        \item Probabilistic calibration does not imply probabilistic tail calibration or vice versa. 
\end{enumerate}
\end{propo}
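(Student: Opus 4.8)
The plan is to prove the implication in part~(a) by a direct computation of the combined ratio appearing in Definition~\ref{def:TAC_new}, and to settle the three remaining non-implications with two small examples. \textbf{Part~(a), the implication.} Assume the standing hypotheses of Definition~\ref{def:TAC_new}, that $\sigma(F)\subseteq\Bb$, and that $F$ is $\Bb$-calibrated, so by \citet[Proposition~3.4]{Modeste2023} $\Q(Y\le t\mid\Bb)=F(t)$ and, equivalently, $Z_F=F(Y)\sim\Unif(0,1)$ independently of $\Bb$. Since $F(t)$ is $\Bb$-measurable the denominator in \eqref{eq:tailBCt} is $\E[1-F(t)\mid\Bb]=1-F(t)$, which is positive a.s.\ for $t<x_Y$ by the endpoint hypothesis, so the positivity requirement in Definition~\ref{def:TAC_new}(a) holds. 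For the numerator I would use that on $\{Y>t\}$ (and, a.s., $\{F(t)<1\}$) one has $\ZFt=F_t(Y-t)=(Z_F-F(t))/(1-F(t))$, whence
\[
\{\ZFt\le u\}\cap\{Y>t\}=\{Z_F\le F(t)+u(1-F(t))\}\cap\{Y>t\}.
\]
Because $F$ is continuous and non-decreasing, $\{Z_F>F(t)\}\subseteq\{Y>t\}\subseteq\{Z_F\ge F(t)\}$, so this set differs from $\{F(t)<Z_F\le F(t)+u(1-F(t))\}$ only by a subset of $\{Z_F=F(t)\}$, which has conditional probability $0$ given $\Bb$ since $Z_F\sim\Unif(0,1)$ is independent of the $\Bb$-measurable level $F(t)$; the same independence and uniformity give $\Q(F(t)<Z_F\le F(t)+u(1-F(t))\mid\Bb)=u(1-F(t))$ a.s. Dividing, the combined ratio in \eqref{eq:tailBCt} is identically $u$ for all $t<x_Y$, hence converges to $u$, so $F$ is tail $\Bb$-calibrated; taking $\Bb=\sigma(F)$ gives the ``auto-calibration implies tail auto-calibration'' assertion.

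\textbf{The converse in~(a) and the direction ``tail $\Rightarrow$ standard'' of~(b).} I would dispatch both at once with a single non-random forecast: let $F$ be a continuous, strictly increasing cdf that coincides with the distribution function of $Y$ on a left neighbourhood $[t_0,x_Y)$ of the upper endpoint but not everywhere (for instance $Y\sim\Exp(1)$ with $1-F(t)=e^{-t}$ for $t\ge 1$ and $F$ linear on $[0,1)$). Then $\sigma(F)=\{\varnothing,\Omega\}$, so tail auto-calibration and probabilistic tail calibration coincide for $F$, and both hold: by Proposition~\ref{propo:tailequiv} it suffices that $\Q(Y>t)/(1-F(t))=1\to 1$ as $t\to x_Y$, which is true since the two survival functions agree on $[t_0,x_Y)$. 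On the other hand, a non-random continuous strictly increasing forecast is probabilistically calibrated (equivalently auto-calibrated) only if $Z_F=F(Y)\sim\Unif(0,1)$, i.e.\ only if $F$ equals the marginal law of $Y$, which fails by construction. Hence tail auto-calibration does not imply auto-calibration, and probabilistic tail calibration does not imply probabilistic calibration.

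\textbf{The direction ``standard $\Rightarrow$ tail'' of~(b).} It remains to exhibit a probabilistically calibrated forecast that is not probabilistically tail calibrated, for which I would take the unfocused forecaster of Example~\ref{ex:1}, which is probabilistically calibrated. Here $Y\sim N(0,2)$, so $\Q(Y>t)=\overline{\Phi}(t/\sqrt{2})$, while averaging the two mixture components of $F$ over $\mu\sim N(0,1)$ and $\tau=\pm 1$ yields
\[
\E[1-F(t)]=\tfrac12\,\overline{\Phi}(t/\sqrt{2})+\tfrac14\,\overline{\Phi}\bigl((t-1)/\sqrt{2}\bigr)+\tfrac14\,\overline{\Phi}\bigl((t+1)/\sqrt{2}\bigr).
\]
Standard Gaussian tail estimates show that the middle term dominates and is exponentially larger than $\overline{\Phi}(t/\sqrt{2})$ as $t\to\infty$, so the occurrence ratio $\Q(Y>t)/\E[1-F(t)]\to 0$. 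Taking $u=1$ in \eqref{eq:tailBCt} with $\Bb=\{\varnothing,\Omega\}$, the left-hand side then tends to $0\ne 1$, so $F$ is not probabilistically tail calibrated.

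\textbf{Main obstacle.} The delicate point is the event-algebra bookkeeping in part~(a): replacing $\{Y>t\}$ by $\{Z_F\ge F(t)\}$ when $F$ is merely continuous, and justifying $\Q(Z_F=F(t)\mid\Bb)=0$ for the random, $\Bb$-measurable threshold $F(t)$. Both follow from continuity of $F$ together with $Z_F$ being $\Unif(0,1)$ and independent of $\Bb$, but these are the steps that need care; the Gaussian tail computation for the unfocused forecaster and the appeal to Proposition~\ref{propo:tailequiv} for the non-random example are routine.
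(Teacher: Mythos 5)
Your proof is correct and takes essentially the same route as the paper. For part~(a) the paper's own argument conditions directly on $\Bb$ and evaluates $\Q(t < Y \le F^{-1}(u(1-F(t))+F(t)) \mid \Bb)$ using the fact that the conditional law of $Y$ given $\Bb$ is $F$ and that $F$ is continuous; your reformulation via the PIT $Z_F$, sandwiching $\{Y>t\}$ between $\{Z_F>F(t)\}$ and $\{Z_F\ge F(t)\}$ and discarding the null event $\{Z_F=F(t)\}$, is the same computation in a different parametrization and is carried out carefully (including the positivity check on the denominator). For the counterexamples, the paper cites Example~\ref{ex:nonrandom} (a non-random forecast agreeing with the truth only beyond a threshold) for the converse of~(a) and for the ``tail $\not\Rightarrow$ standard'' direction of~(b), exactly as you do; for the ``standard $\not\Rightarrow$ tail'' direction the paper formally uses the compactly supported variant of the unfocused forecaster in Example~\ref{ex:unfoc} (with $G=\Unif(0,1)$), whereas you use the original Gaussian unfocused forecaster of Example~\ref{ex:1} and verify directly, via Mills-ratio asymptotics, that the occurrence ratio $\Q(Y>t)/\E[1-F(t)]\to 0$. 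Both choices work; the Gaussian one requires the tail estimate you supply, while the uniform one lets the paper compute the excess-PIT law in closed form. Either way the claims are established.
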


\begin{proof}
    Suppose that $F$ is $\Bb$-calibrated with $\sigma(F) \subset \Bb$. For $B \in \Bb$, $u \in [0,1]$, and $t < x_Y$, we obtain
    \begin{align*}
        \E[\one\{F_t(Y - t) \le u, \, Y>t\}\one_B]  &= \E[\E[\one\{Y \le F^{-1}(u(1-F(t))+F(t))\}\one\{Y>t\}\mid \Bb]\one_B]\\
        &= \E[(u(1-F(t)) + F(t) - F(t))\one_B] = u \E[(1-F(t))\one_B],
    \end{align*}
    since the conditional distribution of $Y$ given $\Bb$ is $F$, and $F$ is assumed to be continuous. This implies $\E\bigl[\one\{\ZFt \le u, \, Y>t\} - u (1-F(t)) \mid \Bb\bigr] = 0$ and thus $\Q\bigl(\ZFt \le u, \, Y>t \mid \Bb\bigr) = u \E[1-F(t) \mid \Bb]$, showing the sufficiency in part~(a). For the converse, see Example~\ref{ex:nonrandom}.
    
    Part~(b) is shown by Examples~\ref{ex:nonrandom} and~\ref{ex:unfoc}.
\end{proof}

\begin{figure}
    \centering
    \includegraphics[width=0.5\textwidth]{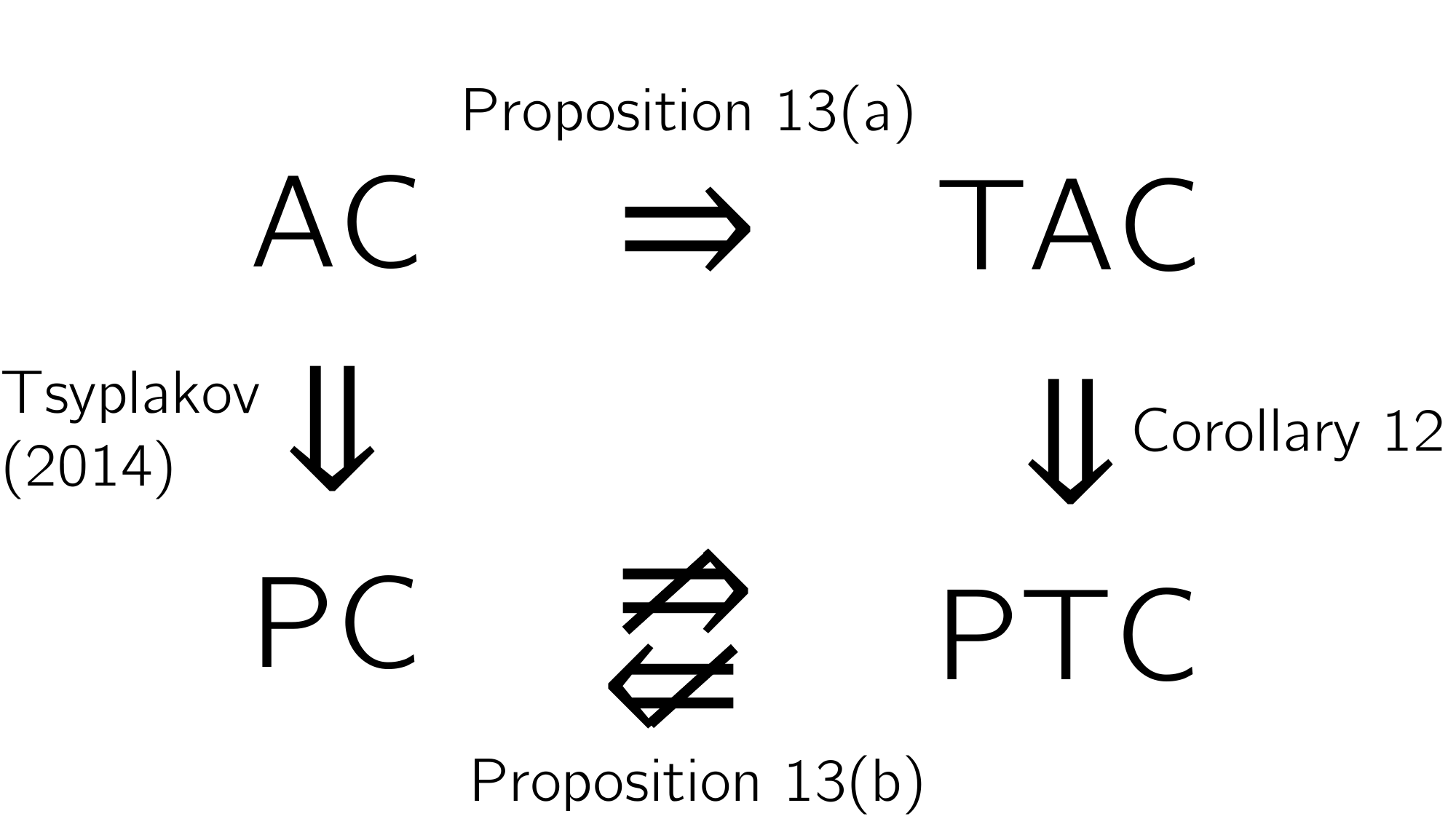}
    \caption{Hierarchies of the notions of calibration and tail calibration. Auto-calibration (AC) implies probabilistic calibration (PC), as well as tail auto-calibration (TAC), which in turn implies probabilistic tail calibration (PTC) (under technical conditions).}
    \label{fig:implications}
\end{figure}

While auto-calibration implies tail auto-calibration, probabilistic calibration does not imply probabilistic tail calibration. Marginal calibration also does not imply its tail counterpart, see Appendix~\ref{app:marginal} in the supplement. This highlights a shortcoming of classical definitions of calibration when interest is in the tails---in this sense, we extend the negative result of \cite{BrehmerStrokorb2019} regarding scoring rules and extremes to notions of calibration. The proposed notions of tail calibration therefore provide information that is not available from existing forecast verification tools.

\section{Probabilistic tail calibration and peaks-over-threshold models}\label{sec:mda}

We argue that probabilistic tail calibration should routinely be checked to evaluate probabilistic forecasts with regards to their tails. Therefore, it is interesting to understand how stringent the requirement of probabilistic tail calibration is. In this section, we provide sufficient conditions. Consider the following assumption on the outcome variable $Y$:
\begin{enumerate}[label=(A\arabic*)]
\item \label{ass:Ygpd} 
	The distribution of $Y$ is in the max-domain of attraction of a nondegenerate distribution.
\end{enumerate}
    Assumption \ref{ass:Ygpd} is standard in extreme value theory, and by \citet{Balkema.deHaan.1974}, it is equivalent to the existence of $\xi \in \R$ and a scaling function $\sigma_Y(\,\cdot\,) > 0$ such that
	\begin{equation}
	\label{eq:Ygpd}
		\lim_{t \to x_Y} \Q \bigl(
			Y > t + \sigma_Y(t) x \mid Y > t 
		\bigr)
		= 1 - \GPD_{1,\xi}(x) 
		= (1 + \xi x)_+^{-1/\xi}, \quad \text{for all $x \ge 0$},
	\end{equation}
    where $(y)_+ = \max(y,0)$ and where $(1+\xi x)_+^{-1/\xi}$ is to be understood as $\exp(-x)$ when $\xi=0$. The parametric family in \eqref{eq:Ygpd} specifies the generalized Pareto distribution (GPD) with shape parameter $\xi \in \R$ and unit scale. For a general scale parameter $\sigma > 0$, we define $\GPD_{\sigma,\xi}(x) = \GPD_{1,\xi}(x/\sigma)$, $x \ge 0$. The GPD is the basis of the peaks-over-threshold method in \citet{Davison-Smith.1990} used to approximate the tails of distributions satisfying~\ref{ass:Ygpd}. The shape parameter determines the rate of tail decay, with power-law decay for $\xi>0$, exponential decay for $\xi=0$, and polynomial decay towards a finite upper bound for $\xi<0$.
    
    Next, consider a similar assumption on the random forecast distribution $F$, formulated in terms of the conditional forecast excess distribution $F_t$ in \eqref{eq:Ftx}: 
\begin{enumerate}[label=(A\arabic*)]
  \setcounter{enumi}{1}
\item \label{ass:Fgpd}
	The upper endpoint $x_F = \sup \{ x \in \R : F(x) < 1 \} \in \R \cup \{+\infty\}$ of $F$ is deterministic and satisfies $x_F = x_Y$, and there exist $\eta \in \R$ and a deterministic scaling function $\sigma_F(\,\cdot\,) > 0$ so that
	\begin{equation}
	\label{eq:Fgpd}
		\lim_{t \to x_F}
		\Q \left( 
			\left| 
				1 - F_t(\sigma_F(t)x) 
				- (1 + \eta x)_+^{-1/\eta} 
			\right| 
			> \eps 
			\, \Big| \, 
			Y > t
		\right)
		= 0, \quad \text{for all $x \ge 0$, $\eps > 0$}.
	\end{equation}
\end{enumerate}
	In words, conditionally on $Y > t$, the rescaled conditional forecast excess distribution $F_t(\sigma_F(t)\point)$ converges in probability to a fixed GPD.

\begin{propo}[Sufficient condition for probabilistic tail calibration]
\label{pro:suffptcmtc}
	Suppose that Assumptions~\ref{ass:Ygpd} and~\ref{ass:Fgpd} are fulfilled and that $\lim_{t \to x_Y} \Q(Y > t) / \E[1 - F(t)] = 1$. The following conditions are equivalent:
	\begin{enumerate}[label=(\alph*)]
	\item $\xi = \eta$ and $\lim_{t \to x_Y} \sigma_F(t)/\sigma_Y(t) = 1$;
	\item $F$ is probabilistically tail calibrated for $Y$.
	\end{enumerate}
\end{propo}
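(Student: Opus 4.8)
The plan is to recast both (a) and (b) as statements about the conditional law of $\ZFt$ given $\{Y>t\}$, and then to identify that limiting law using Assumptions~\ref{ass:Ygpd} and~\ref{ass:Fgpd}. Write $\Q_t(\cdot):=\Q(\cdot\mid Y>t)$, which is well defined for $t<x_Y$. By the combined‑ratio identity of Section~\ref{sec:tailcalib},
\[
\frac{\Q(\ZFt\le u,\,Y>t)}{\E[1-F(t)]}
=\frac{\Q(Y>t)}{\E[1-F(t)]}\cdot\Q_t(\ZFt\le u),
\]
and since the hypothesis forces the first factor on the right to tend to $1$, statement (b) is equivalent to the severity condition $\Q_t(\ZFt\le u)\to u$ for all $u\in[0,1]$, i.e.\ to $\ZFt\Rightarrow\Unif(0,1)$ under $\Q_t$ as $t\to x_Y$ (the requirement $\E[1-F(t)]>0$ for $t<x_Y$ in Definition~\ref{def:TAC_new}(c) is automatic because $x_F=x_Y$). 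Introduce $W_t=(Y-t)/\sigma_Y(t)$ and $r_t=\sigma_Y(t)/\sigma_F(t)$, so that $\ZFt=F_t\bigl(\sigma_F(t)\,r_tW_t\bigr)$. By~\ref{ass:Ygpd}, $W_t\Rightarrow W$ under $\Q_t$ with $W\sim\GPD_{1,\xi}$, a law with $W>0$ almost surely; by~\ref{ass:Fgpd}, the random nondecreasing functions $x\mapsto F_t(\sigma_F(t)x)$ converge in $\Q_t$‑probability to the continuous limit $\GPD_{1,\eta}$, pointwise in $x\ge 0$.

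The technical heart is a composition claim: if $t_n\to x_Y$ and $r_{t_n}\to r\in(0,\infty)$, then $\ZFtn\Rightarrow\GPD_{1,\eta}(rW)$ under $\Q_{t_n}$. To prove it I would first upgrade the pointwise‑in‑probability convergence above to locally uniform convergence in probability, i.e.\ $\sup_{x\in[0,M]}\lvert F_t(\sigma_F(t)x)-\GPD_{1,\eta}(x)\rvert\to 0$ in $\Q_t$‑probability for each $M$; this is a routine finite‑grid argument exploiting monotonicity of $F_t(\sigma_F(t)\,\cdot\,)$ and uniform continuity of $\GPD_{1,\eta}$ on compacts. Since $r_t$ is bounded, the family $\{r_tW_t\}$ is tight under $\Q_t$, so in the splitting $\ZFt=\GPD_{1,\eta}(r_tW_t)+\bigl(\ZFt-\GPD_{1,\eta}(r_tW_t)\bigr)$ the remainder tends to $0$ in $\Q_t$‑probability, while $\GPD_{1,\eta}(r_tW_t)\Rightarrow\GPD_{1,\eta}(rW)$ by Slutsky's lemma ($r_t\to r$, $W_t\Rightarrow W$) and the continuous mapping theorem; a further application of Slutsky's lemma yields the claim.

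Granting the claim, (a)$\Rightarrow$(b) is immediate: (a) says $r_t\to1$ and $\eta=\xi$, hence $\ZFt\Rightarrow\GPD_{1,\xi}(W)\sim\Unif(0,1)$ under $\Q_t$, which is the severity condition, and together with the occurrence hypothesis this is exactly (b). For (b)$\Rightarrow$(a) I would argue by contradiction on the behaviour of $r_t$. If $r_{t_n}\to\infty$ along a subsequence, then $r_{t_n}W_{t_n}\to\infty$ in $\Q_{t_n}$‑probability and monotonicity together with $F_t(\sigma_F(t)M)\to\GPD_{1,\eta}(M)$ forces $\ZFtn\to1$ in probability, contradicting $\ZFt\Rightarrow\Unif(0,1)$; symmetrically $r_{t_n}\to0$ forces $\ZFtn\to0$. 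Hence $r_t$ is bounded and bounded away from $0$, so every sequence $t_n\to x_Y$ has a subsequence along which $r_{t_n}\to r\in(0,\infty)$, and the composition claim with (b) gives $\GPD_{1,\eta}(rW)\sim\Unif(0,1)$, i.e.\ $\GPD_{1,\eta}^{-1}(u)=r\,\GPD_{1,\xi}^{-1}(u)$ for all $u\in(0,1)$. Since the GPD quantile function has derivative $1$ at $u=0$ for every shape parameter, differentiating at $0$ gives $r=1$, whence $\GPD_{1,\eta}^{-1}\equiv\GPD_{1,\xi}^{-1}$ and $\eta=\xi$. As every subsequential limit of the bounded sequence $(r_t)$ equals $1$, we conclude $r_t\to1$, i.e.\ (a).

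The main obstacle is the composition claim in the presence of a genuinely random $F$: the convergence of $F_t(\sigma_F(t)\,\cdot\,)$ is only in probability, $F_t$ and $W_t$ are carried by the same space and are dependent, and the argument $r_tW_t$ is random and merely tight before passing to a subsequence. The device of upgrading pointwise convergence in probability of monotone random functions to locally uniform convergence in probability, followed by a Slutsky‑type separation of $\ZFt$ into a deterministic‑limit part and a negligible remainder, is what makes this rigorous; the remaining ingredients — the reduction via the combined ratio, the elementary estimates ruling out $r_t\to0,\infty$, and the GPD quantile computation — are bookkeeping.
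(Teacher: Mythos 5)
Your proof is correct and reaches the same conclusion, but by a genuinely different route from the paper's. The paper proves the more general Proposition~\ref{pro:suffptcmtc2} in Appendix~\ref{app:marginal} using a sandwich argument: it first upgrades \eqref{eq:Ygpd} and \eqref{eq:Fgpd} to convergence that is uniform over \emph{all} $x \ge 0$ (Lemma~\ref{lem:unifconv}, a finite-grid argument analogous to your ``upgrade'' step, but global rather than local because the target is a bounded cdf), then on a high-probability event squeezes $\Q(\ZFt \le u \mid Y > t)$ between two explicit GPD expressions in $\sigma_F(t)/\sigma_Y(t)$, $\xi$, $\eta$, and from the resulting quantile identity deduces that constancy in $u$ forces $\xi = \eta$ and the ratio $1$. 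You instead organize the argument around weak convergence: set $W_t = (Y-t)/\sigma_Y(t)$ and $r_t = \sigma_Y(t)/\sigma_F(t)$, establish a composition claim $\ZFtn \Rightarrow \GPD_{1,\eta}(rW)$ along subsequences with $r_{t_n}\to r$ by splitting off a remainder controlled via tightness of $r_t W_t$ and locally uniform convergence in probability, rule out $r_{t_n}\to 0,\infty$ by monotonicity, and then identify $(r,\eta)$ from the distributional identity $\GPD_{1,\eta}^{-1}(u) = r\,\GPD_{1,\xi}^{-1}(u)$, obtaining $r=1$ by differentiating at $u=0$ (a slick alternative to the paper's ``the right-hand side must be constant in $u$'' argument). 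The trade-off: your Slutsky/continuous-mapping framing is conceptually tidy and isolates the dependence between $F_t$ and $Y$ cleanly, at the cost of a subsequential compactness detour and a tightness argument that the paper avoids by working with a globally uniform approximation error and never passing to limiting random variables at all. Both approaches hinge on essentially the same uniform-on-compacts upgrade of \eqref{eq:Fgpd}; the paper states it once (Lemma~\ref{lem:unifconv}) and reuses it on the $Y$-side too, whereas you would need to prove it for $F_t$ and invoke \eqref{eq:Ygpd} directly for $W_t$. No gaps, just a more measure-theoretic packaging of the same content.
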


For the proof of Proposition~\ref{pro:suffptcmtc}, see the more general Proposition~\ref{pro:suffptcmtc2} in Appendix~\ref{app:marginal} in the supplement.
If the scale parameter $\sigma_F(t)$ in \eqref{eq:Fgpd} is allowed to be random, then $F$ can be probabilistically tail calibrated for $Y$ even when the GPD shape parameters $\xi$ and $\eta$ in \eqref{eq:Ygpd} and \eqref{eq:Fgpd} differ. For instance, if $\Delta \sim \Gamma(1/\gamma,1/\gamma)$ for some $\gamma > 0$ and if, conditionally on $\Delta$, the distribution of $Y$ is $F = \Exp(\Delta)$, i.e., $\Q(Y \le x \mid \Delta) = 1-e^{-\Delta x} = F(x)$ for $x \ge 0$, then the unconditional distribution of $Y$ is $\GPD_{1,\gamma}$, so that \eqref{eq:Ygpd} holds with $\xi = \gamma$, while $F_t$ is $\Exp(\Delta)$ too, and thus $F_t(x/\Delta) = 1 - e^{-x}$, which is \eqref{eq:Fgpd} with random scale parameter $\sigma_F(t) = 1/\Delta$ and with $\eta = 0$.

Assumptions~\ref{ass:Ygpd}, \ref{ass:Fgpd}, and property~(a) of Proposition~\ref{pro:suffptcmtc} are sufficient but not necessary conditions for probabilistic tail calibration.
Indeed, according to Proposition~\ref{propo:tailequiv}, for non-random probabilistic forecasts, it suffices that $\Q(Y>t)/(1-F(t)) \to 1$ as $t \to x_Y$.

\section{Examples and diagnostic tools}\label{sec:diagnostics}

In practice, we observe forecast-observation pairs $(F_{1}, y_{1}), \dots, (F_{n}, y_{n})$ that are realizations of $(F, Y)$, and evaluation must be performed using this finite sample. Let $\Ii_{t} = \{ i \in \{ 1, \dots, n \} : y_{i} > t \}$ be the set of indices corresponding to observations that exceed $t$, put $n_{t} = |\Ii_{t}|$, and let $F_{i, t}$ denote the conditional excess distribution of forecast $F_{i}$ at threshold $t$.

\subsection{Probabilistic tail calibration}
\label{sec:PTCdiag}

The convergence relations~\eqref{eq:tailBCt} and~\eqref{eq:tailBC:sev} hold uniformly in $u \in (0, 1]$; a precise statement is given in Lemma~\ref{lem:unifconv-u} in Appendix~\ref{app:proofs} in the supplement. Hence, for $F$ to be tail $\Bb$-calibrated, the graph of
\[
	u \mapsto \frac{\Q\bigl(\ZFt \le u, Y > t \mid \Bb\bigr)}{\E[1-F(t)\mid \Bb]}
\]
should be close to the diagonal in the supremum distance. This motivates the following diagnostic plots.

First, consider the simplest case, $\Bb = \{\varnothing, \Omega\}$, corresponding to probabilistic (tail) calibration. Probabilistic calibration is easy to assess in practice by checking whether the PIT values $F_{1}(y_{1}), \dots, F_{n}(y_{n})$ resemble a sample from a standard uniform distribution. To assess probabilistic tail calibration using a finite sample, consider, for some large threshold $t$, the $n_{t}$ excess PIT values
\[
    \zit = F_{i, t} (y_{i} - t) = \frac{F_i(y_i)-F_i(t)}{1-F_i(t)}
\]
for $i \in \Ii_{t}$, which correspond to realizations of the random variable $\ZFt$. 

We can then calculate the combined ratio
\begin{equation}\label{eq:emp_ratio}
    \hat{R}_{t}(u) 
    = \frac{\sum_{i \in \Ii_{t}} \one\{\zit \leq u \}}{ \sum_{i = 1}^{n} (1 - F_{i}(t))},
\end{equation}
and plot this quantity as a function of $u \in [0, 1]$. If the curve lies close to the diagonal, there is evidence to suggest that the forecasts are probabilistically tail calibrated. Results in \cite{AllenEtAl2023} suggest that this diagnostic plot is also valid for non-extreme choices of $t$, and we recover a \emph{pp}-plot of the PIT values $z_i = F_i(y_i)$ when $t = -\infty$, from which we can assess standard probabilistic calibration.

If the graph of the combined ratio $u \mapsto \hat{R}_t(u)$ does not lie along the diagonal, then it may be beneficial to consider additional diagnostics derived from \eqref{eq:tailBC:occ} and \eqref{eq:tailBC:sev} instead of \eqref{eq:tailBCt}. For example, if the graph is consistently below the diagonal, this could either be because the numerator is too small, suggesting a bias in the conditional distribution, or because the denominator is too large, suggesting a bias in the forecast exceedance probabilities.

The first condition, \eqref{eq:tailBC:occ}, requires that the forecasts issue reliable threshold exceedance probabilities. This can be assessed empirically by checking whether the occurrence ratio 
\begin{equation}
	\label{eq:ocratio}
	\frac{\sum_{i=1}^{n} \one\{ y_{i} > t\}}{\sum_{i=1}^{n} (1 - F_{i}(t))},
\end{equation}
tends towards one as the threshold $t$ increases. A ratio larger (smaller) than one suggests that the forecasts under-estimate (over-estimate) threshold exceedance probabilities. Previously, other works have evaluated the performance of probabilistic forecasts with regards to tails by considering threshold exceedance probabilities \citep[see e.g.][]{WilliamsEtAl2014,TaillardatFougeresETAL2019,AllenEtAl2023}. Assessing these forecasts using reliability diagrams is closely related to our assessment of the occurrence ratio using \eqref{eq:ocratio}.

The second condition, \eqref{eq:tailBC:sev}, considers the conditional excess distribution given that the threshold has been exceeded, allowing us to evaluate forecasts for the severity of an extreme event. Under probabilistic tail calibration, the excess PIT values $\zit$ for $i \in \Ii_{t}$ should resemble a sample from a standard uniform distribution. This can again be examined using a histogram or a \emph{pp}-plot, and the shape of the diagnostic plot can be used to identify whether the conditional distribution is heavy- or light-tailed \citep{AllenEtAl2023}.

While \eqref{eq:emp_ratio} and \eqref{eq:ocratio} are defined in terms of a fixed threshold $t$, the threshold could also be chosen adaptively. For example, we may wish to assess forecasts made at different points in time or space, for which the notion of an extreme event varies. In this case, the diagnostic tools described above could be employed with a threshold that depends on the forecast index $i$, corresponding to high conditional quantiles of the outcome, for example. Choosing the threshold $t$ in practice is often a compromise between $t$ being large enough to represent the tail regime of the data, whilst also permitting sufficient data from which robust results can be obtained. Formal tests for tail calibration could be derived, for example based on the uniformity of the excess PIT values \citep[e.g.][]{MitchellWeale2023}, though these are not considered in depth here; this is discussed further in Section \ref{sec:casestudy}.

\begin{example}[Tail auto-calibration does not imply auto-calibration]\label{ex:nonrandom} 
    Let $F$ be a deterministic forecast for which there exists a threshold $t_0$ such that $\Q(Y > t_0) > 0$ and $F(y) = \Q(Y \le y)$ for all $y \ge t_0$. Trivially, $F$ is tail auto-calibrated, and hence probabilistically tail calibrated. However, $F(y)$ and $\Q(Y \le y)$ can be very different for $y < t_0$, in which case $F$ is not probabilistically calibrated. For example, consider the outcome $Y \sim \GPD_{1, 1/4}$ together with the forecast distribution $F(x) = \GPD_{4/5, 1/4}(x - 1)$ for $x < 5$ and $F(x) = \GPD_{1, 1/4}(x)$ for $x \ge 5$. 
    Figure \ref{fig:sim_nonrandom} shows the diagnostic plots for probabilistic tail calibration in this case.
\end{example}

\begin{figure}
    \centering
    \includegraphics[width=0.32\textwidth]{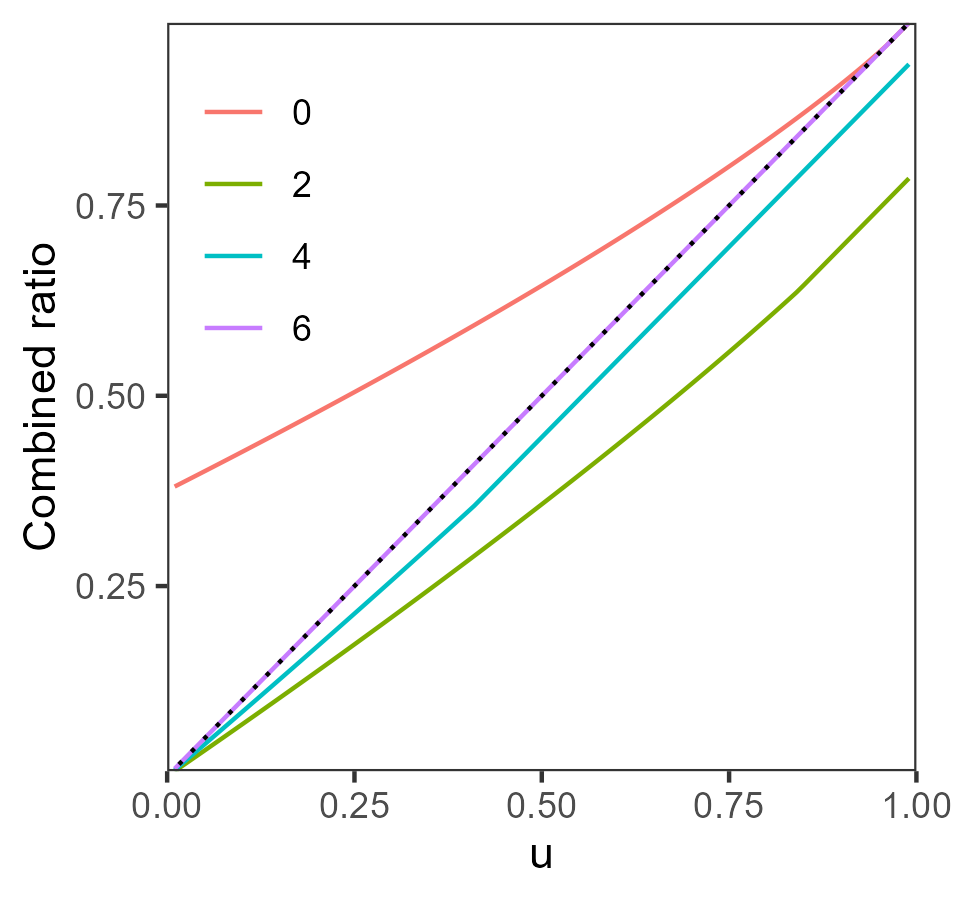}
    \includegraphics[width=0.32\textwidth]{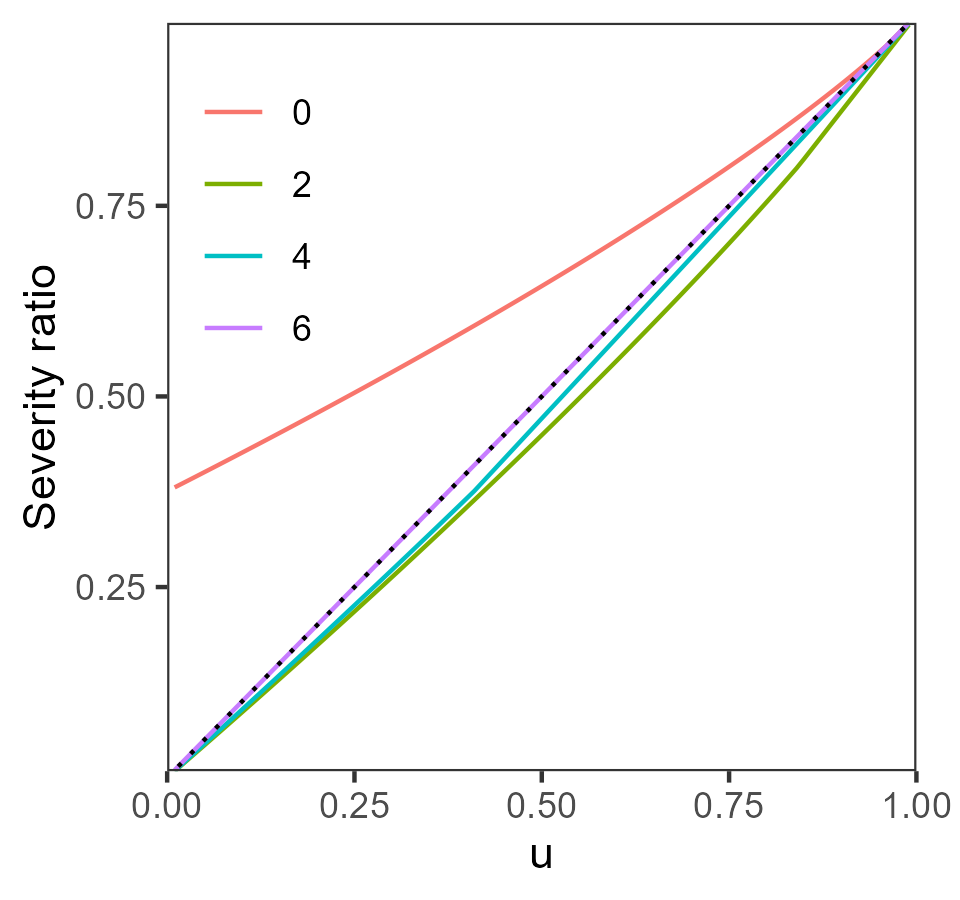}
    \includegraphics[width=0.32\textwidth]{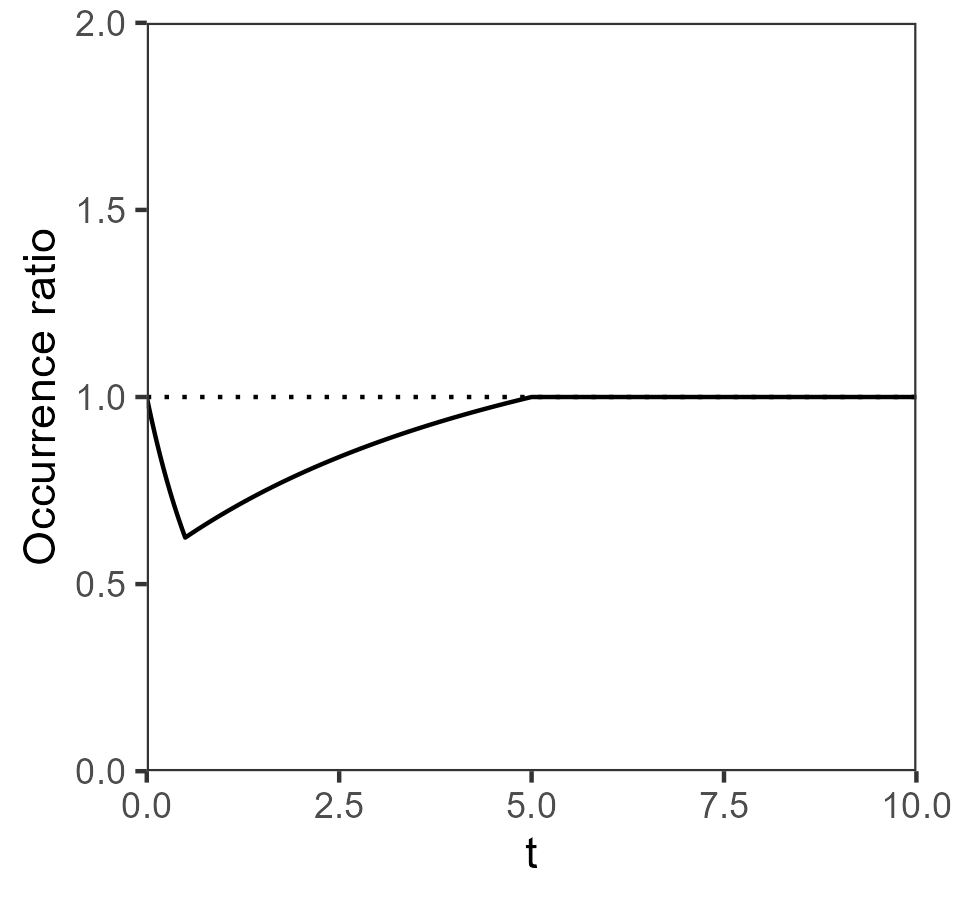}
    \caption{Probabilistic tail calibration diagnostic plots for Example~\ref{ex:nonrandom}. The different colors correspond to different thresholds $t$. Left: combined ratio \eqref{eq:emp_ratio}; middle: \emph{pp}-plot of $\zit$ for $i \in \mathcal{I}_t$; right: occurrence ratio \eqref{eq:ocratio} as a function of $t$.}
    \label{fig:sim_nonrandom}
\end{figure}

\begin{example}[Probabilistic calibration does not imply probabilistic tail calibration]\label{ex:unfoc}
    Consider a generalization of the unfocused forecaster in \cite{GneitingEtAl2007}, see Example \ref{ex:1}. Let $G$ be a distribution function supported on a compact interval with 
continuous positive density. Let $Y \sim G$, let $\tau$ be independent of $Y$ and equal to $+1$ or $-1$ with probability $1/2$ and, given $\tau$, let $F(y) = \frac{1}{2} \{G(y)+G(y+\tau)\}$ for $y \in \R$. Then $F$ is probabilistically calibrated but not necessarily probabilistically tail calibrated; see Appendix~\ref{app:proofs} for details. Figure~\ref{fig:sim_unfocused} illustrates the example for $G = \Unif(0,1)$: the curves for $t = -1$ confirm probabilistic calibration but those for $t > -1$ indicate the lack of probabilistic tail calibration.
\end{example}

\begin{figure}
    \centering
    \includegraphics[width=0.32\textwidth]{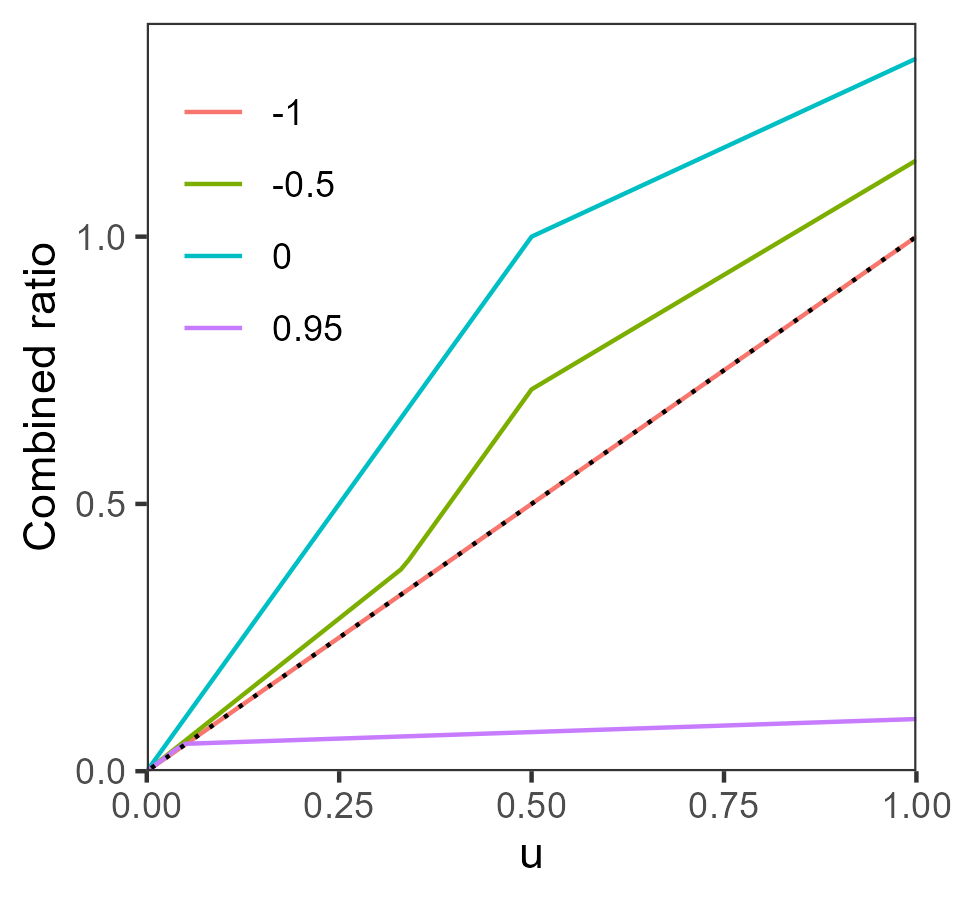}
    \includegraphics[width=0.32\textwidth]{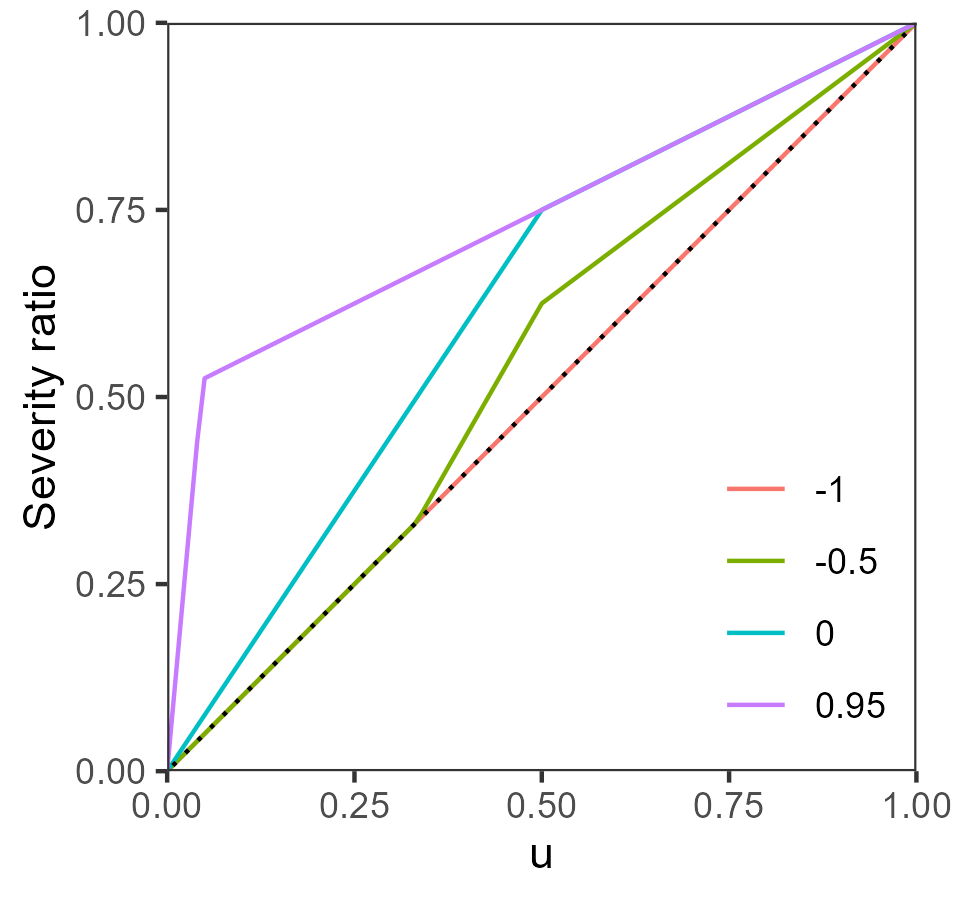}
    \includegraphics[width=0.32\textwidth]{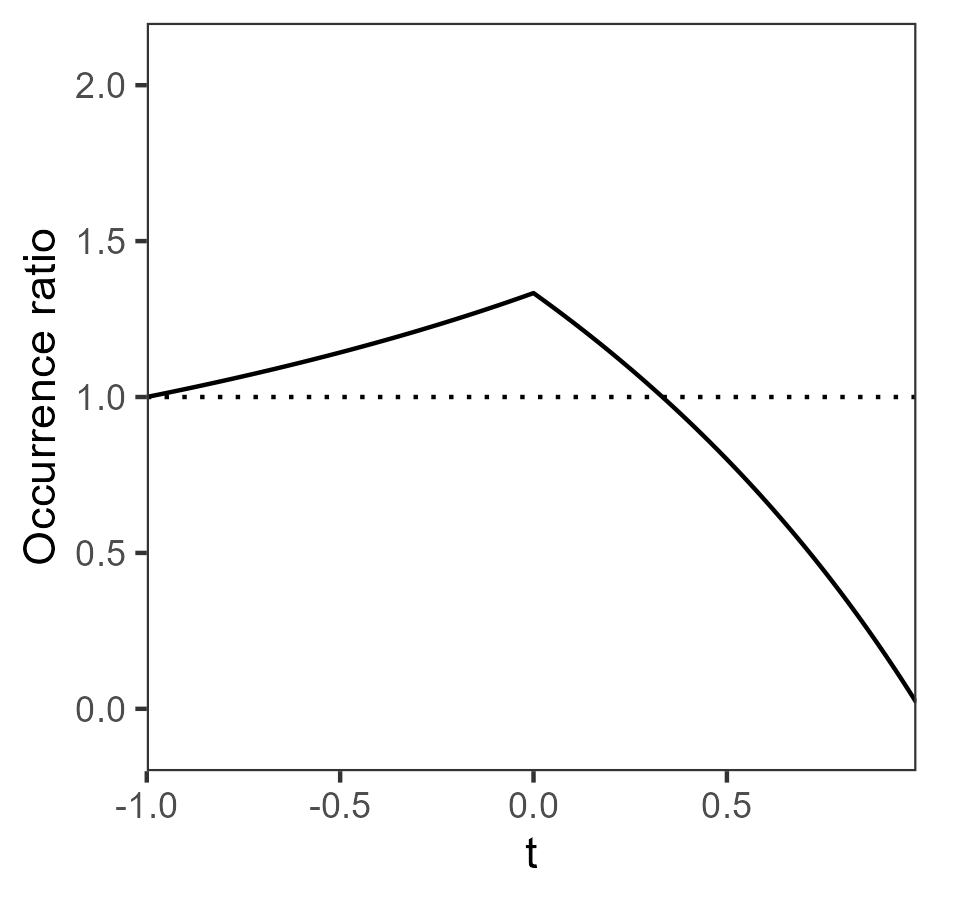}
    \caption{Probabilistic tail calibration diagnostic plots for the unfocused forecaster in Example~\ref{ex:unfoc} with $G$ the uniform distribution on $[0, 1]$. The different colors correspond to different thresholds $t$. Left: combined ratio \eqref{eq:emp_ratio}; middle: \emph{pp}-plot of $\zit$ for $i \in \mathcal{I}_t$; right: occurrence ratio \eqref{eq:ocratio} as a function of $t$.}
    \label{fig:sim_unfocused}
\end{figure}

\begin{example}[Ideal, climatological and extremist forecasters]
	\label{ex:exponential}
    The following example was introduced by \cite{TaillardatEtAl2023} for studying methods to evaluate forecast tails. Let $\Delta \sim \Gamma ( 1 / \gamma, 1 / \gamma )$, where $\gamma > 0$ and $\Gamma (a, b)$ is the gamma distribution with shape $a > 0$ and scale $b > 0$. Conditionally on $\Delta$, let $Y \sim \Exp(\Delta)$ follow an exponential distribution with rate $\Delta$. 
    The unconditional distribution of $Y$ is the heavy-tailed $\GPD_{1, \gamma}$. As in \cite{TaillardatEtAl2023}, results are presented for $\gamma = 1 / 4$. 
    
    Consider three different forecasters: the ideal forecaster, $F_{\mathrm{id}} = \Exp (\Delta)$; the climatological forecaster, $F_{\mathrm{cl}} = \GPD_{1, \gamma}$; and the extremist forecaster, $F_{\mathrm{ex}, \nu} = \Exp (\Delta / \nu)$ for $\nu > 0$.
    The variable $\Delta$ represents a source of information that may or may not be available to the forecasters. The ideal forecaster correctly represents the conditional distribution of $Y$ given $\Delta$ and is auto-calibrated, hence also probabilistically (tail) calibrated. The climatological forecaster does not use the information $\Delta$, instead issuing the unconditional distribution of $Y$ as a forecast, but is nevertheless also auto-calibrated. 
    The extremist forecaster correctly predicts that, conditionally on $\Delta$, the outcome $Y$ follows an exponential distribution, but with a multiplicative bias on the rate (here, $\nu = 1.4$), and is not probabilistically (tail) calibrated. Given the information $\Delta$, the climatological forecaster belongs to a different tail regime than the outcome (heavy-tailed rather than light-tailed), whereas the extremist forecaster belongs to the correct tail regime (light-tailed).
    
    Figure \ref{fig:ss_ptc_reldiag} displays the diagnostic plots for the tail calibration of the three forecasters using $n = 10^{6}$ independent realizations of $\Delta$ and $Y$, see also Figure \ref{fig:ss_ptc_reldiag2} in the supplement.
    The extremist forecast clearly has a heavier tail than the true distribution of the outcomes, and the nature of this miscalibration does not change as the threshold is increased.
    
    \begin{figure}[t]
        \centering
        \includegraphics[width=0.32\textwidth]{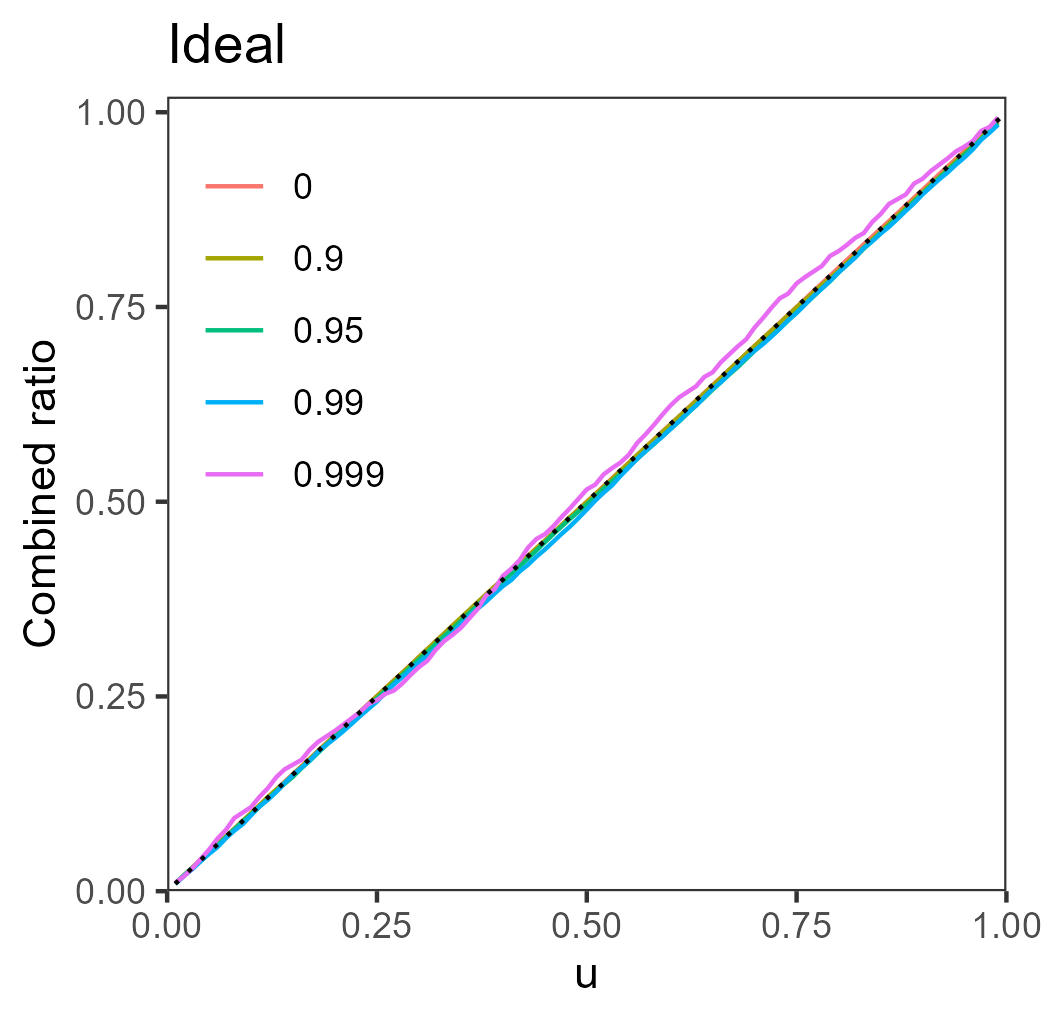}
        \includegraphics[width=0.32\textwidth]{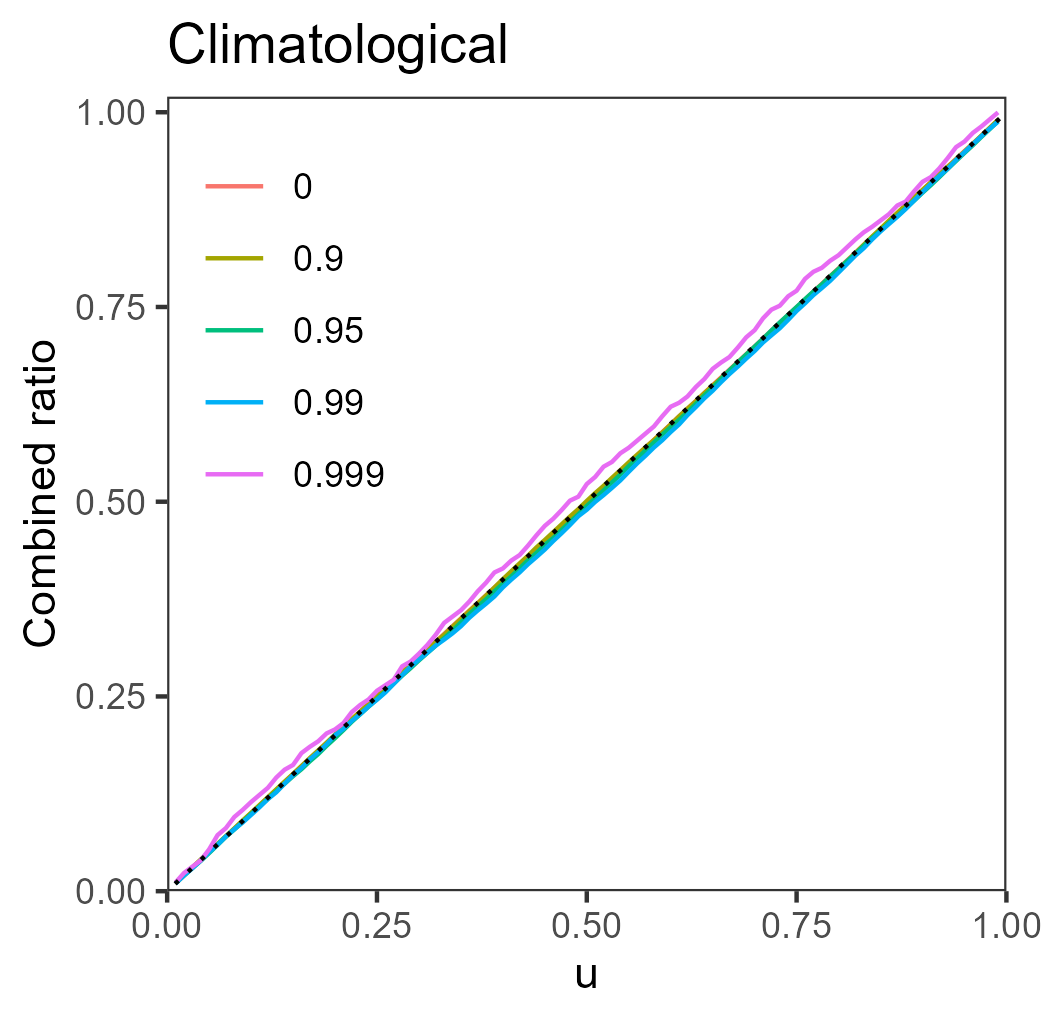}
        \includegraphics[width=0.32\textwidth]{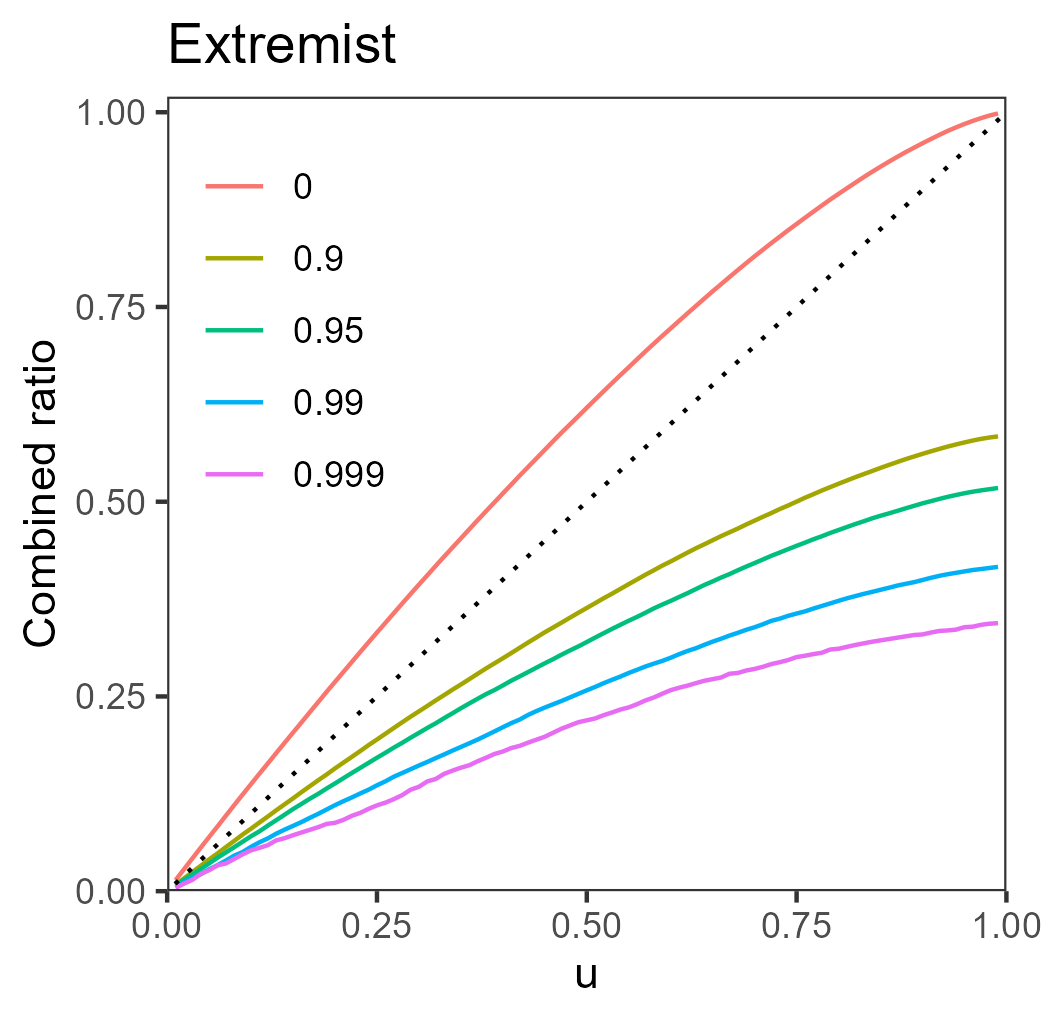}
    \caption{Combined probabilistic tail calibration diagnostic plots \eqref{eq:emp_ratio} for the ideal, climatological, and extremist forecasters in Example~\ref{ex:exponential}. Results are shown for five thresholds $t$, expressed in terms of quantiles of the $10^{6}$ observations.}
    \label{fig:ss_ptc_reldiag}
    \end{figure}
\end{example}

    In Appendix \ref{sec:normsim}, we examine the unfocused forecaster of Example \ref{ex:1} with regards to probabilistic (tail) calibration.

\subsection{Tail $\Bb$-calibration}\label{sec:btail}

While it is most common in practice to assess probabilistic calibration, it is also useful to assess calibration with respect to larger information sets. The same is true when interest is in extremes. For example, the climatological forecaster in Example~\ref{ex:exponential} is probabilistically tail calibrated because it exhibits the same tail behavior as the unconditional distribution of the outcome. However, it does not exhibit the correct tail behavior conditional on $\Delta$, and is therefore not $\sigma(\Delta)$-tail calibrated. In this section, we propose tools to assess tail $\Bb$-calibration for non-trivial $\sigma$-algebras $\Bb$.

To assess tail $\Bb$-calibration, the forecast-observation pairs $(F_{1}, y_{1}), \dots, (F_{n}, y_{n})$ can be stratified into $J \in \mathbb{N}$ mutually disjoint bins, $B_{1}, \dots, B_{J}$, depending on variables representing the information contained in $\Bb$. The diagnostic plots for probabilistic tail calibration can then be applied separately to each bin of the forecasts and observations; if the forecasts appear probabilistically calibrated for all bins, then there is evidence to suggest that they are tail $\Bb$-calibrated. For example, the combined ratio in~\eqref{eq:emp_ratio} can be calculated per bin,
\begin{equation}
\label{eq:Rtj}
    \hat{R}_{t,j}(u) = \frac{\sum_{i \in \Ii_{t} \cap \Jj_{j}} \one\{\zit \leq u \}}{ \sum_{i \in \Jj_{j}} (1 - F_{i}(t))},
\end{equation}
where $\Jj_{j} = \left\{i \in \{1, \dots, n\} : (F_{i}, y_{i}) \in B_{j} \right\}$ for $j= 1, \dots, J$, and this ratio can be displayed as a function of $u \in [0, 1]$ for each of the $J$ bins. Similarly, the excess PIT values $\zit$ and the occurrence ratio \eqref{eq:ocratio} can be assessed for each bin separately.

Displaying the diagnostic plots for each forecast method and each bin leads to a large number of plots to analyze. To simplify visualization, (tail) miscalibration can be summarized in a single value. This aligns with the general framework to assess conditional calibration proposed by \cite{Tsyplakov2011} based on moment conditions and test functions, which is akin to how forecasts are compared in \cite{GiacominiWhite2006} and \cite{NoldeZiegel2017}.
As a measure of tail calibration, consider the supremum distance $\sup_{u \in [0, 1]} | \hat{R}_{t} (u) - u |$, which quantifies the maximum absolute difference between the combined ratio and the diagonal line in the diagnostic plot proposed above. Under probabilistic tail calibration, this difference should be close to zero for large $t$. To assess tail $\Bb$-calibration, $\hat{R}_{t}$ can be replaced by $\hat{R}_{t,j}$, and this distance can be calculated for $j=1, \dots, J$.

In Example~\ref{ex:exponential}, the random variable $\Delta$ controls the information governing the observations, and to assess $\sigma(\Delta)$-calibration, we can stratify the forecast-observation pairs depending on the realized $\Delta$. In practice, where we do not know the true data generating process, $\Delta$ could be replaced by covariate information available to the forecasters, for example. The supremum distances
\begin{equation}
\label{eq:supdist}
	\sup_{u \in [0, 1]} | \hat{R}_{t,j}(u) - u |, \qquad j = 1,\ldots,J,
\end{equation}
for the ideal, climatological and extremist forecasters in Example \ref{ex:exponential} are shown as a function of the threshold in Figure~\ref{fig:ss_ptc_cond}, for $J = 3$ bins for $\Delta$. For the ideal forecaster, the difference between $\hat{R}_{t,j}(u)$ and $u$ is essentially zero for all bins, confirming that it is tail $\sigma(\Delta)$-calibrated. The climatological forecaster, on the other hand, is clearly not calibrated conditionally on $\Delta$, despite being probabilistically tail and tail auto-calibrated. These plots for conditional calibration are shown for the empirical analogue~\eqref{eq:Rtj} of the combined ratio in~\eqref{def:TAC_new}, though an analogous approach can serve to evaluate the occurrence ratio~\eqref{eq:tailBC:occ} or severity ratio~\eqref{eq:tailBC:sev}.

\begin{figure}
    \centering
    \includegraphics[width=0.3\textwidth]{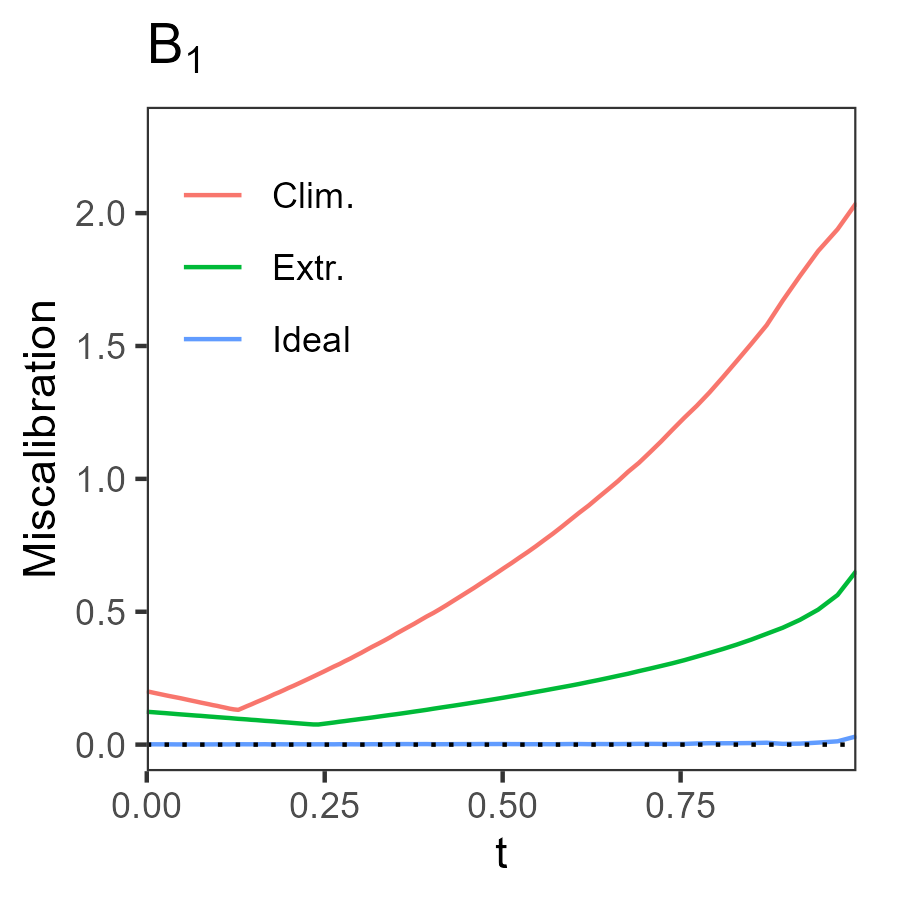}
    \includegraphics[width=0.3\textwidth]{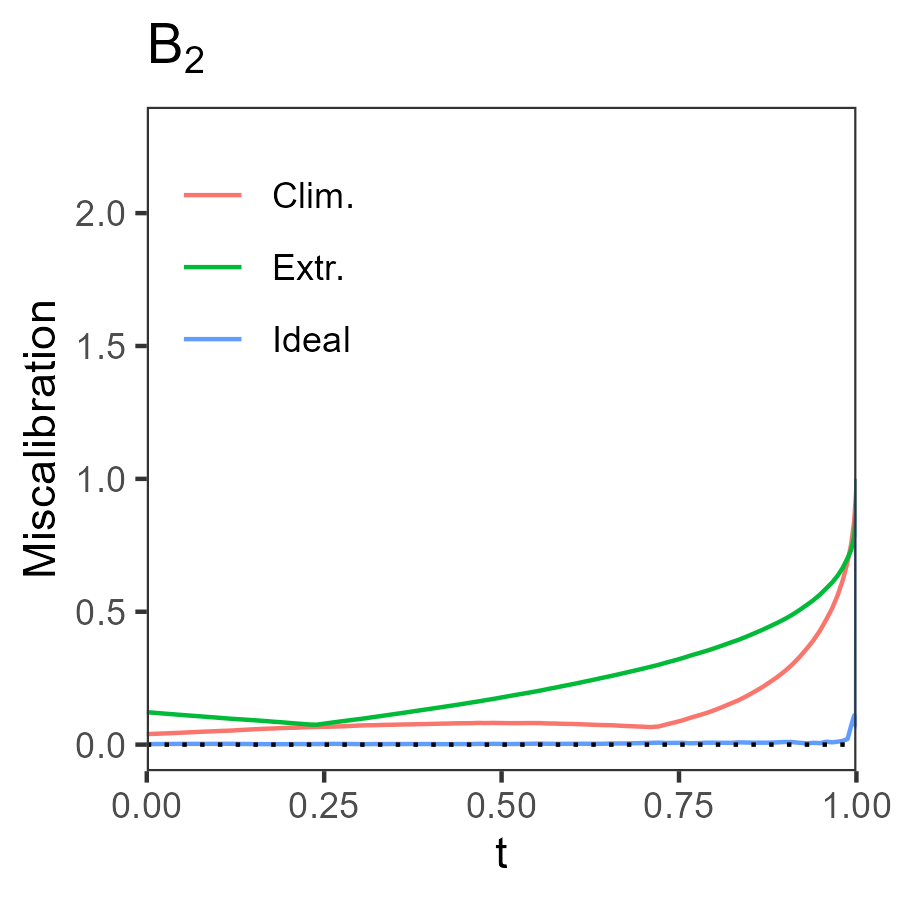}
    \includegraphics[width=0.3\textwidth]{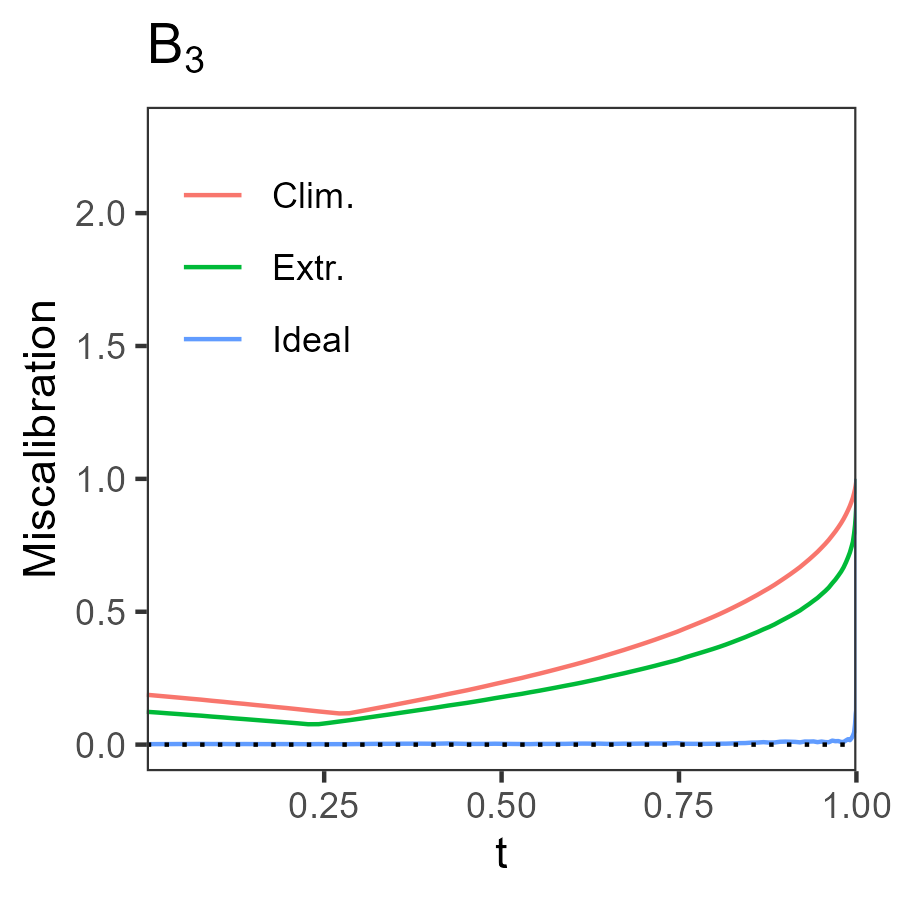}
    \caption{Maximal distance \eqref{eq:supdist} of $\hat{R}_{t,j}$ from the diagonal as a function of $t$ for the climatological (red), ideal (blue), and extremist (green) forecasters in Example~\ref{ex:exponential} for $J=3$ bins (left to right) of $\Delta$. The threshold $t$ is expressed as a quantile of the $10^{6}$ observations.}
    \label{fig:ss_ptc_cond}
\end{figure}

While the climatological forecaster in Example~\ref{ex:exponential} is not tail $\sigma(\Delta)$-calibrated, it is tail auto-calibrated. However, the following example demonstrates that a forecast can be probabilistically tail calibrated but not tail auto-calibrated.

\begin{example}[Probabilistic tail calibration does not imply tail auto-calibration]\label{ex:optimistic}
    Suppose that $\Delta \sim \Gamma(1/\gamma,1/\gamma)$ for some $\gamma > 0$ and, conditionally on $\Delta$, let  
    $X \sim \Exp(\Delta)$.
    Let the random variable $L \sim \GPD_{1,\gamma/2}$ be independent of $(\Delta, X)$.
    Define $Y$ as the second largest of $(X, 2X, L)$.
    Given $\Delta$, the \emph{optimistic forecaster} issues the random probabilistic forecast $F = \Exp(\Delta)$, the conditional distribution of $X$.
    This forecast is probabilistically tail calibrated, but not tail auto-calibrated; see Figure~\ref{fig:sim_stoch} for the diagnostic plots and see Appendix~\ref{app:proofs} for a formal analysis of a more general construction.
    An intuitive explanation is as follows: the marginal distribution of $X$ is $\GPD(1,\gamma)$, which is heavier than the marginal distribution of $L$; it follows that $\Q(Y = X \mid Y > t) \to 1$ as $t \to \infty$ and the marginal tails of $X$ and $Y$ are equivalent, yielding probabilistic tail calibration. However, conditionally on $\Delta$, the tail of $F$ is \emph{lighter} than that of $L$, so that $F$ is optimistic, i.e., it predicts that less severe outcomes will occur than reality, violating tail auto-calibration.
\end{example}

\begin{figure}
    \centering
    \includegraphics[width=0.24\textwidth]{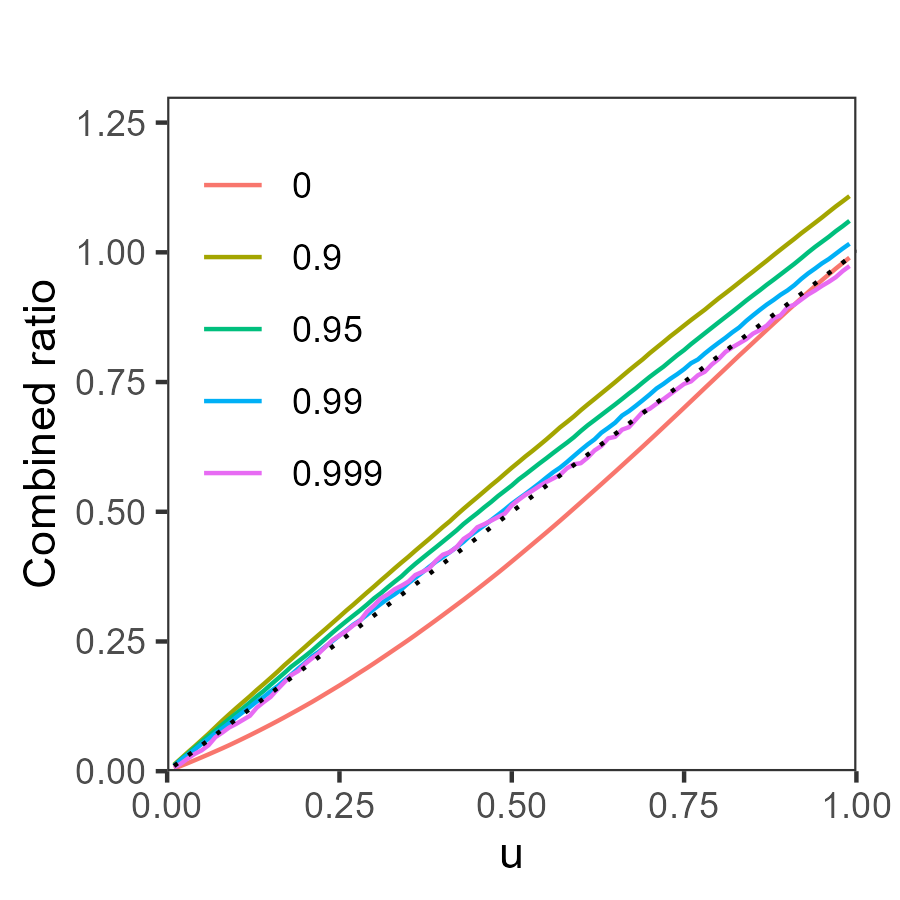}
    \includegraphics[width=0.24\textwidth]{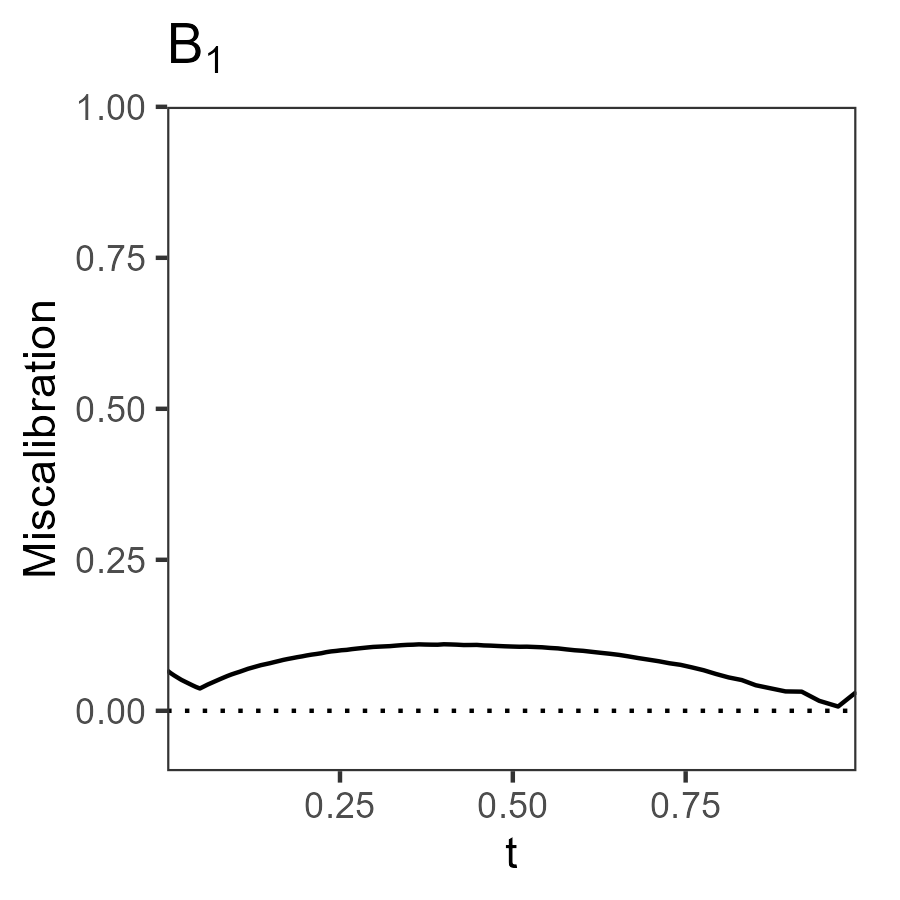}
    \includegraphics[width=0.24\textwidth]{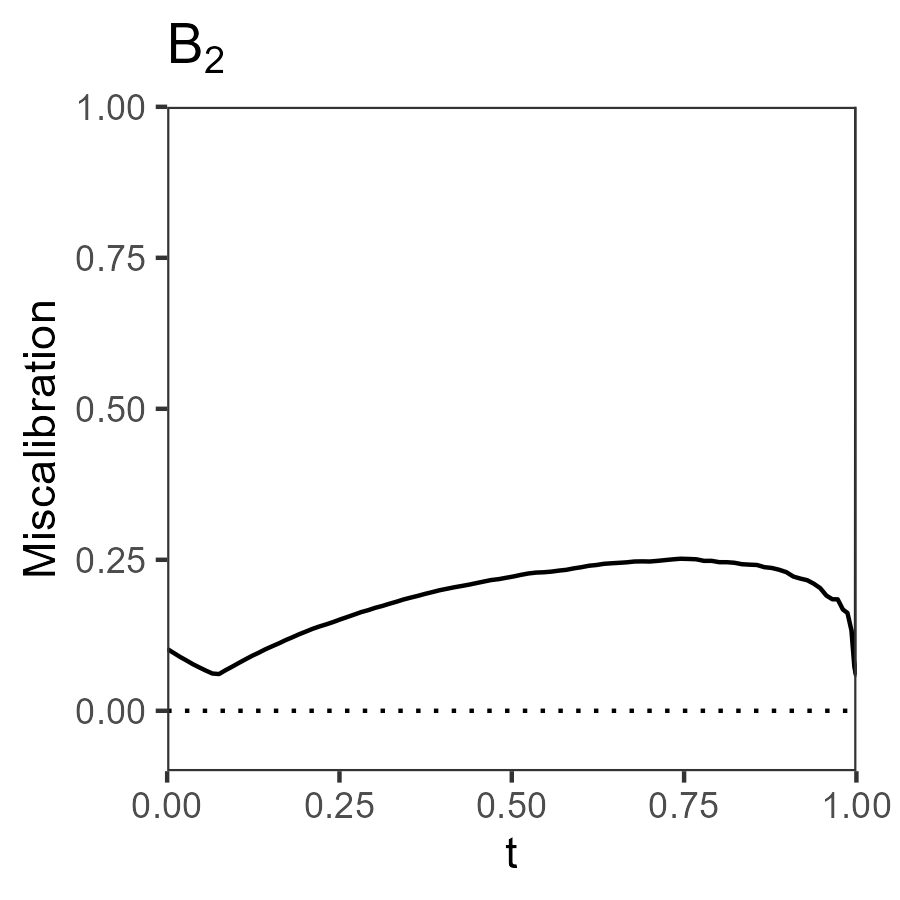}
    \includegraphics[width=0.24\textwidth]{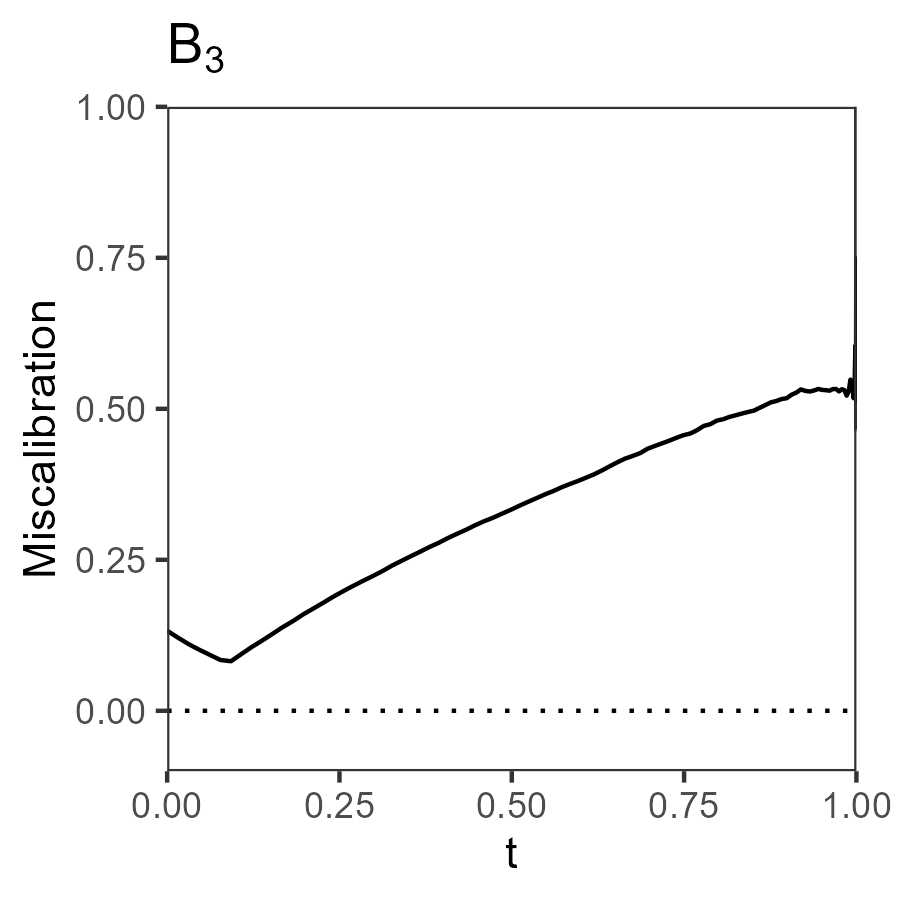}
    \caption{First plot: Combined probabilistic tail calibration diagnostic plot in~\eqref{eq:emp_ratio} for the random forecast in Example~\ref{ex:optimistic}, for five thresholds $t$. Second, third, and fourth plots: Distance \eqref{eq:supdist} of $\hat{R}_{t,j}$, $j = 1, 2, 3$, from the diagonal as a function of $t$ for $J=3$ bins of $\Delta$. In all plots, thresholds are expressed in terms of quantiles of $10^{6}$ observations.}
    \label{fig:sim_stoch}
\end{figure}

\section{Application}\label{sec:casestudy}

Reliable forecasts for extreme precipitation are highly valuable for mitigating the impacts of heavy rain and snowfall. We consider precipitation data from the European Meteorological Network's (EUMETNET) post-processing benchmark dataset \citep[EUPPBench;][]{DemaeyerEtAl2023}, which contains ensemble forecasts issued by the European Centre for Medium-range Weather Forecasts' (ECMWF) Integrated Forecast System (IFS). Forecasts are available daily for 2017 and 2018, at 112 stations across central Europe. We restrict attention to forecasts issued one day in advance, for which 81548 forecast-observation pairs are available. The forecasts are evaluated using precipitation measurements at the stations.

Three forecasting methods are compared. Firstly, we consider the 51-member ensemble forecasts generated by the ECMWF's IFS ensemble. Ensemble forecasts are collections of point forecasts, forming an empirical predictive distribution function. An empirical distribution comprised of 51 sample values is unlikely to be useful when predicting extremes; in particular, the resulting forecast will assert that high thresholds will be exceeded with probability zero. We therefore also assess the performance of a smoothed version of the discrete ensemble forecast, which converts the IFS ensemble to a continuous forecast distribution by assuming that precipitation observations follow a censored logistic distribution, with location and scale parameters that are equal, respectively, to the mean and standard deviation of the corresponding IFS ensemble members. Finally, we present results for a statistical post-processing model that aims to remove systematic biases in the IFS ensemble to yield calibrated forecasts.

The post-processing method assumes that precipitation follows a certain parametric distribution, with location and scale parameters that are affine functions of the IFS ensemble mean and standard deviation, respectively. This simple approach is generally referred to as ensemble model output statistics in the post-processing literature \citep{GneitingEtAl2005}. We implement this post-processing model with two choices of parametric distribution: a logistic distribution, which is commonly employed in studies of precipitation forecasts \citep[see e.g.][]{MessnerEtAl2014}; and a generalized extreme value (GEV) distribution, which has been proposed as a means to better forecast extreme outcomes \citep{LerchThorarinsdottir2013,Scheuerer2014}. Both distributions are censored below at zero. The parameters of the post-processing models are estimated by minimizing the continuous ranked probability score (CRPS) over 20 years of twice-weekly reforecasts at the corresponding station and lead time; further details regarding the data can be found in \cite{DemaeyerEtAl2023}.

Figure~\ref{fig:cs_pit} displays the probabilistic (tail) calibration diagnostic plots for the four forecast methods. For the discrete ensemble forecasts, the diagnostics for tail calibration are adapted as described in Remark \ref{rem:F_jumps}. Results are shown at thresholds of 5, 10, and \qty{15}{mm}, roughly corresponding to the 97$^{\text{th}}$, 99$^{\text{th}}$, and 99.5$^{\text{th}}$ percentiles of the previously observed precipitation measurements (across all stations). Similar conclusions are drawn when separate thresholds are employed at each station, corresponding to high quantiles of the local climatology, see Appendix \ref{app:cs_extra} in the supplement. The raw IFS ensemble forecasts are under-dispersed and positively biased, whereas both post-processing methods yield considerably more reliable forecasts. The IFS forecasts are additionally strongly miscalibrated in the tails. This becomes more severe as the threshold of interest increases, suggesting that the forecasts are not probabilistically tail calibrated. Smoothing the IFS ensemble unsurprisingly improves tail calibration, but appears to worsen the overall calibration of the forecasts in this example. While the censored logistic post-processing method improves upon the calibration of the raw ensembles, the resulting forecasts are still acutely miscalibrated when predicting extreme outcomes. Post-processing with a censored GEV distribution, on the other hand, yields forecasts that are both probabilistically calibrated and tail calibrated. Despite exhibiting contrasting tail behavior, the two distributions yield forecasts with very similar average CRPS; the logistic forecasts are roughly 1\% more accurate than the GEV forecasts.

The poor tail calibration of the logistic post-processed forecasts is a deficiency of the chosen distribution. This is irrespective of the number of ensemble members: Even if we had no covariate information, we could still issue the climatological distribution of the outcome, which is auto-calibrated and therefore tail auto-calibrated. However, when employing existing forecast evaluation techniques, practitioners would generally conclude that this post-processing method generates reliable forecasts, which is misleading. In contrast, our diagnostic tools for tail calibration inform us that the forecast tails are too light, resulting in unreliable forecasts for extreme events. This information can be used to improve forecasts, for example by encouraging practitioners to experiment with alternative post-processing methods, based on different distributional assumptions.

It is not trivial to derive a unified test for tail calibration based on the combined ratio at \eqref{eq:emp_ratio}, and this becomes yet more difficult considering the limit as $t \to x_{Y}$. When assessing calibration with respect to a finite threshold, \cite{MitchellWeale2023} propose separately testing whether the predicted threshold exceedance is correct, and whether the excess PIT values resemble a sample from a standard uniform distribution. Here, we implement a binomial test for the occurrence ratio, and a Kolmogorov--Smirnov test for uniformity of the excess PIT values, assuming the forecast-observation pairs are independent and identically distributed. For thresholds \qty{10}{mm} and \qty{15}{mm}, all methods other than the GEV post-processing approach return a p-value of zero (up to numerical precision) on both tests, providing evidence against these forecasts being tail calibrated. The p-values for the GEV forecasts are $0.00, 0.01$, and $0.83$ when testing the severity ratio using a Kolmogorov--Smirnov test, and $0.59, 0.13$, and $0.04$ when testing the occurrence ratio using a binomial test, at thresholds 5, 10, and \qty{15}{mm}, respectively. Both a nonparametric bootstrap that resamples forecast-observation pairs or the central limit theorem in combination with the delta method can be used to obtain pointwise confidence intervals for the occurrence, severity, and combined ratios; the latter method is discussed in detail in Appendix~\ref{app:cs_confint} in the Supplement.

\begin{figure}[!h]
    \centering
    \includegraphics[width=0.3\textwidth]{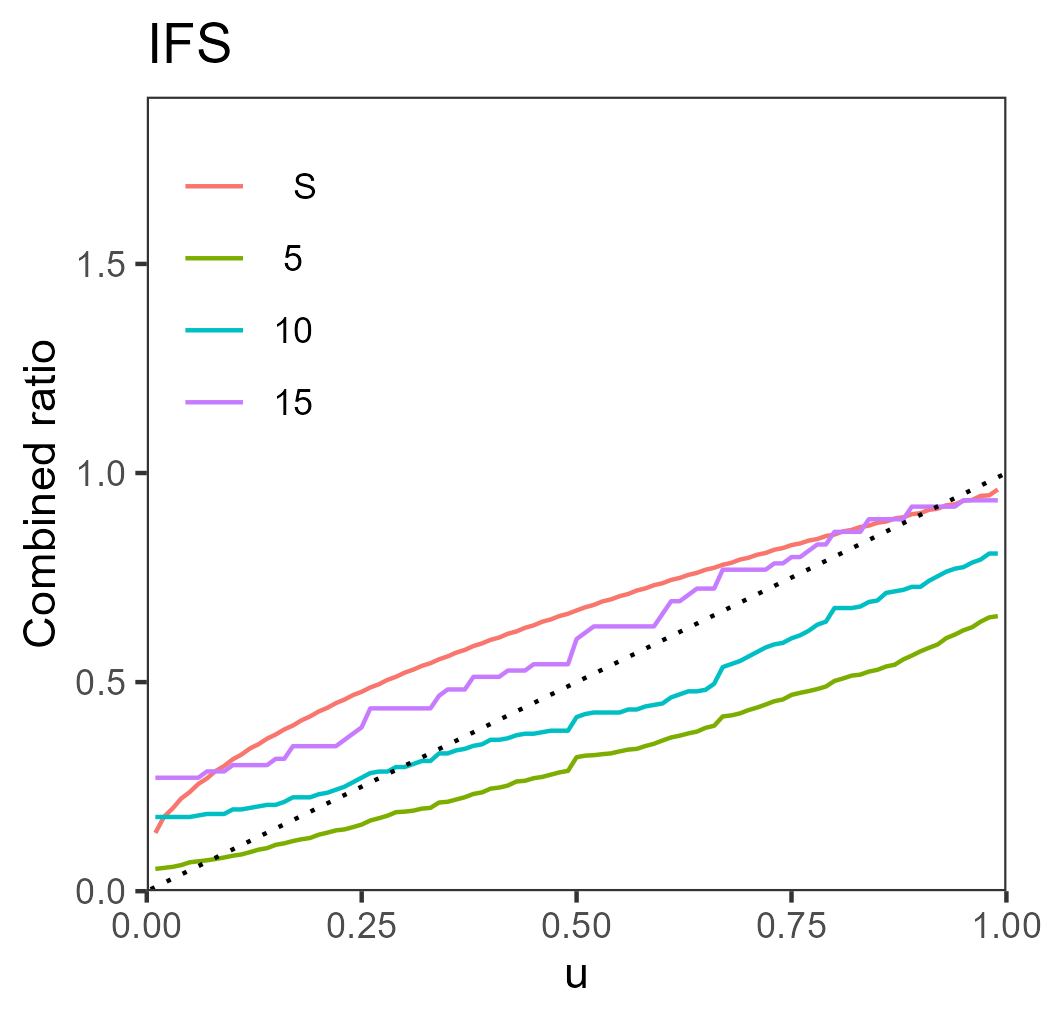}
    \includegraphics[width=0.3\textwidth]{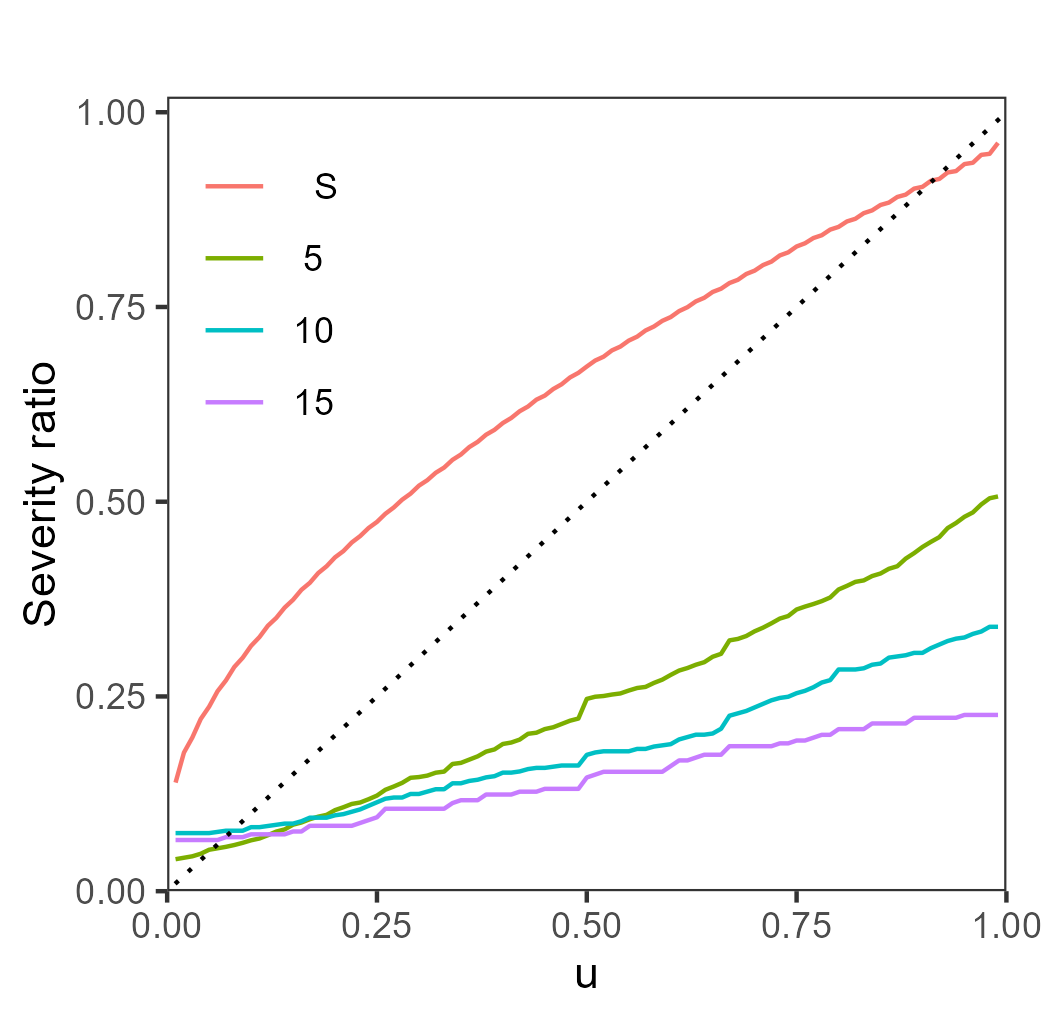}
    \includegraphics[width=0.3\textwidth]{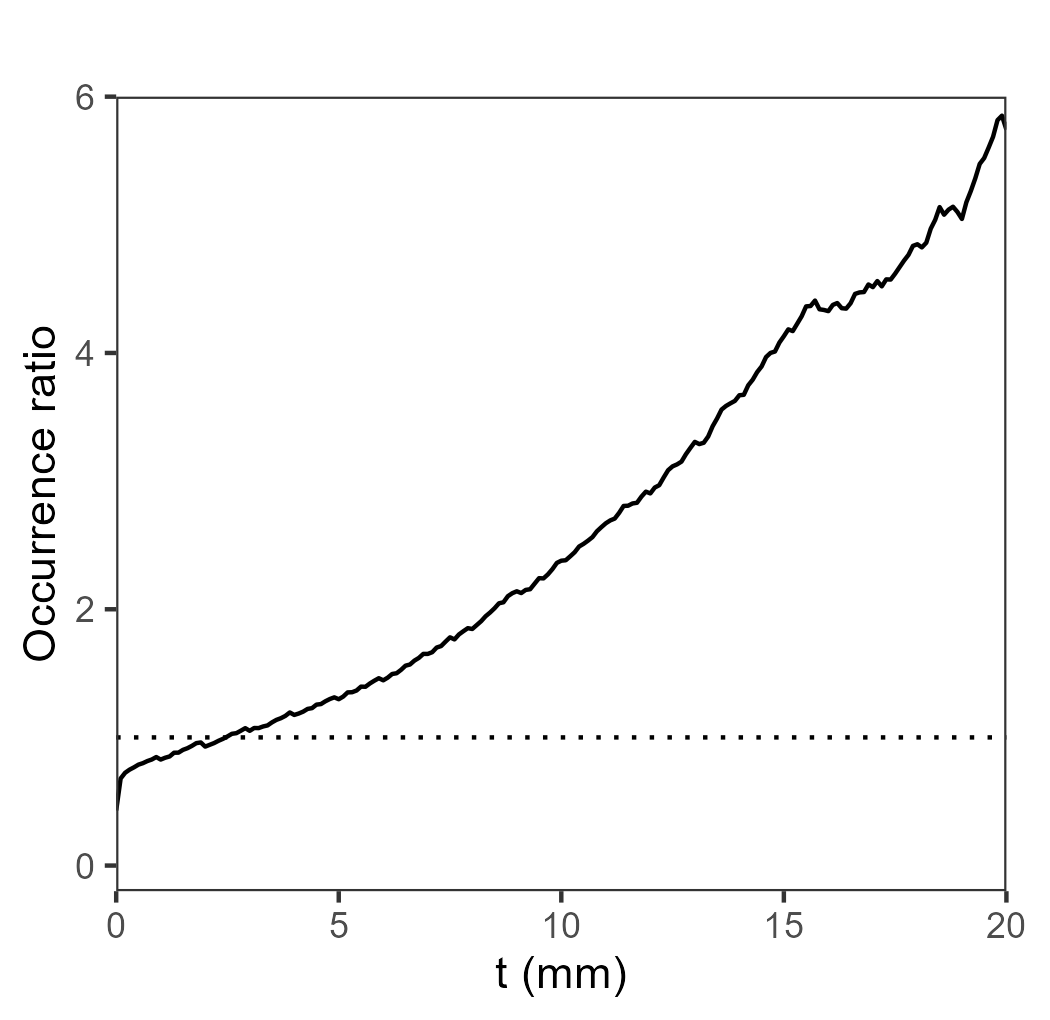}
    \includegraphics[width=0.3\textwidth]{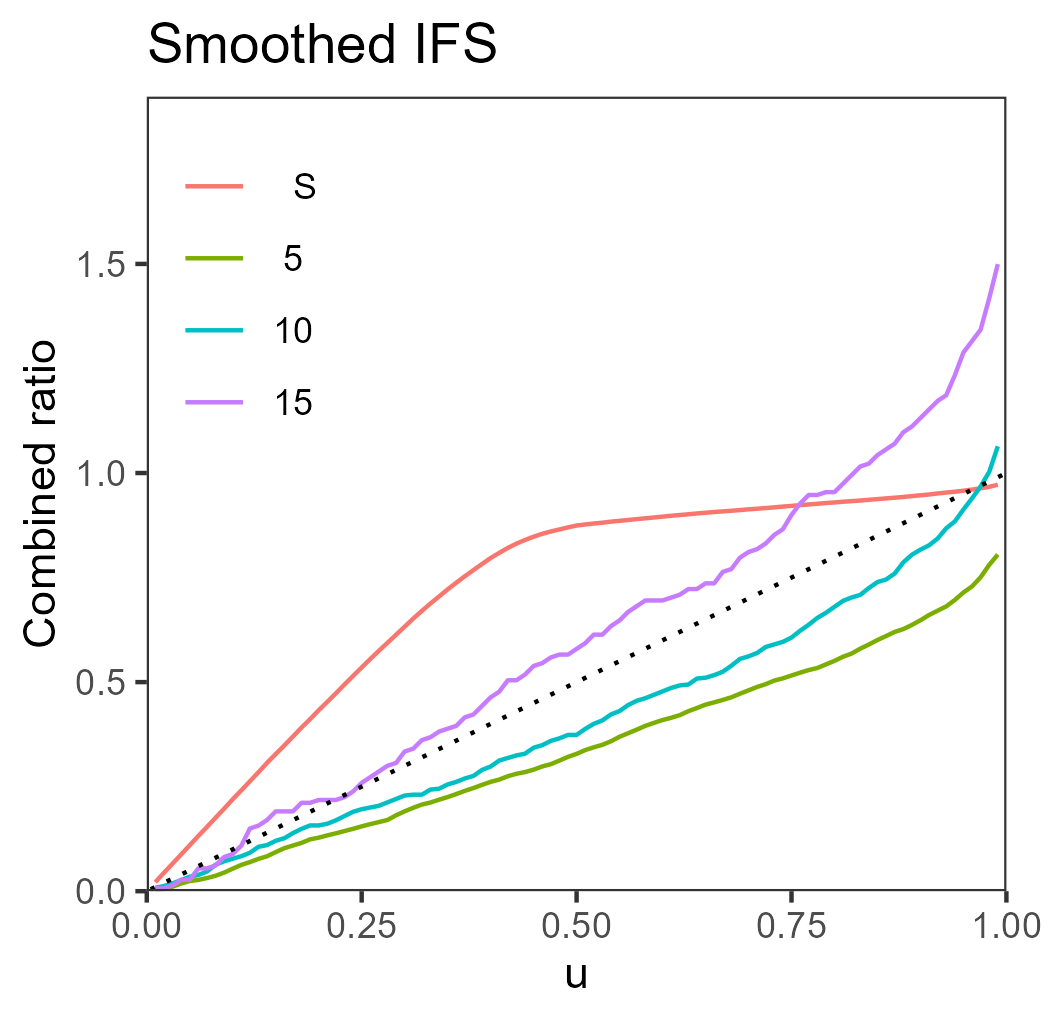}
    \includegraphics[width=0.3\textwidth]{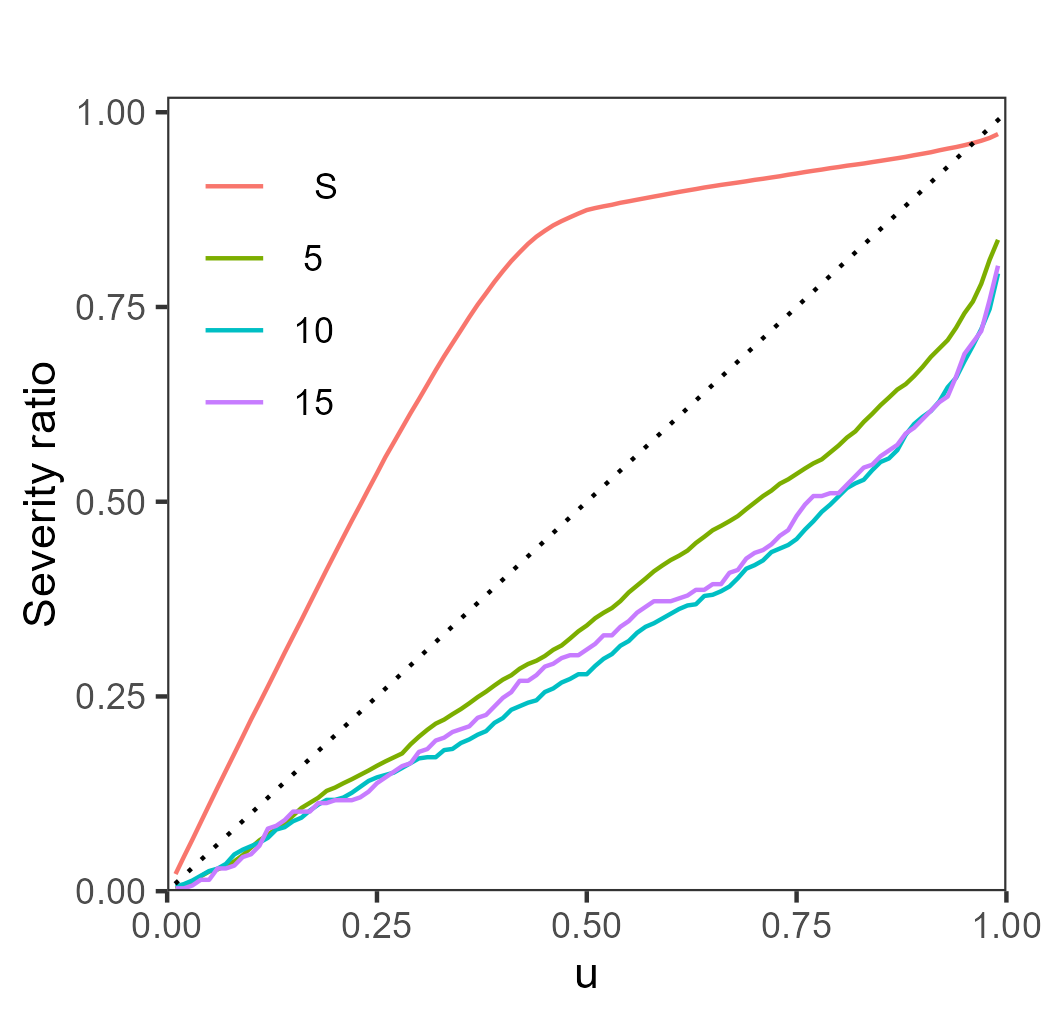}
    \includegraphics[width=0.3\textwidth]{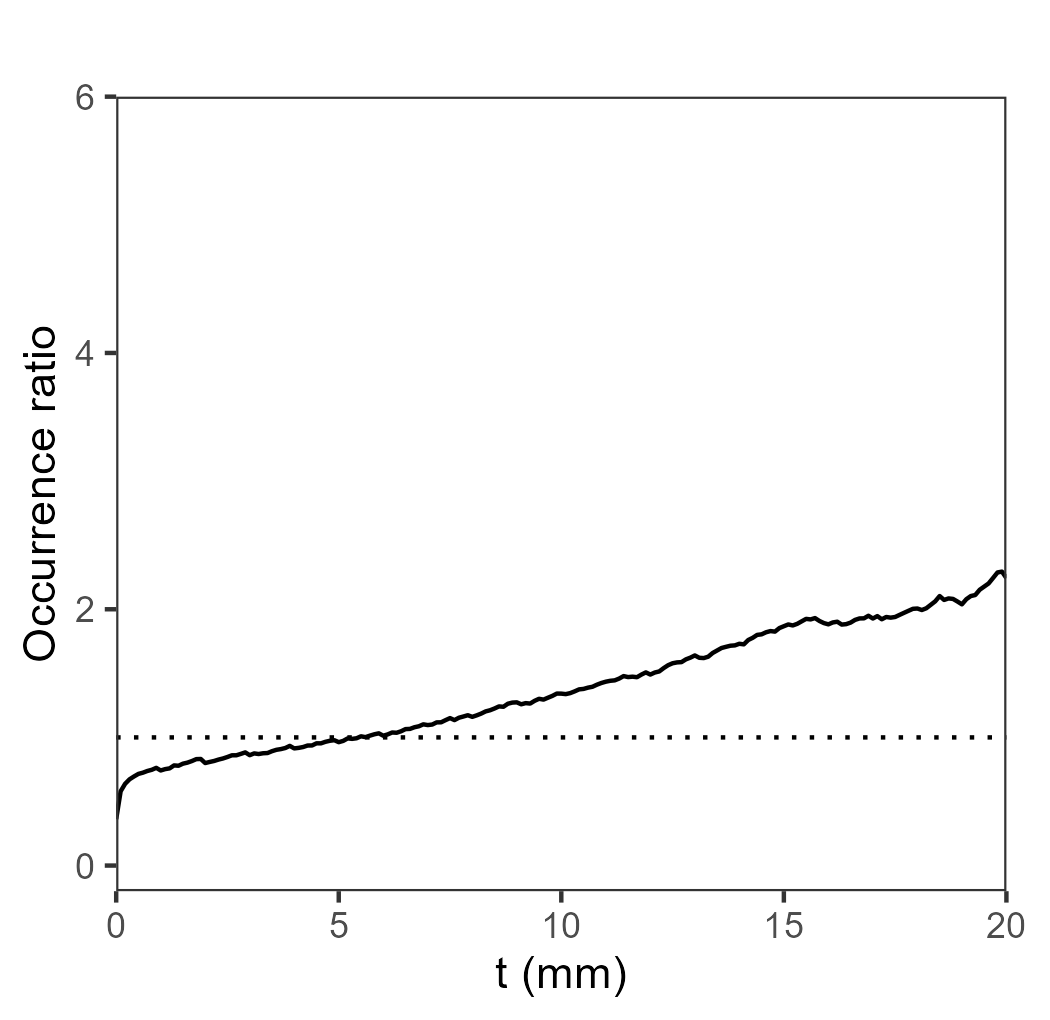}
    \includegraphics[width=0.3\textwidth]{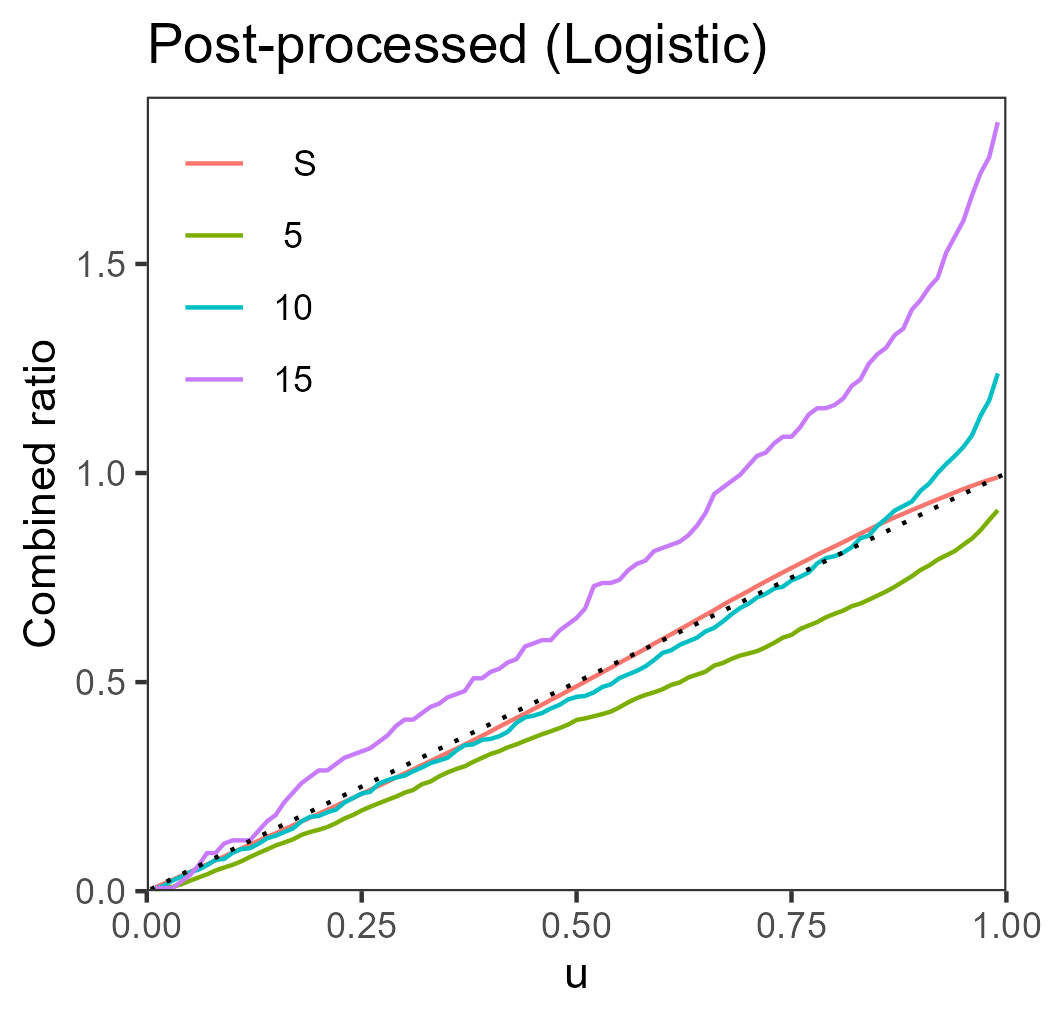}
    \includegraphics[width=0.3\textwidth]{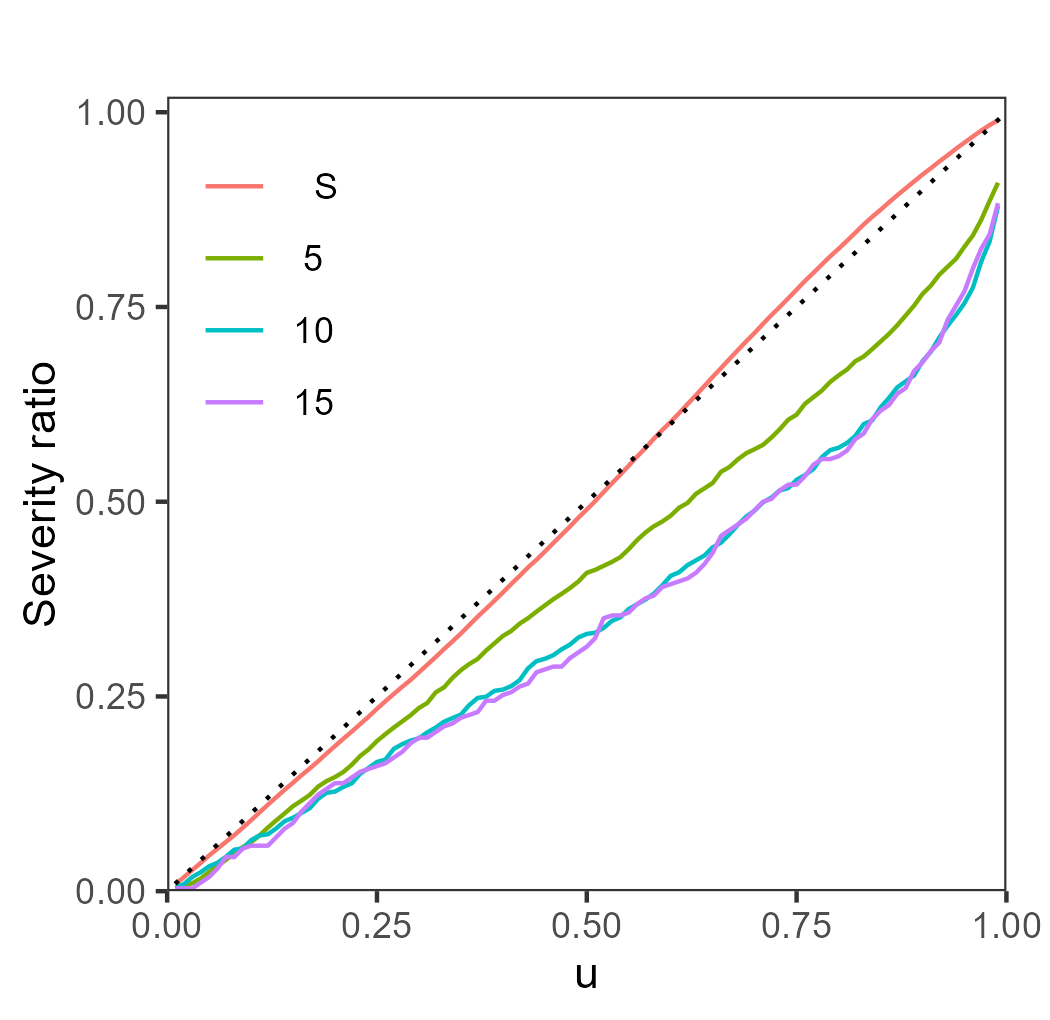}
    \includegraphics[width=0.3\textwidth]{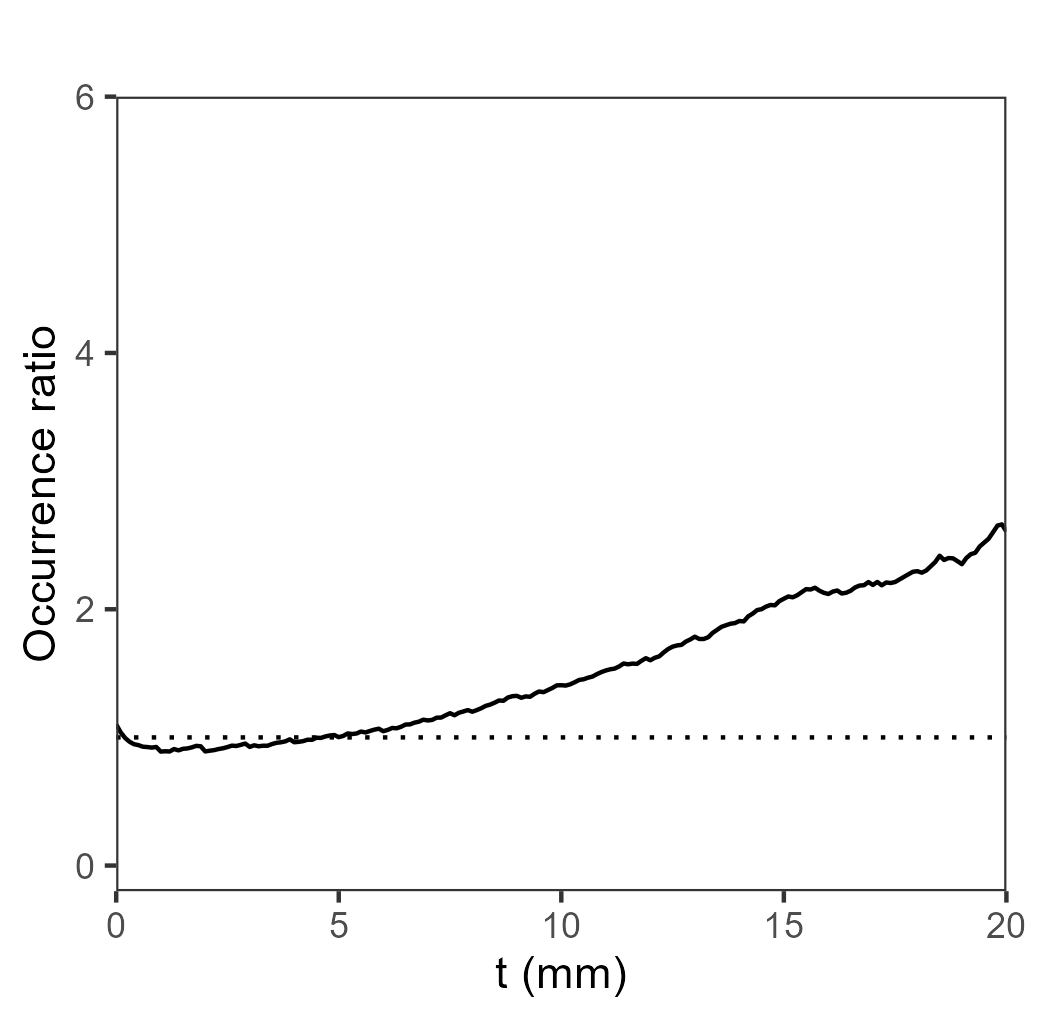}
    \includegraphics[width=0.3\textwidth]{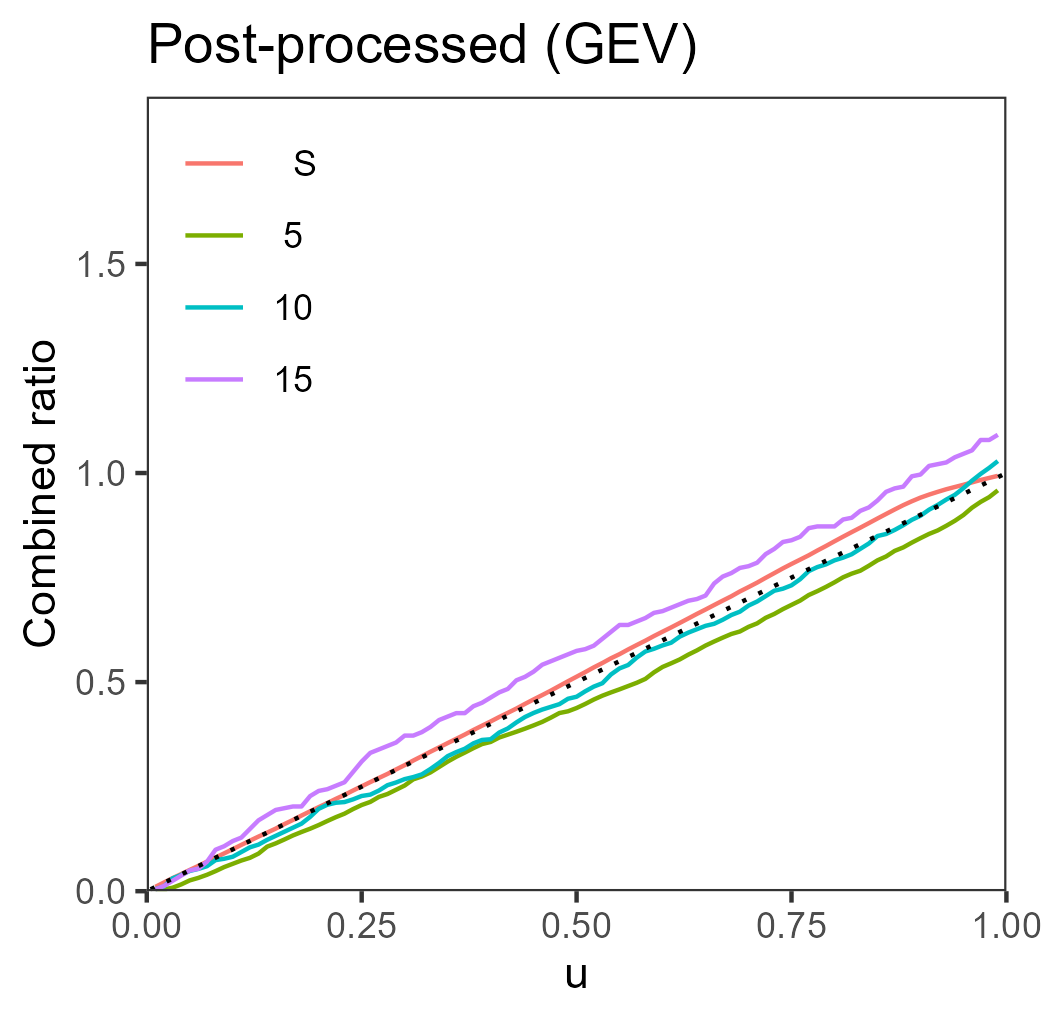}
    \includegraphics[width=0.3\textwidth]{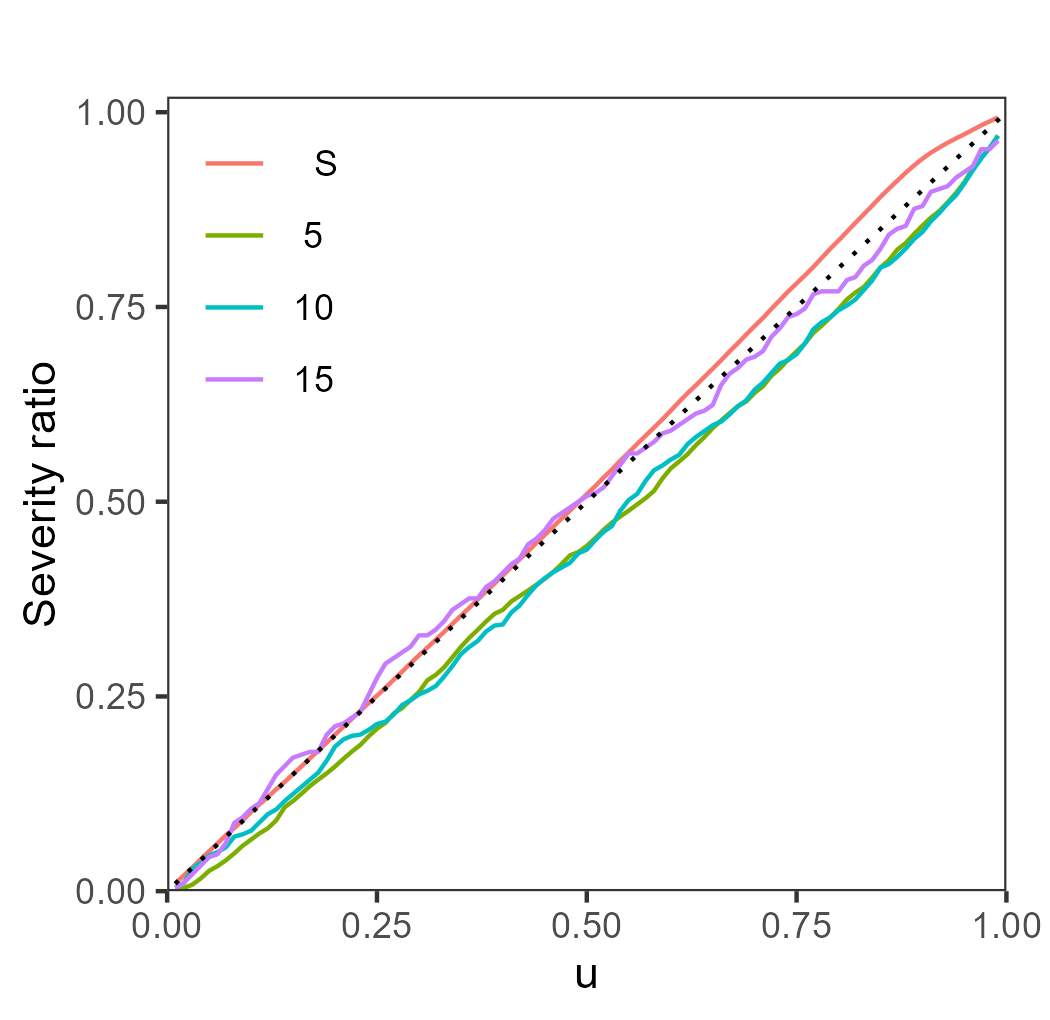}
    \includegraphics[width=0.3\textwidth]{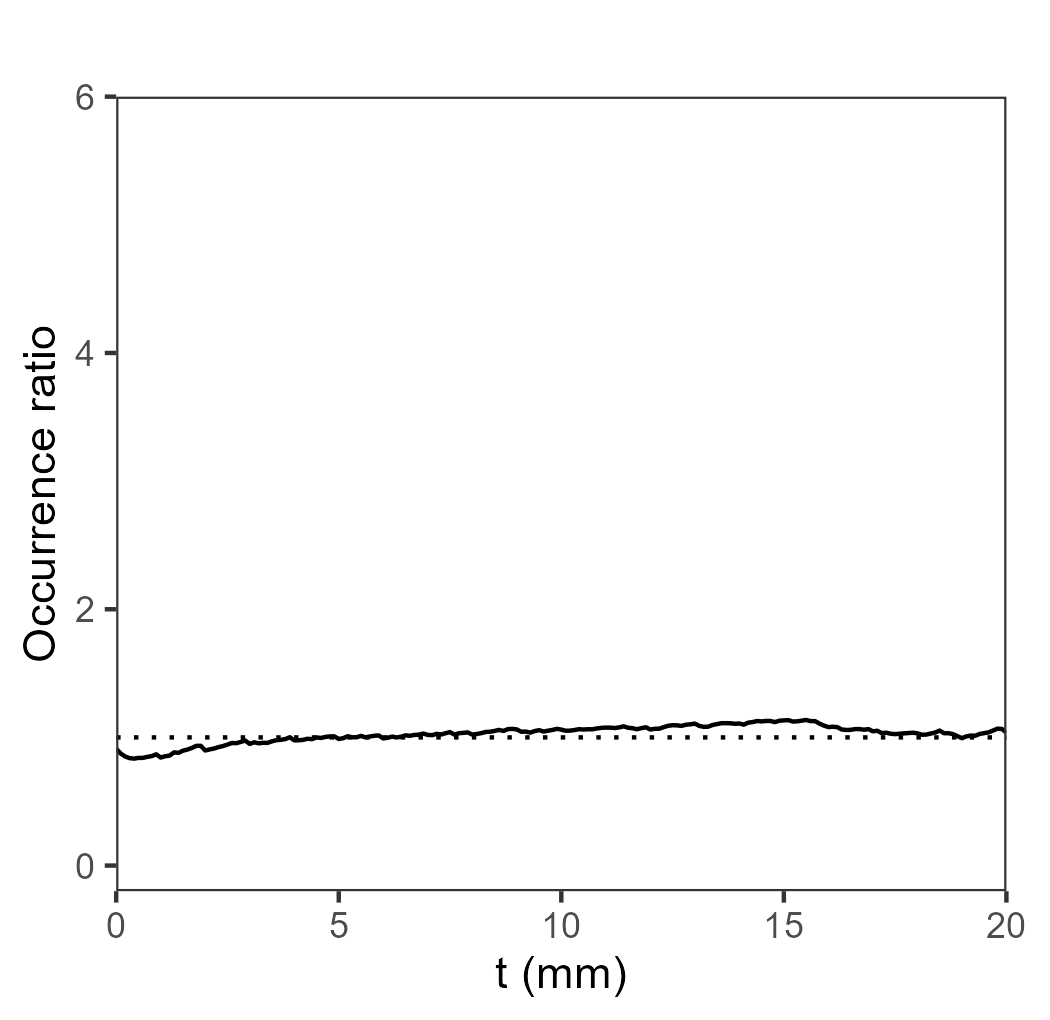}
    \caption{Probabilistic tail calibration plots for the IFS and statistically post-processed forecast distributions in Section~\ref{sec:casestudy}. Left: combined ratio \eqref{eq:emp_ratio}; middle: \emph{pp}-plot of $z_i^t$ for $i \in \mathcal{I}_t$; right: occurrence ratio \eqref{eq:ocratio} as a function of $t$. Results are shown for thresholds 5, 10, and \qty{15}{mm}, roughly corresponding to the 97$^{th}$, 99$^{th}$, and 99.5$^{th}$ percentiles of the observed precipitation measurements. The symbol `S' denotes the standard notion of probabilistic calibration ($t = -\infty$).}
    \label{fig:cs_pit}
\end{figure}

\section{Discussion}\label{sec:conclusion}

\cite{BrehmerStrokorb2019} prove that for any proper scoring rule, it is possible to find a forecast distribution that exhibits incorrect tail behavior, but receives an expected score that is arbitrarily close to that of the true distribution. We demonstrate that this negative result cannot be circumvented by evaluating forecasts using multiple scoring rules. As an alternative, we introduce a general framework of forecast tail calibration that allows the reliability of probabilistic forecasts to be assessed when interest is in extreme outcomes.

Tail calibration is generally not implied by standard calibration, and therefore provides additional information that is not available from existing checks for forecast calibration. This is illustrated in a case study on European precipitation forecasts, whereby statistical post-processing methods are found to produce calibrated forecasts despite issuing unreliable predictions for extreme outcomes. By introducing a method to evaluate forecasts for extreme outcomes, it becomes straightforward to identify deficiencies in predictions of these events, thereby facilitating the development of forecasting methods that can more reliably predict extreme outcomes. While we do not consider forecast re-calibration methods in this paper, tail re-calibration methods could be designed that modify forecast distributions so that they exhibit calibrated tails. For example, conformal prediction allows us to generate probabilistic forecasts with calibration guarantees \citep[see e.g.][]{VovkEtAl2022}, and there may be potential for such methods to be applied to extreme observations to construct forecasts with tail calibration guarantees.

Tail calibration can be assessed with respect to different information sets, though we imagine the unconditional case will be of most interest in practical applications. To assess probabilistic tail calibration, we introduce graphical diagnostic tools that not only demonstrate whether or not a forecast is tail calibrated, but also elucidate what types of biases occur in the tail; for example, whether the predictive distributions are too heavy- or light-tailed. Formal testing procedures for tail calibration are an interesting avenue for future work, where the crux will be to choose a data-dependent threshold that is representative of the asymptotic regime. Such tests could either be classical tests for fixed pre-specified sample sizes, or monitoring procedures as introduced by \cite{ArnoldHenziETAL2023}, who propose a sequential procedure for testing probabilistic calibration based on $e$-processes, recognizing that forecasting is often an inherently sequential task. 

Tail calibration may partially address the open question how best to compare probabilistic forecasts for extreme outcomes. While calibration is principally an absolute facet of probabilistic forecast performance, tail $\Bb$-calibration can be used for forecast comparison by finding the largest information set $\Bb$ on which forecasts are tail calibrated, with larger sets being better. Tail calibration with respect to a larger information set corresponds to a more informed forecast for extremes, which aligns with the idea that forecasts should be as sharp, or informative, as possible, subject to being calibrated \citep{GneitingEtAl2007}. \cite{zamo2021sequential} similarly suggest using this calibration-sharpness principle to select the most informative forecast among a set of reliable candidates when aggregating probabilistic forecasts, and such an approach could also be applied when forecasting extreme events. However, for miscalibrated forecasts and non-nested maximal information sets, this framework still does not allow for comparison between different forecasts.

When the unconditional distribution of the outcome variable is in the max-domain of attraction of a non-degenerate distribution (Assumption~\ref{ass:Ygpd}), probabilistic tail calibration is equivalent to correctly forecasting the shape and scale parameters of the GPD. Checks for probabilistic tail calibration therefore serve as a more general framework with which to validate tail models, since they are also applicable to distributions not satisfying Assumption~\ref{ass:Ygpd}. Future work may investigate whether notions of tail calibration can conversely be used to estimate quantities relevant for extreme value analysis, such as tail indices.

Finally, while we focus here on univariate extremes, future work could consider an extension to the multivariate case. In recent reviews of probabilistic forecasting, \citet{GneitingKatzfuss2014}, \citet{KneibEtAl2023}, and \citet{PetropoulosEtAl2022} all remark that developing more theoretically-principled techniques to generate and evaluate probabilistic forecasts for multivariate outcomes is one of the most pressing issues in the field. 
General notions of multivariate probabilistic calibration have been proposed \citep[e.g.][]{GneitingEtAl2008,AllenEtAl2023c}, but these typically project the multivariate forecasts and observations onto the univariate domain, and extremes in the transformed space may not be of practical interest. Instead, a more relevant definition of multivariate tail calibration may be obtained by leveraging results in multivariate extreme value theory.

\section*{Acknowledgments}

We would like to thank two anonymous referees for their constructive feedback. Johan Segers gratefully acknowledges helpful questions and suggestions from participants, in particular Anja Jan\ss{}en and Cl\'{e}ment Dombry, of the EXSTA Kick-Off Workshop at Universit\'e Paris Cité, Laboratoire MAP5, June 5--7, 2024. 

\bibliographystyle{chicago}

\bibliography{biblio.bib} 

\newpage 
\appendix

\section{Omitted proofs and calculations}
\label{app:proofs}

\begin{proof}[Proof that the misinformed forecaster in Example~\ref{ex:mis} does not satisfy \eqref{eq:tailBC:sev}.]
	On the event $Y > t$, the excess PIT of $Y-t$ by $F_t$ is
	\[
	\ZFt 
	= \frac{F(Y)-F(t)}{1-F(t)} 
	= \frac{e^{-\Delta_2 t} - e^{-\Delta_2 Y}}{e^{-\Delta_2 t}}
	= 1 - e^{-\Delta_2 (Y-t)}
	\]
	so that
	\begin{align*}
		\Q \bigl( \ZFt \le u, Y > t \bigr)
		&= \Q \bigl( t < Y \le t - \Delta_2^{-1} \log(1-u) \bigr) \\
		&= \E \bigl[ e^{-\Delta_1 t} - e^{-\Delta_1 (t - \Delta_2^{-1} \log(1-u))} \bigr] \\
		&= (1+\gamma t)^{-1/\gamma} - 
		\E \bigl[ \bigl\{ 1 + \gamma (t - \Delta_2^{-1}\log(1-u)) \bigr\}^{-1/\gamma} \bigr].
	\end{align*}
	For $u \in (0, 1)$, we thus find, by the dominated convergence theorem, as $t \to \infty$,
	\begin{align*}
		\frac{\Q \bigl( \ZFt \le u, Y > t \bigr)}{\Q( Y > t)}
		&= 1 - \E \left[
		\left(
		\frac{1 + \gamma (t - \Delta_2^{-1} \log(1-u))}{1 + \gamma t}
		\right)^{-1/\gamma}
		\right]
		\to 1-1=0,
	\end{align*}
	violating \eqref{eq:tailBC:sev}.
\end{proof}

\begin{proof}[Proof of Proposition~\ref{propo:tailequiv}]
	We check the convergence relations~\eqref{eq:tailBC:occ} and~\eqref{eq:tailBC:sev}. The first one is true by assumption. To verify the second one, let $u \in (0, 1)$ [for $u = 1$, condition~\eqref{eq:tailBC:sev} is trivial] and write $\Fbar(t) = 1-F(t)$, $\GB(t) = \Q(Y \le t \mid \Bb)$ and $\GBbar(t) = 1 - \GB(t)$. Eq.~\eqref{eq:tailBC3} implies that $x_F = x_Y$.
	Here, we use that $\Q(Y = x_Y) = 0$ and therefore also $\Q(Y = x_Y \mid \Bb) = 0$ almost surely; see Lemma~\ref{lem:endpoint}.
	For all $t \in \R$ such that $\Fbar(t) > 0$, we have
	\begin{align*}
		\ZFt < u
		&\iff \frac{F(Y) - F(t)}{\Fbar(t)} < u \\
		&\iff F(Y) < F(t) + u \, \Fbar(t) \\
		&\iff Y < F^{-1} \left( F(t) + u \, \Fbar(t) \right)
	\end{align*}
	where $F^{-1}(v) = \inf \{ y \in \R : F(y) \ge v \}$ for $v \in (0, 1]$. It follows that, for every $\eps \in (0, \min(u,1-u))$,
	\begin{align*}
		\Q \bigl( \ZFt \le u, \, Y > t \mid \Bb \bigr)
		&\le \Q \bigl( \ZFt < u+\eps, \, Y > t \mid \Bb \bigr) \\
		&= \Q \bigl( Y < F^{-1} \left( F(t) + (u+\eps) \, \Fbar(t) \right), \, Y > t \mid \Bb \bigr) \\
		&\le \Q \bigl( Y \le F^{-1} \left( F(t) + (u+\eps) \, \Fbar(t) \right), \, Y > t \mid \Bb \bigr) \\
		&= \GB \bigl( F^{-1} \left( F(t) + (u+\eps) \, \Fbar(t) \right) \bigr) - \GB(t).
	\end{align*}
	In the same way,
	\begin{align*}
		\Q \bigl( \ZFt \le u, \, Y > t \mid \Bb \bigr)
		&\ge \Q \bigl( \ZFt < u, \, Y > t \mid \Bb \bigr) \\
		&= \Q \bigl( Y < F^{-1} \left( F(t) + u \, \Fbar(t) \right), Y > t \mid \Bb \bigr) \\
		&\ge \Q \bigl( Y \le F^{-1} \left( F(t) + (u-\eps) \, \Fbar(t) \right) - \eps, \, Y > t \mid \Bb \bigr) \\
		&= \GB \bigl( F^{-1} \left( F(t) + (u-\eps) \, \Fbar(t) \right) \bigr) - \GB(t),
	\end{align*}
	where we used that $F$ is continuous by assumption and thus $F^{-1}$ is strictly increasing. Writing $y_{\pm}(u,t) = F^{-1}\left(F(t) + (u\pm \eps) \Fbar(t)\right)$, we have
	\begin{align*}
		\frac{\GB(y_\pm(u,t)) - \GB(t)}{\GBbar(t)}
		&= 1 - \frac{\GBbar(y_\pm(u,t))}{\GBbar(t)} \\
		&= 1 - \frac{\Fbar(y_\pm(u,t))}{\Fbar(t)} \cdot \frac{\GBbar(y_\pm(u,t))/\Fbar(y_\pm(u,t))}{\GBbar(t)/\Fbar(t)}.
	\end{align*}
	As $F(t) + (u \pm \eps) \Fbar(t)$ is less than $1$ and converges to $1$ as $t \to x_F$, we have $y_\pm(u,t) \to x_F$ as $t \to x_F$. Further, since $F$ is continuous, we have $F(F^{-1}(v)) = v$ for all $v \in (0, 1]$, and thus
	\[
	\Fbar(y_{\pm}(u,t)) 
	= 1 - \left(F(t) + (u\pm\eps) \Fbar(t)\right)
	= (1 - (u \pm \eps)) \Fbar(t).
	\]
	We find that
	\[
	\frac{\GB(y_\pm(u,t)) - \GB(t)}{\GBbar(t)}
	\to 1 - (1 - (u \pm \eps)) \cdot \frac{1}{1} 
	= u \pm \eps,
	\qquad \text{almost surely as $t \to x_Y$.}
	\]
	In view of the bounds earlier in the proof, we obtain
	\[
	u - \eps
	\le \liminf_{t \to x_Y}
	\frac{\Q\bigl( \ZFt \le u, \, Y > t \mid \Bb \bigr)}{\Q(Y > t \mid \Bb)}
	\le \limsup_{t \to x_Y}
	\frac{\Q\bigl( \ZFt \le u, \, Y > t \mid \Bb \bigr)}{\Q(Y > t \mid \Bb)}
	\le u+\eps
	\]
	almost surely. Setting $\eps = 1/k$ for sufficiently large integer $k$ and taking the intersection over all such $k$, we conclude that \eqref{eq:tailBC:sev} is true, as required.
\end{proof}

\begin{proof}[Proof of Lemma~\ref{lem:endpoint}]
	If $x_Y = +\infty$ there is nothing to show, so assume $x_Y \in \R$. We proceed by contraposition. Suppose $\Q(Y = x_Y) > 0$. 
	We will show that \eqref{eq:tailBC:sev} fails.

	Write $t_n = x_Y - 1/n$.
	We have $\1\{Y > t_n\} \to \1\{Y \ge x_Y\} = \1\{Y = x_Y\}$. By the conditional dominated convergence theorem \citep[e.g.][Exercise~8 on page~117]{kallenberg2002foundations}, $\Q(Y > t_n \mid \Bb) \to \Q(Y = x_Y \mid \Bb)$ as $n \to \infty$ a.s.
	Since $\E[\Q(Y = x_Y \mid \Bb)] = \Q(Y = x_Y) > 0$, we have
	$\Q(B) > 0$ where $B = \{ \Q(Y = x_Y \mid \Bb) > 0 \}$.

	Since $B \in \Bb$, we have
	\begin{align}
	\label{eq:PBb+}
	\E \left[ \Q \left( F(x_Y) < 1, Y = x_Y \mid \Bb \right) \1_B \right]
	&= \Q \left( \left\{ F(x_Y) < 1, Y = x_Y \right\} \cap B \right).
	\end{align}
	We consider two cases, whether $\Q \left( \left\{ F(x_Y) < 1, Y = x_Y \right\} \cap B \right)$ is positive or zero.
	
	\emph{1. Case $\Q \left( \left\{ F(x_Y) < 1, Y = x_Y \right\} \cap B \right) > 0$.}
	Eq.~\eqref{eq:PBb+} implies that, on $B$, with positive probability, $\Q \left( F(x_Y) < 1, Y = x_Y \mid \Bb \right) > 0$.
	Let $u \in (0, 1]$; note that we assume $u > 0$.
	On the event $\{F(t) < 1\}$, we have $\ZFt = (F(Y)-F(t))/(1-F(t))$ and thus
	\begin{align*}
		\ZFt \le u
		&\iff
		F(Y) \le (1-u) F(t) + u.
	\end{align*}
	If $F(x_Y) < 1$, then $F(x_Y) < (1-u) F(x_Y) + u$, since $u > 0$. Further, $Y \le x_Y$ almost surely and thus $F(Y) \le F(x_Y)$ almost surely. This implies that, on the event $\{F(x_Y) < 1\}$, we have $F(Y) < (1-u) F(x_Y) + u$ a.s. Since $F$ is continuous a.s., the inequality $F(Y) \le (1-u) F(t_n) + u$ is thus eventually true on the event $\{F(x_Y) < 1\}$, a.s.
	In view of the conditional dominated convergence theorem, we have
	\begin{align*}
		\Q \left( \ZFtn \le u, Y > t_n \mid \Bb \right)
		&\ge \Q \left( 
		\ZFtn \le u, F(x_Y) < 1, Y = x_Y \mid \Bb 
		\right) \\
		&= \Q \left( F(Y) \le (1-u) F(t_n) + u, \, F(x_Y) < 1, \, Y = x_Y \mid \Bb \right) \\
		&\to \Q \left( 
		F(x_Y) < 1, Y = x_Y \mid \Bb 
		\right), \qquad n \to \infty, \text{ a.s.}
	\end{align*}
	As explained at the beginning of this case, the limit is positive on $B$ with positive probability. Since the limit does not depend on $u$, Eq.~\eqref{eq:tailBC:sev} fails on an event with positive probability for $u > 0$ close to $0$.

	\emph{2. Case $\Q \left( \left\{ F(x_Y) < 1, Y = x_Y \right\} \cap B \right) = 0$.}
Eq.~\eqref{eq:PBb+} then implies
\[
	\E \left[ \Q \left(F(x_Y) < 1, Y = x_Y \mid \Bb\right) \1_B \right] = 0,
\]
so that $\Q(F(x_Y) < 1, Y = x_Y \mid \Bb) = 0$ and thus $\Q(F(x_Y) = 1, Y = x_Y \mid \Bb) = 1$ on $B$.
Let $u \in [0, 1)$; note that $u < 1$. By definition, $F_t(x) = 1$ for $x \ge 0$ in case $F(t) = 1$. The inequality $\ZFt \le u$ can thus hold only if $F(t) < 1$. 
On $B$, as $\{t_n < Y < x_Y\} \downarrow \varnothing$, we have
\begin{align*}
	\Q \left( \ZFtn \le u, \, Y > t_n \mid \Bb \right)
	&= \Q \left( \ZFtn \le u, \, F(t_n) < 1, \, Y > t_n \mid \Bb \right) \\
	&= \Q \left( F(Y) \le (1-u) F(t_n) + u, \, F(t_n) < 1, \, Y > t_n \mid \Bb \right) \\
	&= \Q \left( F(x_Y) \le (1-u) F(t_n) + u, \, F(t_n) < 1, \, Y = x_Y \mid \Bb \right) + o(1) \\
	&= \Q \left( 1 \le (1-u) F(t_n) + u, \, F(t_n) < 1, \, Y = x_Y \mid \Bb \right) + o(1) \\
	&= o(1),
\end{align*}
as the two requirements on $F(t_n)$ are contradictory. We find that on $B$, which has positive probability, $\Q \left( \ZFtn \le u, \, Y > t_n \mid \Bb \right)$ converges to zero for all $u < 1$. But then \eqref{eq:tailBC:sev} fails. 
\end{proof}

\begin{proof}[Details for Example~\ref{ex:12}]
    Let $t \ge a_+$, $x \ge 0$, and $u \in [0,1]$. Then,
    \begin{equation}\label{eq:ex:6_1}
    	1 - F(t) 
    	= \Q(\xi + a_{\tau} > t \mid \tau)
    	= e^{-(t-a_{\tau})}
    	= \frac{2 + \tau}{2} e^{-t},
    \end{equation}
    and thus $F_t(x) = (F(t+x)-F(t))/(1-F(t)) = 1-e^{-x}$,
    independently of $\tau$, since the factors $(2+\tau)/2$ in the numerator and denominator cancel out. Furthermore,
    \begin{align*}
    	\Q \bigl( \ZFt \le u, \, Y > t \bigr)
    	= \Q \bigl( 1 - e^{-(Y-t)} \le u, \, Y > t \bigr) 
    	= \Q \bigl( t < Y \le t-\log(1-u) \bigr) 
        = e^{-t} u.
    \end{align*}
    By taking expectations in \eqref{eq:ex:6_1}, we obtain $\E[1 - F(y)]  = e^{-y}$, since $\E[\tau] = 0$.
    It follows that $F$ is probabilistically tail calibrated, since
    $\Q\bigl(\ZFt \le u, \, Y > t\bigr)/\E[1-F(t)] = e^{-t}u/e^{-t} = u$.
    However, $F$ is not tail auto-calibrated, since $\Q(Y > t \mid \tau) = e^{-t}$, so that
    \[
    	\frac{\Q(Y > t \mid \tau)}{1 - F(t)}
    	= \frac{2}{2+\tau}
    	= \begin{dcases}
    		2/3 & \text{if $\tau = 1$,} \\
    		2 & \text{if $\tau = -1$,}
    	\end{dcases}
    \]
    which does not converge almost surely to $1$ as $t \to \infty$.
\end{proof}

The following lemma shows that the convergence in \eqref{eq:tailBCt} in Definition \ref{def:TAC_new} is uniform in $u$. This motivates the diagnostic plots in Section \ref{sec:diagnostics}.

\begin{lemma}
\label{lem:unifconv-u}
	The convergence relation~\eqref{eq:tailBCt} holds pointwise for all $u$ in a dense subset of $[0, 1]$ that includes $1$ if and only if it holds uniformly in $u \in [0, 1]$, i.e.,
	\[
		\sup_{u \in [0, 1]} \left|
			\frac{\Q\bigl(\ZFt \le u, Y > t \mid \Bb\bigr)}{\E[1-F(t)\mid \Bb]} - u
		\right|
		\to 0, \qquad \text{almost surely as $t \to x_Y$.}
	\]
    An analogous result holds for the convergence relation~\eqref{eq:tailBC:sev}.
\end{lemma}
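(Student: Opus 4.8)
The plan is to prove the nontrivial implication --- pointwise convergence on a dense subset of $[0,1]$ containing $1$ implies uniform convergence --- by a P\'olya-type monotonicity argument; the reverse implication is immediate, since uniform convergence forces pointwise convergence everywhere. Write $R_t^{\Bb}(u) = \Q(\ZFt \le u, Y > t \mid \Bb)/\E[1-F(t)\mid \Bb]$ for the combined ratio, which is well defined for $t < x_Y$ because $\E[1-F(t)\mid\Bb] > 0$ almost surely under the standing assumptions of Definition~\ref{def:TAC_new}. The structural fact that makes everything work is that, for each fixed $t$, the map $u \mapsto R_t^{\Bb}(u)$ is nondecreasing on $[0,1]$, since the event $\{\ZFt \le u\}$ increases with $u$, while the limit function $u \mapsto u$ is continuous and nondecreasing on $[0,1]$.

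First I would reduce the almost-sure statement to a purely deterministic one. Without loss of generality the dense set $D \subseteq [0,1]$ with $1 \in D$ may be taken countable. By hypothesis, for each $u \in D$ there is a probability-one event on which $R_t^{\Bb}(u) \to u$ as $t \to x_Y$; intersecting these countably many events yields a single event $\Omega_0$ with $\Q(\Omega_0)=1$ on which $R_t^{\Bb}(u) \to u$ holds simultaneously for all $u \in D$. It then suffices to prove the following deterministic claim: if $(g_t)_{t<x_Y}$ are nondecreasing functions $[0,1] \to [0,\infty)$ with $g_t(u) \to u$ as $t \to x_Y$ for every $u$ in a dense set $D \ni 1$, then $\sup_{u\in[0,1]}|g_t(u) - u| \to 0$ as $t \to x_Y$.

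To prove this claim, fix $\eps > 0$. Using density of $D$ and $1 \in D$, choose points $0 \le v_0 < v_1 < \dots < v_N = 1$ with $v_j \in D$ for all $j$, with $v_0 < \eps$, and with $\max_j (v_{j+1} - v_j) < \eps$. For $u \in [v_0,1]$, pick $j$ with $v_j \le u \le v_{j+1}$; monotonicity of $g_t$ gives $g_t(v_j) \le g_t(u) \le g_t(v_{j+1})$, hence $g_t(v_j) - v_{j+1} \le g_t(u) - u \le g_t(v_{j+1}) - v_j$, and since $v_{j+1} - v_j < \eps$ this yields $|g_t(u) - u| \le \max_{0 \le k \le N}|g_t(v_k) - v_k| + \eps$. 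For $u \in [0,v_0]$, monotonicity and nonnegativity give $0 \le g_t(u) \le g_t(v_0)$ and $0 \le u \le v_0 < \eps$, so $|g_t(u)-u| \le \max(g_t(v_0), v_0) \le |g_t(v_0) - v_0| + \eps$. Combining the two ranges, $\sup_{u\in[0,1]}|g_t(u) - u| \le \max_{0 \le k \le N}|g_t(v_k) - v_k| + \eps$, and letting $t \to x_Y$ the finitely many terms $|g_t(v_k)-v_k|$ vanish, so $\limsup_{t\to x_Y}\sup_{u\in[0,1]}|g_t(u) - u| \le \eps$. Since $\eps>0$ was arbitrary, the uniform convergence follows; applied on $\Omega_0$, this proves the uniform version of~\eqref{eq:tailBCt} almost surely.

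The severity version~\eqref{eq:tailBC:sev} is handled identically, with the ratio $u \mapsto \Q(\ZFt \le u, Y > t \mid \Bb)/\Q(Y > t \mid \Bb)$, which is again nondecreasing in $u$ and well defined for $t < x_Y$ because $\Q(Y > t \mid \Bb) > 0$ almost surely by the standing assumption of Definition~\ref{def:TAC_new}; in fact its value at $u=1$ equals $1$ identically, since $\{\ZFt \le 1\} \supseteq \{Y > t\}$ up to a null set, so here one does not even need $1 \in D$. There is no substantial obstacle in the argument: the only mildly delicate points are ensuring the almost-sure convergence holds simultaneously over all of $D$ --- dispatched by taking $D$ countable --- and controlling $g_t$ near $u=0$ when $0 \notin D$ --- dispatched by monotonicity together with the fact that $g_t(v_0) \to v_0$ can be made arbitrarily small. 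The result is essentially a Dini/P\'olya-type uniformization of monotone functions converging to a continuous limit.
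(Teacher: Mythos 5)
Your proof is correct and follows essentially the same Pólya-style uniformization argument as the paper: a finite $\eps$-mesh from the dense set, monotonicity of the ratio in $u$, and the observation that the limit $u \mapsto u$ is continuous and nondecreasing. You are slightly more explicit than the paper about reducing $D$ to a countable set and intersecting the associated null events, and you handle the interval $[0, v_0]$ as a separate case rather than tacitly appending the point $u_0 = 0$ to the partition; both are cosmetic differences, not a change of route.
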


\begin{proof}
	We show the claim for \eqref{eq:tailBCt}. The arguments for \eqref{eq:tailBC:sev} are identical except for the obvious modification of the definition of $R_t(u)$ below.

	Suppose \eqref{eq:tailBCt} holds for $u \in D$, where $D$ is a dense subset of $[0, 1]$ that includes $1$. Fix $\eps > 0$ and choose points $u_1,\ldots,u_k \in D$ such that $u_1 < \ldots < u_k = 1$ and moreover $u_1 \le \eps$ and $u_\ell - u_{\ell-1} \le \eps$ for all $\ell = 2, \ldots, k$. For $u \in [0, 1]$, write
	\[
		R_t(u) 
		= \frac{\Q\bigl(\ZFt \le u, Y > t \mid \Bb\bigr)}{\E[1-F(t)\mid \Bb]}.
	\]
	Note that $R_t(u)$ is monotone in $u \in [0, 1]$. For $u \in [0, 1]$, let $\ell(u) = \min \{ \ell = 1,\ldots,k : u \le u_\ell \}$. Then $u_{\ell(u)-1} \le u \le u_{\ell(u)}$ (with $u_0 = 0$) and thus
	\begin{align*}
		R_t(u) - u 
		&\le R_t(u_{\ell(u)}) - u_{\ell(u)-1}
		\le R_t(u_{\ell(u)}) - u_{\ell(u)} + \eps,\\
		R_t(u) - u
		&\ge R_t(u_{\ell(u)-1}) - u_{\ell(u)}
		\ge R_t(u_{\ell(u)-1}) - u_{\ell(u)-1} - \eps.
	\end{align*}
	It follows that
		$\sup_{u \in [0, 1]}
		\left| R_t(u) - u \right|
		\le \max_{\ell=1,\ldots,k}
		\left| R_t(u_\ell) - u_\ell \right| + \eps$.
	By assumption, $R_t(u_\ell) \to u_\ell$ almost surely as $t \to x_Y$.
	As a consequence, almost surely, 
 \[\limsup_{t \to x_Y} \sup_{u \in [0, 1]}
		\left| R_t(u) - u \right| 
		\le \limsup_{t \to x_Y} \max_{\ell=1,\ldots,k}
		\left| R_t(u_\ell) - u_\ell \right|
		+ \eps
		= \eps.
  \]
	Since $\eps > 0$ was arbitrary, we conclude $\limsup_{t \to x_Y} \sup_{u \in [0, 1]}
		\left| R_t(u) - u \right|
		= 0$, almost surely.
\end{proof}

\begin{proof}[Details for Example~\ref{ex:unfoc}]
Consider a generalization of the unfocused forecaster in \cite{GneitingEtAl2007}. Let $G$ be a distribution function supported on a (possibly unbounded) interval $[a,b]$ with continuous density $g > 0$ on $(a,b)$, let $Y \sim G$, let $\tau$ be independent of $Y$ and equal to $+1$ or $-1$ with probability $1/2$ and, given $\tau$, let $F(y) = \frac{1}{2} \{G(y)+G(y+\tau)\}$ for all real $y$. Then $Z_{F} = F(Y)$ is uniformly distributed on $[0, 1]$. Indeed, writing $G_\pm(y) = \frac{1}{2} \{G(y)+G(y\pm 1)\}$ and $z_\pm = G_\pm^{-1}(u)$, we have, for $u \in (0, 1)$,
\begin{align*}
    \frac{d}{du} \Q(Z_{F} \le u)
    &= \frac{d}{du} \frac{1}{2} \{ \Q(G_+(Y) \le u) + \Q(G_-(Y) \le u) \} \\
    &= \frac{d}{du} \frac{1}{2} \{ G(G_+^{-1}(u)) + G(G_-^{-1}(u)) \} \\
    &= \frac{1}{2}
    \left[ \frac{g(z_+)}{\frac{1}{2}\{g(z_+)+g(z_++1)\}}
    + \frac{g(z_-)}{\frac{1}{2}\{g(z_-)+g(z_--1)\}}
    \right].
\end{align*}
But since
\begin{align*}
    u &= G_+(z_+) = \frac{1}{2} (G(z_+) + G(z_++1)), \\
    u &= G_-(z_-) = \frac{1}{2} (G(z_-) + G(z_--1)) = \frac{1}{2} \{G(z_--1) + G((z_--1)+1)\},
\end{align*}
it follows that $z_--1 = z_+$, since $z \mapsto G(z) + G(z + 1)$, $z \in [a-1,b]$ is strictly increasing, and $z_+, z_- - 1 \in [a-1,b]$. Inserting this equality above yields $(d/du)\Q(Z_{F} \le u) = 1$, as required. 

The special case where $Y\sim \Unif(0, 1)$ yields, for $y \in [0, 1]$, 
\[
    F(y) = 
    \begin{dcases} 
        y/2 & \text{with probability $1/2$ (if $\tau = -1$),} \\
        (y+1)/2 & \text{with probability $1/2$ (if $\tau = +1$),}
    \end{dcases}
\]
and thus $Z_{F} = Y/2$ or $Z_{F} = (Y+1)/2$, both with probability $1/2$, so that $Z_{F}$ is indeed uniformly distributed on $(0, 1)$, too, and thus $F$ is probabilistically calibrated. In this case, the forecast $F = \frac{1}{2}(G + G(\,\cdot\,+\tau))$ is either the uniform distribution on $[-1,1]$ or the uniform distribution on $[0, 2]$. However, $F$ is not probabilistically tail calibrated. To see this, note that for $0 < t < y < 1$, we have
\[
    F_t(y-t) = \frac{F(y)-F(t)}{1-F(t)}
    = \begin{dcases}
        \frac{y/2-t/2}{1-t/2} = \frac{y-t}{2-t}
        = \frac{1-t}{2-t} \cdot \frac{y-t}{1-t}
        & \text{with probability $1/2$,} \\
        \frac{(y+1)/2 - (t+1)/2}{1-(t+1)/2} =
        \frac{y-t}{1-t} & \text{with probability $1/2$.}
    \end{dcases}
\]
For $0 < t < 1$, the distribution of $(Y - t)/(1-t) \mid Y > t$ is uniform on $[0, 1]$.
Recalling from~\eqref{eq:exPIT} the excess PIT $\ZFt = F_t(Y-t)$ given $Y > t$, we get, for $u \in (0,1)$,
\[
    \lim_{t \uparrow 1} \Q\bigl(\ZFt \le u \mid Y > t\bigr)
    = \frac{1}{2} + \frac{1}{2}u.
\] 
As a consequence, $F$ is not probabilistically tail calibrated.
\end{proof}

\begin{example}[More general version of Example \ref{ex:optimistic}]\label{ex:14b}

Let $G_1$ and $G_2$ be two random cdfs such that $G_1$ is stochastically smaller than $G_2$ almost surely, and let $\lambda$ be a deterministic cdf. We will specify relevant conditions on $G_1$, $G_2$, and $\lambda$ below, but for a concrete example, suppose that $\Delta \sim \Gamma(1/\gamma,1/\gamma)$ for some $\gamma > 0$ and define $G_1 = \Exp(\Delta)$, $G_2 = \Exp(\Delta/2)$, that is, $G_1(x) = 1 - e^{-\Delta x}$ for $x \ge 0$. Then $\E[G_1] = \GPD_{1,\gamma}$, that is, $\E[G_1(x)] = 1 - \E[e^{-\Delta x}] = 1 - (1+\gamma x)^{-1/\gamma}$ for $x \ge 0$, and we take $\lambda = \GPD_{1,\gamma/2}$.

The conditional distribution of $Y$ given $G_1, G_2$ is defined as
\begin{equation}
\label{eq:lamG1G2}
	G(y) = \lambda(y) G_1(y) + \left(1-\lambda(y)\right) G_2(y), 
	\quad y \in \R,
\end{equation}
and the random forecast is $F = G_1$. Note that $G$ is automatically increasing due to the stochastic ordering of $G_1$ and $G_2$. A stochastic construction yielding the conditional distribution at \eqref{eq:lamG1G2} goes as follows. Let $(U,L)$ be independent random variables, independent also of $(G_1, G_2)$, with $U$ uniform on $(0, 1)$ and $L$ with distribution function $\lambda$. Put $X_j = G_j^{-1}(U)$ for $j \in \{1, 2\}$, with $G_j^{-1}$ the quantile function of $G_j$. Conditionally on $(G_1, G_2)$, the random variables $X_1$ and $X_2$ have distribution functions $G_1$ and $G_2$ respectively, and $X_1 \le X_2$ almost surely since $G_1^{-1}(u) \le G_2^{-1}(u)$ for all $u \in (0, 1)$, as $G_1$ is stochastically smaller than $G_2$ almost surely. Define
\begin{equation}
\label{eq:YLX1X2}
	Y = \text{second largest of $(X_1, X_2, L)$}.
\end{equation}
For $y \in \R$, we have $Y \le y$ if and only if at least two out of three variables in $(X_1, X_2, L)$ are bounded by $y$. This yields the following event decomposition:
\[
	\{Y \le y, X_1 \le X_2\} 
	= \{L \le y, X_1 \le y, X_1 \le X_2\} 
	\cup \{L > y, X_1 \le X_2 \le y\}.
\]
Indeed, provided $X_1 \le X_2$, the event $Y \le y$ occurs in two cases: either $L \le y$, and then it is necessary and sufficient that $X_1 \le y$; or $L > y$, and then it is necessary and sufficient that $X_2 \le y$, forcing $X_1 \le y$. Since $\Q(X_1 \le X_2 \mid G_1, G_2) = 1$ almost surely, we obtain \eqref{eq:lamG1G2}:
\begin{align*}
	\Q(Y \le y \mid G_1, G_2)
	&= \Q(L \le y, X_1 \le y \mid G_1, G_2)
	+ \Q(L > y, X_2 \le y \mid G_1, G_2) \\
	&= \lambda(y) \, G_1(y) + \left(1-\lambda(y)\right) G_2(y) = G(y).
\end{align*}

The random forecast is $F = G_1$, the conditional distribution of $X_1$. Intuitively, since the marginal tail of $X_1$ is heavier than the one of $L$ and since $X_1 \le X_2$, the marginal tail of $Y$ is the same as the one of $X_1$, so that $F$ should be probabilistically tail calibrated. However, conditionally on $(G_1,G_2)$, the tail of $L$ is heavier than the one of $X_1$, and $F$ will not be tail auto-calibrated if the conditional tail of the minimum of $X_2$ and $L$ is not lighter than the one of $X_1$ almost surely. This intuition underlies the formal arguments in the following paragraph.

The conditional tail of $Y$ given $(G_1,G_2)$ can be computed as follows: we have $Y > y$ if and only if at least two out of three variables in $(X_1,X_2,L)$ exceed $y$. As a consequence,
\begin{align}
\nonumber
	\Q(Y > y \mid G_1, G_2)
	&= \Q \left(
		\{L > y, X_2 > y\} \cup \{L \le y, X_1 > y\} 
		\mid G_1, G_2
	\right) \\
\label{eq:QGgivenG1G2}
	&= \left(1-\lambda(y)\right) \left(1-G_2(y)\right)
	+ \lambda(y) \left(1-G_1(y)\right).
\end{align}
The random forecast $F = G_1$ is not tail auto-calibrated if, with positive probability, the conditional tail of $L \wedge X_2 := \min(L, X_2)$ is at least as heavy as the one as $X_1$, that is, if
\begin{equation}
\label{eq:minLX2heavy}
	\Q \left( 
		\limsup_{y \to \infty} 
		\frac{(1 - \lambda(y))(1-G_2(y))}{1-G_1(y)} 
		> 0 
	\right) > 0. 
\end{equation}
Indeed, if \eqref{eq:minLX2heavy} holds, then, by \eqref{eq:QGgivenG1G2}, we have
\[
	\frac{\Q(Y > y \mid G_1, G_2)}{1-F(y)}
	= \frac{(1 - \lambda(y))(1-G_2(y))}{1-G_1(y)} 
	+ \lambda(y),
\]
which does \emph{not} converge to $1$ almost surely as $y \to \infty$, since already $\lim_{y \to \infty} \lambda(y) = 1$. The concrete example given above satisfies \eqref{eq:minLX2heavy}. 

Next, we show that $F$ is probabilistically tail calibrated provided the marginal tail of $G_1$ is heavier than the one of $\lambda$. We assume
\begin{equation}
	\label{eq:lamG10}
	\lim_{y \to \infty} \frac{1-\lambda(y)}{\E[1-G_1(y)]} = 0.
\end{equation}
The excess probability integral transform $\ZFt$ of $Y - t$ given $Y > t$ by the forecast $F = G_1$ satisfies
\[
	1 - \ZFt 
	= 1 - F_t(Y - t) 
	= \frac{1 - G_1(Y)}{1-G_1(t)}.
\]
For $u \in [0, 1]$, we have by definition of $Y$ in \eqref{eq:YLX1X2}, since $X_1 \le X_2$ almost surely,
\[
	\Q( \ZFt \le u, Y > t)
	= \Q ( \ZFt \le u, L \le t < X_1) 
	+ \Q( \ZFt \le u, L > t, X_2 > t).
\]
The second term is bounded by $\Q(L > t) = 1 - \lambda(t)$, which is of smaller order than $\E[1-F(t)] = \E[1-G_1(t)]$ because of the assumption~\eqref{eq:lamG10}. On the event $\{L < t \le X_1\}$, we have $Y = X_1$ by the definition~\eqref{eq:YLX1X2}, so that, by independence of $L$ of everything else,
\[
	\Q \left( \ZFt \le u, L \le t < X_1 \right)
	= \lambda(t) \Q \left( 1 - \frac{1 - G_1(X_1)}{1-G_1(t)} \le u, X_1 > t \right)
	= \lambda(t) \, u \E[1-G_1(t)].
\]
The last identity follows from the fact that the conditional distribution of $X_1$ is $G_1$ together with the (implicit) assumption that $G_1$ is continuous almost surely. Together, we deduce that, for $u \in [0,1]$,
\[
	\lim_{t \to \infty} 
	\frac{\Q \left( \ZFt \le u, Y > t \right)}{\E[1-F(t)]} 
	=
	\lim_{t \to \infty}
	\left( 
	\lambda(t) u + \frac{\Q(\ZFt \le u, L > t, X_2 > t)}{\E[1-G_1(t)]}
	\right)
	= 
	u + 0 = u,
\]
meaning that $F$ is probabilistically tail calibrated.
\end{example}

\section{Scoring rules and extremes}\label{sec:scores}

A scoring rule is a function of the form 
\[
S: \Pp \times \R \to [-\infty, \infty],
\]
where $\Pp$ is a class of probability measures on $\R$. 
Scoring rules take a probabilistic forecast $F \in \Pp$ and an observation $y \in \R$ as inputs, and output a real (possibly infinite) value that measures the forecast's accuracy. It is assumed that $\bar{S}(F, G) = \E_{G}[S(F, Y)]$ exists for all $F,G \in \Pp$, that is, $S(F,\cdot)$ is assumed to be $\Pp$-quasiintegrable for all $F \in \Pp$. A scoring rule $S$ is called \emph{proper} (relative to $\Pp$) if
\begin{equation}\label{eq:propS}
\bar{S}(G, G) \leq \bar{S}(F, G) \quad \text{for all} \quad F, G \in \Pp.
\end{equation}
The scoring rule is strictly proper (relative to $\Pp$) if equality in \eqref{eq:propS} implies $F = G$. Hence, proper scoring rules encourage forecasters to issue what they truly believe will occur.

While proper scoring rules have become the standard approach with which to rank and compare competing forecast systems, they have undesirable properties when interest is on extreme events. \cite{TaillardatEtAl2023} demonstrate that the continuous ranked probability score (CRPS), arguably the most popular scoring rule for forecasts of real-valued outcomes, cannot distinguish between forecasts with different tail behavior. \cite{BrehmerStrokorb2019} generalize this result to all proper scoring rules. Specifically, let $S$ be a proper scoring rule relative to a convex class $\Pp$. Given $G \in \Pp$ and assuming that there exists a distribution $H \in \Pp$ with a heavier tail than $G$, then, for any $\epsilon > 0$, it is possible to construct a distribution $F \in \Pp$ that is not tail equivalent to $G$ but such that
\begin{equation*}
	\left| \bar{S}(F, G) - \bar{S}(G, G) \right| \leq \epsilon.
\end{equation*}
That is, the tail of the distribution can be modified while keeping the expected score $\epsilon$-close to its minimum. Recall that the upper endpoint $x^F$ of a distribution with cdf $F$ is defined as $x^F = \sup\{x \in \R \mid F(x) < 1\}$. For two distributions with cdfs $F$ and $G$, we say that $G$ has a \emph{heavier tail} than $F$ if $x^F < x^G$, or $x^F = x^G = x^*$ and $\lim_{x \to x^*}(1 - F(x))/(1-G(x)) = 0$. We denote this by $F <_t G$. The distributions are \emph{tail equivalent} if $x^F = x^G = x^*$ and $\lim_{x \to x^*}(1 - F(x))/(1-G(x)) \in (0,\infty)$.

That is, given a proper scoring rule $S$, it is possible to construct a forecast distribution $F$ whose expected score under $S$ is $\epsilon$-close to that of the true distribution $G$, despite $F$ and $G$ having different tail behavior. However, $F$'s expected score may not be $\epsilon$-close to $G$'s expected score under \emph{all} (or several) proper scoring rules. Hence, this result could potentially be circumvented by evaluating forecasts using multiple scoring rules, and determining whether the score difference is sufficiently close to zero for all scores. The following theorem shows that even if such a class would exist, it is ``all or nothing'', and would therefore not be interesting for forecast ranking. For the interpretation of the next theorem, it is helpful to keep in mind that for the class of kernel scores, which includes the CRPS, the divergence is symmetric, that is $\bar{S}(H,G) - \bar{S}(G,G) = \bar{S}(G,H) - \bar{S}(H,H)$, $G,H \in \Pp$ \citep[Section 5.1]{GneitingRaftery2007}.

\begin{theorem}\label{thm:19}
Let $\Ss$ be a collection of quasi-integrable, proper scoring rules on a convex set $\Pp$ of probability measures on $\R$. Let $G \in \Pp$ and assume that the set $\{ H \in \Pp : G <_t H\}$ is not empty. If 
\begin{equation}
\label{eq:infsup}
	\inf_{H \in \Pp : G <_t H }
	\sup_{S \in \Ss} 
	\left| \bar{S}(H, G) - \bar{S}(G, G) \right| 
	> 0
\end{equation}
then actually
\[
	\forall H \in \Pp, \; G <_t H :
	\sup_{S \in \Ss} \left| \bar{S}(G, H) - \bar{S}(H, H) \right| = +\infty.
\]
\end{theorem}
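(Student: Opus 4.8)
The plan is to prove the contrapositive: assuming there exists $H_0 \in \Pp$ with $G <_t H_0$ and
\[
	M := \sup_{S \in \Ss} \bigl| \bar{S}(G, H_0) - \bar{S}(H_0, H_0) \bigr| < +\infty ,
\]
I will show that the infimum in \eqref{eq:infsup} equals $0$. By properness every term $\bar{S}(G, H_0) - \bar{S}(H_0, H_0)$ is nonnegative, so the hypothesis says these divergences are all finite and bounded by $M$; I also take $\bar{S}(G, G) \in \R$ for every $S \in \Ss$, which is implicit in \eqref{eq:infsup} being meaningful. The strategy is to produce, for each $\lambda \in (0,1)$, a competitor in the infimum whose worst-case divergence from $G$ is at most $\lambda M/(1-\lambda)$.

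The competitor is the mixture $G_\lambda := (1-\lambda) G + \lambda H_0$, which lies in $\Pp$ by convexity. First I would verify $G <_t G_\lambda$. If $x^G < x^{H_0}$, then $G_\lambda < 1$ throughout $(x^G, x^{H_0})$, so $x^{G_\lambda} = x^{H_0} > x^G$. If instead $x^G = x^{H_0} = x^*$, then $1 - G_\lambda = (1-\lambda)(1-G) + \lambda(1-H_0)$, and dividing by $1 - H_0$ gives $(1-G)/(1-G_\lambda) \to 0$ as $x \to x^*$, because $(1-G)/(1-H_0) \to 0$; in both cases $G <_t G_\lambda$, so $G_\lambda$ is admissible in \eqref{eq:infsup}.

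The core estimate is $0 \le \bar{S}(G_\lambda, G) - \bar{S}(G, G) \le \lambda M/(1-\lambda)$ for every $S \in \Ss$. The lower bound is properness. For the upper bound, use that $\bar{S}$ is affine in its second argument (it is an expectation), so
\[
	\bar{S}(G_\lambda, G_\lambda) = (1-\lambda)\bar{S}(G_\lambda, G) + \lambda \bar{S}(G_\lambda, H_0),
	\qquad
	\bar{S}(G, G_\lambda) = (1-\lambda)\bar{S}(G, G) + \lambda \bar{S}(G, H_0).
\]
Properness gives $\bar{S}(G_\lambda, G_\lambda) \le \bar{S}(G, G_\lambda)$ and $\bar{S}(G_\lambda, H_0) \ge \bar{S}(H_0, H_0)$; substituting and rearranging yields
\[
	(1-\lambda)\bigl[\bar{S}(G_\lambda, G) - \bar{S}(G, G)\bigr]
	\le \lambda\bigl[\bar{S}(G, H_0) - \bar{S}(G_\lambda, H_0)\bigr]
	\le \lambda\bigl[\bar{S}(G, H_0) - \bar{S}(H_0, H_0)\bigr]
	\le \lambda M .
\]
Hence $\sup_{S \in \Ss}|\bar{S}(G_\lambda, G) - \bar{S}(G, G)| \le \lambda M/(1-\lambda)$, and since $G_\lambda$ is admissible for every $\lambda \in (0,1)$, letting $\lambda \downarrow 0$ forces the infimum in \eqref{eq:infsup} to be $0$, as desired.

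I do not anticipate a real obstacle; the one point needing care is that expected scores may be infinite, so I should check that the displayed algebra never creates an $\infty - \infty$. This is automatic: once $\bar{S}(G,G)$, $\bar{S}(H_0,H_0)$ and $\bar{S}(G,H_0)$ are finite (the first by assumption, the other two from $M < \infty$), the inequality $\bar{S}(G_\lambda, G_\lambda) \le \bar{S}(G, G_\lambda) < \infty$ together with $\bar{S}(G_\lambda, G) \ge \bar{S}(G, G)$ and $\bar{S}(G_\lambda, H_0) \ge \bar{S}(H_0, H_0)$ forces $\bar{S}(G_\lambda, G)$ and $\bar{S}(G_\lambda, H_0)$ to be finite as well. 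Note that, in contrast to the single-score result, no Brehmer--Strokorb-type construction is needed: the single mixture $G_\lambda$ simultaneously retains the heavy tail of $H_0$ and is ``cheap to approach $G$ from'' in the sense measured by properness.
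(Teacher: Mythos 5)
Your proof is correct and is mathematically the same argument as the paper's, just phrased in the contrapositive and made self-contained. The paper's proof works directly: assuming the inf-sup in \eqref{eq:infsup} is $\ge 2\epsilon > 0$, it forms the same mixture $F_\lambda = \lambda H + (1-\lambda)G$, notes $G <_t F_\lambda$, picks $S_\lambda$ witnessing the divergence $\ge \epsilon$, and then \emph{cites} Lemma~5.3 of \citet{BrehmerStrokorb2019} for the estimate $|\bar{S}_\lambda(F_\lambda, G) - \bar{S}_\lambda(G, G)| \le \frac{\lambda}{1-\lambda}|\bar{S}_\lambda(G, H) - \bar{S}_\lambda(H, H)|$, from which $|\bar{S}_\lambda(G,H)-\bar{S}_\lambda(H,H)| \ge \frac{1-\lambda}{\lambda}\epsilon \to \infty$. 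Your contribution is that you reprove rather than cite this estimate: the display $(1-\lambda)[\bar{S}(G_\lambda,G)-\bar{S}(G,G)] \le \lambda M$ obtained from affinity of $\bar{S}$ in its second argument plus properness applied at $G_\lambda$ and at $H_0$ \emph{is} the Brehmer--Strokorb Lemma~5.3 bound, so the closing remark that ``no Brehmer--Strokorb-type construction is needed'' is a little misleading --- you used exactly their construction, you just derived the inequality from scratch. That is a genuine small improvement: the argument becomes self-contained, and you also supply the verification that $G <_t G_\lambda$ which the paper asserts without proof. Your care about avoiding $\infty - \infty$ is also apposite, though the paper implicitly handles it by quasi-integrability and the finiteness forced by \eqref{eq:infsup} being a real number.
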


\begin{proof}
Let $\epsilon$ be equal to half the infimum in \eqref{eq:infsup}.
Let $H \in \Pp$ be such that $G <_t H$. Define $F_\lambda = \lambda H + (1-\lambda) G$ for $\lambda \in (0, 1)$. For all $\lambda \in (0, 1)$, we have $G <_t F_\lambda$, too. By assumption, there exists $S_\lambda \in \Ss$ such that $\left| 
		\bar{S}_\lambda(F_\lambda, G) 
		- \bar{S}_\lambda(G, G) 
	\right|
	\ge \epsilon$.
But, by the proof of Lemma~5.3 in \citet{BrehmerStrokorb2019}, we have
\[
	\left| 
		\bar{S}_\lambda(F_\lambda, G) 
		- \bar{S}_\lambda(G, G) 
	\right|
	\le \frac{\lambda}{1-\lambda}
	\left| 
		\bar{S}_\lambda(G, H) - \bar{S}_\lambda(H, H) 
	\right|,
\]
so that
\[
	\left| \bar{S}_\lambda(G, H) - \bar{S}_\lambda(H, H) \right|
	\ge \frac{1-\lambda}{\lambda} \epsilon.
\]
Since this is true for all $\lambda \in (0, 1)$, it follows that
\[
	\sup_{S \in \Ss} \left| \bar{S}(G, H) - \bar{S}(H, H) \right| = +\infty.
	\qedhere
\]
\end{proof}

Theorem \ref{thm:19} further substantiates the doubts concerning the adequacy of proper scoring rules for assessing tail behavior of probabilistic forecasts, and strengthens the results of \cite{BrehmerStrokorb2019}. A class of scoring rules would be uninformative when comparing two forecasters that do not have the same tail behavior as the true distribution. However, a key quality of proper scoring rules is that they allow different imperfect forecasts to be ranked and compared. If the assessment of tail behavior with proper scoring rules becomes a binary decision of whether or not the tail behavior is correctly captured, we see no gain in using proper scoring rules for this purpose.

The result of \citet{BrehmerStrokorb2019} also holds for max-functionals, and similarly, Theorem \ref{thm:19} also extends to this case. A functional $\mathrm{T} : \Pp \to \R$ is a \emph{max-functional} if it satisfies $\mathrm{T}(\lambda F_{1} + (1 - \lambda) F_{2}) = \max (\mathrm{T}(F_{1}), \mathrm{T}(F_{2}))$ for all $F_{1}, F_{2} \in \Pp$ and $\lambda \in (0, 1)$, with examples including the upper end point, the tail index, and the extreme value index. 

\begin{theorem}\label{thm:20}
Let $\Ss$ be a collection of quasi-integrable, proper scoring rules on a convex set $\Pp$ of probability measures on $\R$, and let $T:\Pp \to \R$ be a max-functional. Let $G \in \Pp$ and assume that the set $\{ H \in \Pp : T(G) < T(H)\}$ is not empty. If 
\begin{equation*}
	\inf_{H \in \Pp : T(G) < T(H) }
	\sup_{S \in \Ss} 
	\left| \bar{S}(H, G) - \bar{S}(G, G) \right| 
	> 0
\end{equation*}
then actually
\[
	\forall H \in \Pp, \; T(G) < T(H) :
	\sup_{S \in \Ss} \left| \bar{S}(G, H) - \bar{S}(H, H) \right| = +\infty.
\]
\end{theorem}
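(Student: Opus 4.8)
The plan is to mirror the proof of Theorem~\ref{thm:19} almost verbatim, replacing the one structural fact about tail orderings (that mixing with a lighter-tailed distribution preserves the relation $G <_t F_\lambda$) with the defining identity of a max-functional. First I would set $\epsilon$ equal to half the infimum appearing in the hypothesis, so that $\epsilon > 0$, and fix an arbitrary $H \in \Pp$ with $T(G) < T(H)$. As in the proof of Theorem~\ref{thm:19}, I would form the convex combinations $F_\lambda = \lambda H + (1-\lambda) G$ for $\lambda \in (0, 1)$, which lie in $\Pp$ by convexity, and in addition satisfy $G <_t F_\lambda$ is no longer the relevant property---instead we need $T(G) < T(F_\lambda)$.

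The key step is now immediate rather than requiring a tail computation: by the max-functional property, $T(F_\lambda) = T(\lambda H + (1-\lambda) G) = \max(T(H), T(G)) = T(H) > T(G)$, so each $F_\lambda$ belongs to the index set $\{ H' \in \Pp : T(G) < T(H')\}$ over which the infimum is taken. Hence, for every $\lambda \in (0, 1)$, the hypothesis supplies a scoring rule $S_\lambda \in \Ss$ with $\left| \bar{S}_\lambda(F_\lambda, G) - \bar{S}_\lambda(G, G) \right| \ge \epsilon$.

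From here the argument is exactly that of Theorem~\ref{thm:19}. Lemma~5.3 of \citet{BrehmerStrokorb2019}---applicable because every element of $\Ss$ is proper and quasi-integrable on the convex set $\Pp$---yields
\[
	\left| \bar{S}_\lambda(F_\lambda, G) - \bar{S}_\lambda(G, G) \right|
	\le \frac{\lambda}{1-\lambda}
	\left| \bar{S}_\lambda(G, H) - \bar{S}_\lambda(H, H) \right|,
\]
so that $\left| \bar{S}_\lambda(G, H) - \bar{S}_\lambda(H, H) \right| \ge \frac{1-\lambda}{\lambda} \epsilon$. Since this holds for all $\lambda \in (0, 1)$, letting $\lambda \downarrow 0$ forces $\sup_{S \in \Ss} \left| \bar{S}(G, H) - \bar{S}(H, H) \right| = +\infty$, and as $H$ was arbitrary with $T(G) < T(H)$ the conclusion follows.

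I do not anticipate a genuine obstacle: the entire content of the generalization is that the set $\{H' : T(G) < T(H')\}$ is closed under taking mixtures with $G$, which the max-functional identity delivers cleanly and with less work than the corresponding tail-ordering check in Theorem~\ref{thm:19}. The only points requiring care are confirming that $F_\lambda$ stays inside the index set of the infimum (handled above) and that the cited convexity bound applies, which it does under the stated properness and quasi-integrability assumptions.
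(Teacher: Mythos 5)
Your proof is correct and follows exactly the paper's argument: form the convex combinations $F_\lambda = \lambda H + (1-\lambda)G$, use the max-functional identity $T(F_\lambda) = \max(T(H), T(G)) = T(H) > T(G)$ to place $F_\lambda$ in the index set of the infimum, and then invoke the bound from Lemma~5.3 of \citet{BrehmerStrokorb2019} as in Theorem~\ref{thm:19}. The paper's proof simply states the max-functional observation and declares the rest ``completely analogous'' to Theorem~\ref{thm:19}, which is precisely what you have spelled out.
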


\begin{proof}
Let $H \in \Pp$ be such that $T(G) < T(H)$. Define $F_\lambda = \lambda H + (1-\lambda) G$ for $\lambda \in (0, 1)$. For all $\lambda \in (0, 1)$, we have $T(F_\lambda)= T(H) > T(G)$, since $T$ is a max-functional. The rest of the proof is completely analogous to that of Theorem~\ref{thm:19}.
\end{proof}

\section{Marginal tail calibration}\label{app:marginal}

\subsection{General construction of notions of (tail) calibration}
Auto-calibration is a strong requirement and generally hard to assess empirically. There are statistical tests for auto-calibration, but natural diagnostic tools that identify reasons for a lack of auto-calibration are not available. Therefore, one typically resorts to checking and improving necessary unconditional criteria for auto-calibration such as probabilistic and marginal calibration, with the former much more popular than the latter.

Both of these unconditional notions are a special case of the following approach, motivated by work by \cite{Bashaykh2022}. Choose a class of functions $(F,y) \mapsto h_\lambda(F,y)$ indexed by some parameter $\lambda$ that take as inputs a cdf $F$ and a real value $y$. The forecast $F$ is then called calibrated with respect to $(h_\lambda)_\lambda$ if 
\[
\E[h_\lambda(F,Y)] = 0 \quad \text{for all $\lambda$.}
\]
Probabilistic calibration uses $h_u(F,y) = \one\{F(y)\le u\}-u$, for $u \in [0,1]$, and marginal calibration arises with $h_z(F,y) = F(z) - \one\{y \le z\}$, for $z \in \mathbb{R}$. The following example of \citet{GneitingEtAl2007} shows that marginal calibration is also strictly weaker than auto-calibration, and neither implies nor is implied by probabilistic calibration. 

\begin{example}\label{ex:2}
    Consider the sign-reversed forecaster of \cite{GneitingEtAl2007}. Again, assume that $Y \mid \mu \sim N(\mu, 1)$, with $\mu \sim N(0, 1)$. The sign-reversed forecaster is  $F = N(-\mu, 1)$.
    Although this is clearly not a good forecast, taking the expectation over $\mu$ results in the unconditional distribution of the observation, meaning this forecast is marginally calibrated. However, reassuringly, this forecast is neither probabilistically calibrated nor auto-calibrated.
\end{example}

Alongside any unconditional notion of calibration, we can define $\Bb$-calibration with respect to $(h_\lambda)_\lambda$ as
\begin{equation}\label{eq:uncond_cal}
\E[ h_\lambda(F,Y) \mid \Bb]= 0 \quad \text{almost surely, for all $\lambda$,}
\end{equation}
for some $\sigma$-algebra $\Bb \subset \Aa$. If the class of functions $(h_\lambda)_\lambda$ is chosen such that it identifies distributions, that is, if for any deterministic cdf $G$ it holds that
\[
	\left( \forall \lambda : \E [h_\lambda(G,Y)] = 0 \right)
	\quad \text{if and only if} \quad Y \sim G,
\]
then auto-calibration implies $\Bb$-calibration with respect to $(h_\lambda)_\lambda$ for $\Bb \subseteq \sigma(F)$. In some instances, conditional notions of calibration have also been defined for families of $\sigma$-algebras $(\Bb_\lambda)_\lambda$ indexed by $\lambda$, that is, $F$ is calibrated with respect to $(h_\lambda)_\lambda$ and $(\Bb_\lambda)_\lambda$ if
\[
\E[ h_\lambda(F,Y) \mid \Bb_\lambda]= 0 \quad \text{almost surely, for all $\lambda$.}
\]
Examples include threshold-calibration with $h_z(F,y) = F(z) - \one\{y \le z\}$, $\Bb_z = \sigma(F(z))$, $z \in \R$ and quantile calibration with $h_u(F,y) = \one\{F(y)\le u\}-u$, $\Bb_u = \sigma(F^{-1}(u))$, $u \in [0,1]$. 

Starting from a notion of calibration as in \eqref{eq:uncond_cal}, a corresponding notion of tail calibration can be derived. For example, one can request that 
\[
\lim_{t \to x_Y} \sup_{\lambda}|\E[ h_\lambda (F_t,Y-t) \mid Y >t]| = 0,
\]
and $\lim_{t \to x_Y} \Q(Y>t)/\E[1-F(t)] = 1$. In the main paper, we have followed this path starting from probabilistic calibration. In the next section, we consider the corresponding notion of tail calibration that arises from marginal calibration. Lemma \ref{lem:unifconv-u} motivates the use of the supremum in the last display.

\subsection{The example of marginal tail calibration}

\begin{defin}[Marginal tail calibration]
\label{defin:mtc}
Let $x_Y = \sup \{ x \in \R : \Q(Y \le x) < 1 \}$ be the upper endpoint of $Y$. The forecast $F$ is \emph{marginally tail calibrated} for $Y$ if $\E[1-F(t)] > 0$ for all $t < x_Y$ and it holds that
\begin{equation}\label{eq:tailMC1}
        \lim_{t \to x_Y} \sup_{x \ge 0} \left|
        \frac{\Q(t < Y \le t+x)}{\E[1-F(t)]}
        -
        \E[F_t(x) \mid Y > t]
    \right|
    = 0.
\end{equation}
\end{defin}

This notion of marginal tail calibration could be further refined by defining conditional versions similarly to Definition \ref{def:TAC_new}, where this framework has been pursued starting from probabilistic calibration.

As in the definition of $\Bb$-tail calibration at \eqref{eq:tailBCt}, the single requirement \eqref{eq:tailMC1} is equivalent to two conditions together that separately concern the occurrence and severity of excesses over the threshold $t$: For all $u \in [0,1]$,
\begin{equation}\label{eq:tailMC2}
    \lim_{t \to x_Y}
    \sup_{x \ge 0} \left|
        \Q(Y-t \le x \mid Y>t)
        - \E [F_t(x) \mid Y>t]
    \right| = 0, \quad \lim_{t \to x_Y} \frac{\Q(Y > t)}{\E[1-F(t)]} = 1.
\end{equation}

Indeed, by taking the limit in \eqref{eq:tailMC1} as $x \to \infty$, the difference $|\Q(Y > t)/\E[1-F(t)]-1|$ is bounded by the supremum. This gives the second part of \eqref{eq:tailMC2}; the first part then follows too.

Akin to Proposition~\ref{propo:tailequiv}, for non-random probabilistic forecasts $F$, it is straightforward to deduce that marginal tail calibration is equivalent to marginal tail equivalence.

\begin{propo}
	\label{propo:tailequiv_mtc}
	If the continuous probabilistic forecast $F$ is non-random, then $F$ is marginally tail calibrated if and only if $x_F = x_Y$ and $\lim_{t\to x_Y} \Q(Y > t)/(1-F(t)) = 1$.
\end{propo}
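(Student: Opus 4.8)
The plan is to establish the two directions separately, and to reduce the statement for the most part to the structure already present in Proposition~\ref{propo:tailequiv} and Lemma~\ref{lem:endpoint}. Since $F$ is non-random, the $\sigma$-algebra $\Bb$ plays no role, and marginal tail calibration coincides with the requirement that, uniformly in $x \ge 0$,
\[
	\frac{\Q(t < Y \le t + x)}{1 - F(t)} - F_t(x) \to 0
	\qquad \text{as $t \to x_Y$,}
\]
together with $\E[1 - F(t)] = 1 - F(t) > 0$ for $t < x_Y$. The occurrence/severity split in \eqref{eq:tailMC2} then reads: $\Q(Y > t)/(1 - F(t)) \to 1$ and $\sup_{x \ge 0} |\Q(Y - t \le x \mid Y > t) - F_t(x)| \to 0$.

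For the ``if'' direction, I would assume $x_F = x_Y$ and $\Q(Y > t)/(1 - F(t)) \to 1$. The occurrence part of \eqref{eq:tailMC2} holds by hypothesis. For the severity part, note that by Lemma~\ref{lem:endpoint} (whose hypothesis \eqref{eq:tailBC:sev} we will have verified, or which one can invoke directly once the severity convergence is shown — to avoid circularity it is cleaner to prove $\Q(Y = x_Y) = 0$ here directly from $1 - F(t) \to 0$ and $\Q(Y>t)/(1-F(t)) \to 1$). Then one mimics the computation in the proof of Proposition~\ref{propo:tailequiv}: writing $\Fbar(t) = 1 - F(t)$, for fixed $x \ge 0$,
\[
	\Q(Y - t \le x \mid Y > t)
	= 1 - \frac{\Q(Y > t + x)}{\Q(Y > t)}
	= 1 - \frac{\Fbar(t+x)}{\Fbar(t)} \cdot \frac{\Q(Y>t+x)/\Fbar(t+x)}{\Q(Y>t)/\Fbar(t)},
\]
and since both ratios $\Q(Y > t + x)/\Fbar(t+x)$ and $\Q(Y > t)/\Fbar(t)$ tend to $1$, while $\Fbar(t+x)/\Fbar(t) = 1 - F_t(x)$, this difference tends to $0$. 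The remaining task is to upgrade pointwise convergence in $x$ to uniform convergence: both $x \mapsto \Q(Y - t \le x \mid Y > t)$ and $x \mapsto F_t(x)$ are nondecreasing in $x$, with common limits $0$ at $x = 0$ and $1$ as $x \to \infty$ (the latter using $x_F = x_Y$ and $\Q(Y = x_Y) = 0$), so a Pólya-type argument (approximate by finitely many points, use monotonicity to sandwich) gives uniformity. This hands us \eqref{eq:tailMC2}, hence \eqref{eq:tailMC1}.

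For the ``only if'' direction, assume $F$ is marginally tail calibrated. Taking $x \to \infty$ in \eqref{eq:tailMC1}, as noted in the text, forces $\Q(Y > t)/(1 - F(t)) \to 1$. It remains to argue $x_F = x_Y$. If $x_F > x_Y$ then $1 - F(t) \to 1 - F(x_Y) > 0$ as $t \uparrow x_Y$, whereas $\Q(Y > t) \to 0$, contradicting the occurrence limit; if $x_F < x_Y$ then for $t \in (x_F, x_Y)$ we have $1 - F(t) = 0$, contradicting $\E[1-F(t)] > 0$. Hence $x_F = x_Y$, as claimed.

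The main obstacle — which is really the only non-bookkeeping point — is the passage from pointwise to uniform convergence in $x$ in the severity statement, and making sure the endpoint argument needed there ($\Q(Y = x_Y) = 0$) is obtained without circular appeal to Lemma~\ref{lem:endpoint}; I would derive it directly from $1 - F(t) \to 0$ and the occurrence limit. Everything else is a transcription of the argument already used for Proposition~\ref{propo:tailequiv} with $\Bb$ trivial, so the proof should be short, and indeed the paper will likely just say ``the proof is analogous to that of Proposition~\ref{propo:tailequiv}''.
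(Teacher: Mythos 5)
Your ``if'' direction is in essence the paper's proof: you write down exactly the paper's ratio decomposition, $G_t(x) = 1 - \frac{1-F(t+x)}{1-F(t)} \cdot \frac{(1-G(t+x))/(1-F(t+x))}{(1-G(t))/(1-F(t))}$, where $G(t) = \Q(Y \le t)$, and observe that the last fraction tends to $1$. Where you diverge is in how you obtain uniformity in $x$. You treat the decomposition as delivering only pointwise convergence and then invoke a P\'{o}lya-type monotonicity argument to upgrade it. That works, but it is unnecessary: the decomposition already gives uniformity directly, since for $t$ large and \emph{all} $x \ge 0$ with $t + x < x_F = x_G$ both numerator and denominator of the second fraction are within $\eps$ of $1$, hence $|G_t(x) - F_t(x)| = (1 - F_t(x))\,|r_t(x) - 1| \le |r_t(x) - 1|$ is small uniformly, while for $t + x \ge x_F$ both $G_t(x)$ and $F_t(x)$ equal $1$ so the difference vanishes. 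You also need not derive $\Q(Y = x_Y) = 0$ separately for this step; it is not used if you argue this way. So the paper's argument is shorter and has no auxiliary step.

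Your ``only if'' direction, which the paper does not spell out, contains a genuine small gap. In the case $x_F > x_Y$, you write that $\Q(Y > t) \to 0$ as $t \uparrow x_Y$, ``contradicting the occurrence limit.'' But $\Q(Y > t) \downarrow \Q(Y = x_Y)$ as $t \uparrow x_Y$, which is not necessarily zero; in fact, the occurrence limit combined with $x_F > x_Y$ would force $\Q(Y = x_Y) = 1 - F(x_Y) > 0$, producing no contradiction from the occurrence condition alone. The contradiction does exist, but it comes from the \emph{severity} part of \eqref{eq:tailMC2}: take $x = x_Y - t$, so that $\Q(Y - t \le x \mid Y > t) = \Q(Y \le x_Y \mid Y > t) = 1$ while $F_t(x_Y - t) = \bigl(F(x_Y) - F(t)\bigr)/\bigl(1 - F(t)\bigr) \to 0$ as $t \uparrow x_Y$ by continuity of $F$ and $F(x_Y) < 1$, so the supremum in \eqref{eq:tailMC2} is bounded below by a quantity tending to $1$. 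Your handling of the case $x_F < x_Y$ via $\E[1 - F(t)] > 0$ is fine, as is the derivation of the occurrence limit from $x \to \infty$.
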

    
\begin{proof} 
    Showing that the second condition in \eqref{eq:tailMC2} holds is trivial. For the first condition, write $G(t) = \Q(Y \le t)$ and note that
    \[
        G_t(x)
        = \frac{G(t+x)-G(t)}{1-G(t)}
        = 1 - \frac{1-F(t+x)}{1-F(t)} \point \frac{(1-G(t+x))/(1-F(t+x))}{(1-G(t))/(1-F(t))}.
    \]
    The second fraction is arbitrarily close to $1$ for large $t$ and $x \ge 0$ such that $t+x < x_F = x_G$, and $G_t(x)$ can be made arbitrarily close to $F_t(x) = 1 - (1-F(t+x))/(1-F(t))$.
\end{proof}

\begin{remark}
    We know that for non-random $F$, (standard) marginal and probabilistic calibration are equivalent. Propositions~\ref{propo:tailequiv} and \ref{propo:tailequiv_mtc} show that marginal and probabilistic tail calibration are also equivalent in this case, and both correspond to marginal tail equivalence.
\end{remark}

We can also derive implications for marginal tail calibration. The misinformed forecaster shows that ordinary marginal calibration does not imply marginal tail calibration.

\begin{example}[Marginal calibration does not imply marginal tail calibration]\label{ex:misinformed} 
    The \emph{misinformed} forecast in Example~\ref{ex:mis} is marginally calibrated because the unconditional distribution of $Y$ and the expectation of $F$ are both $\GPD_{1,\gamma}$. However, the forecast is not marginally tail calibrated because the first condition in \eqref{eq:tailMC2} is violated: For any $t > 0$ and $x > 0$, we have on the one hand
    \[
    	\E [F_{t}(x) \mid Y > t]
    	= \E[F_{t}(x)] 
    	= \E[1 - e^{-\Delta_2 x}]
    	= 1 - (1+\gamma x)^{-1/\gamma}, 
    \]
    independently of the value of $t$, while on the other hand
    \begin{align*}
    	\Q(Y-t\leq x \mid Y>t) 
    	&= \frac{(1+\gamma t)^{-1/\gamma}-(1+\gamma(t+x))^{-1/\gamma}}{(1+\gamma t)^{-1/\gamma}} \\
    	&= 1 - \left(1 + \gamma \frac{x}{1+\gamma t}\right)^{-1/\gamma}
    	\to 1 - 1 = 0, \qquad t \to \infty.
    \end{align*}
\end{example}

The tail unfocused forecaster in Example~\ref{ex:12} is marginally tail calibrated, which shows that marginal tail calibration (and joint probabilistic and marginal tail calibration) is a strictly weaker requirement than tail auto-calibration. A second example of this type is given by the forecaster in Example~\ref{ex:optimistic}, see Example~\ref{ex:14c} below.

\begin{example}[Example \ref{ex:12} continued]\label{ex:24}
We consider the same setting as in Example \ref{ex:12} and show that the tail unfocused forecaster $F$ is marginally tail calibrated. The second condition in \eqref{eq:tailMC2} holds since $F$ is probabilistically tail calibrated. The first condition holds since $F_t(x) = 1-e^{-x} = \Q(Y \le x+t \mid Y > t)$ for $t \ge a_+$ and $x \ge 0$. 
\end{example}

\begin{example}[Examples \ref{ex:optimistic} and \ref{ex:14b} continued]\label{ex:14c}
Consider the forecaster $F=G_1$ and outcome $Y$ as defined in Example \ref{ex:14b}. Under condition \eqref{eq:lamG10}, $F$ is marginally tail calibrated. Indeed, the second condition in \eqref{eq:tailMC2} holds since $F$ is probabilistically tail calibrated. Furthermore, for $t \in \R$ and $x \ge 0$, we have, by \eqref{eq:QGgivenG1G2} and the tower property,
\begin{align*}
	\lefteqn{\Q(Y > t+x \mid Y > t) - \E[1-F_t(x) \mid Y > t]} \\
	&= \frac{1}{\Q(Y > t)} \left\{
		\Q(Y > t+x) - 
		\E \left[ 
			\frac{1-G_1(t+x)}{1-G_1(t)} \, \1(Y > t)
		\right]
	\right\} \\
	&= \frac{\E[1-G_1(t)]}{\Q(Y > t)}
	\frac{1}{\E[1-G_1(t)]} \left\{
		\Q(Y > t+x)
		-
		\E \left[ 
		\frac{1-G_1(t+x)}{1-G_1(t)} \, \Q(Y > t \mid G_1,G_2)
		\right]
	\right\} \\
	&= \frac{\E[1-G_1(t)]}{\Q(Y > t)}
	\Bigg\{
		\frac{\Q(Y > t+x)}{\E[1-G_1(t+x)]}
		\frac{\E[1-G_1(t+x)]}{\E[1-G_1(t)]}\\
		& \qquad \qquad \qquad- \frac{1-\lambda(t)}{\E[1-G_1(t)]}
		\E \left[ (1-G_2(t)) \frac{1-G_1(t+x)}{1-G_1(t)} \right]  - \lambda(t) \frac{\E[1-G_1(t+x)]}{\E[1-G_1(t)]}
	\Bigg\}.
\end{align*}
The factor before the curly braces tends to one since the second condition in \eqref{eq:tailMC2} holds; inside the curly braces, the middle term converges to zero as $t \to \infty$ uniformly in $x \ge 0$ by \eqref{eq:lamG10}, while the first and the third term cancel out as $t \to \infty$ and uniformly in $x \ge 0$.
Indeed, the difference between the first and third term is bounded uniformly in $x \ge 0$ by
\begin{align*}
	\sup_{x \ge 0} &\left|
		\frac{\Q(Y > t+x)}{\E[1-G_1(t+x)]} \cdot
		\frac{\E[1-G_1(t+x)]}{\E[1-G_1(t)]}
		- 
		\lambda(t) \cdot \frac{\E[1-G_1(t+x)]}{\E[1-G_1(t)]}
	\right| \\
	&\le
	\sup_{x \ge 0} \left|
		\frac{\Q(Y > t+x)}{\E[1-G_1(t+x)]}
		- \lambda(t)
	\right| \cdot
	\frac{\E[1-G_1(t+x)]}{\E[1-G_1(t)]} \le
	\sup_{x \ge 0} \left|
	\frac{\Q(Y > t+x)}{\E[1-G_1(t+x)]}
	- \lambda(t)
	\right| \\
	&\le
	\sup_{x \ge 0} \left|
		\frac{\Q(Y > t+x)}{\E[1-G_1(t+x)]}
		- 1
	\right|
	+ \left| \lambda(t) - 1 \right|,
\end{align*}
and the right-hand side converges to $0$ as $t \to \infty$ by the second condition in \eqref{eq:tailMC2} and the fact that $\lambda$ is a cdf.
\end{example}

\begin{propo}
\begin{enumerate}[label=(\alph*)]
\item If $\Q(Y > t \mid F)/\Q(Y> t)$ is asymptotically bounded in $L^\infty$ for $t \to x_Y$, tail auto-calibration implies marginal tail calibration. The converse is false. 
\item Marginal tail calibration does not imply marginal calibration or vice versa.
\end{enumerate}
\end{propo}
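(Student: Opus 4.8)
Part~(a). The plan is to verify the two conditions in \eqref{eq:tailMC2}, which together are equivalent to marginal tail calibration~\eqref{eq:tailMC1}. The occurrence condition $\Q(Y>t)/\E[1-F(t)]\to 1$ is essentially free: assuming, as in Proposition~\ref{prop:impl1} and Corollary~\ref{cor}, that the convergence in Definition~\ref{def:TAC_new} is in $L^\infty$ for $\Bb=\sigma(F)$, tail auto-calibration implies probabilistic tail calibration, and setting $u=1$ in Definition~\ref{def:TAC_new}(c) (where $\{\ZFt\le 1, Y>t\}=\{Y>t\}$) gives exactly $\Q(Y>t)/\E[1-F(t)]\to 1$. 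It remains to handle the severity condition.

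For the severity condition, the starting point is that tail auto-calibration entails the severity relation~\eqref{eq:tailBC:sev} with $\Bb=\sigma(F)$, i.e.\ $S_t(u):=\Q(\ZFt\le u, Y>t\mid F)/\Q(Y>t\mid F)\to u$ almost surely, which by Lemma~\ref{lem:unifconv-u} holds uniformly in $u\in[0,1]$; put $\tilde\varepsilon_t:=\sup_{u\in[0,1]}|S_t(u)-u|$, so that $\tilde\varepsilon_t\le 1$ and $\tilde\varepsilon_t\to 0$ almost surely as $t\to x_Y$ (here we also use that $\Q(Y>t\mid F)>0$ and $F(t)<1$ a.s.\ for $t<x_Y$ under tail auto-calibration). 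Since $F$ is continuous almost surely, $F_t$ is a continuous cdf, so for every $x\ge 0$ one has the sandwich $\{F_t(Y-t)<F_t(x),\, Y>t\}\subseteq\{t<Y\le t+x\}\subseteq\{F_t(Y-t)\le F_t(x),\, Y>t\}$; passing to $\Q(\,\cdot\mid F)$, dividing by $\Q(Y>t\mid F)$, and using $S_t(F_t(x))\le F_t(x)+\tilde\varepsilon_t$ together with its left-limit counterpart $S_t(F_t(x)^-)\ge F_t(x)-\tilde\varepsilon_t$, yields
\[
\sup_{x\ge 0}\bigl|\Q(t<Y\le t+x\mid F)-F_t(x)\,\Q(Y>t\mid F)\bigr|\le\tilde\varepsilon_t\,\Q(Y>t\mid F),\qquad\text{a.s.}
\]
Taking expectations and dividing by $\Q(Y>t)=\E[\Q(Y>t\mid F)]$ bounds $\sup_{x\ge 0}\bigl|\Q(Y-t\le x\mid Y>t)-\E[F_t(x)\mid Y>t]\bigr|$ by $\E[\tilde\varepsilon_t\,\Q(Y>t\mid F)]/\Q(Y>t)$. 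This is where the hypothesis is used: for $t$ close to $x_Y$ we have $\Q(Y>t\mid F)\le C\,\Q(Y>t)$ a.s.\ for a finite constant $C$, so the bound is at most $C\,\E[\tilde\varepsilon_t]$, which tends to $0$ by dominated convergence since $\tilde\varepsilon_t\le 1$ and $\tilde\varepsilon_t\to 0$ a.s. Combining the severity and occurrence conditions as in the paragraph after \eqref{eq:tailMC2} gives marginal tail calibration. That the converse fails is shown by the tail unfocused forecaster of Example~\ref{ex:12}, which is marginally tail calibrated by Example~\ref{ex:24} but not tail auto-calibrated.

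Part~(b). Both non-implications are witnessed by examples already present in the paper. The misinformed forecaster of Example~\ref{ex:mis} is marginally calibrated, since both $\E[F(\point)]$ and the law of $Y$ equal $\GPD_{1,\gamma}$, yet it is not marginally tail calibrated, as computed in Example~\ref{ex:misinformed}; hence marginal calibration does not imply marginal tail calibration. Conversely, the tail unfocused forecaster of Example~\ref{ex:12} is marginally tail calibrated by Example~\ref{ex:24}, but a direct evaluation of $\E[1-F(t)]$ for small $t$ (for instance $\E[1-F(0)]=3/4\neq 1=\Q(Y>0)$) shows it is not marginally calibrated; hence marginal tail calibration does not imply marginal calibration.

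The main obstacle is the severity step of part~(a): converting the conditional-on-$F$ uniform convergence of $S_t$ into a uniform-in-$x$ bound on $\Q(t<Y\le t+x\mid F)$ (which relies on the continuity of $F_t$ and the two-sided sandwich including the left limit of $S_t$), and then controlling the resulting expectation over the random forecast — precisely where the $L^\infty$-boundedness of $\Q(Y>t\mid F)/\Q(Y>t)$ allows the constant to be pulled out and the argument closed by dominated convergence. The occurrence step and the whole of part~(b) are bookkeeping once the earlier results and examples are quoted.
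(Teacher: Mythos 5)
Your proof takes essentially the same route as the paper's for the severity part of (a): both bound
$\sup_{x\ge 0}\bigl|\Q(Y-t\le x\mid Y>t)-\E[F_t(x)\mid Y>t]\bigr|$
by an expectation of the product of $\Q(Y>t\mid F)/\Q(Y>t)$ and $\sup_{u\in[0,1]}|S_t(u)-u|$, and then close with Lemma~\ref{lem:unifconv-u} plus dominated convergence and the $L^\infty$ bound. The paper reaches this form by simply re-indexing the supremum over $x\ge 0$ as a supremum over $u\in[0,1]$ via $u=F_t(x)$; you instead use a two-sided sandwich
$\{F_t(Y-t)<F_t(x),\,Y>t\}\subseteq\{t<Y\le t+x\}\subseteq\{F_t(Y-t)\le F_t(x),\,Y>t\}$
together with the left-limit bound $S_t(F_t(x)^-)\ge F_t(x)-\tilde\varepsilon_t$. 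Your version is slightly more careful about possible flat parts of $F_t$, which the paper's one-line substitution glosses over; the conclusion and constants are the same. Your choice of counter-examples for the converse of (a) and for (b) is the same set the paper cites (Examples~\ref{ex:mis}/\ref{ex:misinformed} and \ref{ex:12}/\ref{ex:24}), with a correct explicit computation $\E[1-F(0)]=3/4\neq 1=\Q(Y>0)$ added for the tail unfocused forecaster.

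The one substantive divergence is the occurrence condition $\Q(Y>t)/\E[1-F(t)]\to 1$. The paper's proof does not address it at all and only proves the severity half of the equivalence \eqref{eq:tailMC2}. You noticed this and derived the occurrence ratio from probabilistic tail calibration via Corollary~\ref{cor}, but this imports the additional hypothesis that the tail auto-calibration convergence holds in $L^\infty$ --- a hypothesis that is not part of the proposition's statement, which assumes only that $\Q(Y>t\mid F)/\Q(Y>t)$ is asymptotically bounded in $L^\infty$. It is not clear that the latter hypothesis alone, together with a.s.\ tail auto-calibration, yields $\E[1-F(t)]/\Q(Y>t)\to 1$: writing $V_t=(1-F(t))/\Q(Y>t)$ and $R_t=\Q(Y>t\mid F)/(1-F(t))$ one has $\E[R_tV_t]=1$, $R_tV_t\le C$ a.s., $R_t\to 1$ a.s., but without a uniform lower bound on $R_t$ one cannot conclude $\E[V_t]\to 1$ by domination alone. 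So your proof is more complete than the paper's on this point but, strictly, proves a slightly weaker statement (under an added convergence-mode assumption). If you want to stay within the proposition's stated hypotheses you would need a separate argument for the occurrence ratio; as written, both your proof and the paper's leave a gap here, yours being the more honest of the two in flagging what is being used.
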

\begin{proof}
Concerning part (a), we have
\begin{align*}
\sup_{x \ge 0} &\left| \Q(Y-t \le x \mid Y>t) - \E [F_t(x) \mid Y > t]\right| \\&\le \E \left[ \sup_{x \ge 0} \left| \frac{\Q(Y-t \le x, Y > t \mid F)}{\Q(Y > t)} -  F_t(x)\frac{\Q(Y > t \mid F)}{\Q(Y > t)}\right| \right]\\
&= \E \left[ \frac{\Q(Y > t \mid F)}{\Q(Y > t)}\sup_{x \ge 0} \left| \frac{\Q(Y-t \le x, Y > t \mid F)}{\Q(Y > t\mid F)} -  F_t(x)\right| \right]\\
    &= \E \left[ \frac{\Q(Y > t \mid F)}{\Q(Y > t)} \sup_{u \in [0,1]} \left| \frac{\Q(\ZFt \le u, Y > t \mid F)}{\Q(Y > t\mid F)} -  u 
    \right| \right].
\end{align*}
By Lemma \ref{lem:unifconv-u}, the supremum on the right-hand side converges to zero almost surely as $t \to x_Y$. Since it is bounded above by one, it also converges to zero in $L^1$. By assumption, the first factor is asymptotically bounded in $L^\infty$, which yields the first claim of part~(a).

For the converse and for part~(b), see Examples~\ref{ex:misinformed}, \ref{ex:24}, and~\ref{ex:14c}. 
\end{proof}

We conjecture that marginal and probabilistic tail calibration generally do not imply each other. However, the following result shows that some creativity will be required to construct counter-examples.

\begin{propo}[Sufficient condition for both probabilistic and marginal tail calibration]
\label{pro:suffptcmtc2}
	Suppose that Assumptions~\ref{ass:Ygpd} and~\ref{ass:Fgpd} are fulfilled and that $\lim_{t \to x_Y} \Q(Y > t) / \E[1 - F(t)] = 1$. The following conditions are equivalent:
	\begin{enumerate}[label=(\alph*)]
	\item $\xi = \eta$ and $\lim_{t \to x_Y} \sigma_F(t)/\sigma_Y(t) = 1$;
	\item $F$ is probabilistically tail calibrated for $Y$;
	\item $F$ is marginally tail calibrated for $Y$.
	\end{enumerate}
\end{propo}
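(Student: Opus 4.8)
The plan is to establish the chain $(a)\Rightarrow(b)\Rightarrow(a)$ and $(a)\Rightarrow(c)\Rightarrow(a)$. The key structural observation is that the standing hypothesis $\lim_{t\to x_Y}\Q(Y>t)/\E[1-F(t)]=1$ is precisely the occurrence condition appearing both in \eqref{eq:tailBC:occ} (for $\Bb=\{\varnothing,\Omega\}$) and in the second part of \eqref{eq:tailMC2}. By the decompositions recorded after Definition~\ref{def:TAC_new} and in \eqref{eq:tailMC2}, it therefore suffices throughout to analyse the severity statements alone: probabilistic tail calibration is equivalent to $\Q(\ZFt\le u\mid Y>t)\to u$ for all $u\in[0,1]$, and marginal tail calibration to $\sup_{x\ge0}\lvert\Q(Y-t\le x\mid Y>t)-\E[F_t(x)\mid Y>t]\rvert\to0$ as $t\to x_Y$. (The positivity requirement $\E[1-F(t)]>0$ for $t<x_Y$ is automatic, since Assumption~\ref{ass:Fgpd} forces $F(t)<1$ a.s.\ for all $t<x_F=x_Y$.) Write $H_\xi(w)=1-(1+\xi w)_+^{-1/\xi}$ for the $\GPD_{1,\xi}$ cdf with quantile function $q_\xi$, and likewise $H_\eta$, $q_\eta$. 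I would first record two uniformization statements: Assumption~\ref{ass:Ygpd}/\eqref{eq:Ygpd} says $w\mapsto\Q(Y-t\le\sigma_Y(t)w\mid Y>t)$ converges pointwise to the continuous cdf $H_\xi$, hence (P\'olya) uniformly in $w\ge0$; and since $\sigma_F$ is deterministic, \eqref{eq:Fgpd} says $F_t(\sigma_F(t)x)\to H_\eta(x)$ in probability given $Y>t$ pointwise in $x$, which a standard grid argument exploiting that $F_t$ is a cdf upgrades to $\sup_{x\ge0}\lvert F_t(\sigma_F(t)x)-H_\eta(x)\rvert\to0$ in probability given $Y>t$, and then bounded convergence gives $\sup_{x\ge0}\lvert\E[F_t(\sigma_F(t)x)\mid Y>t]-H_\eta(x)\rvert\to0$ too.

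The heart of the argument is a single limit computation. Put $W_t=(Y-t)/\sigma_F(t)$ on $\{Y>t\}$. On the event $\{\sup_x\lvert F_t(\sigma_F(t)x)-H_\eta(x)\rvert\le\delta\}$, whose conditional probability tends to $1$, the event $\{\ZFt\le u\}$ is sandwiched (up to that event) between $\{W_t\le q_\eta(u-\delta)\}$ and $\{W_t\le q_\eta(u+\delta)\}$; and $\Q(W_t\le q_\eta(v)\mid Y>t)=\Q\bigl(Y-t\le\sigma_Y(t)\tfrac{\sigma_F(t)}{\sigma_Y(t)}q_\eta(v)\mid Y>t\bigr)$, which by the first uniformization statement tends to $H_\xi(c\,q_\eta(v))$ whenever $\sigma_F(t)/\sigma_Y(t)\to c\in(0,\infty)$ along $t\to x_Y$. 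Letting $\delta\downarrow0$ and using continuity of $H_\xi$ and $q_\eta$ gives $\Q(\ZFt\le u\mid Y>t)\to H_\xi(c\,q_\eta(u))$ for $u\in(0,1)$. Parametrising instead by $x=\sigma_Y(t)w$ and using both uniformization statements, the same scaling manipulation yields $\E[F_t(\sigma_Y(t)w)\mid Y>t]\to H_\eta(w/c)$ uniformly in $w\ge0$, hence $\sup_{x\ge0}\lvert\Q(Y-t\le x\mid Y>t)-\E[F_t(x)\mid Y>t]\rvert\to\sup_{w\ge0}\lvert H_\xi(w)-H_\eta(w/c)\rvert$.

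Granting these limits the implications are short. If (a) holds, take $c=1$ and $\eta=\xi$: the two limits become $H_\xi(q_\xi(u))=u$ and $\sup_w\lvert H_\xi(w)-H_\xi(w)\rvert=0$, which with the assumed occurrence condition give (b) and (c). Conversely, assume (b) (resp.\ (c)); along any $t\to x_Y$ pass to a subsequence with $\sigma_F(t)/\sigma_Y(t)\to c\in[0,\infty]$. If $c=0$, the upper sandwich bound (using $q_\eta(u+\delta)<\infty$ for $u+\delta<1$) forces $\Q(\ZFt\le u\mid Y>t)\to H_\xi(0)=0$, resp.\ $\E[F_t(\sigma_Y(t)w)\mid Y>t]\to H_\eta(\infty)=1$, contradicting (b) for $u\in(0,1)$, resp.\ (c) since $H_\xi(w)<1$ for small $w$; symmetrically, using the lower bound and $q_\eta(v)>0$ for $v>0$, the case $c=\infty$ is excluded. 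For $c\in(0,\infty)$, (b) forces $H_\xi(c\,q_\eta(u))=u$, i.e.\ $c\,q_\eta=q_\xi$, and (c) forces $H_\xi(c\,\cdot)=H_\eta$; either identity gives $c=1$ and $\eta=\xi$ by identifiability of the two-parameter generalized Pareto family (compare the first two Taylor coefficients at $0$ of the log-survival functions). As every subsequential limit of $\sigma_F(t)/\sigma_Y(t)$ is thus $1$, the limit exists and equals $1$, and $\eta=\xi$; this is (a).

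I expect the main obstacle to be the rigorous substitution of the random argument $W_t$ into the in-probability limit of Assumption~\ref{ass:Fgpd}: establishing and correctly deploying the uniform-in-probability version of \eqref{eq:Fgpd}, matching it against the deterministic uniform convergence from Assumption~\ref{ass:Ygpd}, and tracking the conditional laws $\Q(\cdot\mid Y>t)$ as $t$ varies. The compactness/subsequence step that pins down $\lim\sigma_F(t)/\sigma_Y(t)$ and excludes $c\in\{0,\infty\}$ is the other delicate point, relying on the endpoint behaviour of $H_\xi$ and $H_\eta$ and on the finiteness of $q_\eta(v)$ for $v<1$.
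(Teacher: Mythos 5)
Your proposal is correct and follows essentially the same route as the paper: it upgrades Assumptions~\ref{ass:Ygpd} and~\ref{ass:Fgpd} to uniform (P\'olya/grid-argument, i.e.\ Lemma~\ref{lem:unifconv}) versions, uses the occurrence hypothesis to reduce to the severity statements, sandwiches $\{\ZFt\le u\}$ on a high-conditional-probability event and scales by $\sigma_F(t)/\sigma_Y(t)$ to match the two GPD limits, and finishes by GPD identifiability. The only noticeable divergence is in the converse direction: you pin down $\lim\sigma_F(t)/\sigma_Y(t)$ by a compactness/subsequence argument that excludes $c\in\{0,\infty\}$ via the endpoint behaviour of $H_\xi$ and $H_\eta$ and then invokes identifiability for $c\in(0,\infty)$; the paper instead obtains the existence and value of the limit directly from a $\liminf$/$\limsup$ squeeze of $r_t\cdot\bigl((1-u\mp\eps)^{-\eta}-1\bigr)/\eta$ against $\bigl((1-u)^{-\xi}-1\bigr)/\xi$ and then observes the resulting ratio must be constant in $u$. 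Both are sound; the paper's squeeze is marginally cleaner because it avoids a case analysis at the boundary values $c=0,\infty$, while your subsequence approach is arguably more transparent about why the scale ratio cannot degenerate. Everything else, including the reformulation of (c) as $\sup_{w\ge0}\lvert H_\xi(w)-H_\eta(w/c)\rvert\to0$ and the observation that Assumption~\ref{ass:Fgpd} automatically gives $\E[1-F(t)]>0$ for $t<x_Y$, matches the paper's proof.
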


\begin{proof}
By Lemma~\ref{lem:unifconv} below, the convergence statements in \ref{ass:Ygpd} and \ref{ass:Fgpd} are valid uniformly in $x \ge 0$. We show first that (a) and (b) are equivalent, and then that (a) and (c) are equivalent.

\noindent \emph{Proof that (a) and (b) are equivalent.}
For $\eps > 0$ and $t < x_Y$, define the event
\[
	A(t,\eps)
	= \left\{ \sup_{x \ge 0} \left| 1-F_t(x) - \left(1+\eta x/\sigma_F(t)\right)_+^{-1/\eta} \right| \le \eps \right\}.
\]
By Assumption~\ref{ass:Fgpd} and Lemma~\ref{lem:unifconv}, we have $\lim_{t \to x_F} \Q(A(t, \eps) \mid Y > t) = 1$. Since $\ZFt = F_t(Y-t)$, we have, on the event $A(t,\eps) \cap \{Y > t\}$, the bound
\[
	\left|
		\ZFt - \left\{1 - \left(1+\eta (Y-t)/\sigma_F(t)\right)_+^{-1/\eta}\right\}
	\right| \le \eps.
\]
We deduce that, on $A(t, \eps) \cap \{Y > t\}$, we have the implications
\begin{align*}
	1 - \left\{1+\eta (Y-t)/\sigma_F(t)\right\}_+^{-1/\eta}
	\le u - \eps
	&\implies
	\ZFt \le u \\
	&\implies 
	1 - \left\{1+\eta (Y-t)/\sigma_F(t)\right\}_+^{-1/\eta}
	\le u + \eps.
\end{align*}
For $0 < p < 1$ and $v \ge 0$, we have the equivalence
\[ 
	1 - (1+\eta v)_+^{-1/\eta} \le p
	\iff
	v \le \frac{(1-p)^{-\eta} - 1}{\eta},
\]
to be interpreted as $1 - \exp(-v) \le p \iff v \le -\log(1-p)$ in case $\eta = 0$. Since $\lim_{t \to x_Y} \Q(A(t,\eps)^c \mid Y > t) = 0$, it follows that, for $0 < u < 1$ and $0 < \eps < \min(u, 1-u)$,
\begin{multline*}
	\Q\left( Y \le t + \sigma_F(t) \frac{(1-u+\eps)^{-\eta}-1}{\eta} \mid Y > t \right)
	+ o(1)
	\le \Q\left(\ZFt \le u \mid Y > t\right) + o(1) \\
	\le \Q\left( Y \le t + \sigma_F(t) \frac{(1-u-\eps)^{-\eta}-1}{\eta} \mid Y > t \right)
	+ o(1)
\end{multline*}
as $t \to x_Y$.
Because of Assumption~\ref{ass:Ygpd} and Lemma~\ref{lem:unifconv}, we have thus, as $t \to x_Y$,
\begin{multline*}
	1 - \left(1 + \xi \frac{\sigma_F(t)}{\sigma_Y(t)} \frac{(1-u+\eps)^{-\eta}-1}{\eta} \right)_+^{-1/\xi} + o(1)
	\le \Q\left(\ZFt \le u \mid Y > t\right) + o(1) \\
	\le
	1 - \left(1 + \xi \frac{\sigma_F(t)}{\sigma_Y(t)} \frac{(1-u-\eps)^{-\eta}-1}{\eta} \right)_+^{-1/\xi} + o(1).
\end{multline*}

On the one hand, if $\xi = \eta$ and $\lim_{t \to x_Y} \sigma_F(t)/\sigma_Y(t) = 1$, then the left- and right-hand sides converge to $u-\eps$ and $u+\eps$, respectively, and since $\eps \in (0, \min(u,1-u))$ can be taken arbitrarily small, this implies $\lim_{t \to x_Y} \Q(\ZFt \le u \mid Y > t) = u$, so that $F$ is probabilistically tail calibrated.

On the other hand, suppose $F$ is probabilistically tail calibrated. Then the term in the middle of the display above converges to $u = 1 - (1+\xi \frac{(1-u)^{-\xi}-1}{\xi})_+^{-1/\xi}$. By monotonicity, it follows that
\[
	\limsup_{t \to x_Y}
	\frac{\sigma_F(t)}{\sigma_Y(t)} \frac{(1-u+\eps)^{-\eta}-1}{\eta}
	\le 
	\frac{(1-u)^{-\xi}-1}{\xi}
	\le
	\liminf_{t \to x_Y}
	\frac{\sigma_F(t)}{\sigma_Y(t)} \frac{(1-u-\eps)^{-\eta}-1}{\eta}.
\]
Since this is true for all $u \in (0, 1)$ and all $\eps \in (0, \min(u,1-u))$, we can decrease $\eps$ to zero to find that, for all $u \in (0,1)$,
\[
	\limsup_{t \to x_Y}
	\frac{\sigma_F(t)}{\sigma_Y(t)} \frac{(1-u)^{-\eta}-1}{\eta}
	\le 
	\frac{(1-u)^{-\xi}-1}{\xi}
	\le
	\liminf_{t \to x_Y}
	\frac{\sigma_F(t)}{\sigma_Y(t)} \frac{(1-u)^{-\eta}-1}{\eta},
\]
and thus, actually
\[
	\lim_{t \to x_Y}
	\frac{\sigma_F(t)}{\sigma_Y(t)} \frac{(1-u)^{-\eta}-1}{\eta}
	=
	\frac{(1-u)^{-\xi}-1}{\xi},
\]
which is in turn equivalent to
\[
	\lim_{t \to x_Y}
	\frac{\sigma_F(t)}{\sigma_Y(t)} 
	=
	\frac{\left((1-u)^{-\xi}-1\right)/\xi}{\left((1-u)^{-\eta}-1\right)/\eta}.
\]
The left-hand side does not depend on $u \in (0, 1)$.
The only way for the right-hand side to be constant in $u \in (0, 1)$, too, is when $\xi = \eta$, and then the ratio is equal to $1$.

\noindent\emph{Proof that (a) and (c) are equivalent.}
As $t \to x_Y = x_F$, we have both
\[
	\sup_{x \ge 0} \left| \Q(Y > t + x \mid Y > t) - \left(1+\xi x/\sigma_Y(t)\right)_+^{-1/\xi} \right|,
	\to 0
\]
and, for every $\eps > 0$,
\begin{align*}
	\lefteqn{
		\sup_{x \ge 0} \left| \E[1-F_t(x) \mid Y>t] - \left(1+\eta x/\sigma_F(t)\right)_+^{-1/\eta} \right| 
	} \\
	&\le
	\E\left[ \sup_{x \ge 0} \left| 1-F_t(x) - \left(1+\eta x/\sigma_F(t)\right)_+^{-1/\eta} \right| \mid Y>t \right] \\
	&\le \eps + \Q \left( \sup_{x \ge 0} \left| 1-F_t(x) - \left(1+\eta x/\sigma_F(t)\right)_+^{-1/\eta} \right| > \eps \mid Y>t \right) 
	\to \eps,
\end{align*}
since the supremum inside the expectation on the second line is bounded by one. Since $\eps > 0$ can be taken arbitrarily small, we find that marginal tail calibration is equivalent to
\[
	\lim_{t \to x_Y} \sup_{x \ge 0}
	\left|
		\left(1+\xi x/\sigma_Y(t)\right)_+^{-1/\xi}
		-
		\left(1+\eta x/\sigma_F(t)\right)_+^{-1/\eta}
	\right|
	= 0.
\]
Writing $y = x/\sigma_F(t)$ and taking the supremum over $y \ge 0$ yields the equivalent statement
\[
	\lim_{t \to x_Y} \sup_{y \ge 0}
	\left|
	\left(1+\xi \frac{\sigma_F(t)}{\sigma_Y(t)} y\right)_+^{-1/\xi}
	-
	\left(1+\eta y\right)_+^{-1/\eta}
	\right|
	= 0.
\]
This is a statement about convergence of distribution functions, which is equivalent to convergence of their quantile functions (since the quantile functions are continuous), i.e., for all $u \in (0,1)$, 
\[
	\lim_{t \to x_Y} \frac{\sigma_Y(t)}{\sigma_F(t)}
	\frac{(1-u)^{-\xi}-1}{\xi} = \frac{(1-u)^{-\eta}-1}{\eta}.
\]
The proof of the equivalence with (a) now proceeds as above.
\end{proof}

\begin{lemma}[Uniform convergence]
\label{lem:unifconv}
The convergences in Equations~\eqref{eq:Ygpd} and~\eqref{eq:Fgpd} are necessarily valid uniformly in $x \ge 0$: Equation~\eqref{eq:Ygpd} implies
\begin{equation}
	\label{eq:Ygpdunif}
	\lim_{t \to x_Y} \sup_{x \ge 0}
	\left|
		\Q \left( Y > t + \sigma_Y(t)x \mid Y > t \right)
		- (1 + \xi x)_+^{-1/\xi}
	\right|
	= 0
\end{equation}
while Equation~\eqref{eq:Fgpd} implies, for all $\eps > 0$,
\begin{equation}
	\label{eq:Fgpdunif}
	\lim_{t \to x_F} \Q \left(
		\sup_{x \ge 0}
		\left|
		1 - F_t(\sigma_F(t)x) - (1+\eta x)_+^{-1/\eta}
		\right|
		> \eps
	\, \Bigg| \,
	Y > t
	\right)
	= 0.
\end{equation}
\end{lemma}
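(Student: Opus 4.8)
The statement is exactly the phenomenon behind Pólya's theorem (the deterministic core of Glivenko--Cantelli): pointwise convergence of monotone functions to a \emph{continuous} monotone limit on an interval is automatically uniform. I would set up a finite grid adapted to the limiting GPD survival function and push the two pointwise hypotheses \eqref{eq:Ygpd} and \eqref{eq:Fgpd} through it; for the second statement the grid argument is run pathwise and then combined with a union bound. Write $H_\xi(x) = (1+\xi x)_+^{-1/\xi}$ for the limiting survival function ($H_0(x)=e^{-x}$), and record the structural facts that drive the proof: $x\mapsto H_\xi(x)$ is continuous and non-increasing on $[0,\infty)$ with $H_\xi(0)=1$; for $\xi\ge 0$ it decreases to $0$ as $x\to\infty$, while for $\xi<0$ it reaches $0$ continuously at the finite point $x_\xi^\ast=-1/\xi$ (since the exponent $-1/\xi>0$) and is $0$ beyond; and, for every fixed $t$, both $x\mapsto \Q(Y>t+\sigma_Y(t)x\mid Y>t)$ and $x\mapsto 1-F_t(\sigma_F(t)x)$ are non-increasing, right-continuous, and $[0,1]$-valued.

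\textbf{The deterministic part \eqref{eq:Ygpdunif}.} Fix $\eps>0$. Using continuity and monotonicity of $H_\xi$, pick a finite grid $0=x_0<x_1<\dots<x_N$ with $H_\xi(x_{k-1})-H_\xi(x_k)\le\eps$ for all $k$ and $H_\xi(x_N)\le\eps$ (take $x_N=x_\xi^\ast$ when $\xi<0$). Put $\overline G_t(x)=\Q(Y>t+\sigma_Y(t)x\mid Y>t)$. For $x\in[x_{k-1},x_k]$, monotonicity of $\overline G_t$ and $H_\xi$ gives $\overline G_t(x)-H_\xi(x)\le \overline G_t(x_{k-1})-H_\xi(x_{k-1})+\eps$ and $\overline G_t(x)-H_\xi(x)\ge \overline G_t(x_k)-H_\xi(x_k)-\eps$; for $x\ge x_N$ one uses $0\le H_\xi(x)\le\eps$ and $0\le \overline G_t(x)\le \overline G_t(x_N)$ to bound $|\overline G_t(x)-H_\xi(x)|\le |\overline G_t(x_N)-H_\xi(x_N)|+2\eps$. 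Hence $\sup_{x\ge 0}|\overline G_t(x)-H_\xi(x)|\le \max_{0\le k\le N}|\overline G_t(x_k)-H_\xi(x_k)|+2\eps$, and since \eqref{eq:Ygpd} forces $\overline G_t(x_k)\to H_\xi(x_k)$ for each of the finitely many $k$, letting $t\to x_Y$ and then $\eps\downarrow 0$ yields \eqref{eq:Ygpdunif}.

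\textbf{The probabilistic part \eqref{eq:Fgpdunif}.} Run the same construction with $H_\eta$, but apply the monotonicity bounds pathwise to the random non-increasing function $x\mapsto 1-F_t(\sigma_F(t)x)$. Choose the grid with $H_\eta(x_{k-1})-H_\eta(x_k)\le\eps/3$ and $H_\eta(x_N)\le\eps/3$. The argument of the previous paragraph shows, deterministically, that on the event $\bigcap_{k=0}^{N}\bigl\{\,|1-F_t(\sigma_F(t)x_k)-H_\eta(x_k)|\le\eps/3\,\bigr\}$ one has $\sup_{x\ge 0}|1-F_t(\sigma_F(t)x)-H_\eta(x)|\le\eps$ (for $x\ge x_N$ the event at $x_N$ directly gives $1-F_t(\sigma_F(t)x)\le 1-F_t(\sigma_F(t)x_N)\le 2\eps/3$, while $H_\eta(x)\le\eps/3$). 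Therefore
\[
	\Q\!\left(\sup_{x\ge 0}\bigl|1-F_t(\sigma_F(t)x)-H_\eta(x)\bigr|>\eps \,\Big|\, Y>t\right)
	\le \sum_{k=0}^{N}\Q\!\left(\bigl|1-F_t(\sigma_F(t)x_k)-H_\eta(x_k)\bigr|>\tfrac{\eps}{3}\,\Big|\,Y>t\right),
\]
and each summand tends to $0$ as $t\to x_F$ by \eqref{eq:Fgpd}, which gives \eqref{eq:Fgpdunif}. (By right-continuity and monotonicity the pathwise supremum may be taken over the rationals, so it is a genuine random variable; this is the only measurability point to note.)

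\textbf{Main obstacle.} There is no serious difficulty here, but the one place that needs care is the bounded-support case, $\xi<0$ (respectively $\eta<0$): one must verify that $H_\xi$ is continuous at $x_\xi^\ast=-1/\xi$ and truncate the grid there, so that the ``tail'' region $x\ge x_N$ is handled entirely through the single grid point $x_N=x_\xi^\ast$ rather than through a separate limiting argument. Everything else is the standard Dini/Pólya bookkeeping.
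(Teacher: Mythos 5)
Your proof is correct and follows essentially the same route as the paper's: a finite grid adapted to the limiting GPD survival function, monotonicity to control the error between grid points, and for the stochastic statement a union bound over the finitely many grid events so that \eqref{eq:Fgpd} applied pointwise suffices. The only cosmetic difference is that the paper makes the grid equispaced in the $H_\eta$-scale and includes a terminal point at $\infty$ (where both quantities vanish), whereas you truncate the grid and handle the tail region by a separate monotonicity argument; these are equivalent.
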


\begin{proof}
Let us prove that \eqref{eq:Fgpd} implies \eqref{eq:Fgpdunif}; the proof that \eqref{eq:Ygpd} implies \eqref{eq:Ygpdunif} is similar but easier. Fix integer $n \ge 2/\eps$ and let $0 = x_0 < x_1 < \ldots < x_{n-1} < x_n = \infty$ be such that
\[
	\forall k \in \{0,\ldots,n\}, \qquad
	(1+\eta x_k)_+^{-1/\eta} = 1 - k/n.
\]
Apply \ref{ass:Fgpd} to each $x_k$ individually and with $\eps$ replaced by $\eps/2$: considering the event
\[
	A_k(t) = \left\{ 
		\left| 
			1 - F_t(\sigma_F(t)x_k) 
			- (1 + \eta x_k)_+^{-1/\eta} 
		\right| 
		> \eps/2
	\right\},
\]
we have
\[
	\forall k \in \{0,1,\ldots,n\}, \qquad
	\lim_{t \to x_F}
	\Q \left( A_k(t) \mid Y > t \right)
	= 0.
\]
(Note that the cases $k = 0$ and $k = n$ are trivial, since $F$ is a continuous distribution function.)
By the union bound applied to the event $A(t) = A_0(t) \cup \cdots \cup A_n(t)$, we deduce that
\[
	\Q \left( 
		A(t) \mid Y > t
	\right)
	\le 
	\sum_{k=0}^n \Q \left( A_k(t) \mid Y > t \right)
	\to 0, \qquad t \to x_F.
\]
For arbitrary $y \ge 0$, let $k(y) = \max \{ k = 0,1,\ldots,n-1 : x_k \le y \}$, so that $x_{k(y)} \le y < x_{k(y)+1}$. Then on the event $A(t)$, we have, by monotonicity of a distribution function,
\begin{align*}
	1 - F_t(\sigma_F(t)y)
	&\le 1 - F_t(\sigma_F(t)x_{k(y)}) \le (1+\eta x_{k(y)})_+^{-1/\eta} + \frac{\eps}{2} \le (1+\eta y)_+^{-1/\eta} + \frac{1}{n} + \frac{\eps}{2},
\end{align*}
and
\begin{align*}
	1 - F_t(\sigma_F(t)y) \ge 1 - F_t(\sigma_F(t) x_{k(y)+1}) \ge (1+\eta x_{k(y)+1})^{-1/\eta} - \frac{\eps}{2} \ge (1+\eta y)_+^{-1/\eta} - \frac{1}{n} - \frac{\eps}{2}.
\end{align*}
The two inequalities imply
\[
	\left|
		1 - F_t(\sigma_F(t)y)
		-
		(1+\eta y)_+^{-1/\eta}
	\right|
	\le \frac{1}{n} + \frac{\eps}{2}
	\le  \frac{\eps}{2} +  \frac{\eps}{2}
	= \eps.
\]
On the event $A(t)$, the previous inequality is true for every $y \ge 0$. We conclude that the uniform convergence statement~\eqref{eq:Fgpdunif} holds.
\end{proof}

\section{Normal distribution simulation study}\label{sec:normsim}

In the following, we consider forecasts that are well-known in the forecast verification literature, and evaluate them with respect to their probabilistic tail calibration. Suppose observations are drawn from $Y \mid \mu \sim N(\mu, 1)$, where $\mu \sim N(0, 1)$. The unconditional distribution of $Y$ is $N(0, 2)$, and $\mu$ represents a source of information that may or may not be available to the forecasters. 

Following \cite{GneitingEtAl2007}, consider four different forecasts for $Y$: the \textit{ideal} forecaster, $F_{\mathrm{id}} = N(\mu, 1)$; the \textit{climatological} forecaster, $F_{\mathrm{cl}} = N(0, 2)$; the \textit{unfocused} forecaster, $F_{\mathrm{un}} = \frac{1}{2} N (\mu, 1) + \frac{1}{2} N(\mu + \tau, 1)$, where $\tau$ is $-1$ or $1$ with equal probability, independently of $\mu$; and the \textit{sign-reversed} forecaster, $F_{\mathrm{sr}} = N (-\mu, 1)$. The unfocused and sign-reversed forecasters were introduced in Examples \ref{ex:1} and \ref{ex:2}, respectively.

We draw $n = 10^{6}$ independent realizations of $\mu$ and $\tau$ to obtain the forecasts. Conditionally on $\mu$, the outcome $Y$ is then drawn for each instance. Figure \ref{fig:ss_ptc_reldiag_norm} displays the probabilistic tail calibration of the four forecasters for five thresholds. The climatological and ideal forecasters are auto-calibrated, and are therefore probabilistically calibrated and probabilistically tail calibrated. The unfocused forecaster is probabilistically calibrated but not probabilistically tail calibrated, while the sign-reversed forecaster is neither probabilistically nor probabilistically tail calibrated (though it is marginally calibrated). Similarly to in Example \ref{ex:exponential}, we could assess conditional notions of calibration, as described in Section \ref{sec:btail}, allowing us to confirm that the ideal forecaster is $\sigma(\mu)$-tail calibrated, whereas the climatological forecaster is not. 

\begin{figure}
    \centering
    \includegraphics[width=0.3\textwidth]{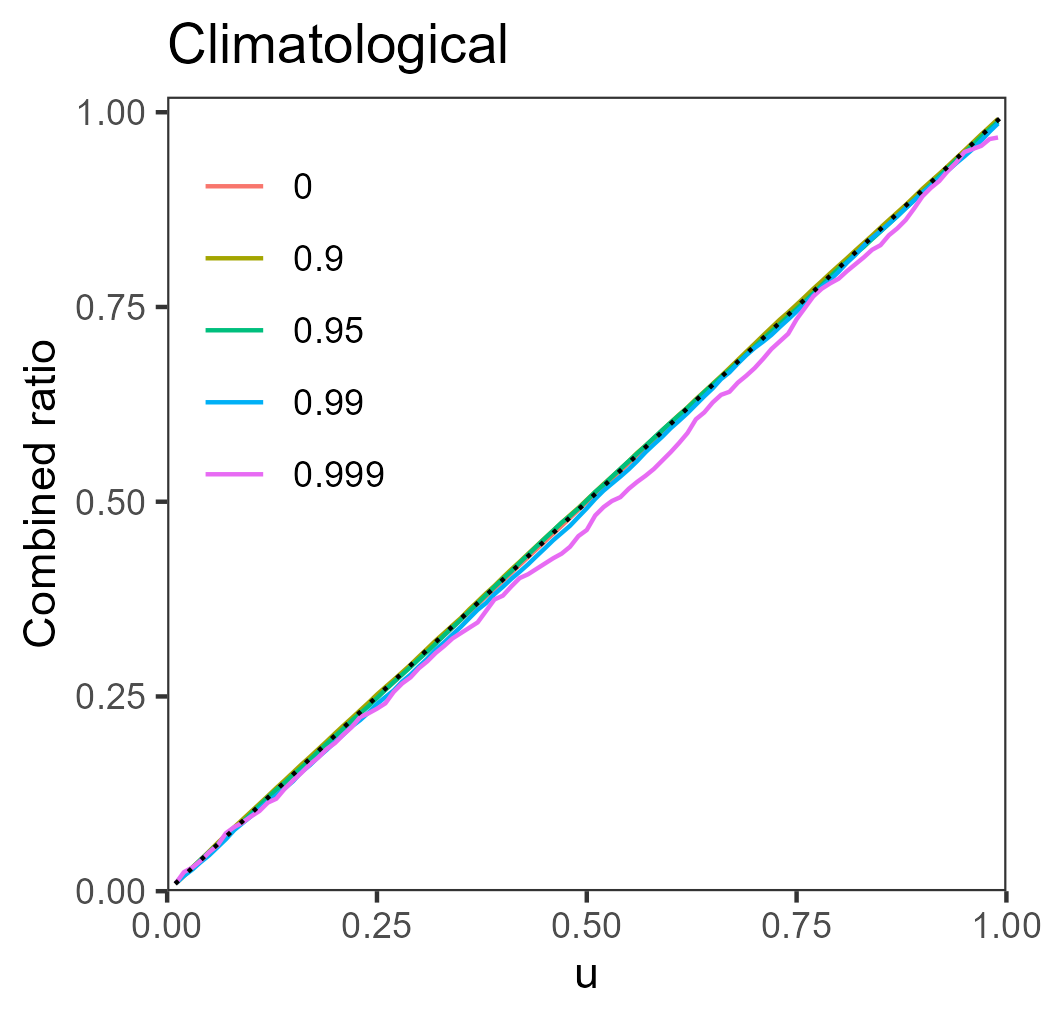}
    \includegraphics[width=0.3\textwidth]{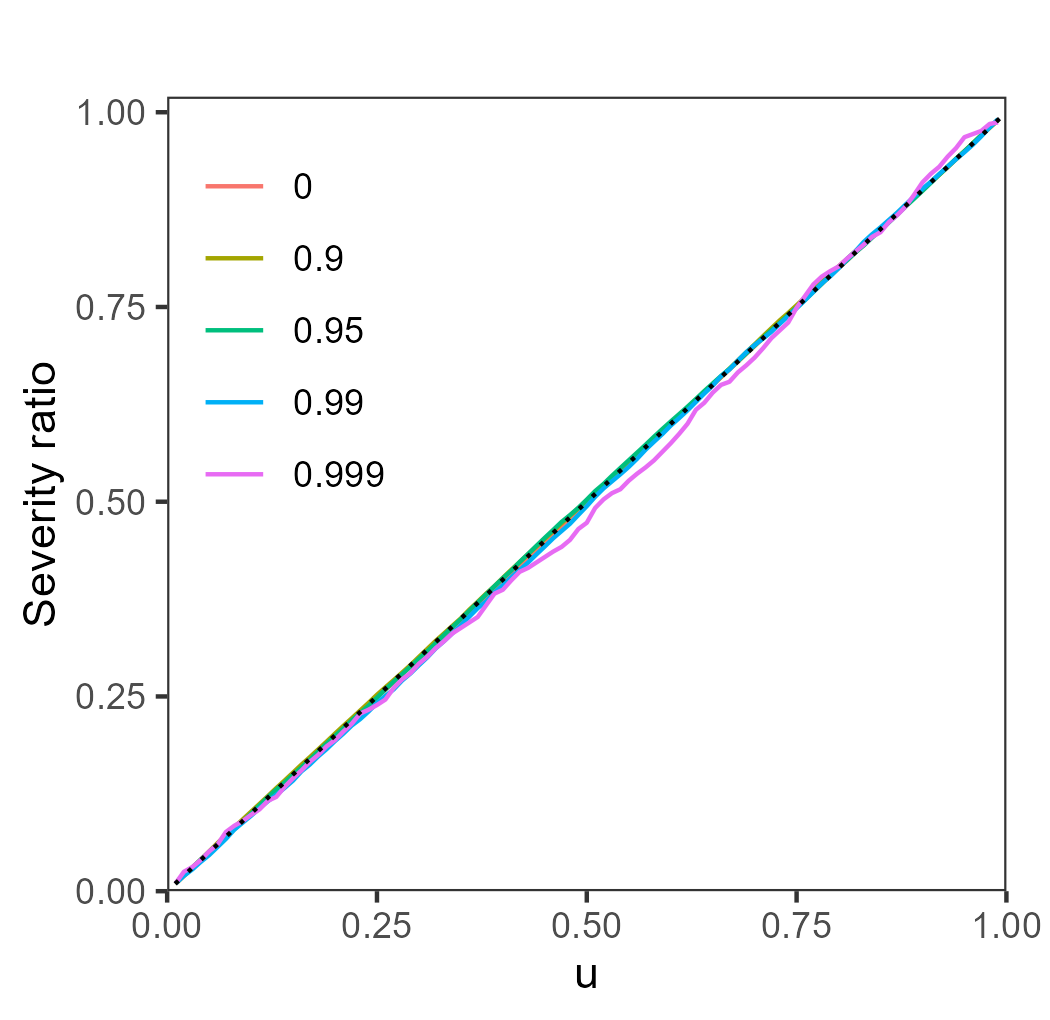}
    \includegraphics[width=0.3\textwidth]{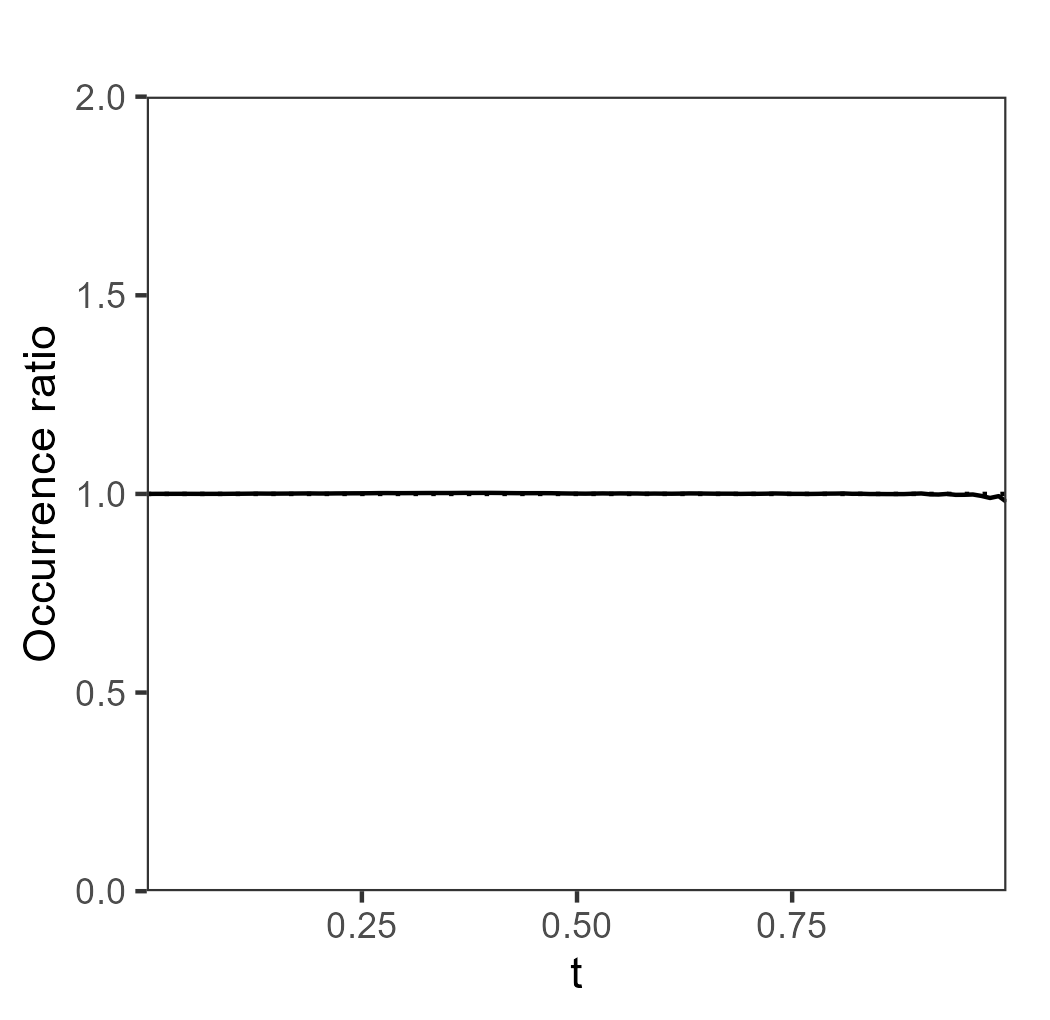}
    \includegraphics[width=0.3\textwidth]{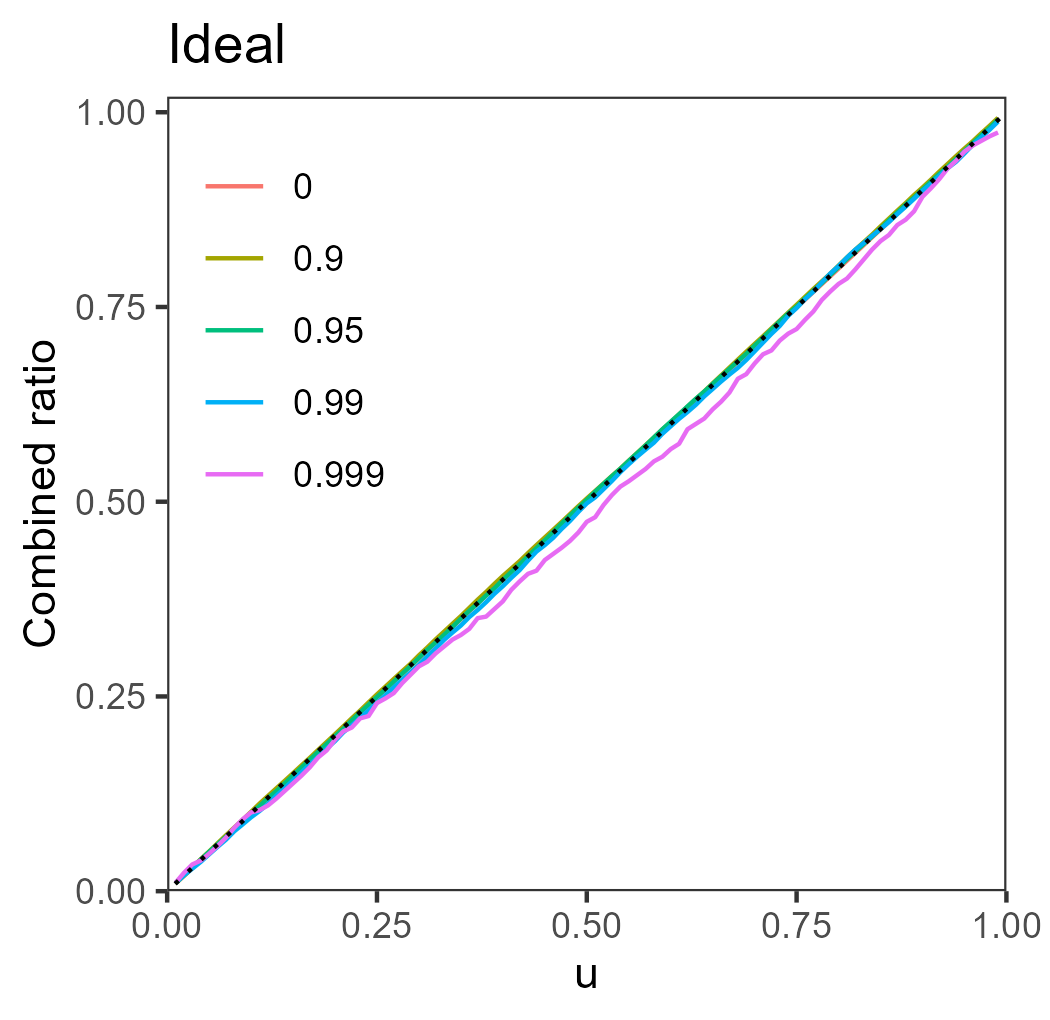}
    \includegraphics[width=0.3\textwidth]{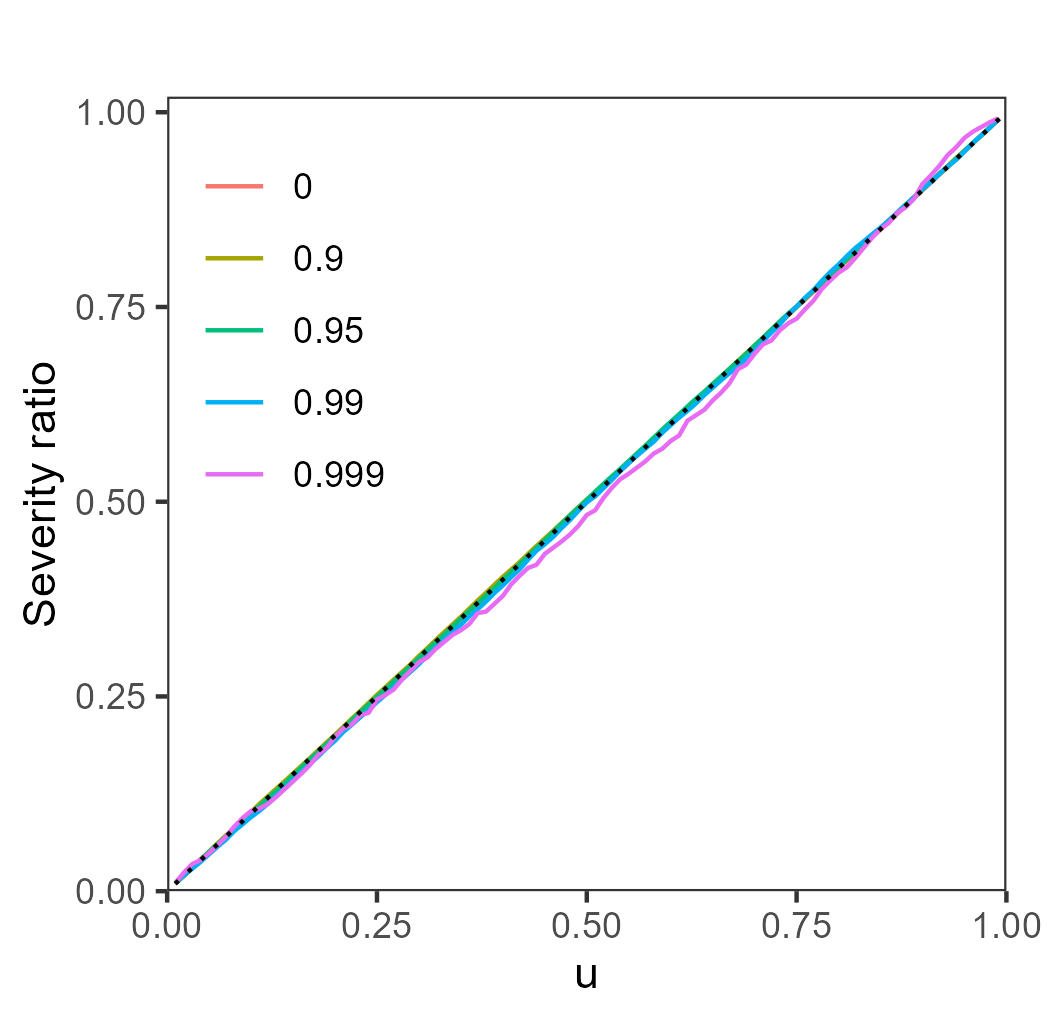}
    \includegraphics[width=0.3\textwidth]{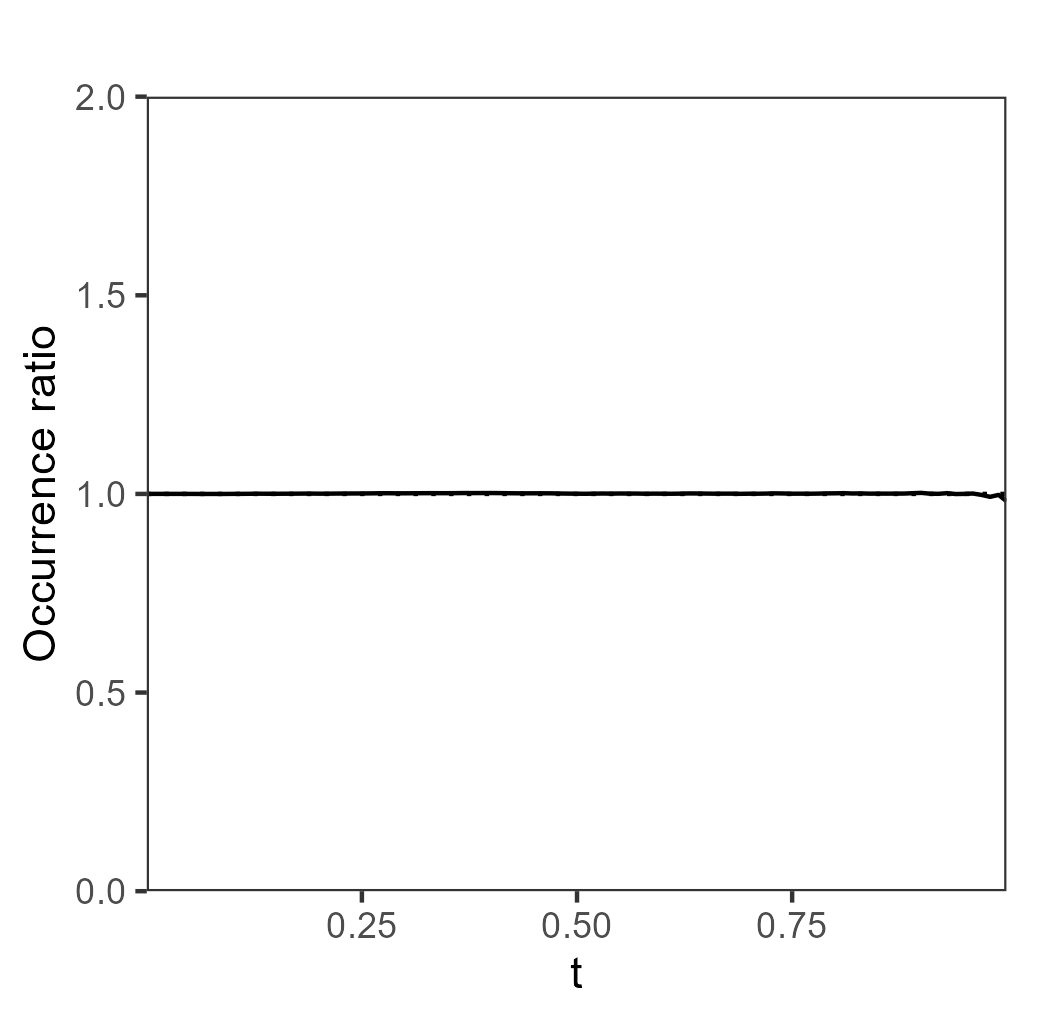}
    \includegraphics[width=0.3\textwidth]{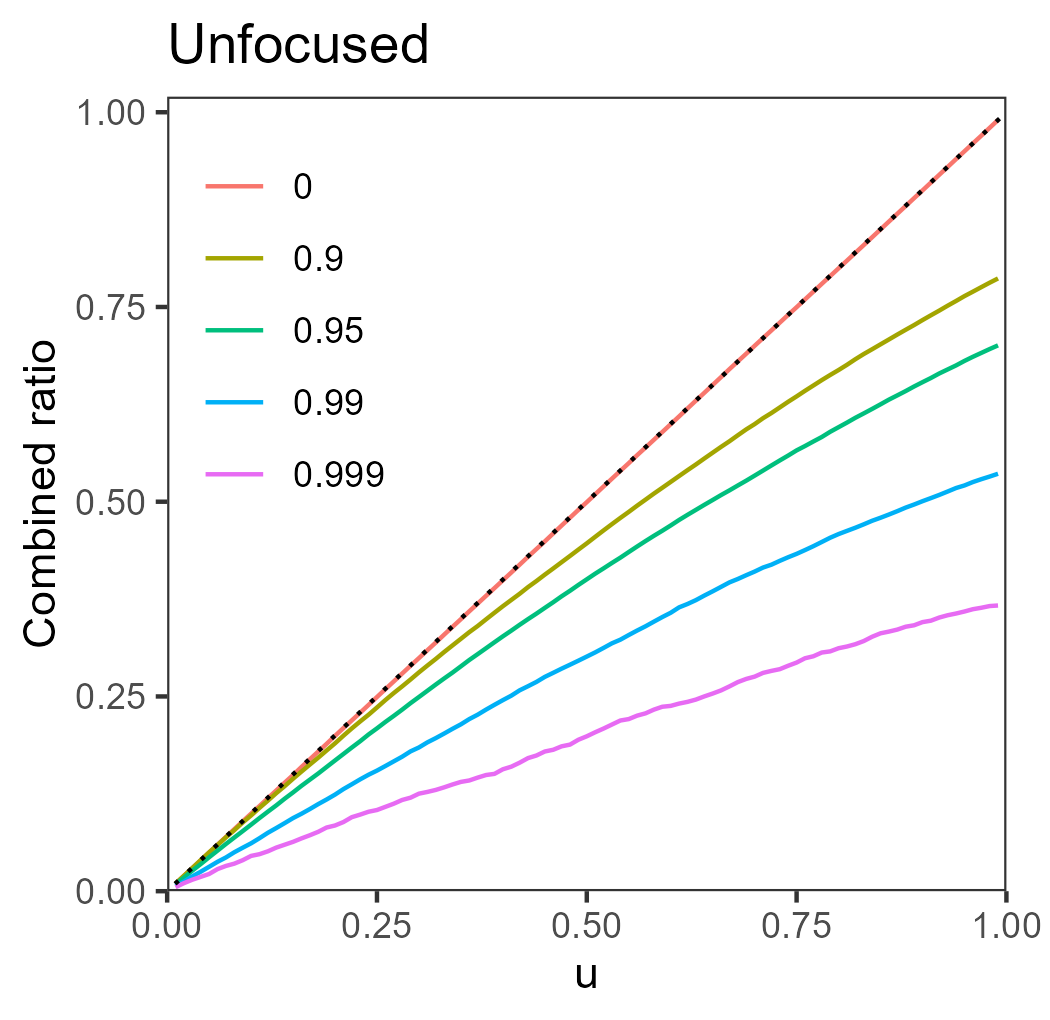}
    \includegraphics[width=0.3\textwidth]{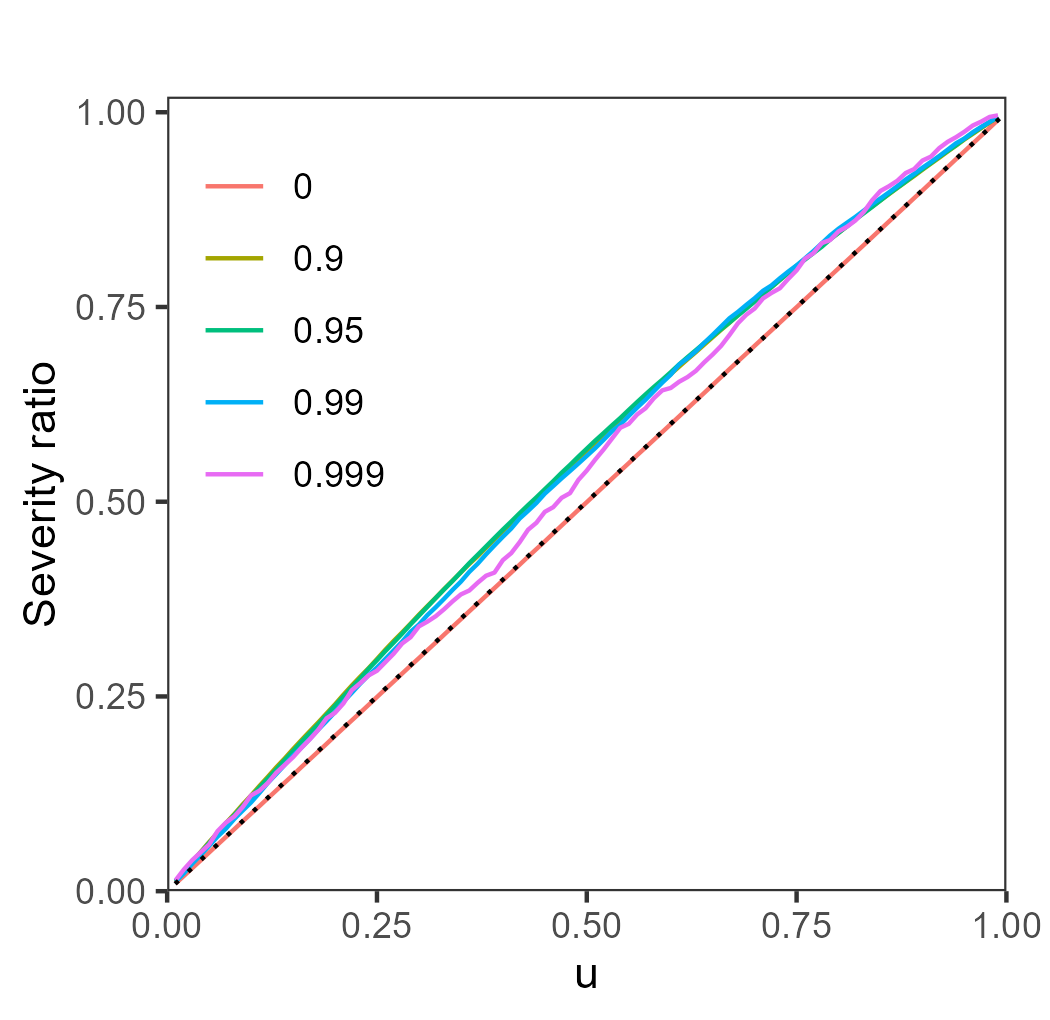}
    \includegraphics[width=0.3\textwidth]{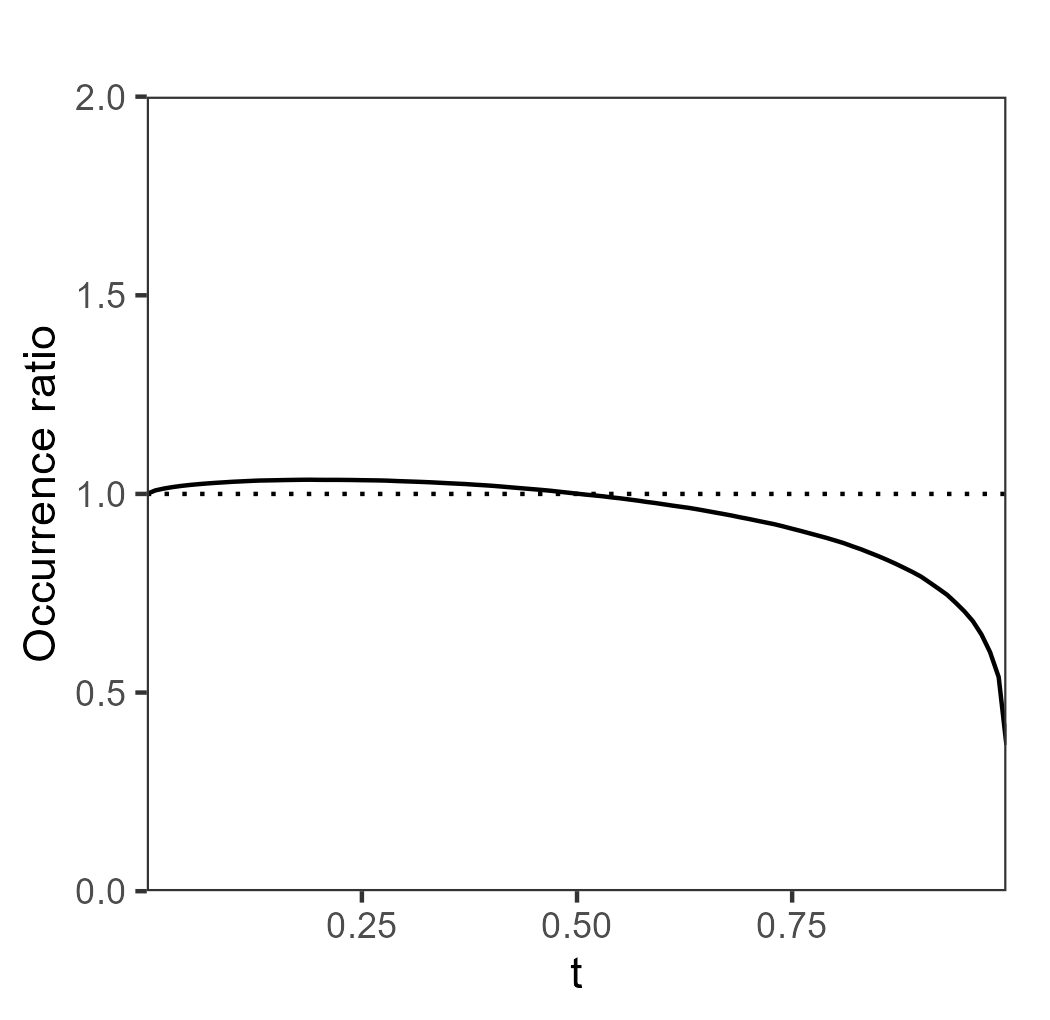}
    \includegraphics[width=0.3\textwidth]{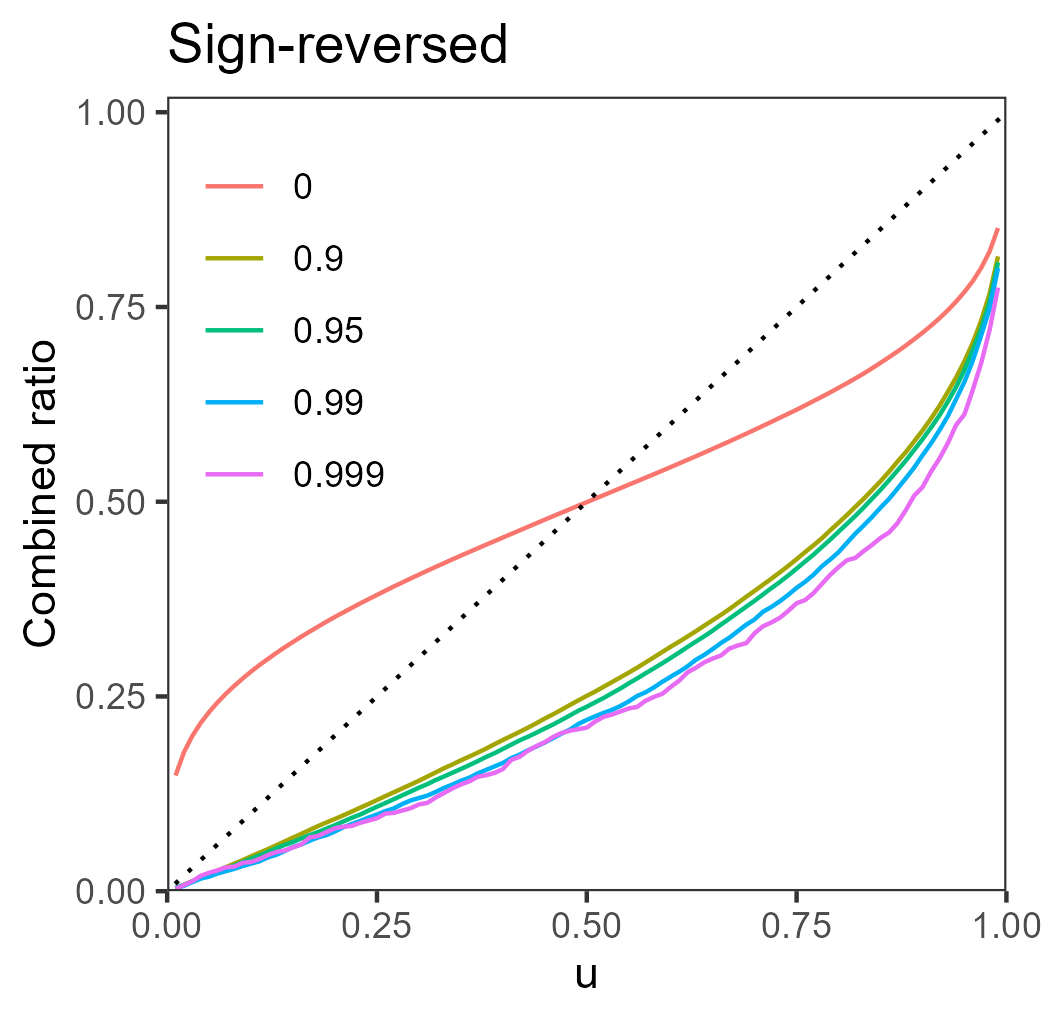}
    \includegraphics[width=0.3\textwidth]{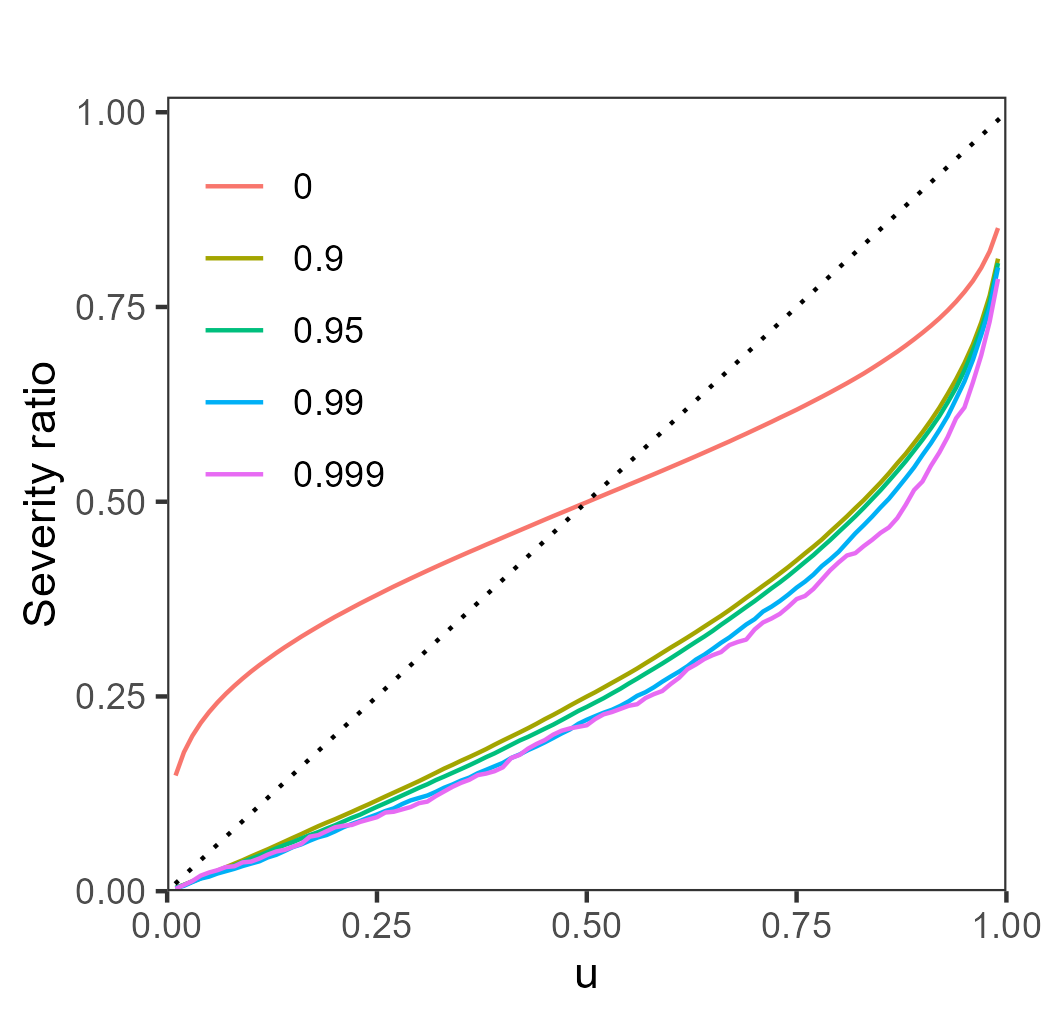}
    \includegraphics[width=0.3\textwidth]{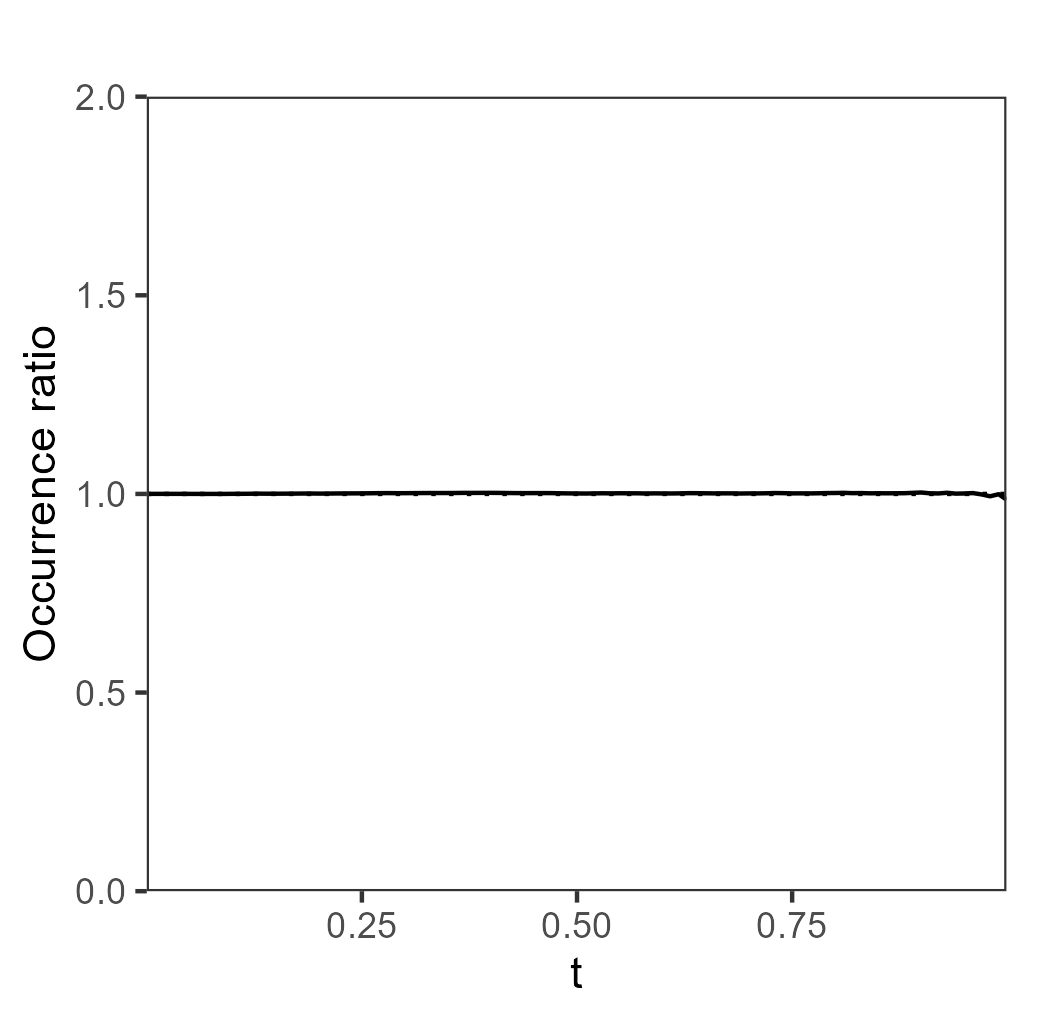}
    \caption{Probabilistic tail calibration diagnostic plots for the climatological, ideal, unfocused, and sign-reversed forecasters in Section \ref{sec:normsim}. Results are shown for five thresholds, expressed in terms of quantiles $\alpha$ of the $10^{6}$ observations.}
    \label{fig:ss_ptc_reldiag_norm}
\end{figure}

\begin{figure}
    \centering
    \includegraphics[width=0.3\textwidth]{Figures/sim_ex_com_1e6_cl.png}
    \includegraphics[width=0.3\textwidth]{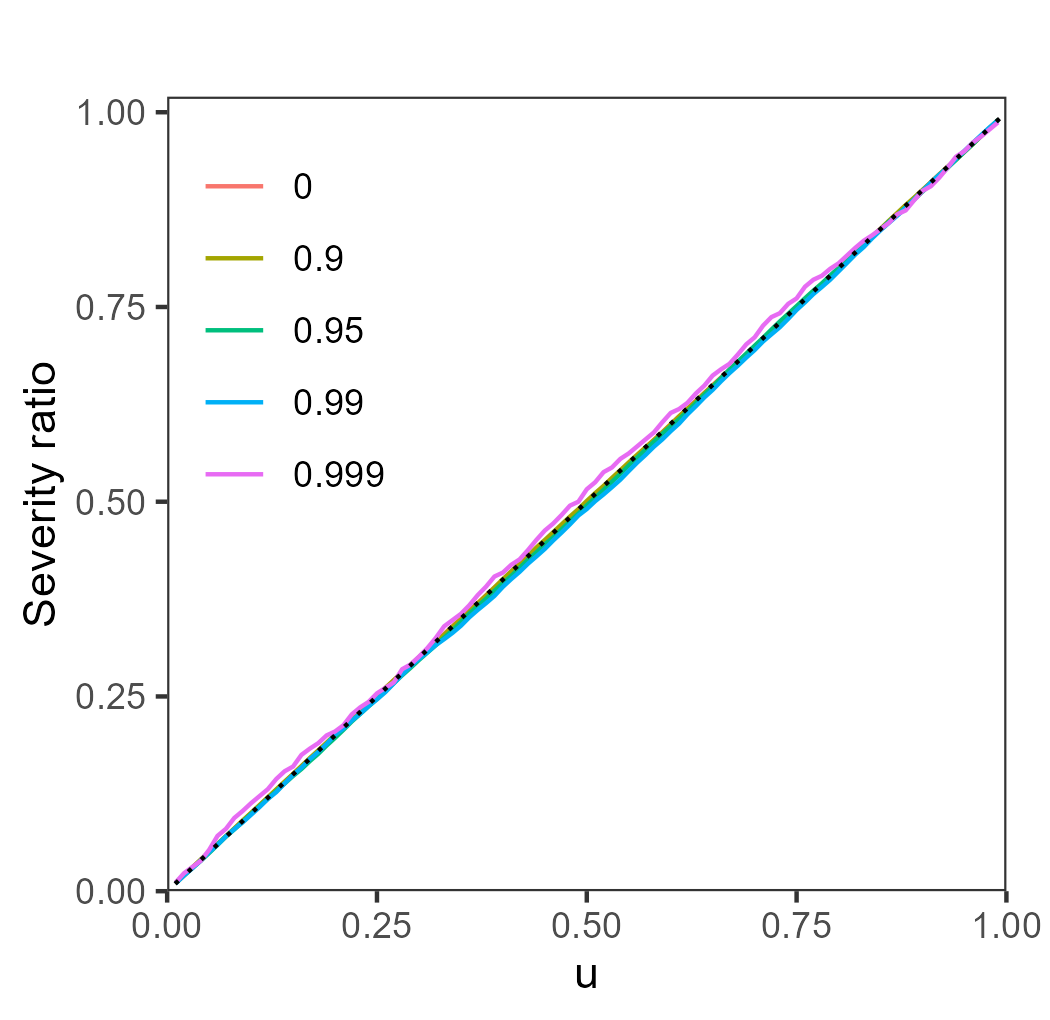}
    \includegraphics[width=0.3\textwidth]{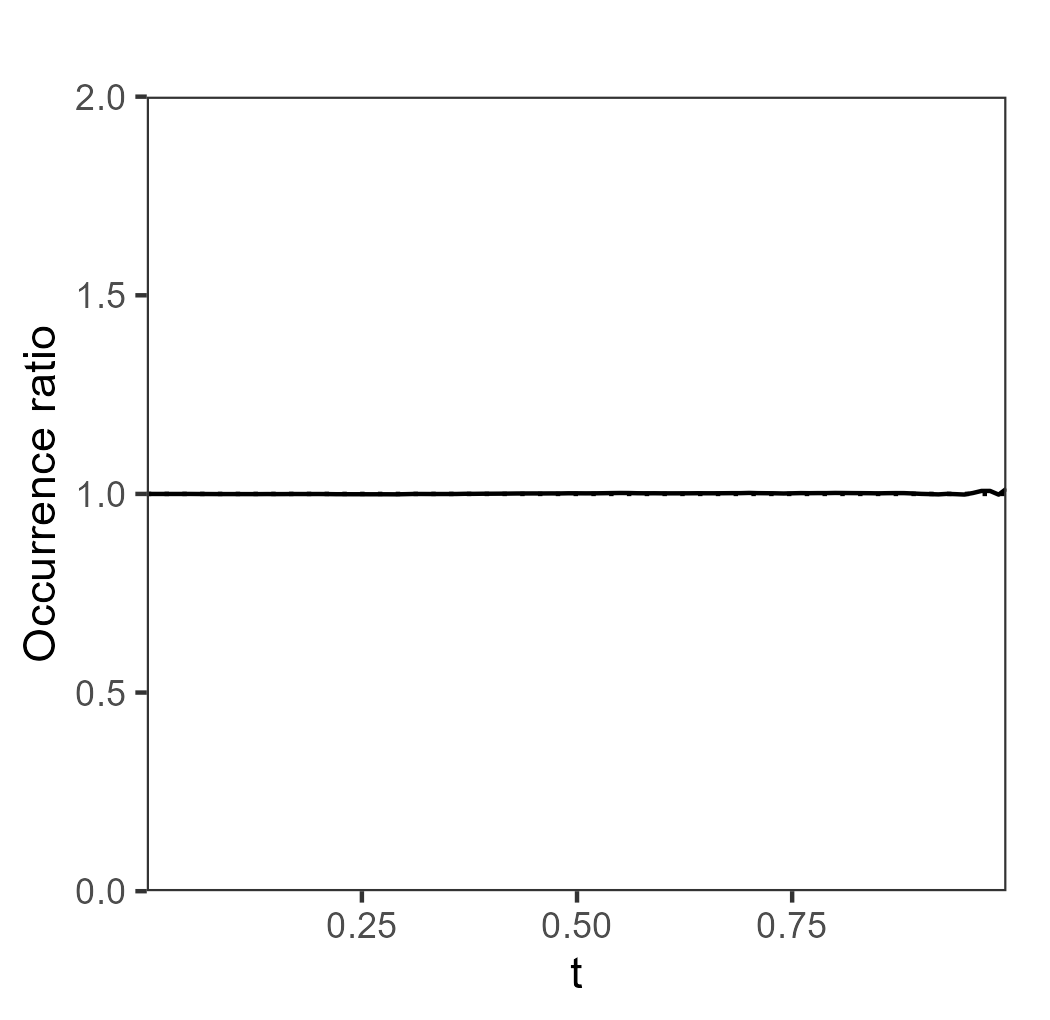}
    \includegraphics[width=0.3\textwidth]{Figures/sim_ex_com_1e6_id.png}
    \includegraphics[width=0.3\textwidth]{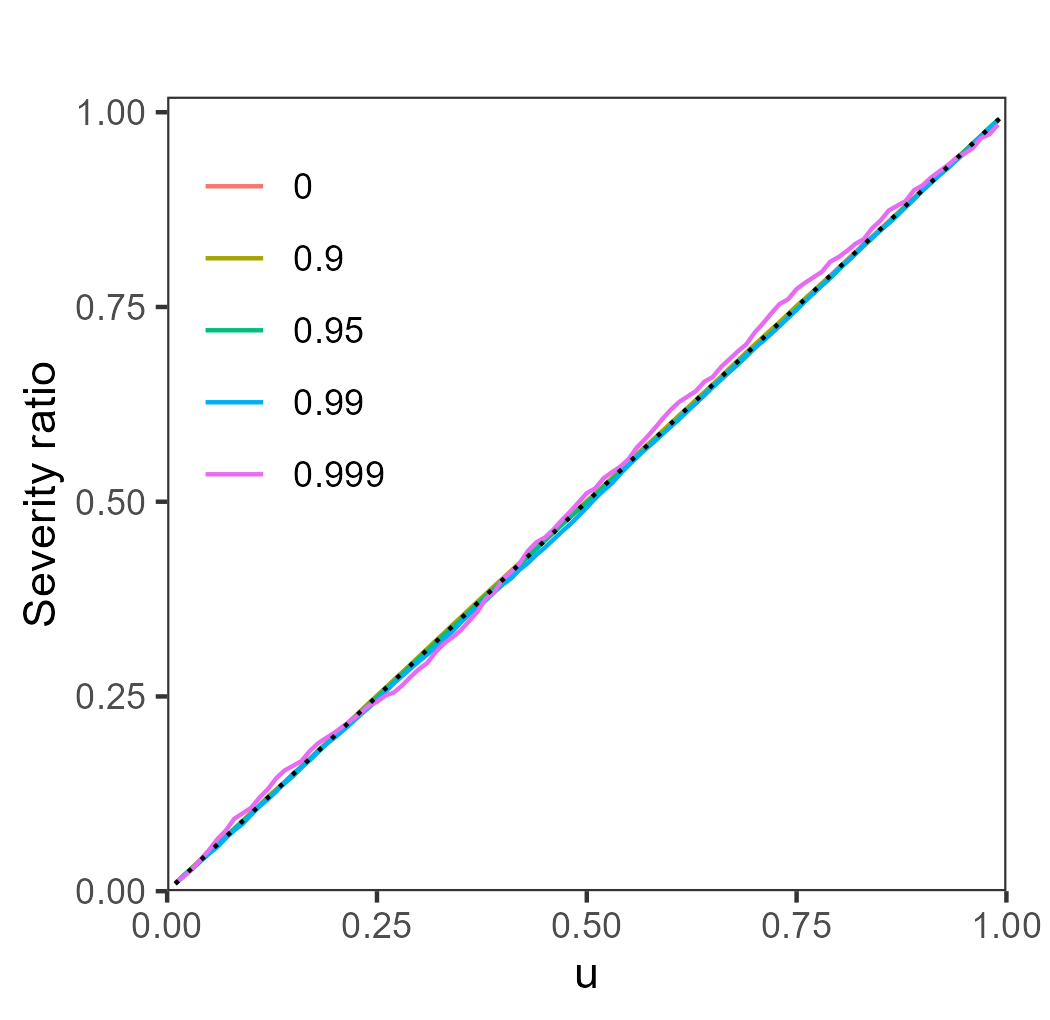}
    \includegraphics[width=0.3\textwidth]{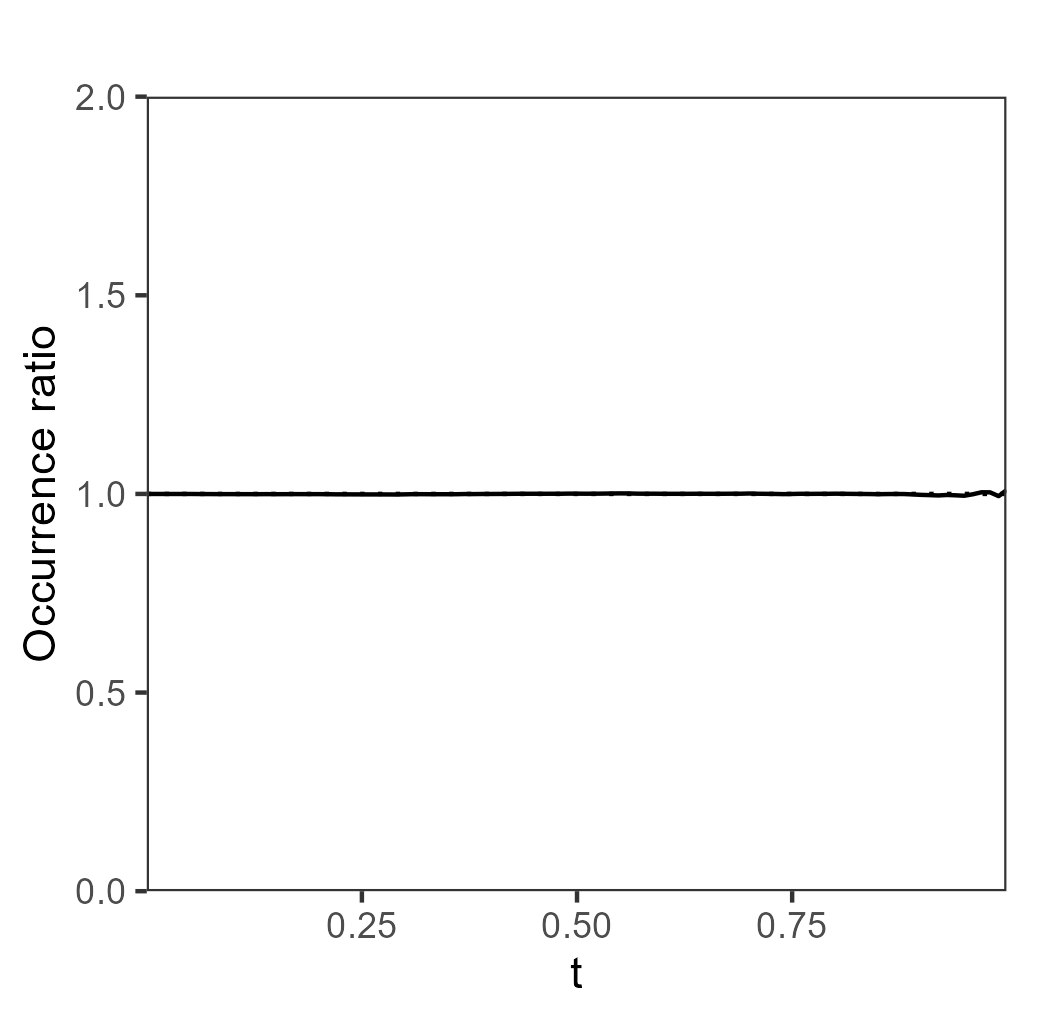}
    \includegraphics[width=0.3\textwidth]{Figures/sim_ex_com_1e6_ex.png}
    \includegraphics[width=0.3\textwidth]{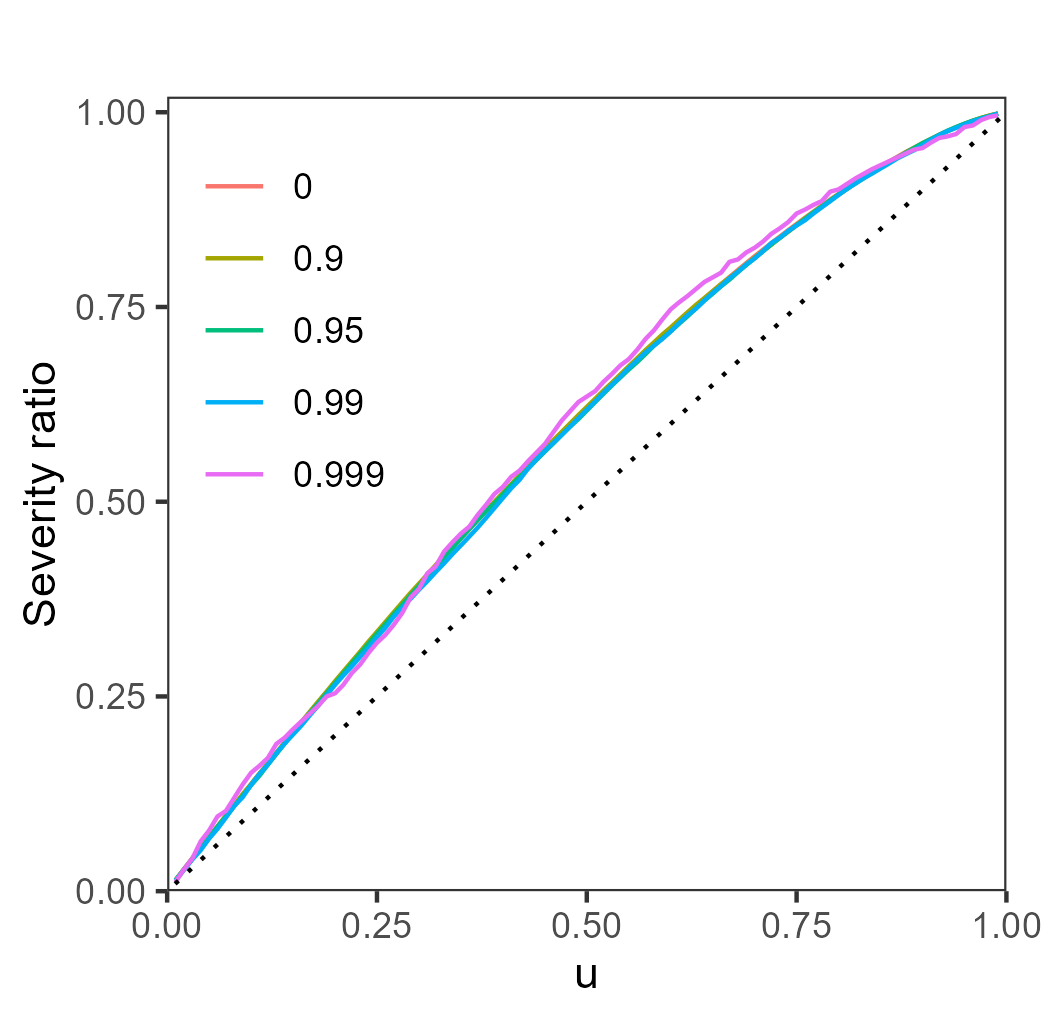}
    \includegraphics[width=0.3\textwidth]{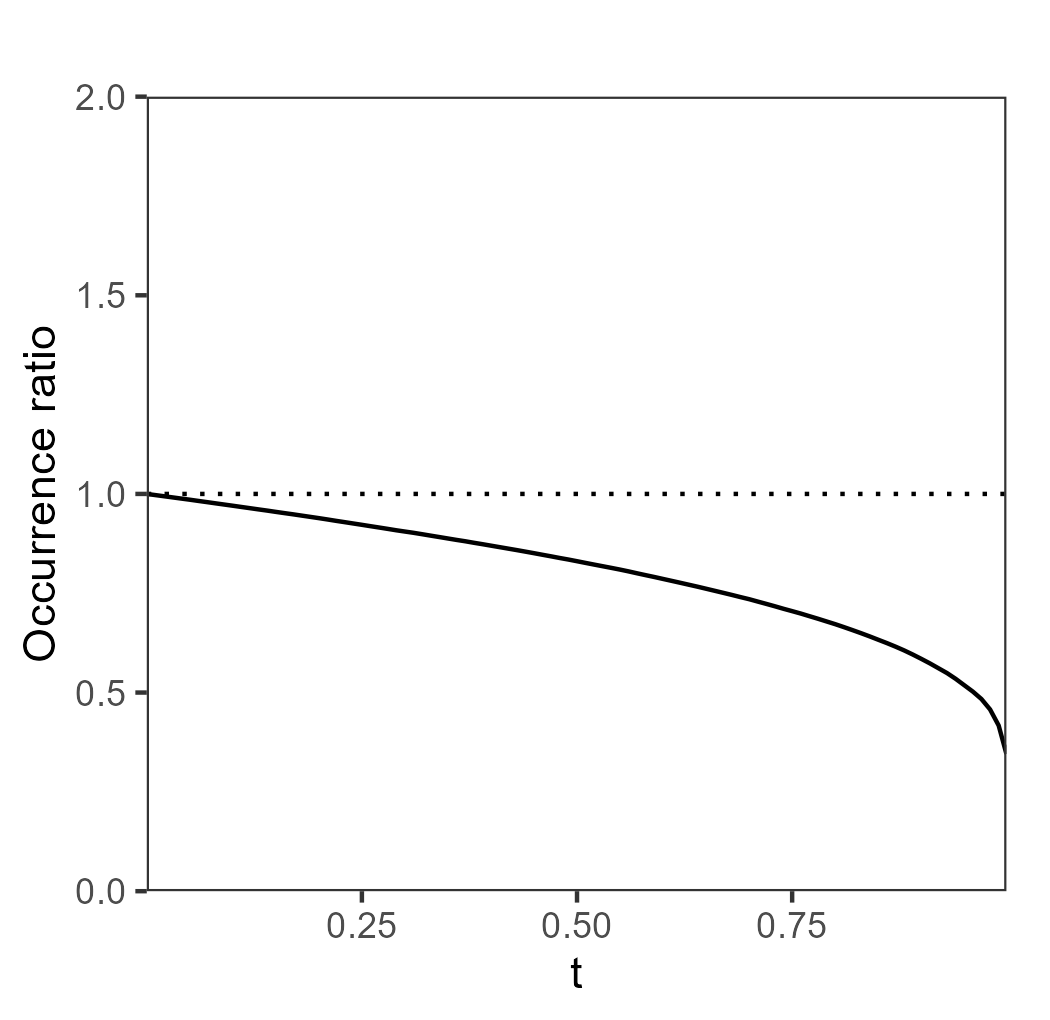}
    \caption{Probabilistic tail calibration plots for the climatological, ideal, and extremist forecasters in Example \ref{ex:exponential}. Results are shown for five thresholds, expressed in terms of quantiles $\alpha$ of the $10^{6}$ observations.}
    \label{fig:ss_ptc_reldiag2}
\end{figure}

\section{Additional case study results}\label{app:cs_extra}

The results in Section \ref{sec:casestudy} are shown for fixed thresholds of 5, 10, and 15mm. While the region we consider is fairly homogeneous, one could argue that the impact associated with a given precipitation amount will differ depending on the location. In practice, the tail calibration diagnostic tools can also be implemented using variable thresholds. We additionally consider the case when the threshold is chosen to be an extreme quantile of the local climatology, which therefore changes for each location considered. Figure~\ref{fig:cs_pit_qu} displays the corresponding tail calibration plots. Forecasts for the occurrence of extreme events are more reliable when a variable threshold is used. However, in both cases, the excess PIT values are not uniformly distributed for any of the three forecasting methods. In particular, the post-processed forecasts appear probabilistically calibrated but not probabilistically tail calibrated, clearly exhibiting a tail that is too light.

\begin{figure}
    \centering
    \includegraphics[width=0.3\textwidth]{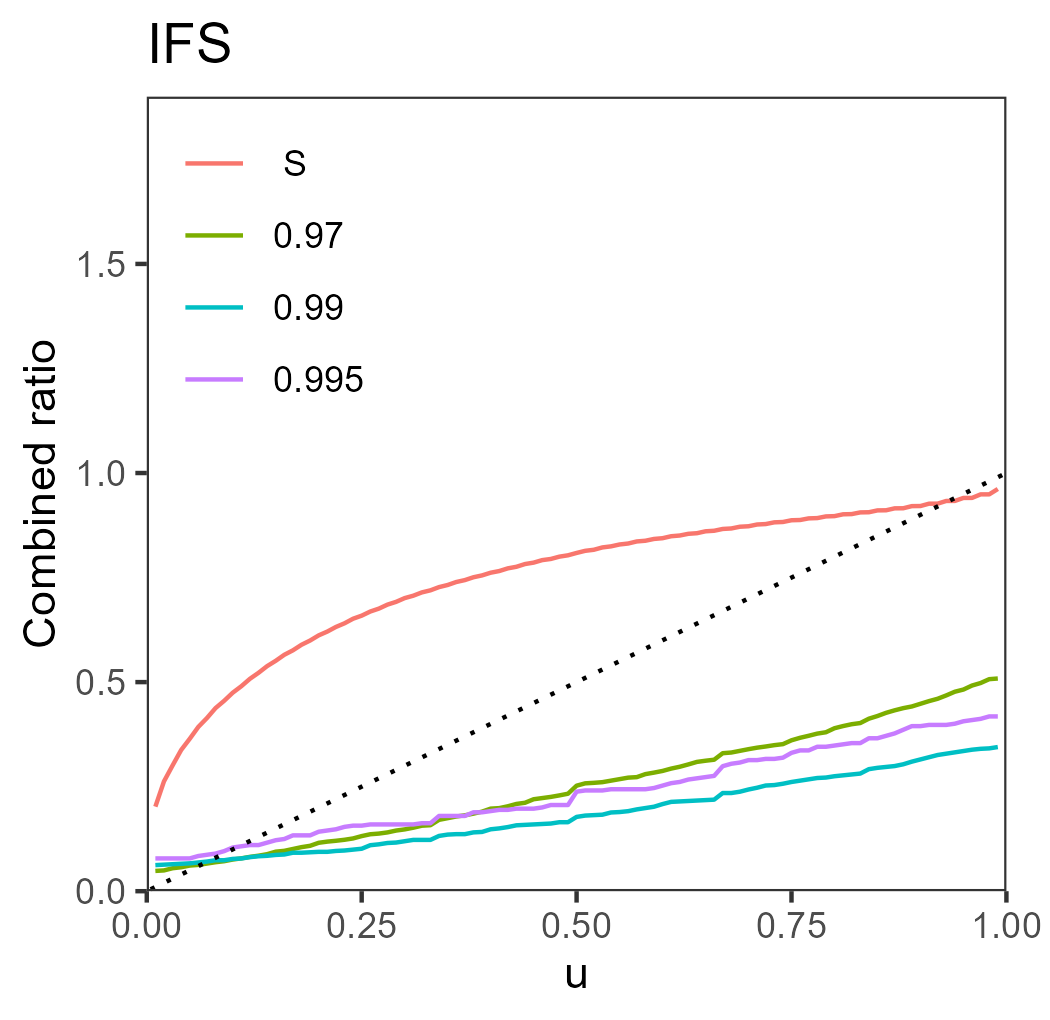}
    \includegraphics[width=0.3\textwidth]{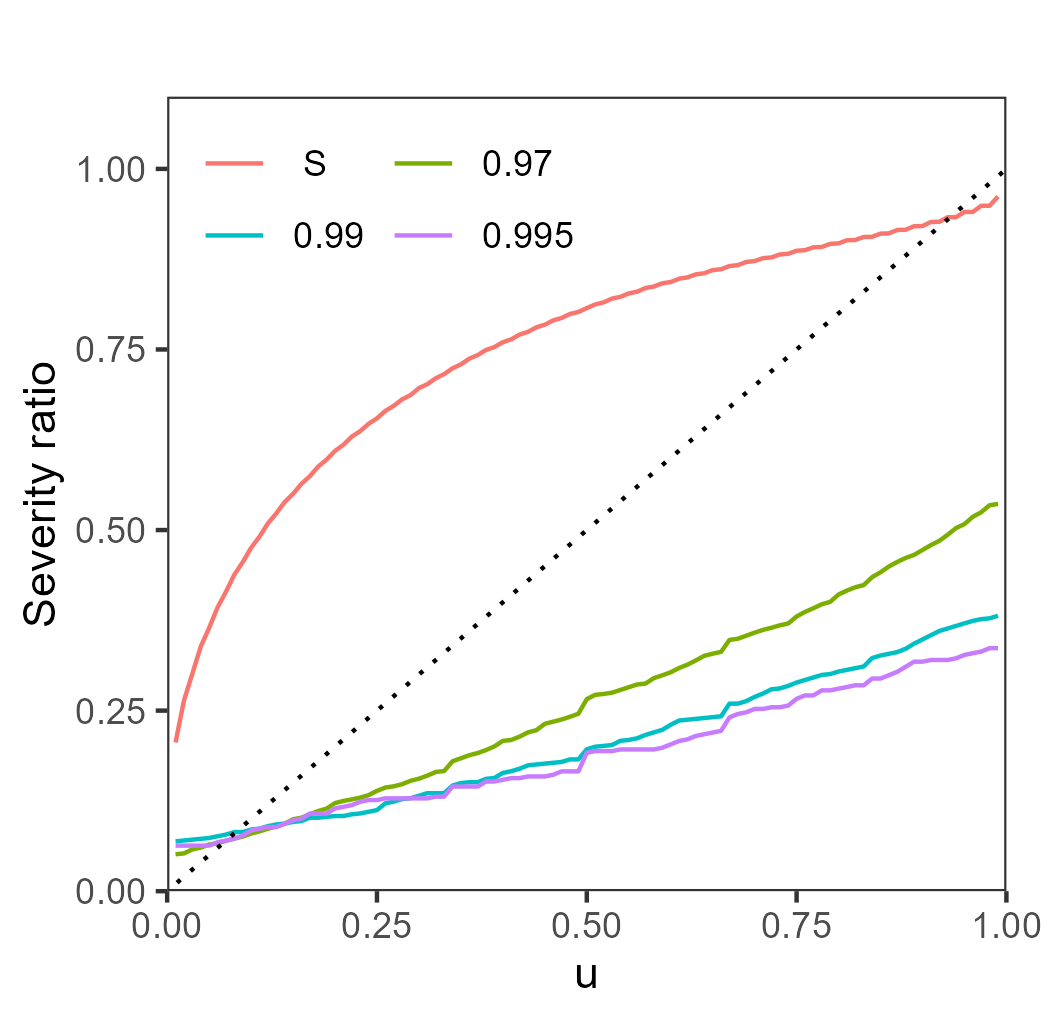}
    \includegraphics[width=0.3\textwidth]{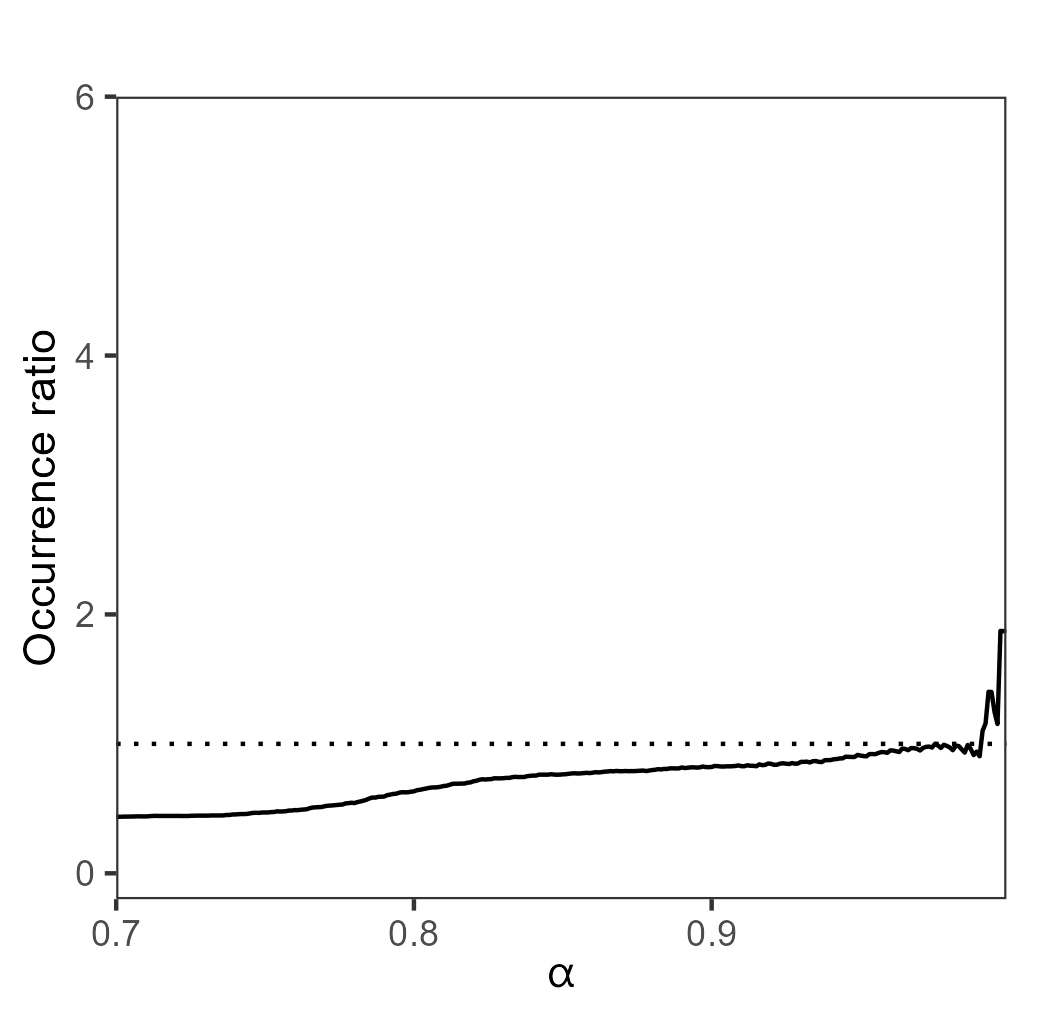}
    \includegraphics[width=0.3\textwidth]{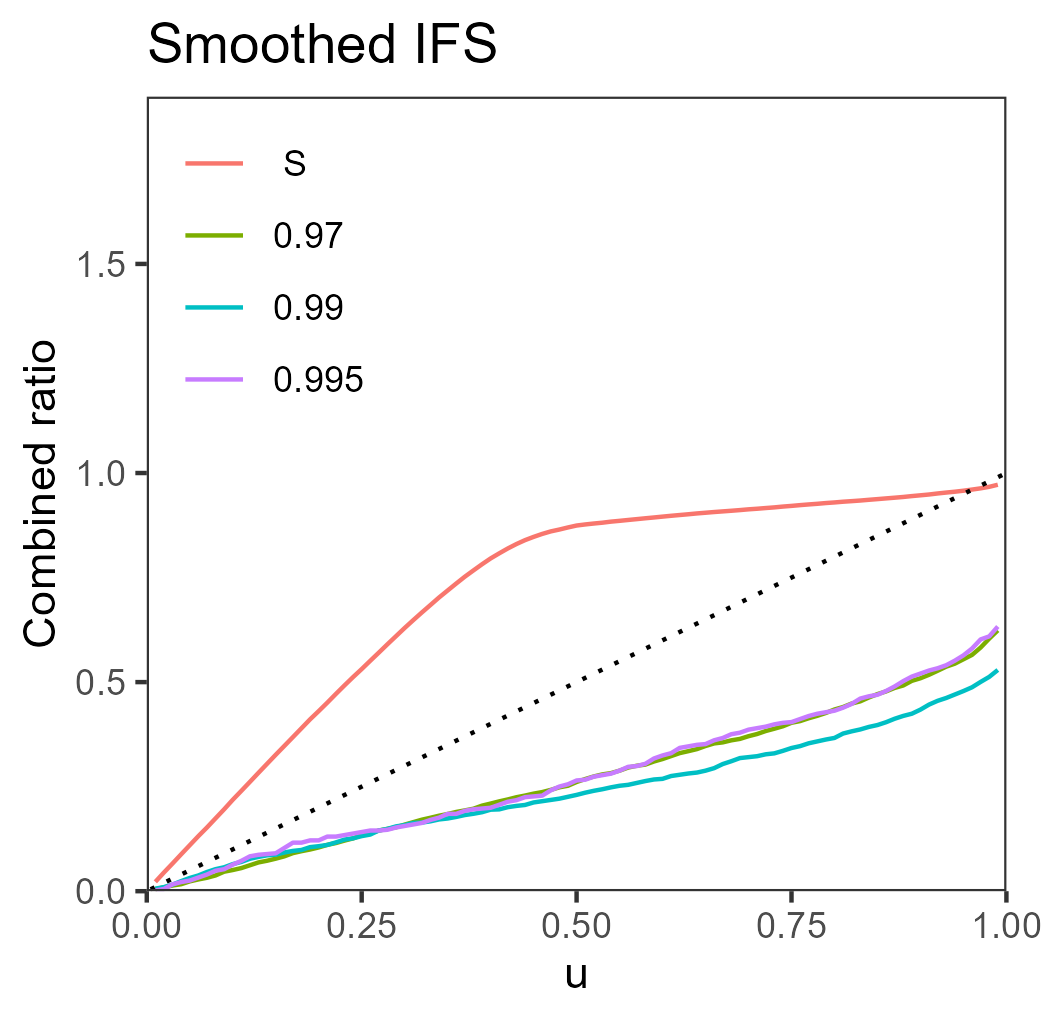}
    \includegraphics[width=0.3\textwidth]{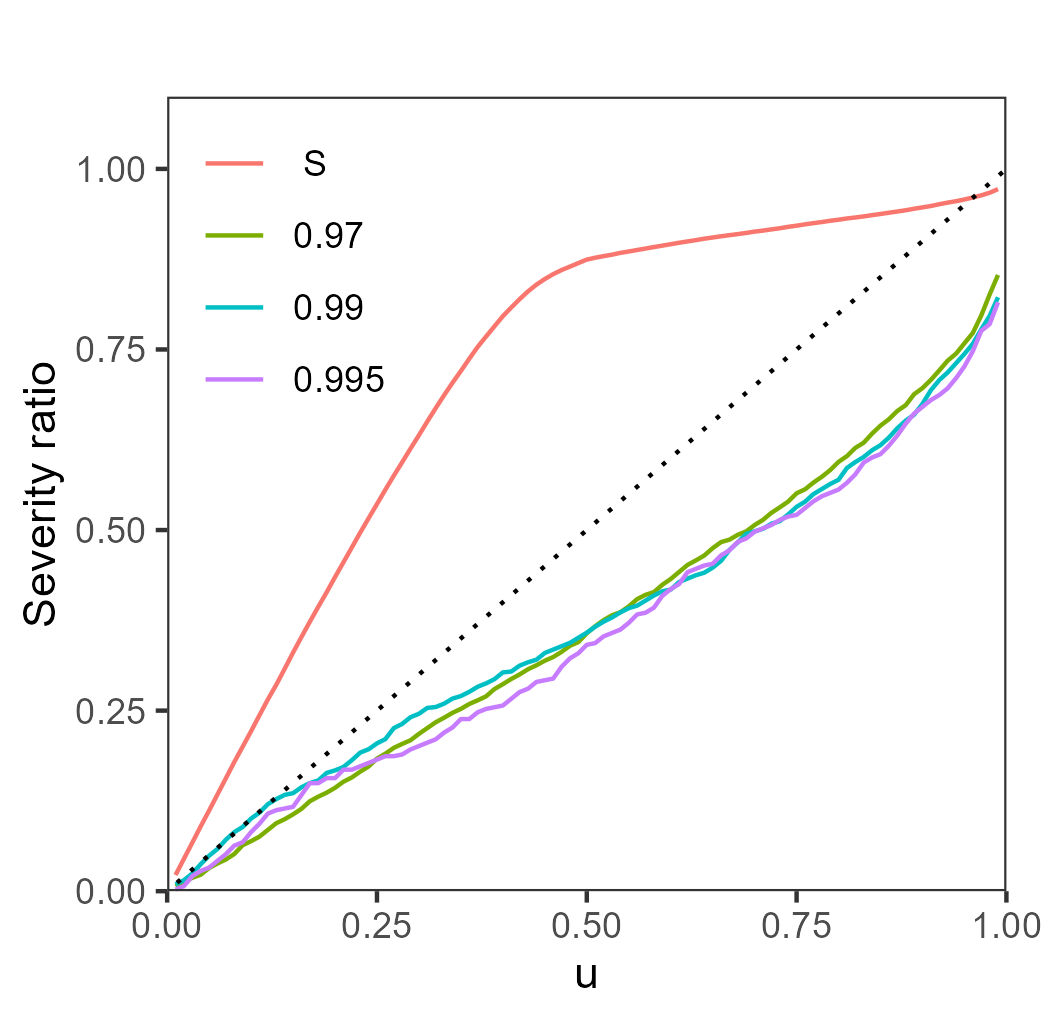}
    \includegraphics[width=0.3\textwidth]{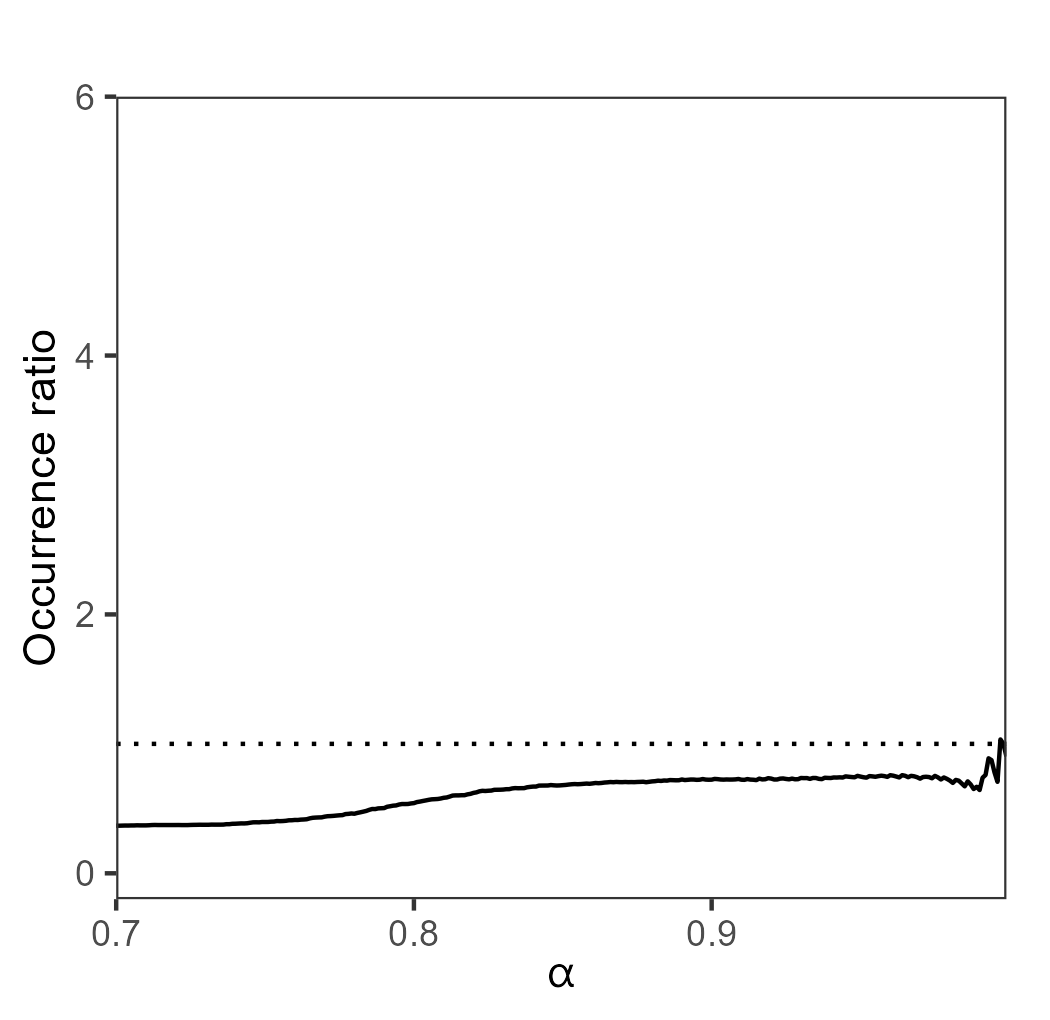}
    \includegraphics[width=0.3\textwidth]{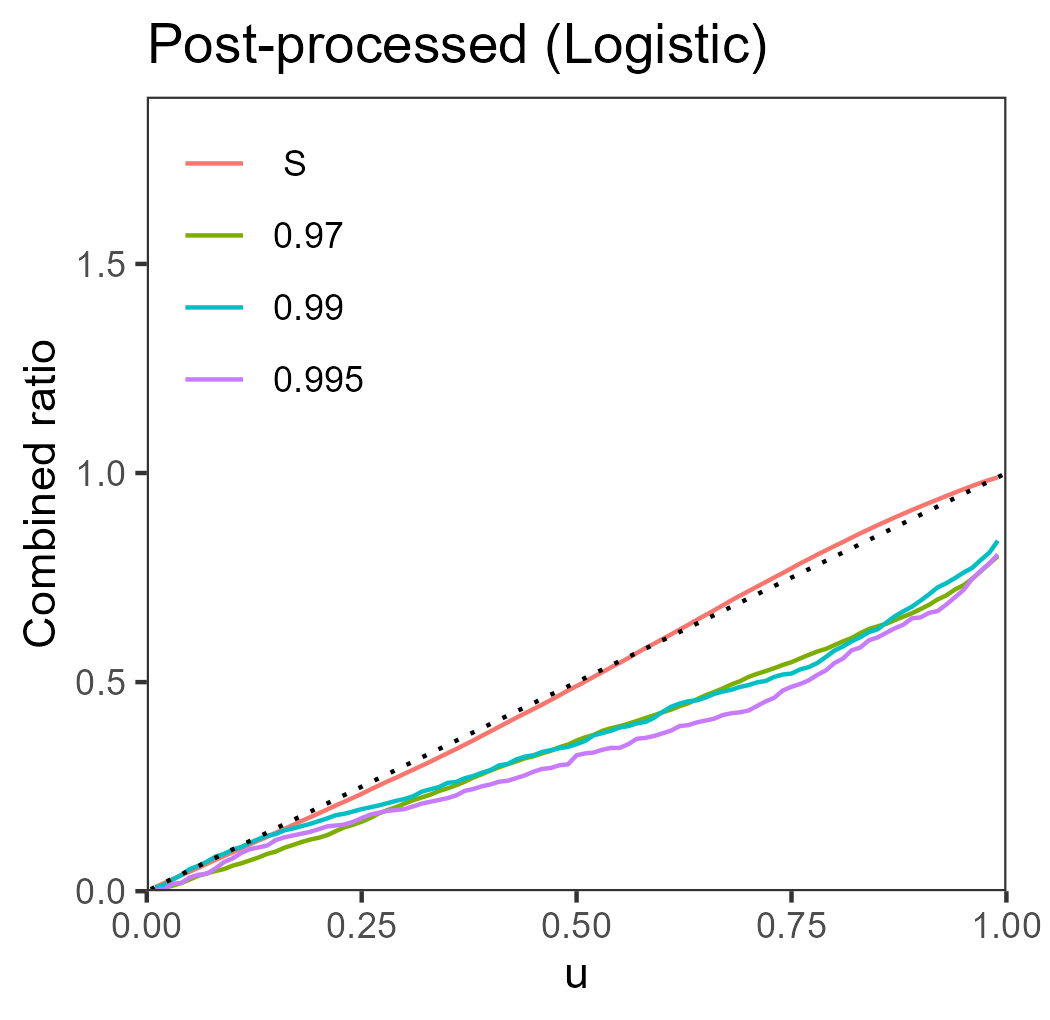}
    \includegraphics[width=0.3\textwidth]{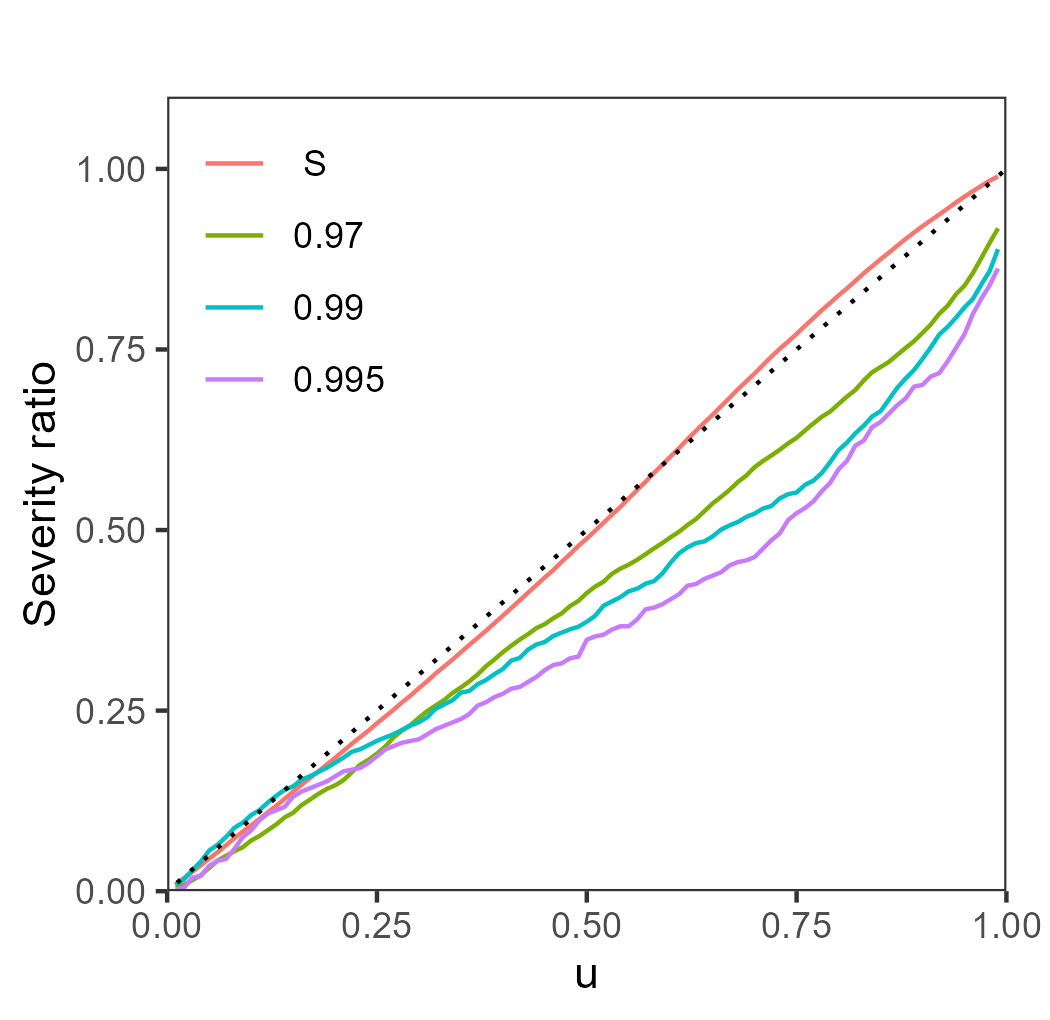}
    \includegraphics[width=0.3\textwidth]{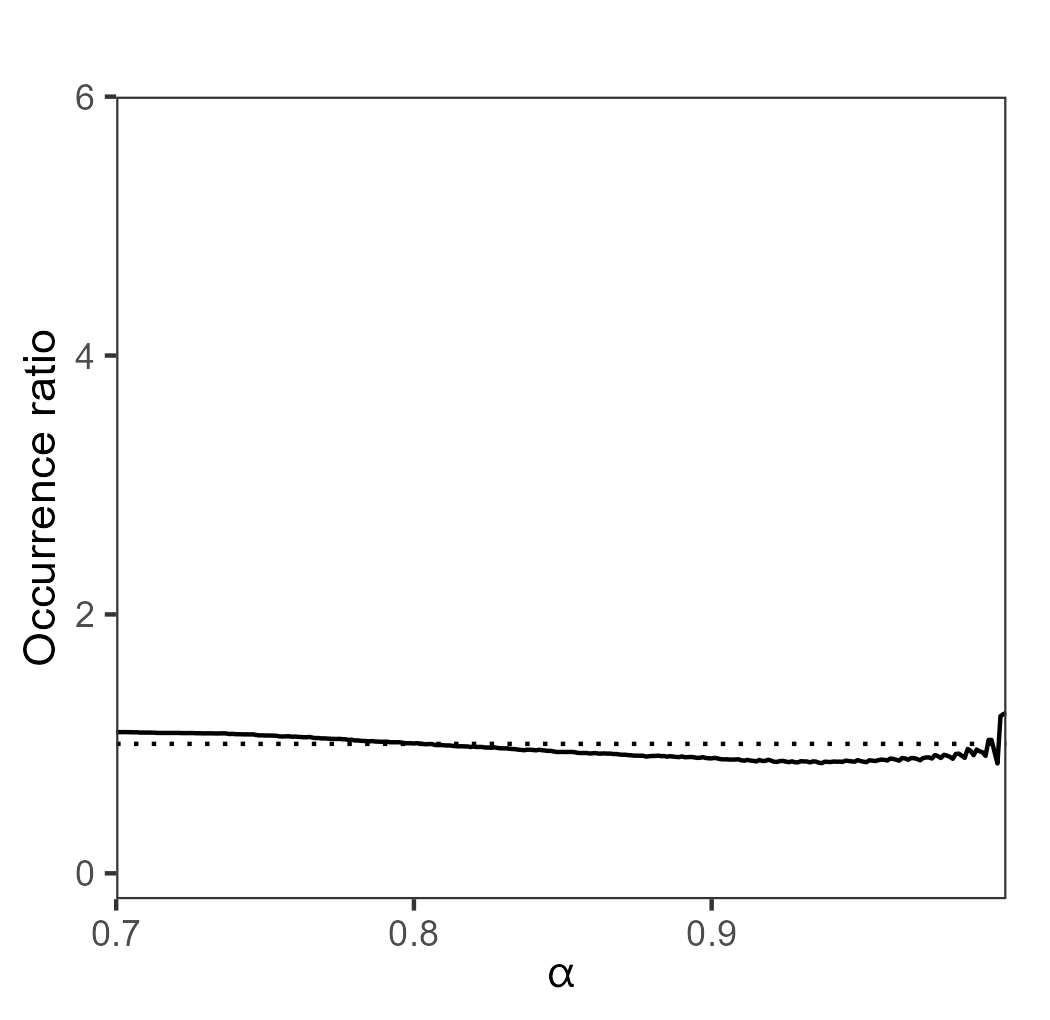}
    \includegraphics[width=0.3\textwidth]{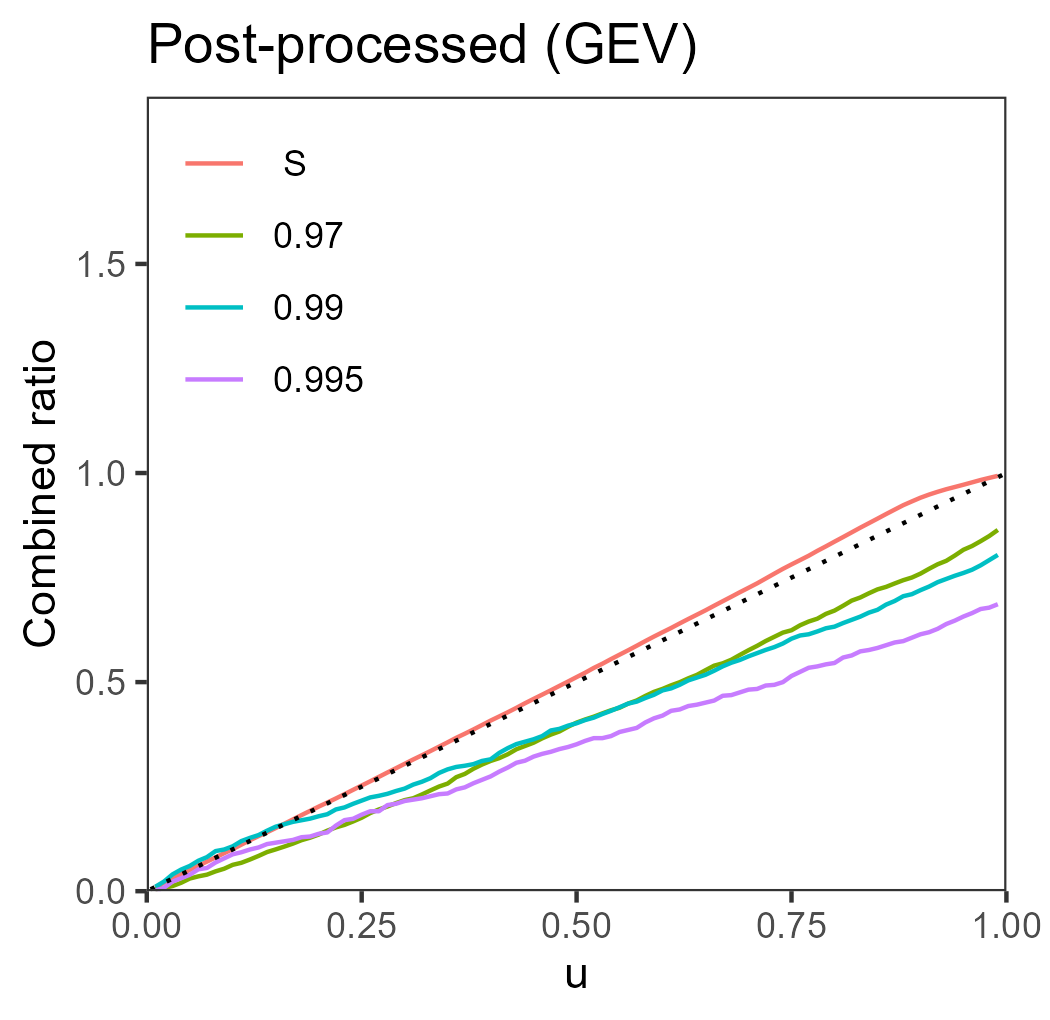}
    \includegraphics[width=0.3\textwidth]{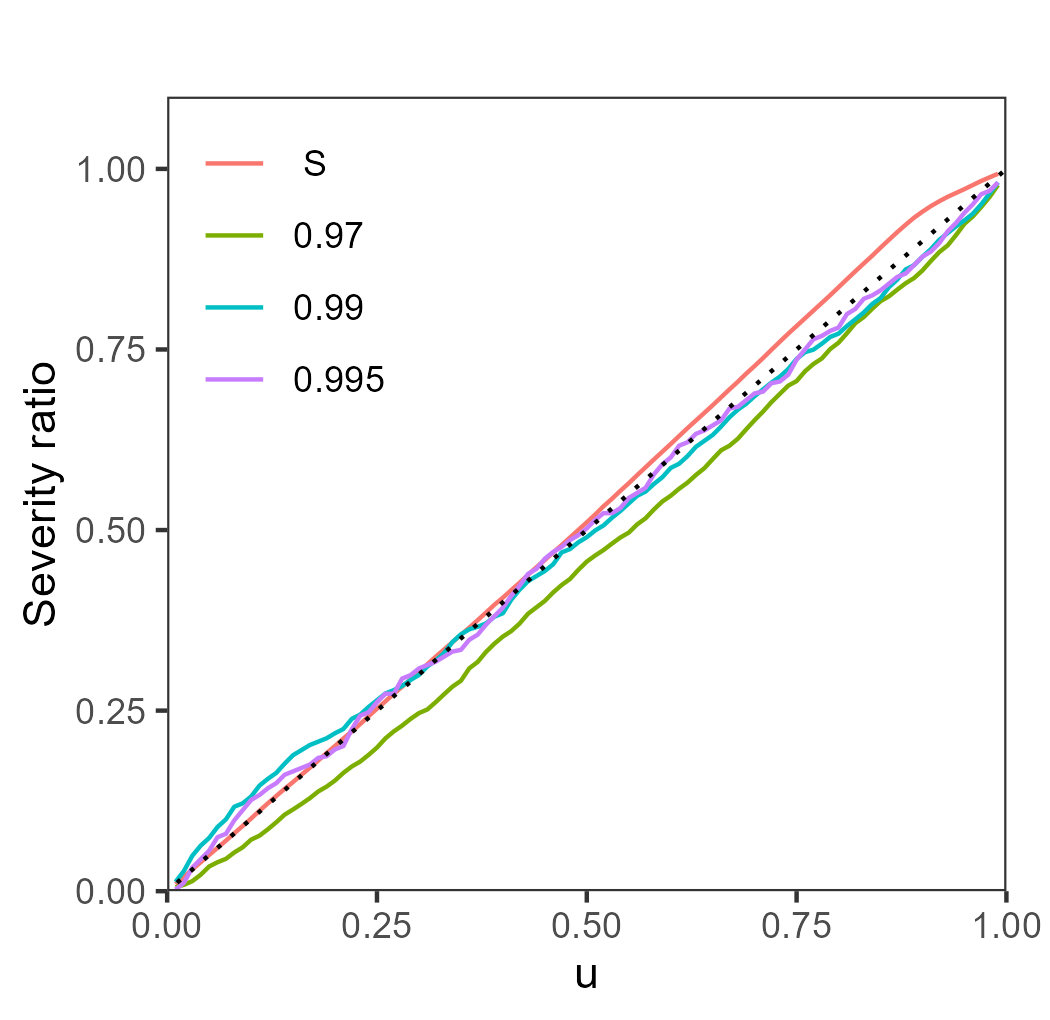}
    \includegraphics[width=0.3\textwidth]{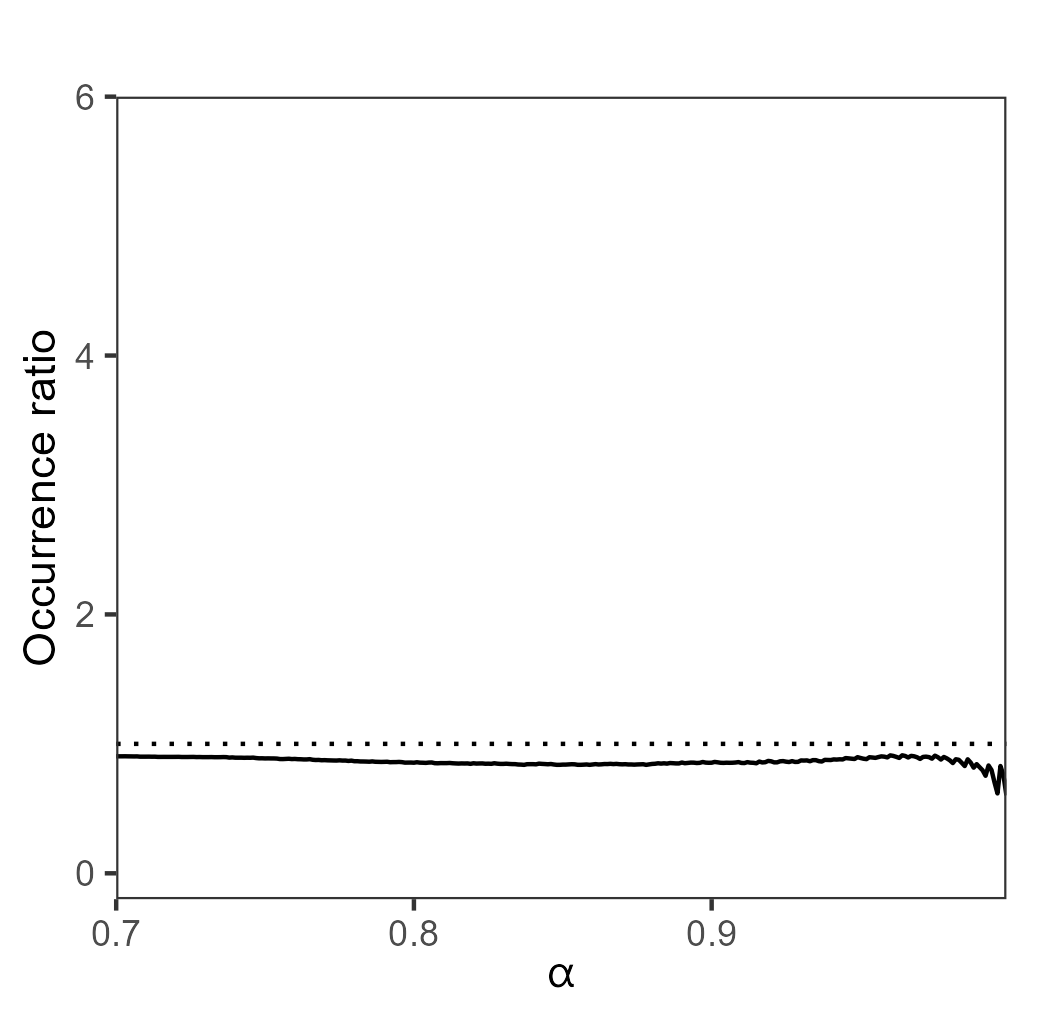}
    \caption{As in Figure \ref{fig:cs_pit} of the main manuscript, but implemented with thresholds that are quantiles of the precipitation distribution at each station; $\alpha$ denotes the quantile level.}
    \label{fig:cs_pit_qu}
\end{figure}

\section{Confidence intervals}\label{app:cs_confint}

Since the occurrence, severity and combined ratios in Equations~\eqref{eq:tailBC:occ}, \eqref{eq:tailBC:sev} and \eqref{eq:tailBCt} in the paper, respectively, are all ratios of sample means, we can use the central limit theorem and the delta method to derive pointwise confidence intervals for these quantities in practice. In the following, we suppose that we observe i.i.d. forecast-observation pairs $(F_i, Y_i)$ for $i \in \{1,\ldots,n\}$; $(F, Y)$ denotes a generic such pair.

\subsection{Occurrence ratio}

For one or more real thresholds $t$, suppose we want to check whether the \emph{occurrence ratio},
\begin{equation*}
	\frac{\sum_{i=1}^{n} \one\{ Y_{i} > t\}}{\sum_{i=1}^{n} (1 - F_{i}(t))},
\end{equation*}
is ``close to one''. Let $G(t) = \Q(Y \le t)$ be the true cdf of $Y$, and define empirical versions as
\begin{align*}
    \widehat{G}_n(t) &= \frac{1}{n} \sum_{i=1}^n \one\{ Y_i \le t \}, \\
    \widehat{F}_n(t) &= \frac{1}{n} \sum_{i=1}^n F_i(t).
\end{align*}
By the multivariate central limit theorem, we have the convergence in distribution
\begin{equation}
\label{eq:FmultiCLT}
    \sqrt{n} \left( 
        \widehat{G}_n(t) - G(t), \, 
        \widehat{F}_n(t) - \expec[F(t)] 
    \right)'
    \xrightarrow{d}
    N_2(0, \Sigma(t)), \qquad n \to \infty,
\end{equation}
where the apostrophe denotes matrix transposition, with the limit distribution being bivariate Gaussian with zero mean vector and covariance matrix
\[
    \Sigma(t) = \begin{pmatrix}
        G(t) \left( 1 - G(t) \right) &
        \cov \left[ \one\{Y \le t\}, F(t) \right] \\
        \cov \left[ \one\{Y \le t\}, F(t) \right] & \var[F(t)]
    \end{pmatrix}.
\]
The same limit holds if we use survival functions in Eq.~\eqref{eq:FmultiCLT}.

By the delta method applied to the function $g(x, y) = x/y$, with gradient $\nabla g(x, y) = (1/y, -x/y^2)$, we find
\begin{equation}
    \label{eq:occratCLT}
    \sqrt{n} \left(
        \frac{1-\widehat{G}_n(t)}{1-\widehat{F}_n(t)} - 
        \frac{1-G(t)}{\expec[1-F(t)]}
    \right)
    \xrightarrow{d}
    N(0, \sigma^2(t)), \qquad n \to \infty,
\end{equation}
with limiting variance
\[
    \sigma^2(t)
    = v(t)' \Sigma(t) v(t)
    \quad \text{with} \quad
    v(t)' = \nabla g \left( 1-G(t), \expec[1-F(t)] \right).
\]
All quantities appearing in the expression $\sigma^2_t$ can be estimated by the obvious sample quantities, leading to a plug-in estimate $\hat{\sigma}^2_{n,t}$. This yields the possibility for (classical) statistical inference based on the normal approximation in the usual way, including the construction of point-wise confidence intervals for $(1 - G(t)) / \expec[1-F(t)]$, and hypothesis tests for $H_0 : 1 - G(t) = \expec[1 - F(t)]$, for example.

\subsection{Combined and severity ratios}

Now, for one or more real thresholds $t$, suppose that we want to check whether the \emph{combined ratio},
\begin{equation}
	\label{eq:combined}
	\frac{\sum_{i =1}^n \one\{\Zit \leq u, Y_i > t\}}{ \sum_{i = 1}^{n} (1 - F_{i}(t))},
\end{equation}
and the \emph{severity ratio},
\begin{equation}
	\label{eq:severity}
	\frac{\sum_{i =1}^n \one\{\Zit \leq u, Y_i > t\}}{\sum_{i = 1}^{n} \one\{Y_i > t\}},
\end{equation}
are ``close to the diagonal''. Define
\begin{align*}
    \widehat{A}_n(t,u) &= \frac{1}{n} \sum_{i=1}^n \one\{ \Zit \le u, Y_i > t \}, \\
    \widehat{B}_n(t) &= \frac{1}{n} \sum_{i=1}^n (1-F_i(t)),\\
    \widehat{C}_n(t) &= \frac{1}{n} \sum_{i=1}^n \one\{ Y_i > t \}.
\end{align*}
By the multivariate central limit theorem, we have that
\[
    \sqrt{n} \left( 
        \widehat{A}_n(t,u) - \Q\bigl(\ZFt \le u, Y > t \bigr), \, 
        \widehat{B}_n(t) - \expec[1-F(t)] 
    \right)'
    \xrightarrow{d}
    N_2(0, \Sigma(t,u)),
\]
as $n \to \infty$. The limit distribution is again bivariate Gaussian with zero mean vector, and with covariance matrix $\Sigma(t,u)$ equal to
\[
     \begin{pmatrix}
        \Q\bigl(\ZFt \le u, Y > t \bigr) \left( 1 - \Q\bigl(\ZFt \le u, Y > t \bigr)\right) &
        \cov \left[ \one\{\ZFt \le u, Y > t \}, 1 - F(t) \right] \\
        \cov \left[ \one\{\ZFt \le u, Y > t \}, 1 - F(t) \right] & \var[1-F(t)]
    \end{pmatrix}.
\]
By the same arguments, 
\[
    \sqrt{n} \left( 
        \widehat{A}_n(t,u) - \Q\bigl(\ZFt \le u, Y > t \bigr), \, 
        \widehat{C}_n(t) - \Q(Y > t) 
    \right)'
    \xrightarrow{d}
    N_2\left(0, \tilde{\Sigma}(t,u)\right),
\]
as $n \to \infty$, where $\tilde{\Sigma}(t,u)$ is
\[
    \begin{pmatrix}
        \Q\bigl(\ZFt \le u, Y > t \bigr) \left( 1 - \Q\bigl(\ZFt \le u, Y > t \bigr)\right) &
        \Q\bigl(\ZFt \le u, Y > t \bigr)(1-\Q(Y > t))\\
        \Q\bigl(\ZFt \le u, Y > t \bigr)(1-\Q(Y > t)) & \Q(Y > t)(1-\Q(Y > t))
    \end{pmatrix}.
\]

As before, consider the delta method applied to the function $g(x, y) = x/y$, with gradient $\nabla g(x, y) = (1/y, -x/y^2)$. This gives
\begin{equation}
    \label{eq:combratCLT}
    \sqrt{n} \left(
        \frac{\widehat{A}_{n}(t, u)}{\widehat{B}_n(t)} - 
        \frac{\Q\bigl(\ZFt \le u, Y > t \bigr)}{\expec[1-F(t)]}
    \right)
    \xrightarrow{d}
    N(0, \sigma^2(t)), \qquad n \to \infty,
\end{equation}
with limiting variance
\[
    \sigma^2(t,u)
    = v(t,u)' \Sigma(t,u) v(t,u)
    \quad \text{with} \quad
    v(t,u)' = \nabla g \left(\Q\bigl(\ZFt \le u, Y > t \bigr), \expec[1-F(t)] \right),
\]
and
\begin{equation}
    \label{eq:sevratCLT}
    \sqrt{n} \left(
        \frac{\widehat{A}_{n}(t, u)}{\widehat{C}_n(t)} - 
        \frac{\Q\bigl(\ZFt \le u, Y > t \bigr)}{\Q(Y > t)}
    \right)
    \xrightarrow{d}
    N(0, \tilde{\sigma}^2(t)), \qquad n \to \infty,
\end{equation}
with limiting variance
\[
    \tilde{\sigma}^2(t,u)
    = \tilde{v}(t,u)' \tilde{\Sigma}(t,u) \tilde{v}(t,u)
    \quad \text{with} \quad
    \tilde{v}(t,u)' = \nabla g \left(\Q\bigl(\ZFt \le u, Y > t \bigr), \Q(Y > t) \right),
\]
which is
\[
    \tilde{\sigma}^2(t,u) = \frac{\Q\bigl(\ZFt \le u, Y > t \bigr)}{\Q(Y > t)^3}\Q\bigl(\ZFt > u, Y > t \bigr).
\]

By replacing the quantities defining $\sigma^2(t,u)$ and $\tilde{\sigma}^2(t,u)$ with obvious sample estimates, we can obtain plug-in estimates $\hat{\sigma}^2_{n,t,u}$ and $\tilde{\sigma}^2_{n,t,u}$, respectively. This again facilitates the construction of classical normal confidence intervals and hypothesis tests, allowing us to test whether the combined ratio or severity ratio is close (pointwise) to the diagonal for a given $t$.

Figure \ref{fig:cs_pit_ci} displays the occurrence, severity, and combined ratios for the four forecasting methods employed in the case study, along with pointwise 95\% confidence intervals derived from the central limit theorem and delta method, as described above.

\begin{figure}
    \centering
    \includegraphics[width=0.3\textwidth]{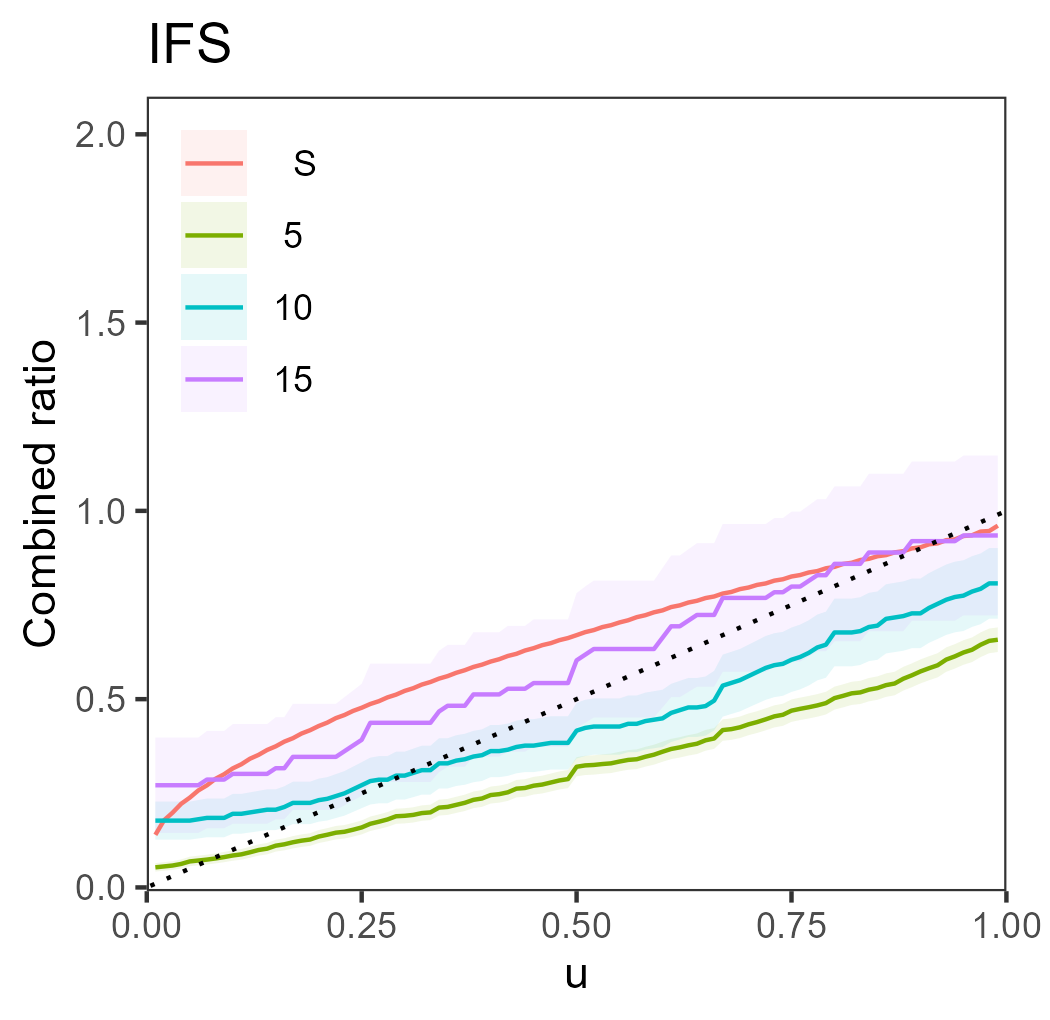}
    \includegraphics[width=0.3\textwidth]{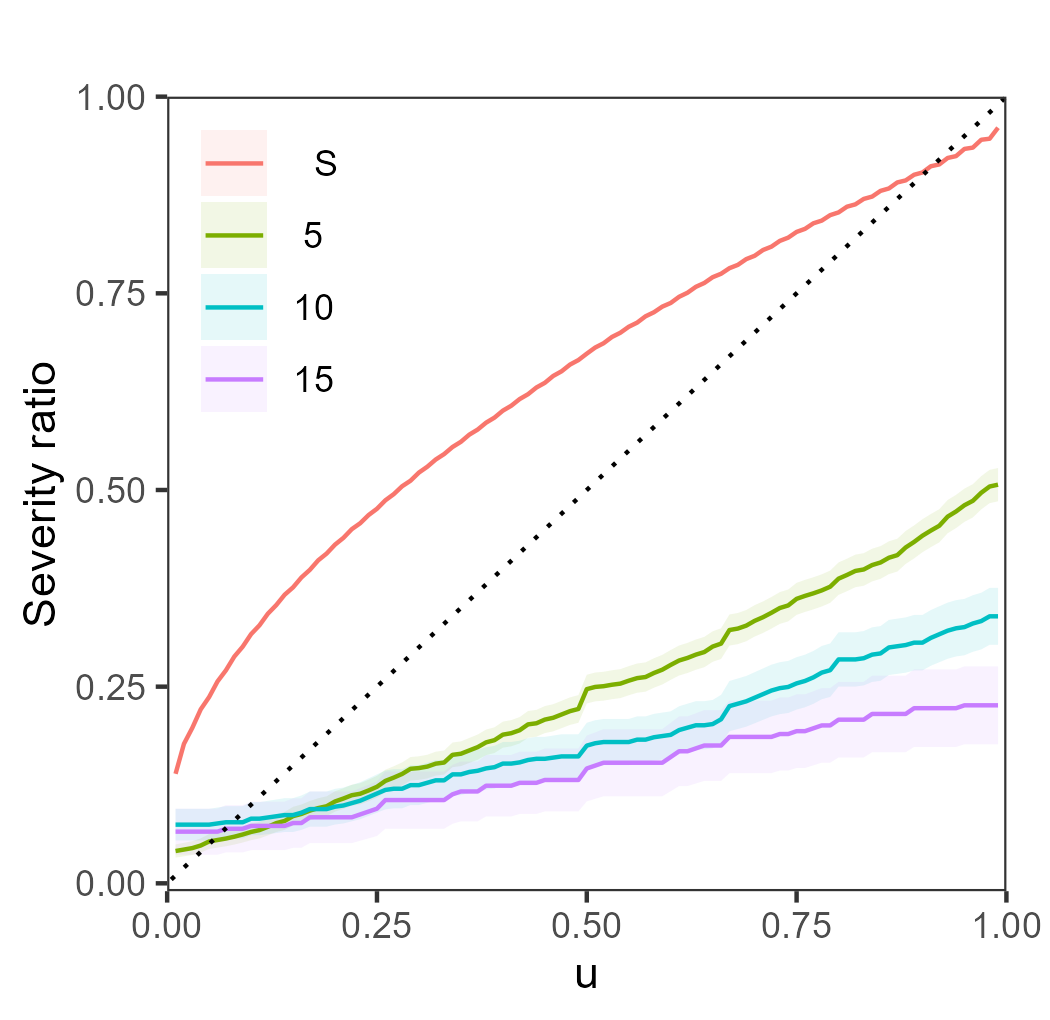}
    \includegraphics[width=0.3\textwidth]{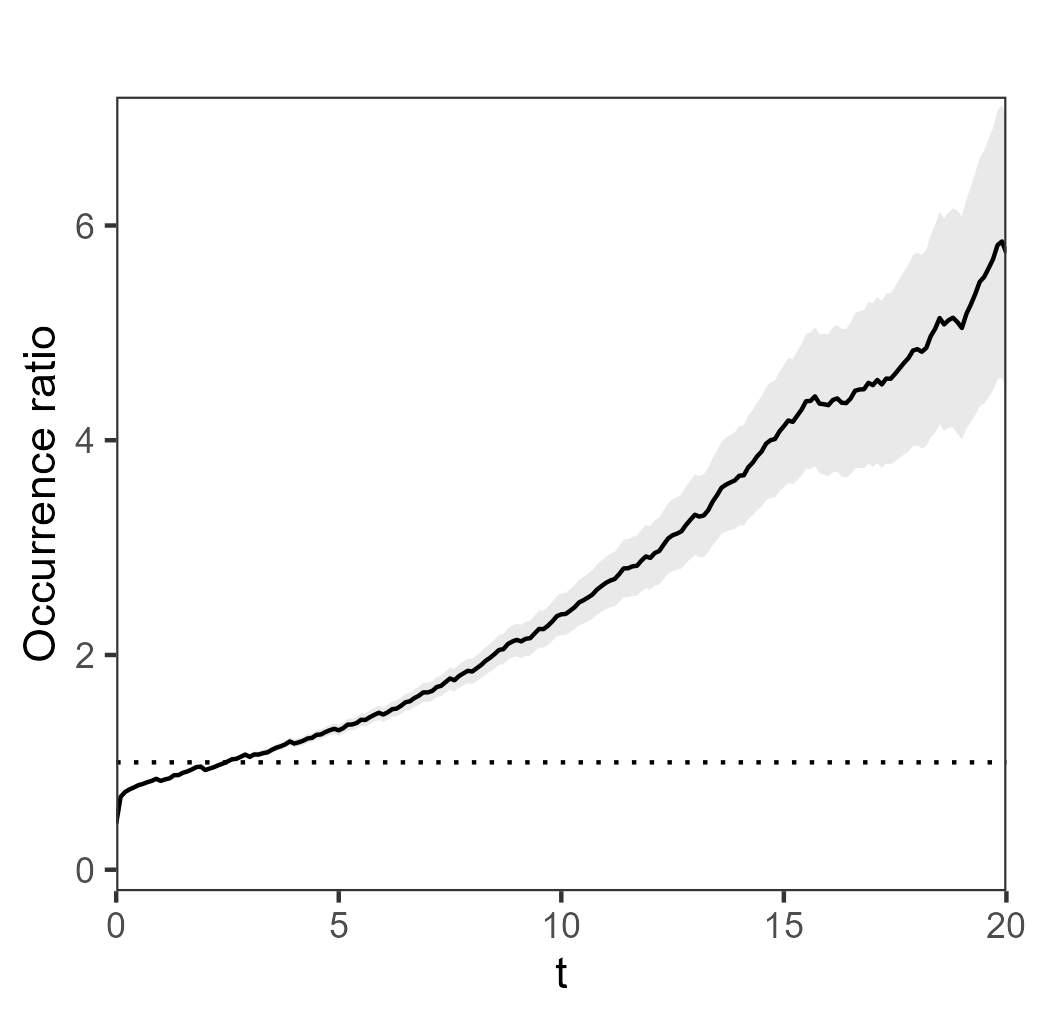}
    \newline
    \includegraphics[width=0.3\textwidth]{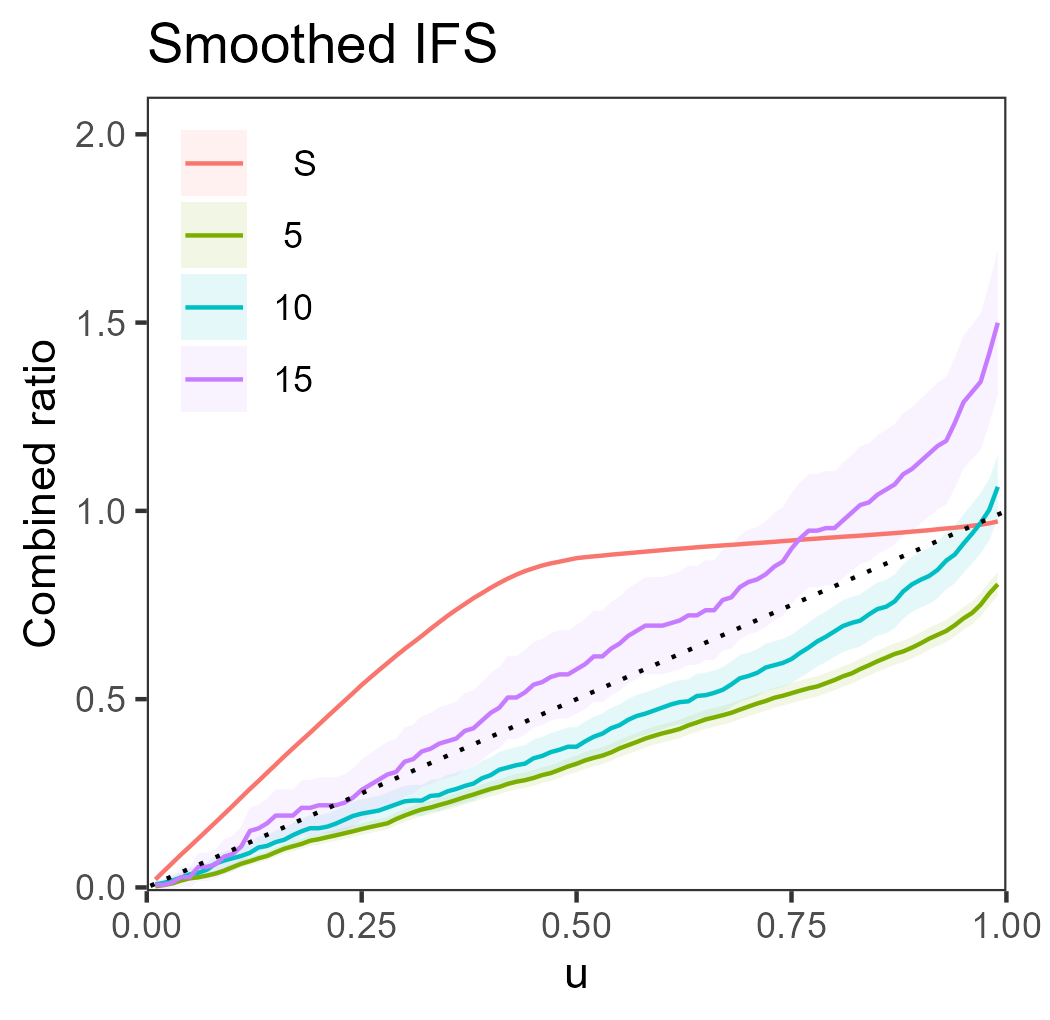}
    \includegraphics[width=0.3\textwidth]{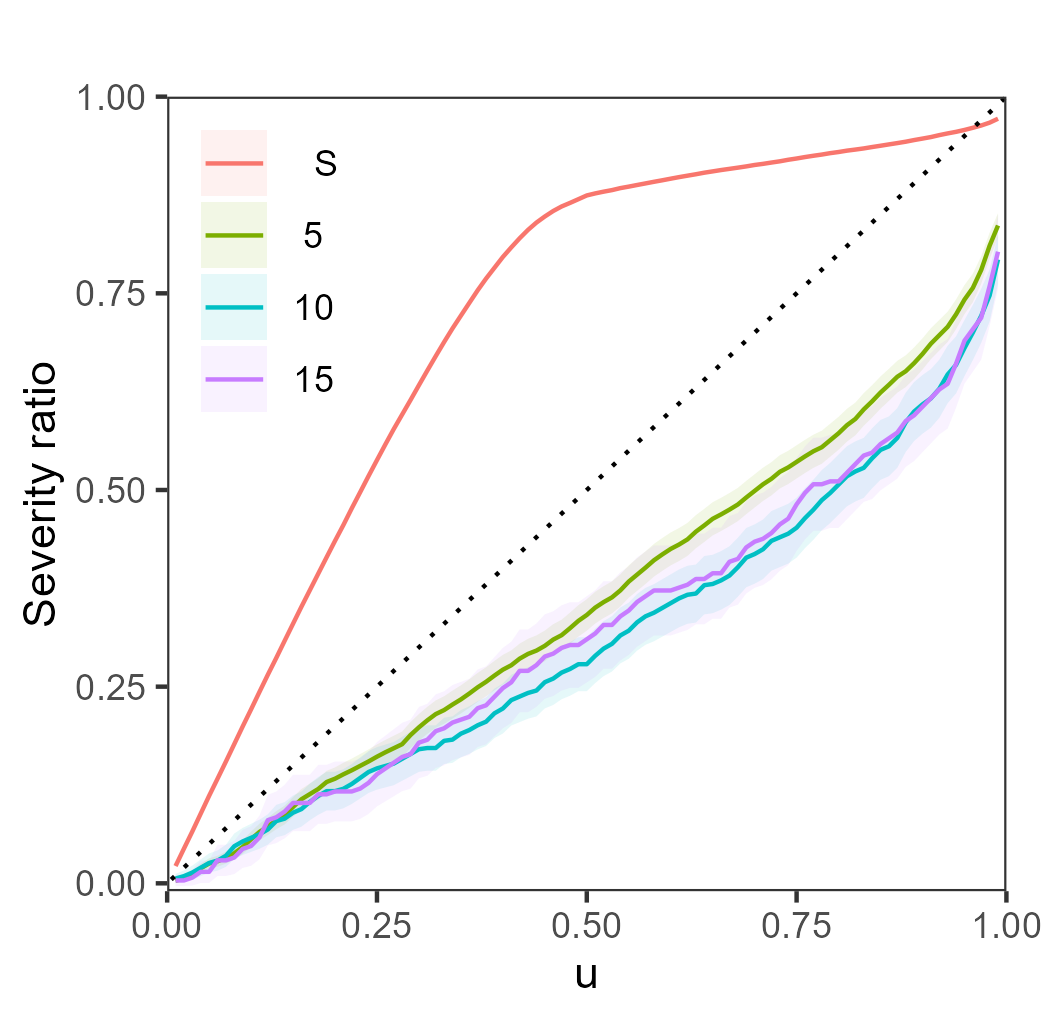}
    \includegraphics[width=0.3\textwidth]{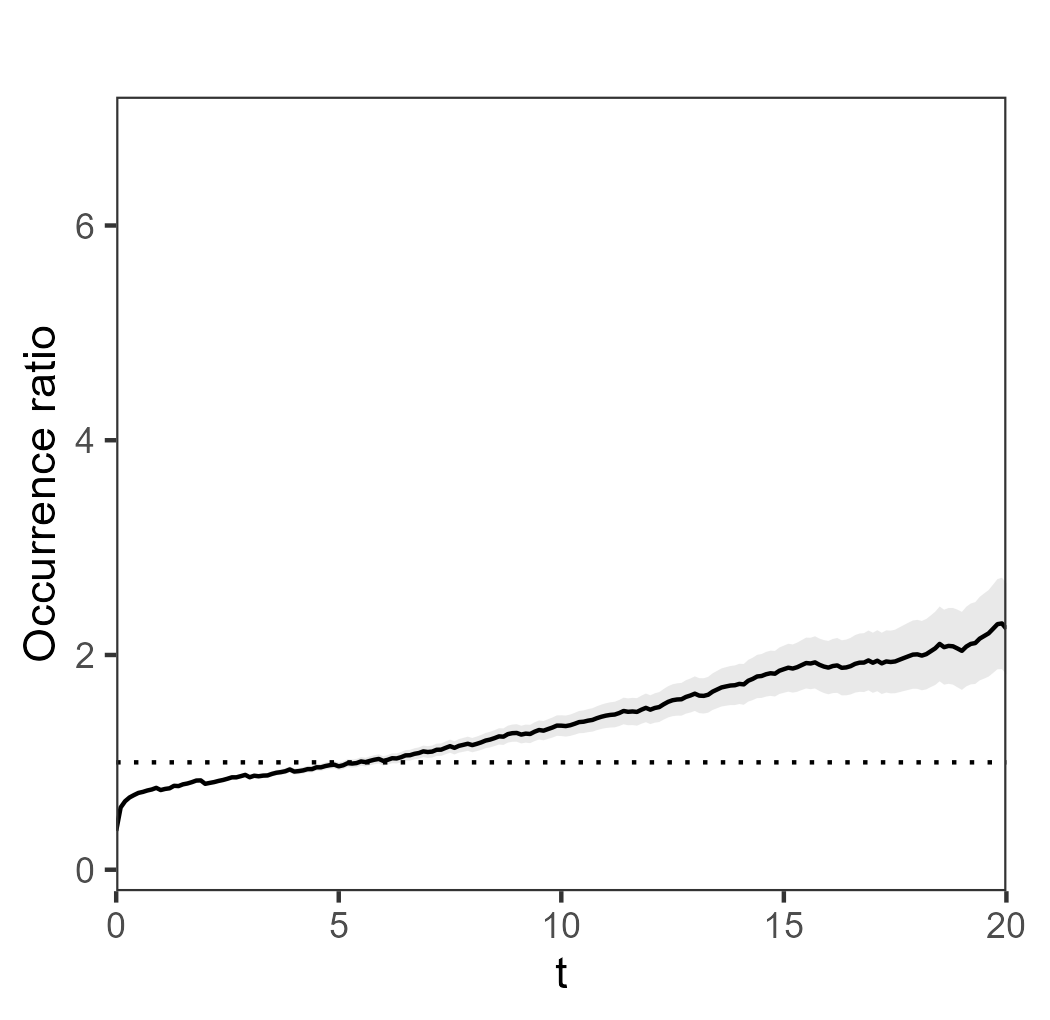}
    \newline
    \includegraphics[width=0.3\textwidth]{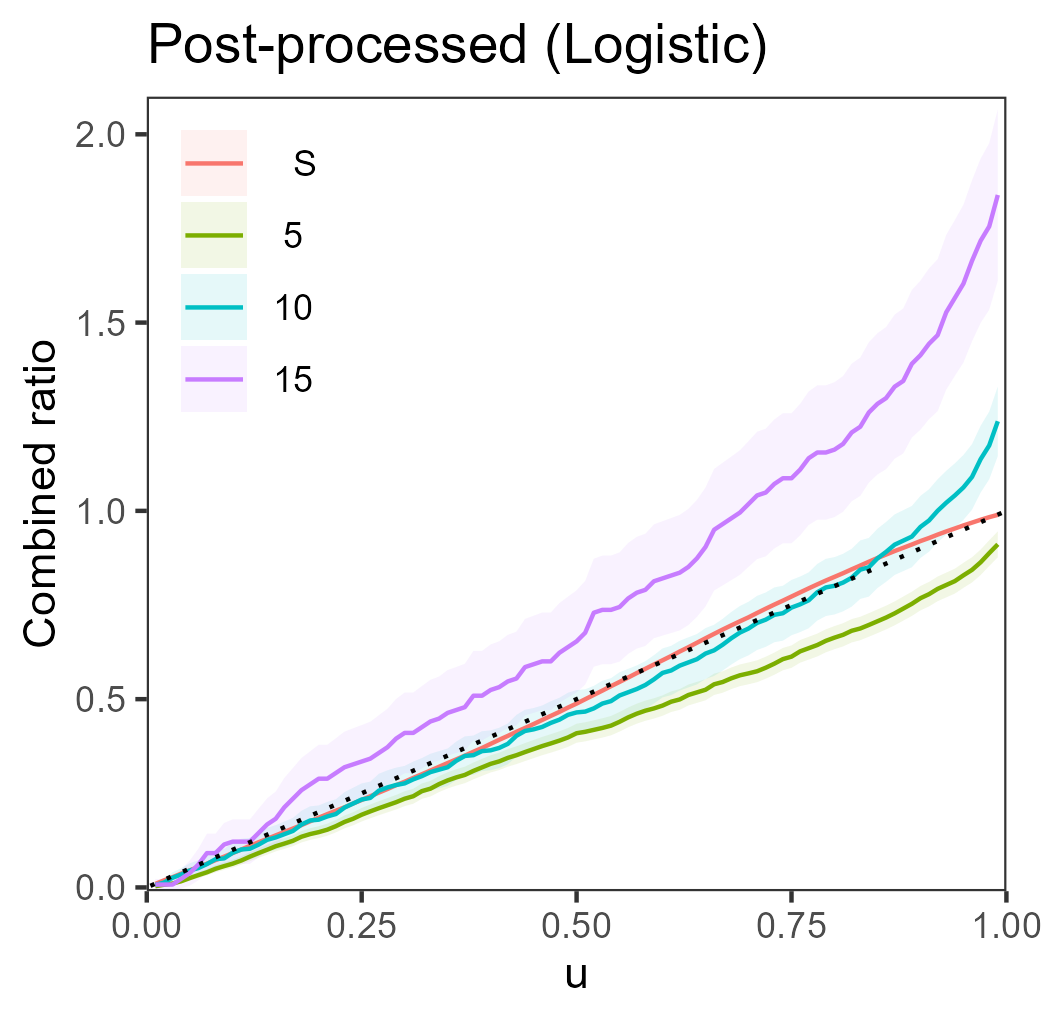}
    \includegraphics[width=0.3\textwidth]{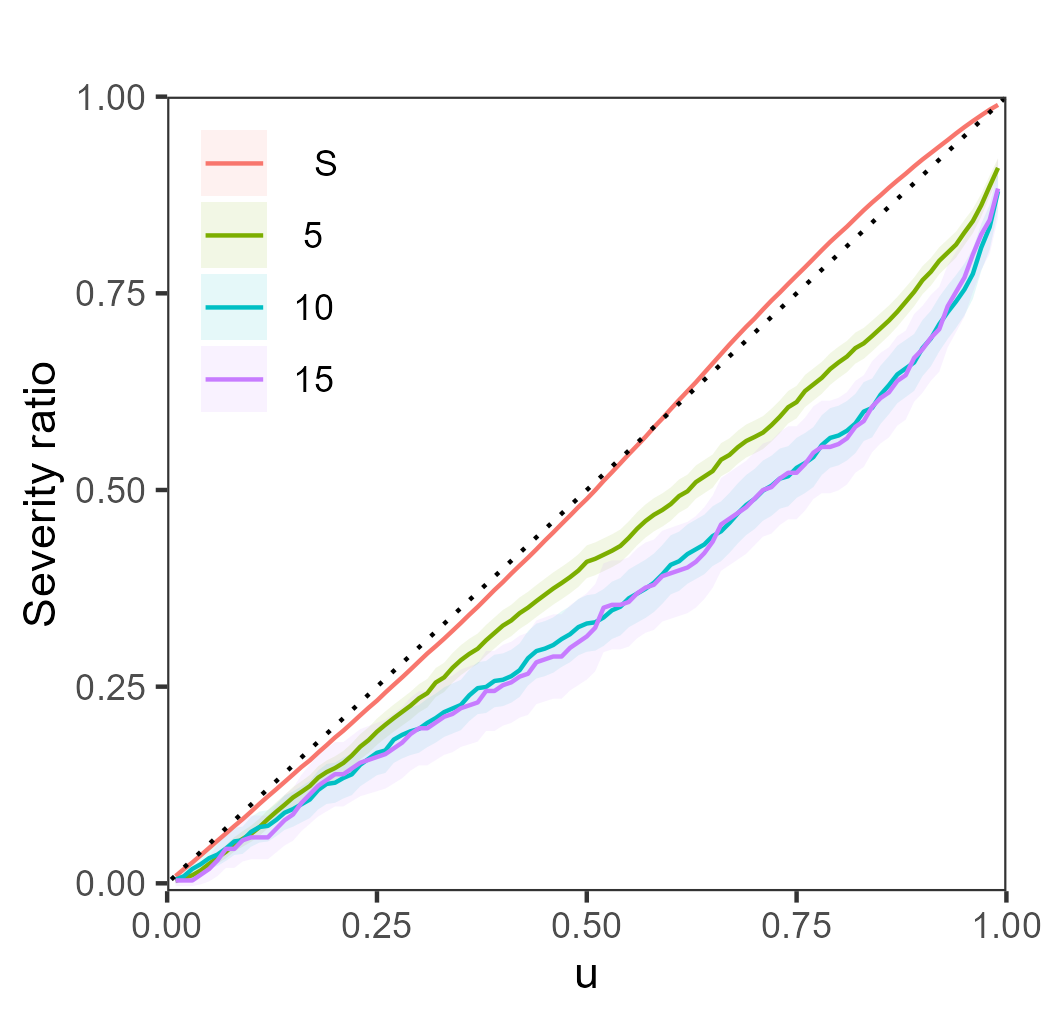}
    \includegraphics[width=0.3\textwidth]{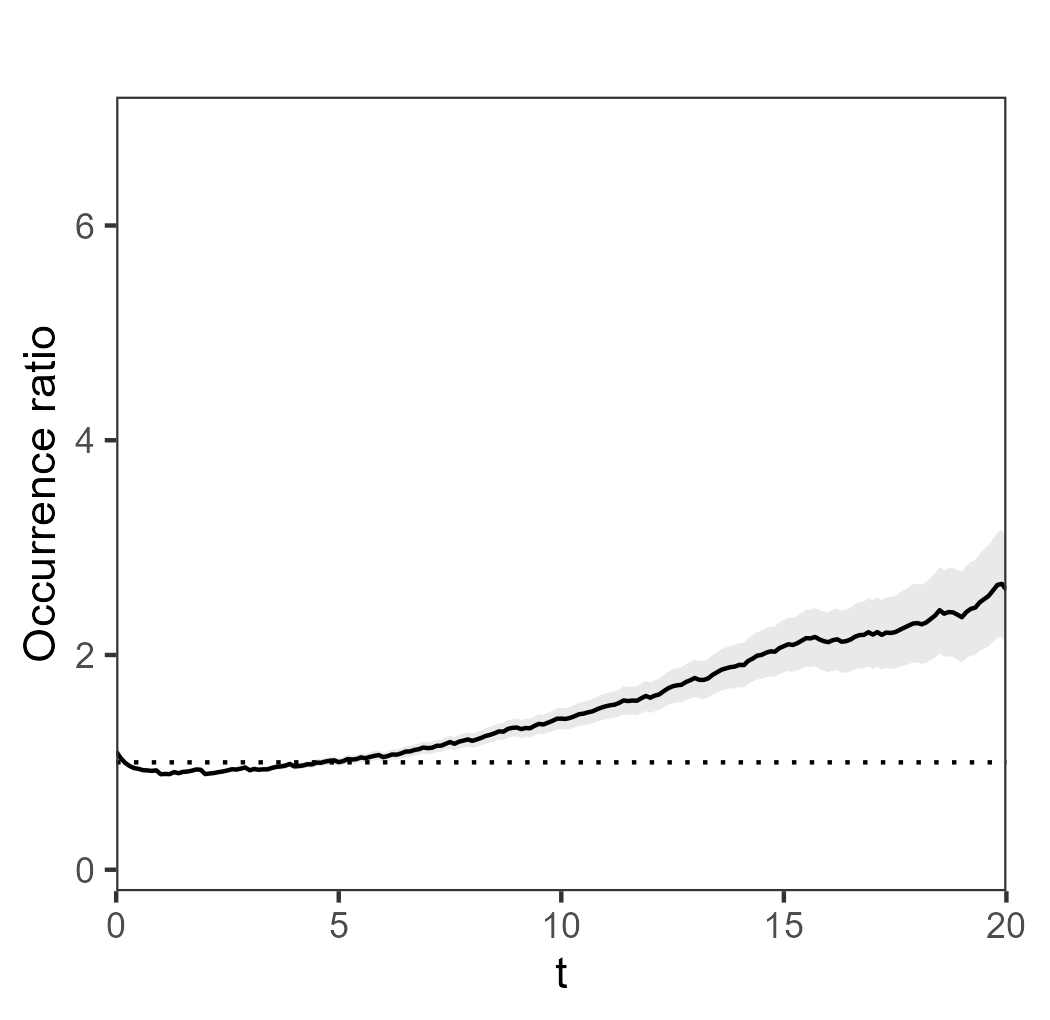}
    \newline
    \includegraphics[width=0.3\textwidth]{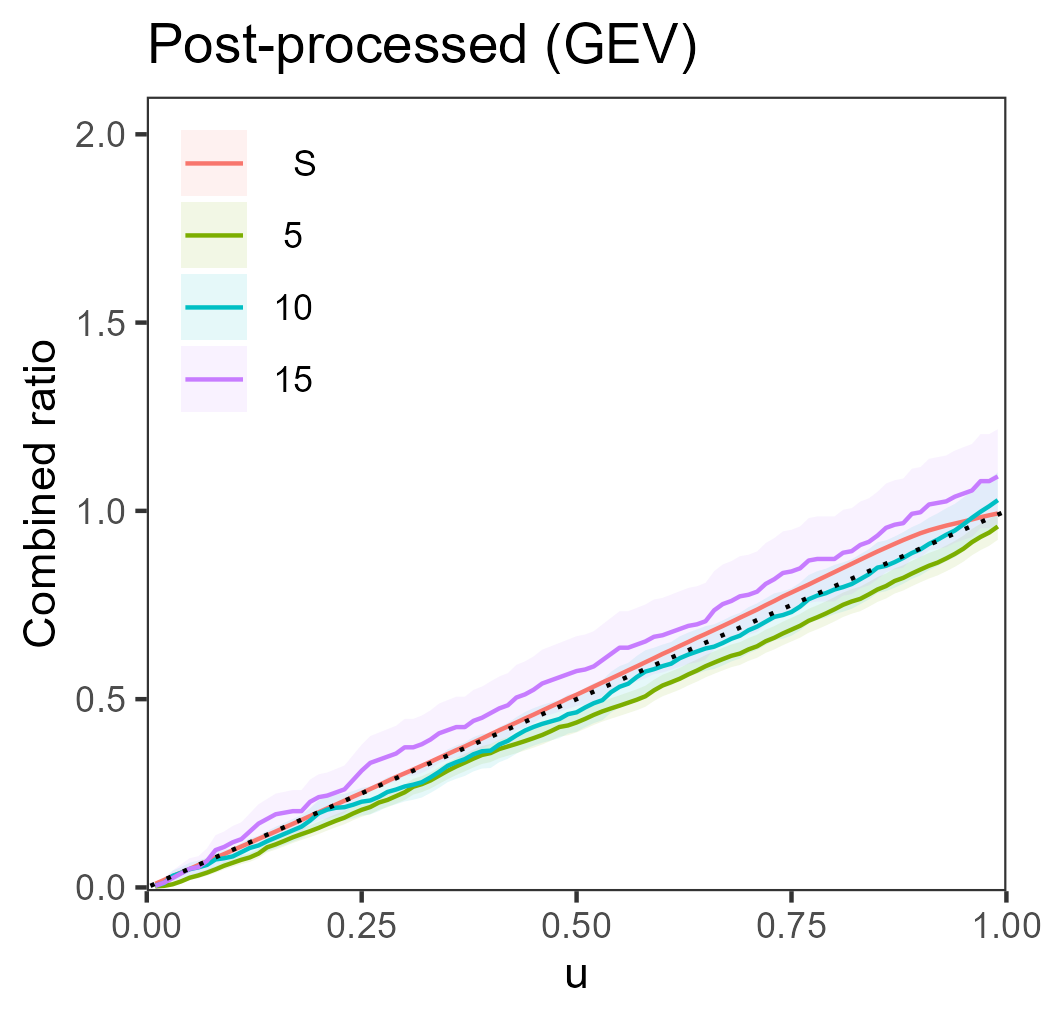}
    \includegraphics[width=0.3\textwidth]{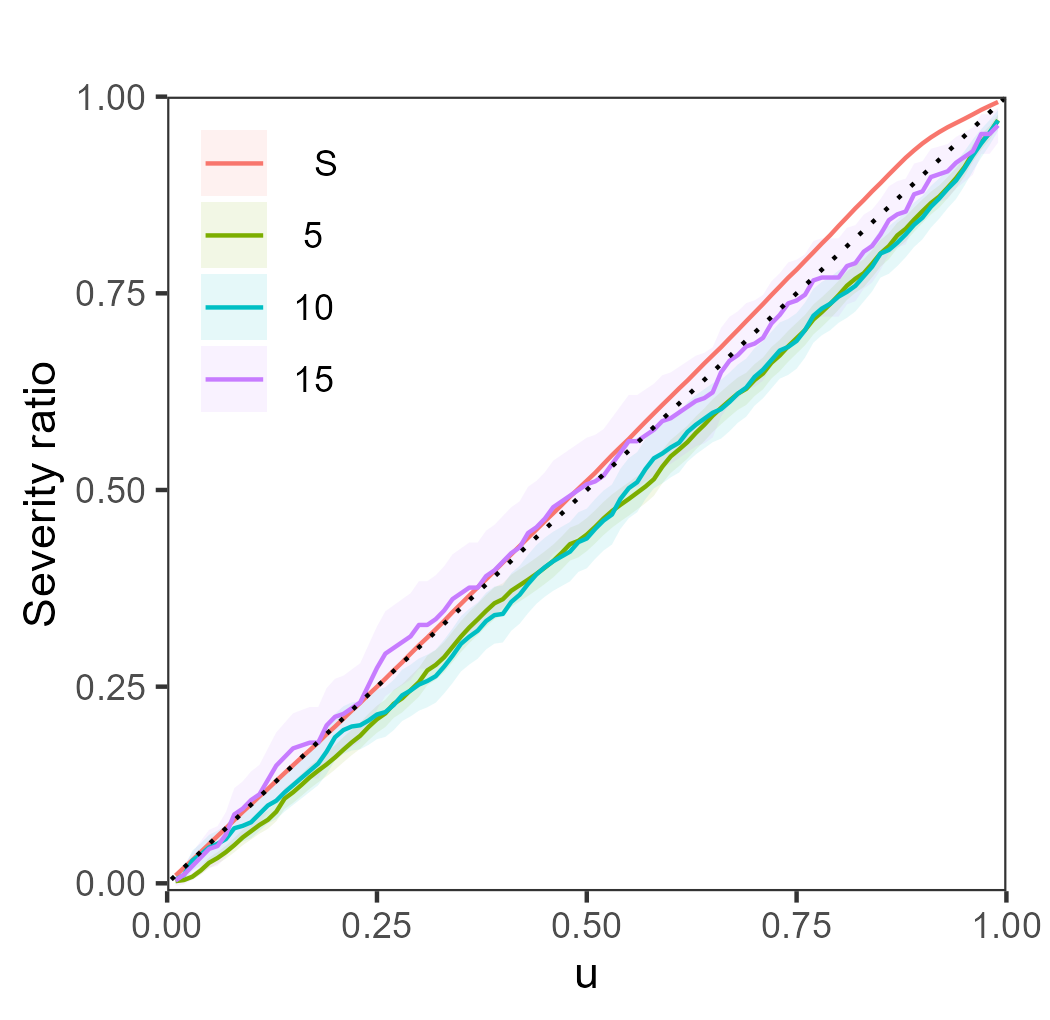}
    \includegraphics[width=0.3\textwidth]{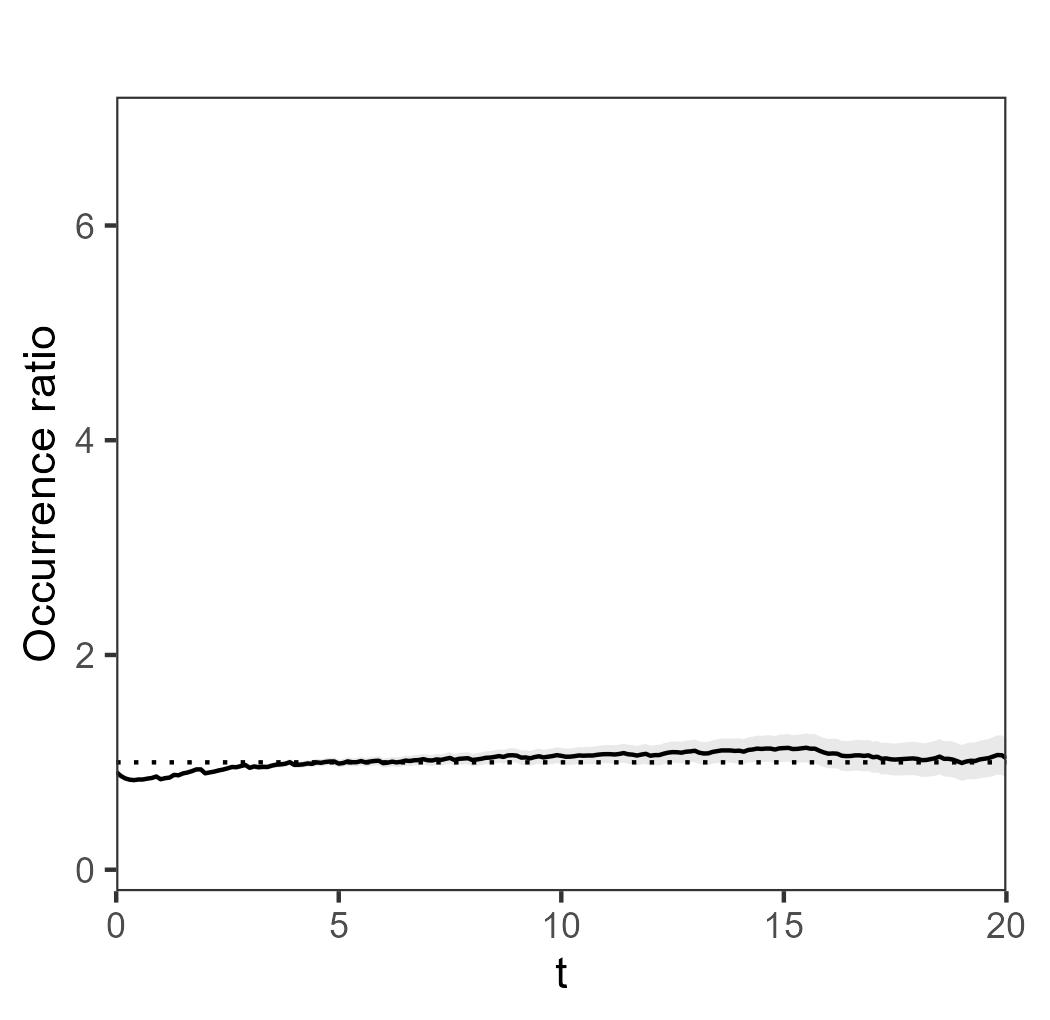}
    \newline
    \caption{As in Figure \ref{fig:cs_pit} of the main manuscript, but with pointwise central 95\% confidence intervals derived from the central limit theorem and the delta method, indicated by the shaded regions.}
    \label{fig:cs_pit_ci}
\end{figure}

\end{document}